\definecolor{linkc}{rgb}{0.1, 0.5, 0.7}
\definecolor{citec}{rgb}{0, 0.8, 0.2}
\definecolor{urlc}{rgb}{0.5, 0.1, 0.2}
\providecommand{\keywords}[1]{\textbf{\textit{Keywords---}} #1}
\title{Metric Distortion Under Probabilistic Voting}
\author{
    Mohak Goyal\\
    \textit{Stanford University}\\
    \texttt{mohakg@stanford.edu}
    \and 
    Sahasrajit Sarmasarkar\\
    \textit{Stanford University}\\
    \texttt{sahasras@stanford.edu}
}
\date{} % Removes the date
\begin{document}

\maketitle

\begin{abstract}
Metric distortion in social choice is a framework for evaluating how well voting rules minimize social cost when both voters and candidates exist in a shared metric space, with a voter's cost defined by their distance to a candidate. Voters submit rankings, and the rule aggregates these rankings to determine a winner. We extend this framework to incorporate probabilistic voting, recognizing that real-world voters exhibit randomness in how they vote. Our extension includes various probability functions, notably the widely studied Plackett-Luce (PL) model.

We show that the distortion results under probabilistic voting better correspond with conventional intuitions regarding popular voting rules such as \textsc{Plurality}, \textsc{Copeland}, \textsc{Random Dictator} and \textsc{Borda} than those under deterministic voting. For example, in the PL model with candidate strength inversely proportional to the square of their metric distance from a voter, we show that \textsc{Copeland}'s distortion is at most 2, whereas that of \textsc{Random Dictator} is $\Omega(\sqrt{m})$ in large elections (i.e., number of voters $n \rightarrow \infty$), where $m$ is the number of candidates. This contrasts sharply with the classical model, where \textsc{Random Dictator} beats \textsc{Copeland} with a distortion of 3 versus 5. In the PL model where the candidate strength is inversely proportional to the distance raised to the power $\theta$, the distortion under \textsc{Borda} is $\Theta(m^{1-2/\theta})$ when $\theta>2$ and $\Theta(1)$ otherwise. This generalizes the classical deterministic voting model where the distortion of \textsc{Borda} is $2m-1$. The proof uses a novel variant of asymptotic duality where we choose the Lagrange multiplier by asymptotically maximizing the derivative of the objective function. Overall, our work opens a new frontier for analyzing voting rules.

\end{abstract}

\keywords{Social Choice, Probabilistic Voting, Metric Distortion, Plackett-Luce Model.}

\makeatletter
\renewcommand\tableofcontents{%
    \@starttoc{toc}%
}
\makeatother
\tableofcontents

\newpage
%\sahasrajit{Todos: Incorporate the feedback from EC' . To check whether plurality veto is highlighted in the right way.}

\section{Introduction}

%Voting mechanisms have been a subject of great societal importance since the dawn of civilization. 
%<add something as to why a neurips reads should care about this paper.>

%\mohak{Need to clarify that the deterministic distortion of \textsc{Copeland}, \textsc{Random Dictator}, and Plurality hold for n tends to inf also. Some readers may be confused that those values of 3, 5, and 2m-1 may just be for specific small values of n.}
Societies make collective decisions despite conflicting interests, and the choice of the aggregation mechanism significantly impacts outcomes. %The machine learning community has applied social choice principles for AI alignment \cite{dai2024mapping, conitzer2024social}, algorithmic fairness \cite{baum2020social, finocchiaro2021bridging}, and preference modelling \cite{rossi2011short,ozturk2005preference}. 
Over the last century, there has been increasing interest in using computational tools to analyze and design voting rules \cite{arrow1950difficulty, sen1986social, arrow2010handbook, brandt2016handbook}. One prominent framework for evaluating voting rules is that of \textit{distortion} \cite{procaccia2006distortion}, where the voting rule has access only to the \textit{ordinal} preferences of the voters. However, the evaluation criterion is the sum of all voters' \textit{cardinal} utilities (or costs). The distortion of a voting rule is the worst-case ratio of the cost of the alternative it selects to the cost of the socially optimal alternative.

In \textit{metric distortion} \cite{anshelevich2015approximating}, an added assumption is that voters and candidates exist within a common but unknown metric space, where costs are represented by distances. These distances satisfy non-negativity and the triangle inequality. Each voter ranks candidates in order of increasing distance. A social choice function then combines these individual rankings to select a winner.% The goal is to analyze the worst-case ratio, across all possible metric spaces, between the expected social cost of the chosen winner and the social cost of the candidate who would minimize this cost.

This model is inspired by a well-studied spatial voting model in economics \cite{enelow1984spatial, merrill1999unified}, % or even a simpler 1-dimensional assumption as in the much celebrated \textit{median voter theorem} \cite{black1948rationale}),
and has a natural interpretation of voters liking candidates with a similar ideological position across many dimensions. % and has attained some empirical validation, e.g. in \cite{jenkins1998spatial, thurner2000empirical}.
While metric distortion is a powerful framework and has led to the discovery of interesting voting rules (e.g., Plurality Veto \cite{kizilkaya2022plurality} and the study of Maximal Lotteries \cite{kreweras1965aggregation} for metric distortion by \cite{charikar2024breaking}), its outcomes often do not align with conventional intuitions about popular voting rules.
For example, the overly simple \textsc{Random Dictator} (where the winner is the top choice of a uniformly randomly selected voter) outperforms \textsc{Copeland} (Definition~\ref{def:copeland}) (which satisfies the Condorcet Criterion \cite{arrow2010handbook} %\footnote{CC states that if there exists a candidate who can defeat every other candidate in pairwise head-to-head comparisons, then that candidate should be the winner of the election \cite{condorcet1785essay}.}
and other desirable properties), with metric distortion 3 versus 5 \cite{anshelevich2015approximating}.

While not yet adopted in the metric distortion framework, there is a mature line of work on \textit{probabilistic voting} \cite{coughlin1992probabilistic, quinn1999voter, mckelvey2006theory}. Here, the focus is on behavioral models of voters and on accounting for the randomness of their votes. Two often postulated sources of this randomness are the bounded rationality of voters and noise in their estimates of candidates' positions.
%A popular model for this behaviour is based on  the \textit{Quantal Response Theory} \cite{mckelvey2006theory}. 
Similarly, in random utility models (RUMs) \cite{Pfeiffer2012adaptive, parkes2012random, soufiani2013preference} in social choice, it is assumed that candidates have ground-truth strengths and that voters make noisy observations of these strengths and vote accordingly.
%
%We adopt these models of voting behaviour and study it within the metric distortion framework. 
%
The questions we ask in this paper are:

% \emph{Given a model of probabilistic voting, what is the metric distortion of popular voting rules? \\
% How does this differ (qualitatively and quantitatively) from the deterministic model? }

\emph{1. How can we incorporate probabilistic voting into the metric distortion framework? }

\emph{2. How does the metric distortion of popular voting rules under probabilistic voting differ from that under the traditional deterministic model? }

%\sahasrajit{Order of steps not clear (in main paper)? adversary does what and what algorithm does may need to change the question above for clarity and emphasis.}
\subsection{Key Contributions}

%\sahasrajit{Changed it need a careful look.}

We present the \emph{first} extension of metric distortion to probabilistic voting (Definition~\ref{def:distortion-probabilistic}), which is a much-needed generalization because, in practice, voters have been shown to vote with some degree of randomness \cite{mckelvey2006theory}. In this model, an adversary selects a metric space in which voters and candidates reside, and each voter probabilistically ranks all the candidates based on their distances to the candidates.
We present an axiomatic characterization of probability distributions $\mathcal{P}$ over submitted rankings to assess their suitability within the metric distortion framework. These axioms include \textit{scale-freeness} (Axiom~\ref{axiom:scalefree}), which states that scaling all distances by a constant should not affect probabilities; \textit{independence of other candidates} (Axiom~\ref{axiom:ioc}), which ensures that the relative ranking of two candidates is unaffected by the position of other candidates; and \textit{monotonicity of pairwise marginal probabilities in costs} (Axiom~\ref{axiom:monotonicity}), which asserts that the probability of ranking candidate \(X\) over \(Y\) increases if \(X\) moves closer to the voter while \(Y\) remains at the same distance.

We focus on the induced marginal probability of pairwise orderings in the submitted rankings. Specifically, we study a class of functions \(\G\) (Definition~\ref{def:pairwise_prob_function_class}) that characterize these marginal probabilities while adhering to our axioms.
We define metric distortion under probabilistic voting as follows: given a function \( g \) from the class \( \G \), we consider the worst case over all probability distributions \(\mathcal{P}\) that induce the marginal probability of candidate \( A \) being ranked above candidate \( B \) by voter \( i \) as 
$
g\left(\frac{d(i,B)}{d(i,A)}\right)
$
for every pair of candidates \( A \) and \( B \), where \( d(\cdot) \) represents the metric distance function. We then consider the worst case over all possible distance functions \( d(\cdot) \) to determine the metric distortion associated with \( g \). 
An example of such a function is 
$
g(r) = \frac{1}{1 + r^{-\theta}}
$
for \( \theta > 1 \), which arises naturally in the well-studied Plackett-Luce model.

  We study various voting rules, including \textsc{Random Dictator} (randomized), \textsc{Plurality} (first-past-the-post), \textsc{Borda} (a positional scoring rule that can also be interpreted as a weighted tournament rule), \textsc{Copeland} (from the class of unweighted tournament rules \cite{condorcetsocialchoice}), Weighted-Uncovered-Set~\cite{munagala2019improved} (from the class of weighted tournament rules \cite{condorcetsocialchoice}), and \textsc{Plurality Veto} \cite{kizilkaya2022plurality}, which attains optimal metric distortion among deterministic voting rules. %\textsc{Borda} also falls under the class of weighted tournament rules.
Our analysis for \textsc{Borda} and \textsc{Plurality Veto} is specifically for the PL model ($g(r) = \frac{1}{1+r^{-\theta}}$ for $\theta >1$), whereas our analysis for the other rules is for any $g \in \G.$
  %
   % and informal theorems below.

For our technical results, we derive the following quantities from function $g(\cdot)$:
\begin{align}{\label{eq:gm_go_defn}}
    \gm = \sup\limits_{x \in (0,1)} \frac{1}{x} g_{\textsc{mid}}(x) \text{ where } g_{\textsc{mid}}(x) = g\left(\frac{x}{1-x}\right) \text{ for } x \in [0,1]
\end{align}
An intuition for $\gm$ can be seen from the following setup. Consider a one-dimensional Euclidean metric with two candidates positioned at $0$ and $1$, where the candidate at $0$ is socially optimal. The adversary, whose goal is to maximize distortion, places a voter at position $x$. The probability that this voter ranks the candidate at $1$ above the candidate at $0$ is $g\left(\frac{x}{1-x}\right)$. The adversary has to trade off this probability against $x$, since increasing $x$ lowers the distortion.

The second one-edge obstruction is captured by a two-type optimization. For \(\alpha\in(0,1/2]\), let
\[
\Delta_g(\alpha)
:=
\max\left\{
\frac{\gm}{\alpha}-1,\,
R(g,\alpha)^{-1}
\right\},
\]
where \(R(g,\alpha)\) is the value of the explicit two-point system in Equation~\eqref{eq:R_C_defn}. Equivalently, \(\Delta_g(\alpha)=\Opt(\mathcal E_\alpha)^{-1}\), where \(\Efrac{\alpha}\) (equation~\eqref{eqn:optim_formulation_two_type}) is the one-edge fractional voter-type program used in the \textsc{Plurality} analysis. We also use the upper envelope
\[
\overline{\Delta}_g(\alpha)
:=
\max\left\{
\frac{\gm}{\alpha}-1,\,
\overline L(g,\alpha)^{-1}-1
\right\},
\]
where \(\overline L(g,\alpha)\) is defined by the single-point system in Equation~\eqref{eq:L_upper_bound_defn}; it satisfies \(\Delta_g(\alpha)\leq\overline{\Delta}_g(\alpha)\).
For example, for $g(r)=1/(1+r^{-2})$, \(\gm\approx1.21\), \(\overline L(g,1/2)\approx0.4\), and \(R(g,1/2)^{-1}\approx1.41\). Our key results for general functions $g \in \G$ are in Table \ref{tab:distortion_prob_voting}, and those specifically on the PL model are in Table \ref{tab:dist_PL_voting}. 

%All our results are for large elections $(n \to \infty)$.

\begin{table}[t]
\centering
\caption{Comparing distortion under probabilistic voting and deterministic voting for large elections.}
\label{tab:distortion_prob_voting}
\resizebox{\textwidth}{!}{%
\begin{tabular}{lccc}
\toprule
\multirow{2}{*}{Voting rules} & \multirow{2}{*}{\makecell{Deterministic\\voting}} & \multicolumn{2}{c}{Probabilistic Voting ($g(\cdot)$)} \\
\cmidrule(lr){3-4}
 & & Upper Bound & Lower Bound \\
\midrule
\textsc{Random Dictator} & $3$~\cite{anshelevich2017randomized} & $1+ (m-1)\gm$ & $2+ [g^{-1}(\frac{1}{m-1})]^{-1}$ \\
\textsc{Plurality} & $2m-1$~\cite{anshelevich2015approximating} & $\Delta_g(1/m)$ & $\Delta_g(1/m)$ \\
\textsc{Copeland} & $5$~\cite{anshelevich2015approximating} & $\max\{\Delta_g(1/2),\Delta_g(1/2)^2\}$ & $\Omega(1)$ \\
\bottomrule
\end{tabular}%
}
\end{table}

\begin{informaltheorem}[Theorems \ref{theorem:thm_plurality_distortion_m}, \ref{theorem:thm_plurality_distortion_m_lower_bound}]
  The metric distortion of \textsc{Plurality} under probabilistic voting is exactly \(\Delta_g(1/m)=\max\{m\gm-1,R(g,1/m)^{-1}\}\) for large elections. In particular, it scales linearly with \(m\).
\end{informaltheorem}

\begin{informaltheorem}[Theorem \ref{theorem:RD_distortion_lower_bound}]
    The metric distortion of \textsc{Random Dictator} under probabilistic voting increases monotonically with the number of candidates and is lower bounded by $2 + \left(g^{-1}\left(\frac{1}{m-1}\right)\right)^{-1}$ for large elections. For example, for the PL model, this is $\Omega(m^{\frac{1}{\theta}})$ for $\theta >1.$
%contrasting with deterministic voting, where the distortion is 3. In comparison, the metric distortion of \textsc{Copeland} under probabilistic voting is bounded by $\max\left((2\gm - 1)^2, (2\go + 1)^2\right),
%$ 
%remaining independent of \( m \) for large elections $(n \to \infty)$.
\end{informaltheorem}

This contrasts with deterministic voting, where the distortion of \textsc{Random Dictator} is 3. We now show that the distortion of \textsc{Plurality Veto} scales with the number of candidates under probabilistic voting.
\begin{informaltheorem}[Theorem \ref{thm:plurality_veto_lb}]
    The metric distortion of \textsc{Plurality Veto} under the PL model ($g(r) = \frac{1}{1+r^{-\theta}}$ for $\theta >1$) is $\Omega(m^{\frac{1}{1+\theta}})$. 
\end{informaltheorem}

This result presents a \textit{stark} contrast between deterministic voting and probabilistic voting. However, this behavior is intuitively consistent with the rule: \textsc{Plurality Veto}, like \textsc{Plurality}, discards most of the information in the ranking and only looks at the top-ranked candidate and the bottom-ranked candidate among the not-yet-eliminated candidates.

\begin{table}[t]
\centering
\caption{Comparing distortion under the PL model ( $g(r) = \frac{1}{1+r^{-\theta}}$) and deterministic voting for large elections.}
\label{tab:dist_PL_voting}
% Note that the upper bound under probabilistic voting holds here on substitution of $g(r) = \frac{1}{1+r^{-\theta}}$
\resizebox{\textwidth}{!}{%
\begin{tabular}{lccc}
\toprule
\multirow{2}{*}{Voting rules} & \multirow{2}{*}{\makecell{Deterministic\\voting}} & \multicolumn{2}{c}{PL model of voting (parameter $\theta$)} \\
\cmidrule(lr){3-4}
 & & Upper Bound & Lower Bound \\
\midrule
\textsc{Random Dictator} & $3$~\cite{anshelevich2017randomized} & $1+ (m-1)\gm$ & $1+ \frac{(m-1)^{\frac{1}{\theta}}}{2}$ \\
\textsc{Plurality} & $2m-1$~\cite{anshelevich2015approximating} & $O(m)$ & $\Omega(m)$ \\
\textsc{Copeland} & $5$~\cite{anshelevich2015approximating} & $\max\{\Delta_g(1/2),\Delta_g(1/2)^2\}$ & $\Omega_\theta(1)$ \\
\textsc{Borda} & $2m-1$~\cite{anshelevich2015approximating} & $O(m^{\max(1-\frac{2}{\theta},0)})$ & $\Omega(m^{\max(1-\frac{2}{\theta},0)})$ \\
\textsc{Plurality Veto} & $3$~\cite{kizilkaya2022plurality} & - & $\Omega(m^{\frac{1}{1+\theta}})$ \\
\bottomrule
\end{tabular}%
}
\end{table}

We show that \textsc{Copeland} is robust to randomness in the votes, unlike \textsc{Plurality Veto}, and its distortion under probabilistic voting is even lower than that under deterministic voting.

\begin{informaltheorem}[Theorem \ref{theorem:Copeland_distrotion_m}]
The metric distortion of \textsc{Copeland} under probabilistic voting for large elections is at most
$\max\left\{\Delta_g(1/2),\Delta_g(1/2)^2\right\}$ which is a constant independent of \(m\).
\end{informaltheorem}

We provide upper bounds on metric distortion for every finite $n$ in Theorems \ref{theorem:thm_plurality_distortion_m}, \ref{theorem:Copeland_distrotion_m}, and \ref{theorem:RD_distortion_upper_bound}.
We further observe that our distortion bounds in Table~\ref{tab:distortion_prob_voting} match those of deterministic voting when we choose $g(\cdot)$ as an indicator function that equals one if and only if the input is at least one. However, we restrict our analysis to doubly differentiable functions for simplicity.

\begin{informaltheorem}[Theorem \ref{theorem:borda_distortion}]
    The metric distortion of \textsc{Borda} under the PL model ($g(r) = \frac{1}{1+r^{-\theta}}$ for $\theta >1$) for large elections is of order $\Theta(m^{1-\frac{2}{\theta}})$ if $\theta>2$ and $\Theta(1)$ otherwise. %In contrast, the metric distortion under the PL model for \textsc{Plurality} and \textsc{Random Dictator} scale as $\Omega(m)$ and $\Omega(m^{1/\theta})$ for large elections $(n \to \infty)$.
\end{informaltheorem}
This result shows that classical metric distortion is too pessimistic about \textsc{Borda} and looks only at the extreme case of no randomness in the votes. Further observe that when $\theta<\sqrt{3}$, the distortion of \textsc{Borda} is asymptotically smaller than that of \textsc{Plurality Veto}.
Our results demonstrate a transition toward deterministic voting as \(\theta \to \infty\).

Therefore, our framework provides a ``complete'' view of metric distortion, in contrast to prior work, which has only examined the ``asymptotic'' behavior.

\subsection{Main Techniques}

The following technique forms the basis of the analysis for \textsc{Copeland}, \textsc{Plurality}, and \textsc{WeightedUncovered}. For the case of two candidates, we define a linear-fractional program that finds the highest possible ratio of social costs of the candidates subject to the constraint that the ``bad'' candidate gets at least an $\alpha$ fraction of votes in expectation. We linearize this program via the sublevel-set technique \cite{boyd2004convex} and find a feasible solution using KKT constraints. Concentration inequalities on this solution provide an upper bound on the distortion. For \textsc{Plurality}, we set $\alpha$ to $\frac{1}{m}$ since there can be $m-1$ ``equally good'' candidates, each getting a $\frac{1}{m}$ share of the total votes in the worst case. For \textsc{Copeland}, we set $\alpha$ to $\frac{1}{2}$ and then show that the distortion can be at most the square of the optimal value of the program. For \textsc{WeightedUncovered}, we solve the program for two different values of $\alpha$ and then combine the solutions to find a bound on the distortion. We find a matching lower bound for \textsc{Plurality} by constructing an example in the 3-D Euclidean space.
The lower bound for \textsc{Random Dictator} is obtained via an example
in the 1-D Euclidean space. %The program for two candidates also gives a lower bound for \textsc{Copeland}.

The technique for 
analyzing the distortion of \textsc{Borda} under the PL model is quite different. We need to show that a candidate \( W \) whose social cost is at least \( \omega(m^{\max(1-\frac{2}{\theta},0)}) \) times the social cost of another candidate \( B \) will lose the election with high probability. We normalize the metric space by setting \( d(B,W) = 1 \). We define a ball \( \C \) of radius \( \frac{1}{2} \) and another ball \( \V \) of a smaller radius around \(B\). We identify a candidate \( B^{(b)} \) that attains the highest expected `truncated' Borda score among the candidates in \( \C \), using only voters in \(\V\). We then show that the expected Borda score of \( B^{(b)} \) exceeds that of \( W \) by at least \( cn \) for some constant \( c \).
We prove this by analyzing each candidate outside the ball \( \C \) separately. When its normalized distance from \(B^{(b)}\) is below a fixed threshold, the triangle inequality gives the required bound. When this normalized distance is above the threshold, we formulate an optimization problem for the candidate's contribution to the expected Borda-score difference between \(B^{(b)}\) and \(W\). The problem for a specific value of \(m\) is difficult since there may be multiple inflection points of the objective function. To address this, we analyze the asymptotic maximum of the derivative of a carefully chosen lower bound of the objective function as \(m\) approaches infinity. We use this maximum as the Lagrange multiplier to obtain an order-wise tight bound. %which ensures a unique minimum of the Lagrangian for the chosen value of multiplier.

%We apply Lagrangian techniques to find a feasible solution to the dual problem. It is crucial to note that this analysis requires careful consideration since we do not impose a bound for any finite \( m \), but rather provide an asymptotic bound as \( m \) tends to infinity. We also provide an order-wise tight lower bound by an example in the 2-D Euclidean space.

% \sahasrajit{Todos: Highlight our result in informal theorems, lower bound of plurality/\textsc{Random Dictator} (exclusively for PL different from generic lower bound), also Copeland casually mention uncovered set and \textsc{Borda} (to be highlighted)}

%Our results are applicable to  
\subsection{Related Work}
\textbf{Metric Distortion~~ } 
The work in \citep{anshelevich2015approximating} initiated the study of metric distortion and showed that any deterministic voting rule %\footnote{Disambiguation: a reader must not confuse a deterministic voting \textit{model} with a deterministic voting \textit{rule}. Recall that a deterministic voting model assumes that submitted preferences are always consistent with the underlying costs, and a deterministic voting rule is an aggregation function that selects a deterministic winner.} 
has a distortion of at least 3, and \textsc{Copeland} has a distortion of 5.
%\citet{goel2017metric} showed that Ranked Pairs has a distortion of at least 5.
% \citet{munagala2019improved} showed that the weighted uncovered set gives a novel deterministic voting rule with a distortion of $2+\sqrt{5} \approx 4.23.$
The Plurality Veto rule attains the optimal distortion of 3 for deterministic voting rules \cite{kizilkaya2022plurality}. 
%\citet{anshelevich2017randomized} and \citet{feldman2016voting} showed that any randomized voting rule has distortion at least 2. Random Dictator has a distortion of 3. 
The work in \citep{pulyassary2021randomizedmetricdistortionconjecture} showed that any randomized voting rule has a distortion of at least 2.06, later improved to 2.11 in \citep{charikar2022metric}.
The work in \citep{charikar2024breaking} gave a randomized voting rule with a distortion of at most 2.75, breaking the barrier of 3.
The work in \citep{kempe2020analysis} gave an LP duality framework for the study of distortion.
See \citep{anshelevich2021distortion} for a useful survey on distortion in social choice.

\textbf{Distortion with Alternate Information~~}
The work in \citep{abramowitz2019awareness} showed that deterministic voting rules achieve a distortion of 2 when voters provide preference strengths as ratios of distances. The work in \citep{amanatidis2021peeking} demonstrated that even a few queries from each voter can significantly improve distortion in non-metric settings. The work in \citep{anshelevich2024improved} examined threshold approval voting, where voters approve candidates with utilities above a threshold. The work in \citep{goel2025metric} obtains better distortion bounds when voters submit their preferences after aggregating them in small groups. Our work relates to these studies since, in probabilistic voting, the probability of a voter switching the order of two candidates in their vote relative to their true order depends on the \textit{relative strength} of their preference. However, we may also lose useful information if the degree of randomness is too high.   %This often results in lower distortion than in deterministic voting, especially for \textsc{Borda} and \textsc{Copeland}.

%\sahasrajit{Does it always lead to lower distortion? Doesn't it depend on the parameter $\lambda$?}

\textbf{Probabilistic Voting and Random Utility Models (RUMs)~~}
The work in \citep{hinich1977equilibrium} showed that the celebrated Median Voter Theorem of \cite{black1948rationale} does not hold under probabilistic voting, underlining the fact that important results in voting theory attained under the assumption of `deterministic' voters do not hold more generally. Classical work on probabilistic voting has focused on studying the equilibrium positions of voters and/or candidates in game-theoretic models \cite{patty2005local, coughlin1981electoral, coughlin1981directional, mckelvey2006theory}, and empirical validation from national elections in Europe  \cite{alvarez2000issues, schofield1998multiparty, quinn1999voter}.
The work in \citep{mckelvey2006theory} adopts the quantal response model, a popular way to model agents' bounded rationality. In \cite{Ghodsi_Latifian_Seddighin_2019}, the authors study a setting in which voters may abstain from voting with a probability that depends on their utilities for the candidates.

%\citet{coughlin1992probabilistic} gave an overview of the early works in the area. 
% \citet{alvarez2000issues, schofield1998multiparty, quinn1999voter} gave empirical validation to the theory of probabilistic voting from national elections in Europe.

RUMs have been studied in social choice \cite{Pfeiffer2012adaptive, soufiani2013preference} with the hypothesis that candidates have \textit{universal} ground-truth strengths, which voters make noisy observations of. Our model is the same as RUM regarding the voters' behavior; however, voters have \textit{independent} costs from candidates. The Plackett-Luce (PL) model \cite{plackett1975analysis, luce2005individual} has been widely studied in social choice \cite{gormley2006analysis, azari2012random,gormley2004grade}. For probabilities on pairwise orders, PL reduces to the Bradley-Terry (BT) model \cite{bradley1952rank}. These probabilities are proportional to candidates' strengths (which we define as the inverse of powers of costs). %We adopt this model for our analysis.% using probabilities on pairwise orders.

%The Mallows model \cite{mallows1957non} (derived from \citet{condorcet1785essay}) has also been widely studied in social choice \cite{caragiannis2016noisy, liu2023robust}. The model is as follows: for each pair of candidates, their order is flipped (relative to a ground truth ranking) with probability $p \in (0,\frac{1}{2})$ in an i.i.d. manner. The process is repeated if a linear order is not attained. In the context of metric distortion, a limitation of this model is that the probability $p$ is the same for all pairs of candidates and does not depend on their relative distance to the voter. An overview of these models is in \citet{marden1996analyzing}. 

The widely studied Mallows model \cite{mallows1957non}, based on \citep{condorcet1785essay}, is obtained as follows: the voter flips the order of each candidate pair (relative to a ground truth ranking) with a constant probability $p \in (0,\frac{1}{2})$ \cite{caragiannis2016noisy, liu2023robust}. The process is repeated if a linear order is not attained. In the context of metric distortion, a limitation of this model is that it does not account for the relative distance of candidates to the voter. Therefore, we do not include this model in our work. See \citep{marden1996analyzing} for a comprehensive review of RUMs.
The work in \citep{critchlow1991probability} does an axiomatic study of RUMs; our axioms are grounded in the metric distortion framework and are distinct from those therein.
%The Plackett-Luce model \cite{plackett1975analysis, luce2005individual} and the Bradley-Terry model \cite{bradley1952rank}. An overview of these models is in \citet{marden1996analyzing}.

\textbf{Comparison with Smoothed Analysis~~} Recently, there has been significant interest in smoothed analysis \cite{spielman2004smoothed} of social choice. Here, a small amount of randomness is added to problem instances and its effect is studied on the satisfiability of axioms \cite{baumeister2020towards, flanigan2023smoothed, xia2020smoothed, xia2023semi} and the computational complexity of voting rules \cite{liu2022semi, xia2021smoothed, xia2022beyond}. In \citep{baumeister2020towards}, this model is termed as being `towards reality,' highlighting the need to study the randomness in the election instance generation processes. Unlike smoothed analysis where the voter and candidate positions are randomized, we consider these positions fixed, but the submitted votes are random given these positions. The technical difference appears in the benchmark (the ``optimal'' outcome in the denominator of the distortion is unchanged in our framework and changes due to randomness in smoothed analysis).  The work in \citep{caragiannis2023beyond} does an average-case analysis of distortion in impartial culture electorates.

\subsection{Preliminaries and Notation}
%Here, we recall relevant concepts from the literature to set up the notation and the problem.

Let $\N$ be a set of $n$ voters and $\A$ be the set of $m$ candidates. Let $\s$ be the set of total orders on $\A$. Each voter $i \in \N$ has a preference ranking $\sigma_i \in \s$.
A vote profile is a tuple of preference rankings $\sigma_{\N} = (\sigma_1,\ldots,\sigma_n) \in \s^n$ for all voters. The tuple $(\N,\A,\sigma_{\N})$ defines an election instance.
 Let $\Delta(\A)$ denote the set of all probability distributions over the set of candidates.%, i.e., $\Delta(\A) = \left\{ p \in \mathbb{R}^m \mid \sum_{j=1}^m p_j = 1, \, p_j \geq 0 \, \forall j \in \A \right\}$.
\begin{definition} [Voting Rule]
    A voting rule $f : \s^n \rightarrow \Delta(\A)$ takes a vote profile $\sigma_{\N}$ and outputs a probability distribution $\textbf{p}$ over the candidates.
\end{definition}

%A voting rule is deterministic if it always outputs a deterministic winner. 
For deterministic voting rules, we overload notation by saying that the rule's output is a candidate rather than a distribution.
 Let $\mathbb{I}$ denote the indicator function. We use $|AB|$ to denote the number of voters who rank candidate $A$ above candidate $B$, that is, $|AB| = \sum_{i\in \mathcal{N}} \mathbb{I}(A \succ_{\sigma_i} B)$. We now present some voting rules from the existing literature.

% \subsection{Weighted tournament graphs (C2 rules \cite{condorcetsocialchoice})}

%We now give the construction of a tournament graph \cite{munagala2019improved} on the candidates by looking at their pairwise relationships. For each pair of distinct candidates $A,B \in \mathcal{A}$, draw an edge from $A$ to $B$ if $A$ beats $B$ pairwise i.e. $|AB| = |i\in \mathcal{N}: A \succ_{\sigma_i} B| > \frac{n}{2}$. If a social choice rule (deterministic or random) takes decisions only based on this tournament graph, it is called a tournament rule or C1 rule \cite{condorcetsocialchoice}.

%One can generalize this rule by considering the margin of each winning pair. For each pair of candidates $A \neq B \in \mathcal{A}$, draw a directed edge from $A$ to $B$ with weight $\frac{|AB|}{n}$ and thus between any pair of candidates, there exist two directed edges. A social choice rule that makes decisions based on  on the weighted tournament graph is called a weighted tournament rule, or a C2 rule\cite{condorcetsocialchoice}. We now present some common voting rules below.

%
\begin{definition}[\textsc{Random Dictator}]\label{def:rd}
    Select a voter uniformly at random and output their top choice, i.e.,
    $\textsc{Random Dictator}(\sigma_{\N}) = \textbf{p}$ such that ${p}_j = \frac{1}{n}\sum_{i \in \N} \mathbb{I}(\sigma_{i,1} = j).$
\end{definition}

%\end{definition}
%

\begin{definition}[\textsc{Plurality}] \label{def:plurality}
Choose the candidate who is the top choice of the most voters, i.e., $\textsc{Plurality}(\sigma_{\N}) = \argmax_{j \in \A} \sum_{i \in \N} \mathbb{I}(\sigma_{i,1} = j).$ Ties are broken arbitrarily.
\end{definition}

\begin{definition}[\textsc{Copeland}] \label{def:copeland}
%\label{def:copeland} % redundant duplicate of the label above; commented out (not deleted) to fix the multiply-defined-label warning
   Choose the candidate who wins the most pairwise comparisons, i.e.,
  $\textsc{Copeland}(\sigma_{\N}) = \argmax_{j \in \A} \sum_{j' \in \A \setminus \{j\}} \mathbb{I}(|jj'| > \frac{n}{2})$. %\argmax_{j \in \A} \sum_{j' \in \A \setminus \{j\}} \mathbb{I}\left( \sum_{i \in \N} \mathbb{I} (j \succ_{\sigma_i} j') > \frac{n}{2} \right).
   Ties are broken arbitrarily.
\end{definition}

\begin{definition} [Borda Rule and Borda score]
  The Borda score of a candidate \( j \) is the total number of votes they receive in pairwise contests against all other candidates. That is, $\textsc{Borda score}(j) = \sum_{j' \in \A \setminus \{j\}} |jj'|.$ The candidate with the highest Borda score wins, breaking ties arbitrarily.
\end{definition}

\begin{definition}[ $\lambda$-weighted uncovered set and \textsc{WeightedUncovered} Rule \cite{munagala2019improved}] \label{def:wu}
    For $\lambda \in [0,1],$ the $\lambda$-weighted uncovered set $A_{\lambda} \subseteq \A$ has the following property: for any $j \in A_{\lambda}$ and $j' \neq j$, we have:
    \begin{itemize}
        \item either $|jj'| \geq (1-\lambda) n$; or
        \item there is another candidate $j'' \notin \{j,j'\}$ such that $|jj''| \geq (1-\lambda)n$ and $|j''j'| \geq \lambda n$.
    \end{itemize}
     The \textsc{WeightedUncovered} rule picks the alphabetically first candidate in the $\varphi$-weighted uncovered set, where $\varphi=\frac{\sqrt{5}-1}{2}$ is the inverse golden ratio.
\end{definition}

It is shown in \citep[Lemma 3.2]{munagala2019improved} that the $\lambda$-weighted uncovered set is always non-empty for $\lambda \in [0.5,1]$.
 We bound the distortion of \textsc{WeightedUncovered} in Corollary \ref{corollary:weighted_uncovered_set} using techniques similar to those used for \textsc{Copeland}. More generally, we can similarly analyze a family of voting rules that pick the alphabetically smallest candidate in the $\lambda$-weighted uncovered set for any $\lambda \in [0.5,1]$. We do not define the voting rule \textsc{Plurality Veto} \cite{kizilkaya2022plurality} in this section; instead, we refer the reader to \cite[Section 3]{kizilkaya2022plurality} for a detailed description and include a brief discussion in Section \ref{sec:Plurality_Veto}.
%In the metric distortion framework, all voters and candidates $\N \cup \A$ are embedded in a metric space.
We now formally present the concept of metric distortion.
The distance function $d: (\N \cup \A)^2 \rightarrow \mathbb{R}_{\geq 0}$ satisfies the triangle inequality ($d(x,y)\leq d(x,z) + d(z,y)$) and symmetry ($d(x,y) = d(y,x)$). 
The distance between voter $i \in \N$ and candidate $j \in \A$ is also referred to as the \textit{cost} voter $i$ experiences from candidate $j$. %An optimal candidate minimizes the \textit{social cost}.
We consider the most commonly studied social cost: the sum of the costs of all voters, defined as $\SC(j,d) := \sum_{i \in \N} d(i,j).$
 
\iffalse
\begin{definition}[Optimal Candidate]
Candidate $j^\ast$ is optimal if $j^\ast \in \argmin\limits_{j \in \A} \sum\limits_{i \in \N} d(i,j).$
\end{definition}
\fi
In deterministic voting, the preference ranking $\sigma_i$ of voter $i$ is consistent with the distances. That is, $d(i,j) > d(i,j') \Leftrightarrow j' \succ_{\sigma_i} j$ for all voters $i \in \N$ and candidates $j,j' \in \A.$ Ties in the distance are broken arbitrarily. Let $\rho(\sigma_{\N})$ be the set of distance functions $d$ consistent with the vote profile $\sigma_{\N}.$
The metric distortion of a voting rule is defined as follows: % defined for deterministic voting as follows \cite{anshelevich2015approximating}:

\begin{definition} [Metric Distortion\cite{anshelevich2015approximating}] \label{def:distortion}
    $\textsc{dist}(f) = \sup\limits_{\N,\A,\sigma_{\N}} ~\sup\limits_{d \in \rho(\sigma_{\N})} ~\frac{\E[\SC(f(\sigma_{\N}),d)]}{\min\limits_{j\in \A} \SC(j,d)}.$
\end{definition}

\section{Axioms and Model} \label{sec:model}
%In this section, we describe our model for probabilistic voting and give a framework for studying the distortion of voting rules therein. 
Under probabilistic voting, the submitted preferences may no longer be consistent with the underlying distances. %In particular, we are interested in studying distributions over the vote profile $\sigma_{\N}$, which satisfy certain axiomatic criteria.
For a distribution $\Rho(d)$ over $\sigma_{\N}$, let $q^{\Rho(d)}(i,j,j')$ denote the induced marginal probability that voter $i$ ranks candidate $j$ higher than $j'.$  %Note that we denote distributions over the vote profile as functions of the distance function $d$.
%We will drop the superscript when clear from the context. 
We focus on these marginal probabilities of pairwise orders and provide axioms for classifying which $q^{\Rho(d)}(\cdot)$ are suitable for studying metric distortion. Our three axioms are as follows.

%In particular, for any voter $i$, the induced probabilities on \textit{pairwise order relationships} for any pair of candidates $(j,j')$ must satisfy the following:

\begin{axiom}[Scale-Freeness (SF)] \label{axiom:scalefree}
    The probability distribution $\Rho(d)$ has induced marginal probabilities $q^{\Rho(d)}(\cdot)$ that are invariant to scaling of $d$.
    That is, for any tuple $(i,j,j')$ and any constant $\kappa > 0$, we must have $q^{\Rho(d)}(i,j,j') = q^{\Rho(\kappa d)}(i,j,j').$
\end{axiom}

Note that the metric distortion (Definition~\ref{def:distortion}) for deterministic voting is scale-free. We want to retain this property in the probabilistic model. Conceptually, one may think of a voter's preferences as being a function of the relative, rather than absolute, distances to the candidates.

\begin{axiom}[Independence of Other Candidates (IOC)] \label{axiom:ioc}
    For any $i \in \N$ and $j, j' \in \A$, the induced marginal probability $q^{\Rho(d)}(i,j,j')$ is not a function of the distance of voter $i$ to any other candidate $j'' \notin \{j,j'\}$.
\end{axiom}

%{\color{red} language to be revisited}

IOC extends Luce's classical choice axioms \cite{luce2005individual} of consumer behavior, which were defined only for selecting the top choice. The deterministic voting framework satisfies IOC.
IOC is also reminiscent of the \textit{independence of irrelevant alternatives} axiom for voting rules. IOC is somewhat restrictive and may not always hold true in real-world data; however, it makes analysis much easier by allowing us to `factor' distributions over rankings into marginals over pairwise orders.

\begin{axiom} [Strict Monotonicity (SM)] \label{axiom:monotonicity}
   For every tuple  $(i,j,j')$, for fixed distance $d(i,j) > 0,$ the probability $q^{\Rho(d)}(i,j,j')$ is strictly increasing in $d(i,j')$.  
\end{axiom}

The monotonicity in $d(i,j)$ follows since $q^{\Rho(d)}(i,j',j) = 1- q^{\Rho(d)}(i,j,j').$ This axiom is natural.
%We now discuss the pairwise order probability in some classical models of random voting.

\begin{example} [Mallows model fails Axioms~\ref{axiom:ioc} and~\ref{axiom:monotonicity}.]
In the Mallows model \cite{mallows1957non}, $q^{\Rho(d)}(\cdot)$, as derived in \citep{busa2014preference}, is given by:
  $ q^{\Rho(d)}(i,j,j') = h(r_{j'} - r_j+1, \phi) - h(r_{j'} - r_j, \phi).$

Here, $h(k,\phi) = \frac{k}{(1-\phi^k)}$, while $r_j$ and $r_{j'}$ are the positions of $j$ and $j'$ in the ground-truth (noiseless) ranking, and the constant $\phi$ is a dispersion parameter.

The Mallows model fails Axiom~\ref{axiom:ioc} since it depends on the number of candidates between $j$ and $j'$ in the noiseless ranking, which depends on the distances of the candidates other than $j$ and $j'$ from voter $i$.

It also fails Axiom~\ref{axiom:monotonicity} since it does not depend on the distances but only on their order.
\end{example}

\begin{definition}[Plackett-Luce Model \cite{luce2005individual,plackett1975analysis}] \label{def:pl}
    A ranking in the PL model is constructed as follows: for each voter $i \in \N$, each candidate $j \in \A$ has a `strength' $s_{i,j}$. The voter chooses their top choice with probability proportional to the strengths. Similarly, for every subsequent rank, they choose a candidate from among the \textit{remaining} ones with probability proportional to their strengths.
\end{definition}
 In most of the literature on RUMs, a common assumption is that the strength of a candidate is the same for all voters. However, we choose the more general model where a candidate's strength differs across voters, since it is more natural in our setting.
 In terms of the pairwise order probabilities, the PL model is the same as the Bradley-Terry (BT) model \cite{bradley1952rank}, that is:
\begin{equation}{\label{eq:plackett_luce}}
  \text{PL/BT:}~~~~~~~ q^{\Rho(d)}(i,j,j') = \frac{s_{i,j}}{s_{i,j} + s_{i,j'}}.
\end{equation}
Prima facie, any decreasing function of distance $d(i,j)$ would be a natural choice for $s_{i,j}$. However, not all such functions satisfy Axiom~\ref{axiom:scalefree}. The exponential function is a popular choice in the literature using BT or PL models. However, in general, $\frac{e^{-d(i,j)}}{e^{-d(i,j)} + e^{-d(i,j')}} \neq \frac{e^{-2d(i,j)}}{e^{-2d(i,j)} + e^{-2d(i,j')}},$ so it does not satisfy scale-freeness.

On the other hand, observe that all functions $s = d^{-\theta}$ for any $\theta \in (1, \infty)$ satisfy our axioms. Therefore, we adopt this form of the PL model for the analysis in this paper. When a voter is at distance zero from one or more candidates, we use the natural limiting convention: zero-distance candidates are ranked above all positive-distance candidates, and ties among zero-distance candidates are broken uniformly at random.

%We use the regime $\theta \in (1, \infty)$ for technical simplicity in this work. 

\iffalse
We also define the following class of functions ``PQV'' for $q^{\Rho(d)}(\cdot)$ motivated by Quantal Response Theory \cite{mckelvey1995quantal} and its use in probabilistic voting \cite{mckelvey2006theory}. Observe that PQV satisfies all our axioms.

\begin{definition}[Pairwise Quantal Voting (PQV)] \label{def:pqv}
Let the relative preference $r(i,j,j')$ be the ratio of distances, $\frac{d(i,j')}{d(i,j)}.$ For constant $\lambda>0,$ PQV is as follows:
    $q^{\Rho(d)}(i,j,j') = \frac{e^{-\lambda/ r(i,j,j')}}{e^{-\lambda r(i,j,j')} + e^{-\lambda/r(i,j,j')}}.$
\end{definition}

\fi

\begin{table}[t]
    \centering
    \caption{Axioms satisfied by commonly studied models of probabilistic voting}
    \begin{tabular}{lccc}
        \toprule
        & Axiom 1: SF & Axiom 2: IOC & Axiom 3: SM \\
        \midrule
        Mallows &  \checkmark & $\bm{\times}$ & $\bm{\times}$ \\
        PL with exponential in distance &  $\bm{\times}$ &  \checkmark & \checkmark \\
        PL with powers of distance & \checkmark & \checkmark & \checkmark \\
  %      PQV & \checkmark & \checkmark & \checkmark \\
        \bottomrule
    \end{tabular}
\end{table}

\begin{figure}[t]
    \centering
    \begin{subfigure}[b]{0.45\textwidth}
        \centering
        \includegraphics[width=0.95\textwidth, trim=0.8cm 0.4cm 2cm 1.8cm, clip]{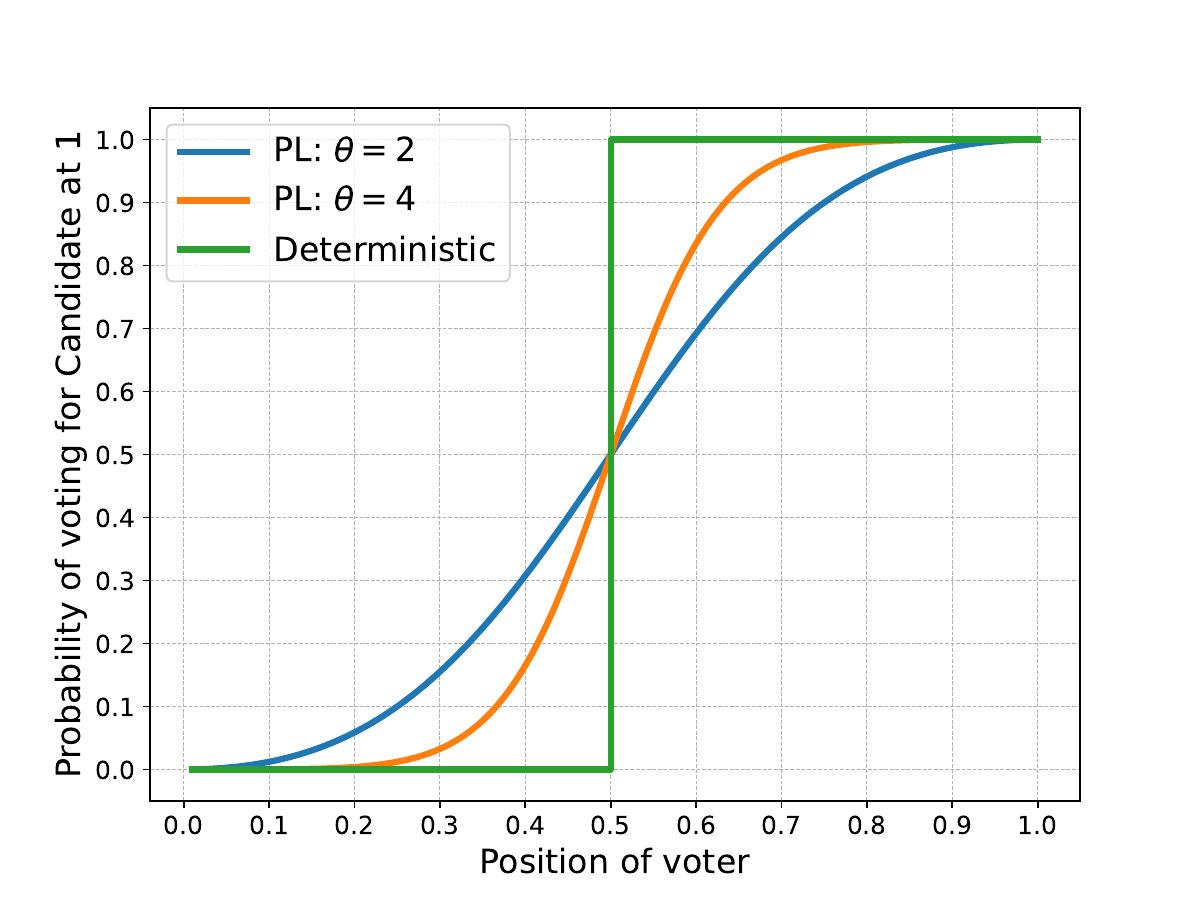}
    \end{subfigure}
    \hfill
    \begin{subfigure}[b]{0.45\textwidth}
        \centering
        \includegraphics[width=0.95\textwidth, trim=0.6cm 0.4cm 2cm 1.5cm, clip]{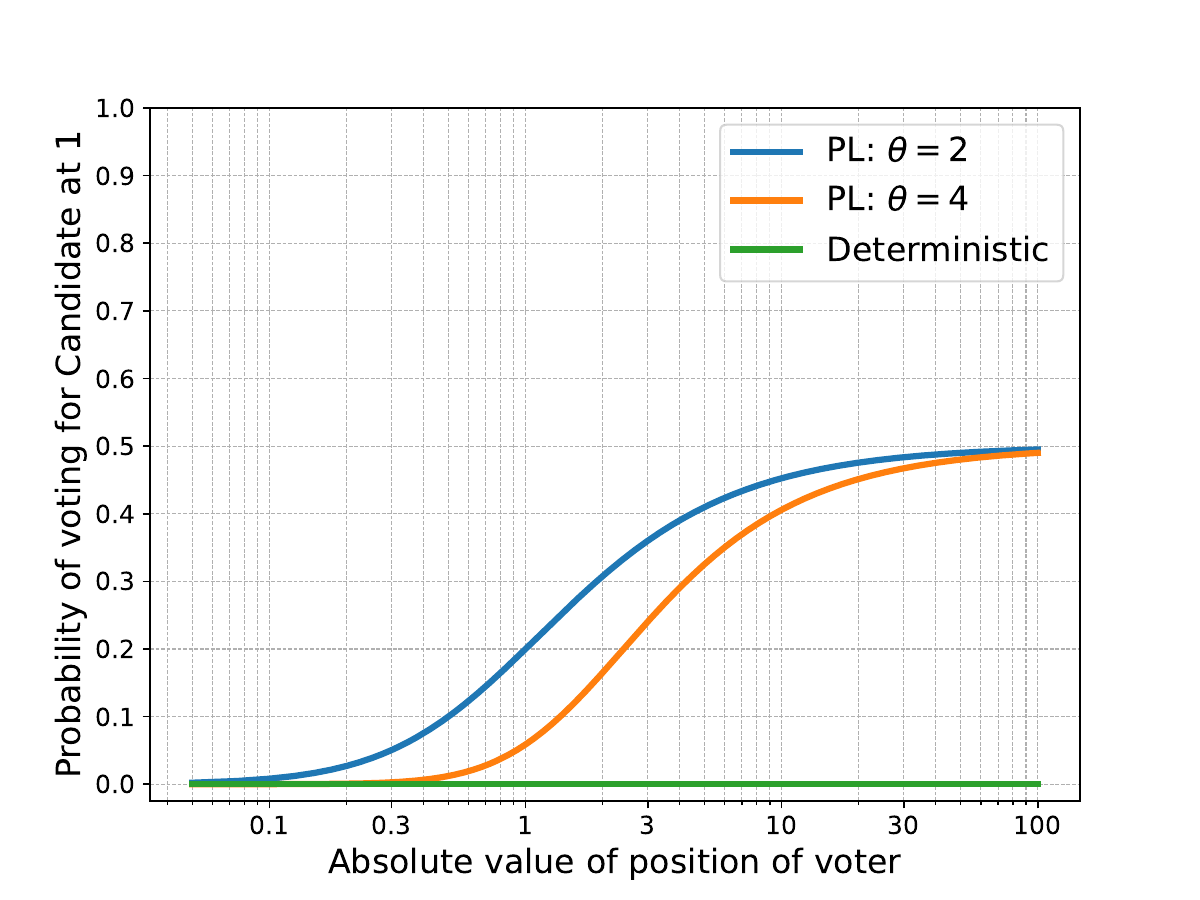}
    \end{subfigure}
    \caption{A 1-D Euclidean example of voting probabilities. There are two candidates, positioned at 0 and 1. The left figure shows the voter positioned between 0 and 1, while the right figure shows the case where the voter is to the left of 0. As the distance increases, both candidates look similar to the voter in the probabilistic model but not in deterministic voting. The scenario where the voter is positioned to the right of 1 is symmetric.}
    \label{fig:prob_of_voting}
\end{figure}

We now define a general class of functions $\G$ for pairwise order probabilities in terms of the relative preference (ratio of distances) $r$. Formally, for $g \in \G$, we have the marginal probability $q^{\Rho(d)}(i,j,j')=g\left(d(i,j')/d(i,j)\right)$

\begin{definition}[Function class for pairwise order probability]\label{def:pairwise_prob_function_class}
   Let \( \G \) be a class of functions where each \( g \in \G \) is a mapping \( g: [0, \infty) \cup \{\infty\} \to [0,1] \) that satisfies the following properties.

\begin{enumerate}
\item $g$ is continuous and twice differentiable with $g'(r) > 0 \text{ }\forall r \in (0,\infty)$, i.e., $g(r)$ is strictly increasing on $[0,\infty)$. 
    %\item %and $\lim\limits_{x \to 0}\frac{g(r)}{x} = g'(0)=0$
    % \item  $g(0)= 0.$ Further, $g'(r) > 0 \text{ }\forall r \in (0,\infty)$, i.e., $g(r)$ is strictly increasing on $[0,\infty)$.
    \item Define $\frac{1}{r}$ as $\infty$ when $r =0$.
 We have $g(r) + g(\frac{1}{r}) = 1 ~~\forall r \geq 0$ with $g(0)=0$.   
    \item \( g_{\textsc{mid}}(x) \) is strictly convex on the open interval \( (0,1/2) \), and strictly concave on \( (1/2,1) \) with \( g_{\textsc{mid}}(x) \) regularly varying at 0 with index \(\beta>1\). 
    %\item There exists $c_1 \in [0,\infty)$ such that $g(\frac{x}{1+x})$ is convex on $(0,c_1)$ and concave on $(c_1, \infty)$.
    %\item There $\exists c_2 \in [0,c_1]$ s.t. $g(\frac{x}{1-x})$ is convex in $[0,c_1]$ and concave in $[c_1, 1]$.
\end{enumerate}
\end{definition}

% \sahasrajit{Incorporate an explanation for each axiom.}
%
% The first assumption follows from Strict Monotonicity (SM) [Axiom~\ref{axiom:monotonicity}] with smoothness assumptions on $g(\cdot)$. The second assumption follows from $q^{\Rho(d)}(i,j',j) = 1- q^{\Rho(d)}(i,j,j').$. The additional assumption of $g(0)=0$ is made to ensure voters deterministically rank candidates with whom they overlap at the top. The third assumption is made to ensure simplicity in the analysis and many popular models such as BT model (with $g(r) = \frac{1}{1+r^{-\theta}}, \theta > 1$) satisfies it with $\beta=\theta$. \footnote{While $g(r) = \frac{1}{1+r^{-\theta}}$ violates axiom 5 for $\theta=1$, we can show that all distortion results hold for $\theta=1$ using very similar calculations.} The regularity assumption is used in definition in \eqref{eq:Gamma_g_eta_defn} and convexity to concavity transition is useful for unique existence of maxima (Lemma~\ref{lemma:existence_maxima}). 

The first condition is a smoothness-regularity strengthening of Strict Monotonicity (SM) [Axiom~\ref{axiom:monotonicity}]. The second condition imposes the natural reciprocity symmetry $q^{\Rho(d)}(i,j',j)=1-q^{\Rho(d)}(i,j,j')$. The boundary condition $g(0)=0$ ensures that candidates who overlap with a voter are ranked deterministically above other non-overlapping candidates.

The third condition is a technical shape and tail-regularity assumption used to simplify the analysis. 
The regular variation assumption is used in the definition of $\Gamma_{g,\eta}$ in \eqref{eq:Gamma_g_eta_defn}, while the convex-to-concave transition of $g_{\textsc{mid}}$ is used to establish the unique existence of maximizer in Lemma~\ref{lemma:existence_maxima}.

These conditions are satisfied by several standard comparison models, including the Bradley--Terry-type model $g(r)=\frac{1}{1+r^{-\theta}}; \theta>1$ for which the regular variation index is $\beta=\theta$.\footnote{When $\theta=1$, this model satisfies strict monotonicity but falls outside the standing tail-regularity assumption \(\beta>1\). The distortion guarantees can be recovered for this case by essentially the same calculations.}

% Observe that the BT model (with $g(r) = \frac{1}{1+r^{-\theta}}, \theta > 1$) 
% %and  PQV (with $g(r) = \frac{e^{-\lambda/ r}}{e^{-\lambda r} + e^{-\lambda /r}}, \lambda >0$) are 
% is in $\G.$ \footnote{While $g(r) = \frac{1}{1+r^{-\theta}}$ violates axiom 5 for $\theta=1$, we can show that all distortion results hold for $\theta=1$ using very similar calculations.}
%We now give some technical properties of any f
%Show that the last two give concave in (1,inf).
%Construction of distributions (if any exists) on rankings $\sigma_{\N}$ which generate pairwise order probabilities $q^{\Rho(d)}(i,j,j') = g(\frac{d(i,j')}{d(i,j)})$ according to PQV is left for future work. We do not need it for our technical derivations. 
For an arbitrary function \(g\in\G\), such pairwise marginals need not be realizable by a distribution over full rankings for every metric \(d\). Distributions on rankings that generate pairwise order probabilities according to the BT model are known from prior work \cite{azari2012random} and can be directly derived from the constructive definition in Definition~\ref{def:pl}.

We assume $g \in \G$ in the rest of the paper. Let $\M(\N \cup \A)$ denote the set of valid distance functions on $(\N, \A)$. For any $g$ and $d \in \M(\N \cup \A)$, let $\hat{\Rho}^{(g)}(d)$ denote the set of probability distributions on $\sigma_{\N}$ for which the marginal pairwise order probabilities are $g(\frac{d(i,j')}{d(i,j)}).$
That is,
\begin{align}{\label{eq:pairwise_probability}}
    \forall \Rho \in \hat{\Rho}^{(g)}(d), \sigma_{\N} \sim \Rho  \implies \mathbb{P}[A \succ_{\sigma_i} B] = g\left(\frac{d(i,B)}{d(i,A)}\right).
\end{align}
We assume that all voters vote independently of each other.
We now define metric distortion under probabilistic voting as a function of $g$ for a given $m$ and $n$.
%\mohak{make point that g is basically a way of looking at q, and that it satisfies all our axioms. }

%Give technical corrolaries on g(r/1-r) and g(r/1+r)

%\sahasrajit{Lower bounds not working for PL model.}

\begin{definition}[Metric Distortion under Probabilistic Voting] \label{def:distortion-probabilistic}
%    The distortion of a voting rule $f$ with respect to a social cost function $SC$ is given by 
%
\begin{align}{\label{eq:distortion_defn_1}}
    \textsc{dist}^{(g)}(f,n,m)  := \sup_{\substack{\N: |\N| =n \\ \A: |\A| =m}} \sup_{\substack{d \in \mathcal{M}(\N \cup \mathcal{A})\\ \hat{\Rho}^{(g)}(d)\neq\emptyset}} \sup_{\Rho \in \hat{\Rho}^{(g)}(d)} \frac{\mathbb{E}_{\sigma_{\N} \sim \Rho}[\SC(f(\sigma_{\N}),d)]}{\min\limits_{j \in \A}\SC(j,d)}.
\end{align}
%
 %$\textsc{dist}^{(g)}(f) = \sup_{n,m} \textsc{dist}^{(g)}(f,n,m)$ by supremizing over all possible $n$ and $m$.

The expectation is over both the randomness in the votes and the voting rule $f$.
\end{definition}
Since we focus on large elections, we define $\textsc{dist}^{(g)}$ as a function of the number of voters \(n\) and the number of candidates \(m\).

Note that distortion is defined as the supremum over all realizable metrics \(d\) and all distributions in $\hat{\Rho}^{(g)}(d)$; metrics for which $\hat{\Rho}^{(g)}(d)$ is empty are excluded from the supremum. Additionally, we provide bounds specifically for the PL model, where we modify the definition to account for the specific distribution of the vote profile.

As in Figure \ref{fig:prob_of_voting}, consider the 1-D Euclidean space with two candidates positioned at 0 and 1. The function \( g_{\textsc{mid}}(x) \) represents the probability that a voter located at distance \( x \) from 0 votes for the candidate at 1 when the voter is between 0 and 1.% respectively.
% We now define the following functions.%
%\begin{align}{\label{eq:g_mid_out_defn}}
%    g_{\textsc{mid}}(x) := g\left(\frac{x}{1-x}\right) \forall x \in [0,1) \quad g_{\textsc{out}}(x) := g\left(\frac{x}{1+x}\right) \forall x \in [0,\infty).
%\end{align}
%
%

\newcommand{\lemmauniquemaxima}
{
$\frac{g_{\textsc{mid}}(x)}{x}$ has a unique global maximum in $(0,1)$.
}

\begin{lemma}{\label{lemma:existence_maxima}}
    \lemmauniquemaxima
\end{lemma}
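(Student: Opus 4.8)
The plan is to analyze each function separately using the convexity/concavity structure built into the definition of the class $\G$ (properties 4 and 5 of Definition~\ref{def:pairwise_prob_function_class}), together with the boundary behavior forced by $g(0)=0$ and $g(r)+g(1/r)=1$. For $\frac{g_{\textsc{mid}}(x)}{x}$, I would first record the boundary values: as $x \to 0^+$, since $g_{\textsc{mid}}(0)=g(0)=0$ and $g_{\textsc{mid}}$ is differentiable, $\frac{g_{\textsc{mid}}(x)}{x} \to g_{\textsc{mid}}'(0)$, a finite nonnegative number; as $x \to 1^-$, $g_{\textsc{mid}}(x) \to g(\infty)=1$ (using property 3 with $r\to\infty$), so the ratio tends to $1$. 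Since the ratio is continuous on $(0,1)$ and — by the definition of $\gm$ in \eqref{eq:gm_go_defn} — has a supremum strictly exceeding these endpoint-type values (indeed $\gm = \sup_{x}\frac{g_{\textsc{mid}}(x)}{x} > 1$ in the PL example, and more generally one can check $g_{\textsc{mid}}'(0) \le \gm$ and $1 \le \gm$), a maximum is attained at some interior point. The key is uniqueness of the \emph{local} maximum.

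For uniqueness I would study the sign of the derivative $h(x) := x\, g_{\textsc{mid}}'(x) - g_{\textsc{mid}}(x)$, since $\left(\frac{g_{\textsc{mid}}(x)}{x}\right)' = h(x)/x^2$ has the same sign as $h(x)$. Note $h(0)=0$ (from $g_{\textsc{mid}}(0)=0$) and $h'(x) = x\, g_{\textsc{mid}}''(x)$. By property 4, $g_{\textsc{mid}}'' \ge 0$ on $(0,1/2)$ and $g_{\textsc{mid}}'' \le 0$ on $(1/2,1)$, so $h'(x) \ge 0$ on $(0,1/2)$ and $h'(x)\le 0$ on $(1/2,1)$; thus $h$ is unimodal — increasing then decreasing — on $(0,1)$. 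Since $h(0)=0$ and $h$ first increases, $h$ becomes positive immediately after $0$; then $h$ keeps increasing on $(0,1/2)$, and afterwards decreases monotonically on $(1/2,1)$ toward $h(1^-) = g_{\textsc{mid}}'(1^-)\cdot 1 - 1$. Because near $x=1$ the ratio $g_{\textsc{mid}}/x$ approaches $1$ from above at an interior max, $h$ must be negative near $1$; so on $(1/2,1)$ the monotone-decreasing function $h$ has exactly one zero, and this is the unique point where $\left(\frac{g_{\textsc{mid}}}{x}\right)'$ changes sign, namely from $+$ to $-$. Hence $\frac{g_{\textsc{mid}}(x)}{x}$ has a unique local maximum in $(0,1)$.

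The argument for $\frac{g_{\textsc{out}}(x)}{x}$ is structurally identical, now using property 5: there is a threshold $c_1$ with $g_{\textsc{out}}$ convex on $(0,c_1)$ and concave on $(c_1,\infty)$, so $\tilde h(x) := x\, g_{\textsc{out}}'(x) - g_{\textsc{out}}(x)$ satisfies $\tilde h(0)=0$, $\tilde h'(x) = x\, g_{\textsc{out}}''(x)$, making $\tilde h$ increasing on $(0,c_1)$ and decreasing on $(c_1,\infty)$; I would then check that $\frac{g_{\textsc{out}}(x)}{x} \to g_{\textsc{out}}'(0)$ as $x\to 0^+$ and $\to 0$ as $x\to\infty$ (since $g_{\textsc{out}}(x)\to g(1) = 1/2$ by property 3 as $x\to\infty$, and the ratio vanishes), so the decreasing branch of $\tilde h$ does reach negative values, forcing exactly one sign change of the derivative from $+$ to $-$. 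The main obstacle I anticipate is the bookkeeping at the endpoints: I must rule out degenerate cases where $\gm$ (or $\go$) equals the limiting endpoint value so that the "maximum" is only approached, not attained — I would handle this by invoking Observation~\ref{obs:generic_lb}/the explicit inequalities $\gm \ge 1$ and $\gm \ge g_{\textsc{mid}}'(0)$ with at least one strict (which follows from strict monotonicity, Axiom~\ref{axiom:monotonicity}, plus $g$ twice-differentiable and not affine), guaranteeing $h$ is strictly positive somewhere and strictly negative somewhere and hence the interior zero exists and is unique. The rest is routine sign-chasing.
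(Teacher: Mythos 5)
Your argument is correct and essentially the same as the paper's: both proofs turn on the sign of $h(x)=x\,g_{\textsc{mid}}'(x)-g_{\textsc{mid}}(x)$ (which is $x^2$ times the derivative of the ratio) and the fact that $h'(x)=x\,g_{\textsc{mid}}''(x)$ changes sign once under the convex-then-concave structure of Definition~\ref{def:pairwise_prob_function_class} — the paper packages this via MVT/Rolle (no critical point in the convex region, increase at the inflection point, at most one zero of $h$ in the concave region), while you phrase it as unimodality of $h$ with $h(0)=0$, and the same reasoning carries over to $g_{\textsc{out}}$. The strictness issues you flag at the end (strict convexity/concavity so that $h$ becomes strictly positive and then strictly negative, ruling out degenerate affine pieces) are implicitly assumed in the paper's proof as well, so apart from the misplaced appeal to Observation~\ref{obs:generic_lb} your proposal matches the paper's argument up to presentation.
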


%We prove Corollary \ref{corollary:existence_maxima} in Section \ref{sec:corollary_proof}. 

Denote the unique maximizers of $\frac{g_{\textsc{mid}}(x)}{x}$ by $x^{\ast}_{\textsc{mid}}$. Recall from Equation \eqref{eq:gm_go_defn} that $\gm$ is defined as $\gm = \frac{ g_{\textsc{mid}}(x^{\ast}_{\textsc{mid}}) }{x^{\ast}_{\textsc{mid}}}$. %and $\go$ is defined as $ \go = \frac{g_{\textsc{out}}(x^{\ast}_{\textsc{out}})}{x^{\ast}_{\textsc{out}}}.$

\iffalse
\begin{align}{\label{eq:x_1_x_2_star_defn}}
    & x^{\ast}_{\textsc{mid}} \text{ and }x^{\ast}_{\textsc{mid}} \text{ is the unique local maximizer of } \frac{g_{\textsc{mid}}(x)}{x} \text{ and } \frac{g_{\textsc{out}}(x)}{x} \text{ respectively}.
% & x^{\ast}_{\textsc{out}} \text{ is the unique local maximizer of } \frac{g_{\textsc{out}}(x)}{x} \text{ in } [0,\infty). 
\end{align}

Since the derivative of local maxima is zero, by differentiating $\frac{g_{\textsc{mid}}(x)}{x}$ and $\frac{g_{\textsc{out}}(x)}{x}$ we get 

\begin{equation}{\label{eq:derivative_equality}}
    g'_{\textsc{mid}}(x^{\ast}_{\textsc{mid}}) = \frac{ g_{\textsc{mid}}(x^{\ast}_{\textsc{mid}}) }{x^{\ast}_{\textsc{mid}}}
    ~\text{ and }~ g'_{\textsc{out}}(x^{\ast}_{\textsc{out}})  =  \frac{g_{\textsc{out}}(x^{\ast}_{\textsc{out}})}{x^{\ast}_{\textsc{out}}}. 
\end{equation}
\fi

%Recall that we use $\gm$ for $ \frac{ g_{\textsc{mid}}(x^{\ast}_{\textsc{mid}}) }{x^{\ast}_{\textsc{mid}}}$ and $\go$ for  $\frac{g_{\textsc{out}}(x^{\ast}_{\textsc{out}})}{x^{\ast}_{\textsc{out}}}.$ 

In the rest of the analysis, $\gm$ appears many times, so we note these quantities for the PL model here. For the PL model with $\theta =2$, $\gm = \frac{\sqrt{2}+1}{2} \approx 1.21,\overline L(g,1/2) \approx 0.40$ and $R(g,1/2)\approx 0.707$. When $\theta =4,$ $\gm \approx 1.42,$ and $R(g,0.5) = + \infty$ as there is no solution. When $\theta \rightarrow \infty$, $\gm \rightarrow 2$ and $R(g,0.5) = + \infty$. This limit is where PL resembles deterministic voting.

%and $\go = \frac{\sqrt{2}-1}{2} \approx 0.21.$
We now consider a modified probabilistic voting setup where the rankings are not generated adversarially subject to a constraint on the marginal distribution of pairwise probabilities but are generated directly from the PL model with parameter $\theta$.

\begin{definition}[Metric Distortion under PL model]
    \[
        \textsc{dist}^{\theta}_{\text{PL}}(f,n,m) = \sup_{\substack{\N: |\N| =n \\ \A: |\A| =m}} \sup_{d \in \mathcal{M}(\N \cup \mathcal{A})} \frac{\mathbb{E}_{\sigma_{\N} \sim \text{PL}(\theta)}[\SC(f(\sigma_{\N}),d)]}{\min\limits_{j \in \A}\SC(j,d)}.
    \]
    Here, $\text{PL}(\theta)$ denotes a ranking from the PL model with parameter $\theta$. The expectation is over both the randomness in the votes and the voting rule $f$.
\end{definition}

Observe that $\textsc{dist}^{\theta}_{\text{PL}}(f,n,m) \leq \textsc{dist}^{(g)}(f,n,m)$ where $g(r) = \frac{1}{1+r^{-\theta}}$ (drawn from the BT model).

In the following sections, we derive distortion bounds for various voting rules.

\section{Metric Distortion of \textsc{Plurality} Under Probabilistic Voting}\label{sec:metric_distortion_Plurality}

In this section, we establish upper and lower bounds on the metric distortion of \textsc{Plurality}. 
In the limit as the number of voters approaches infinity (``large election''), our upper and lower bounds match and scale linearly with the number of candidates $m.$
Let \( B \) be the candidate that minimizes social cost (referred to as the ``best'' candidate), and let \( \{A_j\}_{j \in [m-1]} \) be the set of all other candidates.

Before we state the main result for the upper and lower bounds of plurality, we first set some notations. Define 
\begin{equation}{\label{eq:R_C_defn}}
	R(g,\alpha) = \inf\limits_{(x_1, x_2) \in \mathcal{P}^{(g, \alpha)}} \frac{\C^{(g,\alpha)}(x_2)}{1- \C^{(g,\alpha)}(x_2)} \text{ where } \C^{(g,\alpha)}(x_2) = \frac{\alpha - g_{\textsc{mid}}(x_2) + x_2(1-2x_2) g'_{\textsc{mid}}(x_2)}{2(\alpha-g_{\textsc{mid}}(x_2)) +  (1-2x_2) g'_{\textsc{mid}}(x_2)}
\end{equation}
 %Define \(R(g,\alpha) = \inf\limits_{(x_1, x_2) \in \mathcal{P}^{(g, \alpha)}} \frac{\C^{(g,\alpha)}(x_2)}{1- \C^{(g,\alpha)}(x_2)}\) 
 
where the constraint set $\mathcal{P}^{(g,\alpha)}$ is defined below.\footnote{If the constraint set is empty, define \(R(g,\alpha)=+\infty\).}

 \begin{align}
 \mathcal{P}^{(g,\alpha)}
 :=
 \{(x_1,x_2)\in[0,1]\times(0,1/2):
 &\; g'_{\textsc{mid}}(x_1)
 =
 g'_{\textsc{mid}}(x_2)(1-2x_2)
 +2\left(\alpha-g_{\textsc{mid}}(x_2)\right),\nonumber\\
 &\; g_{\textsc{mid}}(x_1)-\alpha
 =
 g'_{\textsc{mid}}(x_1)
 \left(x_1-\C^{(g,\alpha)}(x_2)\right),\nonumber\\
 &\; x_2 < \C^{(g,\alpha)}(x_2),\quad
 x_1 > \C^{(g,\alpha)}(x_2),\quad
 0<\C^{(g,\alpha)}(x_2)<1/2\}.\label{eq:constraint_set_P_def}
 \end{align}
 
% Further, define
% \[
% \mathcal X(g,\alpha)
% :=
% \left\{
% x\in(0,1/2):
% g_{\textsc{mid}}(x)< \alpha
% \right\}.
% \]
 Finally, define
\begin{equation}{\label{eq:L_upper_bound_defn}}
\overline{L}(g,\alpha)
:=
\min\left\{
\frac12,\,
\inf_{x\in \mathcal X(g,\alpha)}
\C^{(g,\alpha)}(x)
\right\} \text{ where }  \mathcal X(g,\alpha)
:=
\left\{
x\in(0,1/2):
g_{\textsc{mid}}(x)< \alpha
\right\}.
\end{equation}
	 The value \(1/2\) in the empty-set case is a harmless convention, since whenever \(\mathcal X(g,\alpha)\neq\emptyset\), the values \(\C^{(g,\alpha)}(x)\) lie in \((0,1/2)\).
	 
\[
\Delta_g(\alpha)
:=
\max\left\{
\frac{\gm}{\alpha}-1,\,
R(g,\alpha)^{-1}
\right\},
\qquad
\overline{\Delta}_g(\alpha)
:=
\max\left\{
\frac{\gm}{\alpha}-1,\,
\overline L(g,\alpha)^{-1}-1
\right\}.
\]
We will show below that \(\Delta_g(\alpha)=\Opt(\Efrac{\alpha})^{-1}\) for the values of \(\alpha\) used in the distortion bounds, and that \(\Delta_g(\alpha)\leq\overline{\Delta}_g(\alpha)\).
Geometrically, the two terms in \(\Delta_g(\alpha)\) correspond to two ways in which a fixed challenger \(A_j\) can obtain expected pairwise support against the socially optimal candidate \(B\). The term \(\gm/\alpha-1\) comes from a one-active-location obstruction: after adding inactive mass near the optimal side, the active voters lie between \(B\) and \(A_j\), and the adversary chooses the location that maximizes probability gained per unit distance from \(B\). The term \(R(g,\alpha)^{-1}\) comes from a genuinely two-location obstruction, where the active mass is split between two points on the line through \(B\) and \(A_j\), one on the segment and one on the outside extension. The two-point system defining \(R(g,\alpha)\) captures exactly when this mixed geometry is better for the adversary. The lower-bound statement in Theorem~\ref{theorem:thm_plurality_distortion_m_lower_bound}, and its proof in Appendix~\ref{sec:thm_plurality_distortion_m_lower_bound_proof}, realize these same one-location and two-location geometries, which is why the upper bound is tight in the large-election limit.

For finite-\(n\) bounds, define
\begin{equation}{\label{eq:Gamma_g_eta_defn}}
    \Gamma_g(\eta):=\sup_{0<t\leq\eta} t\,\Delta_g(t).
\end{equation}

This quantity is finite for the values of \(\eta\) used below. Indeed,
\[
t\,\Delta_g(t)
\leq
\max\left\{\gm-t,\frac{t}{R(g,t)}\right\}.
\]
The first term is bounded by \(\gm\). For the second term, Claim~\ref{claim:asymptotic_r_scale} gives \(R(g,t)=\Omega(g_{\textsc{mid}}^{-1}(t))\) as \(t\downarrow0\). Under the standing local regular-variation assumption on \(g_{\textsc{mid}}\) near zero, with index \(\beta>1\), the ratio \(t/g_{\textsc{mid}}^{-1}(t)\) is bounded near zero. Away from zero, the supremum is over a compact interval on which the two-point system is finite wherever it is active. Hence \(\Gamma_g(\eta)<\infty\).

\subsection{Upper bound on the metric distortion of \textsc{Plurality}}\label{sec:metric_distortion_Plurality_upper_bound}

\newcommand{\thmplulabel}{\label{eq:distortion_plurality_m}\gdef\thmplulabel{}}
\newcommand{\thmplu}{
For every \( \epsilon \in (0,1/2) \), and for all \( m \geq 2 \) and \( n \geq m^2 \), we have:
	    \begin{align}\thmplulabel
	        \textsc{dist}^{(g)}(\textsc{Plurality},n,m)
	        &\leq
	        m(m-1)\Gamma_g\left(\frac{1-n^{-(\frac{1}{2}-\epsilon)}}{m}\right)
	        \exp\Bigg(
	            \frac{- n^{(\frac{1}{2}+\epsilon)} +2m}
	                 {(2n^{(\frac{1}{2}-\epsilon)}-1)m}
	        \Bigg) \nonumber \\
	        &\quad
	        + \max\bigg(
	            \frac{m\gm}{1- n^{-(\frac{1}{2}-\epsilon)}}-1,
	            R\left(g,\frac{1- n^{-(\frac{1}{2}-\epsilon)}}{m}\right)^{-1}
	        \bigg). 
	    \end{align}
	    %$$. 
	    Furthermore, $\limsup_{n \to \infty} \textsc{dist}^{(g)}(\textsc{Plurality},n,m)
	    \leq
	    \Delta_g(1/m)
	    \leq
	    \overline{\Delta}_g(1/m)$.
	}

\begin{theorem}{\label{theorem:thm_plurality_distortion_m}}
\thmplu
\end{theorem}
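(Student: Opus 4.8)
The plan is to bound the expected social cost of the \textsc{Plurality} winner by conditioning on which candidate it is, and to show that any candidate $W$ that is ``far'' from $B$ (in the sense that $SC(W,d) > (\max(m\gm-1,m\go+1)+\delta) \cdot SC(B,d)$ for small $\delta$) is unlikely to accumulate enough plurality votes to win. Concretely, I would first set up the normalization $SC(B,d) = 1$ (WLOG by scale-freeness, Axiom~\ref{axiom:scalefree}), and for a candidate $W$ with social cost $c = SC(W,d)$, bound the expected number of voters who rank $W$ first. A voter $i$ ranks $W$ first only if $W \succ_{\sigma_i} B$, an event of probability $g\!\left(\frac{d(i,B)}{d(i,W)}\right)$ by Equation~\eqref{eq:pairwise_probability}. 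So $\mathbb{E}[\text{plurality votes for } W] \le \sum_{i} g\!\left(\frac{d(i,B)}{d(i,W)}\right)$.

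The key step is to turn this per-voter probability into a bound involving $\gm$ and $\go$ together with the costs $d(i,B)$ and $d(i,W)$. Here I would invoke exactly the ``trade-off'' intuition described in the excerpt around Equation~\eqref{eq:gm_go_defn}: writing $a = d(i,B)$, $b = d(i,W)$, and using the triangle inequality $|a - b|$-type relations, one shows $g\!\left(\frac{a}{b}\right) \le \frac{\gm \cdot a + \text{(correction)}}{b}$ in the regime where the voter is ``between'' $B$ and $W$, and an analogous bound with $\go$ when the voter is ``outside''. More precisely, using $\frac{a}{b} = \frac{x}{1-x}$ with $x = \frac{a}{a+b}$ gives $g(\frac{a}{b}) = g_{\textsc{mid}}(x) \le \gm x = \gm \frac{a}{a+b} \le \gm \frac{a}{b} \cdot \frac{b}{a+b}$; combining and summing over voters one gets $\mathbb{E}[\text{votes for }W] \lesssim \gm \sum_i d(i,B) / (\text{something} \ge d(i,W)/\text{const})$, which after the $SC$ normalization yields $\mathbb{E}[\text{votes for }W] \lesssim \frac{\gm \, n}{c}$ (and similarly $\frac{\go \, n}{c}$ in the other regime). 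Then, for $W$ to win plurality it must beat the average, i.e., get at least $n/m$ votes minus ties; so if $\frac{\gm n}{c} < \frac{n}{m}$, i.e., $c > m\gm$, candidate $W$ cannot be a plurality winner in expectation. The finite-$n$ version requires a concentration inequality (Hoeffding/Chernoff on the sum of independent indicators, valid since voters vote independently) to convert ``in expectation'' into ``with high probability,'' which is where the $\exp(\cdot)$ term and the $n^{1/2\pm\epsilon}$ slack in the theorem statement come from; taking $n \to \infty$ kills that term and leaves the clean bound $\max(m\gm-1, m\go+1)$. The $-1$ and $+1$ shifts arise from carefully accounting for the contribution of $W$ itself versus $B$ in the social-cost ratio and from the triangle-inequality corrections.

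The main obstacle I anticipate is handling the two regimes (voter ``between'' $B$ and $W$ versus ``outside'') uniformly and getting the constants exactly right — in particular, showing that the worst case over the metric is captured precisely by $\gm$ and $\go$ as defined via the suprema in Equation~\eqref{eq:gm_go_defn}, rather than some larger constant. This requires Lemma~\ref{lemma:existence_maxima} (uniqueness of the maximizers) and the convexity/concavity structure in Definition~\ref{def:pairwise_prob_function_class} to ensure the per-voter bounds are tight and additive across voters. A secondary technical point is the bookkeeping for ties in \textsc{Plurality} (broken arbitrarily, hence adversarially) and for the possibility that several near-optimal candidates split the vote, which is why the threshold is $n/m$ rather than $n/2$; this is exactly the $\alpha = \frac{1}{m}$ choice flagged in the ``Main Techniques'' subsection, and formally it would go through the linear-fractional program and its dual mentioned there, with concentration applied to the dual-feasible solution.
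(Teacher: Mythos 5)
Your skeleton does match the paper's proof: couple the plurality-win event for a candidate $A_j$ to the pairwise event that at least $n/m$ voters rank $A_j$ above $B$, split candidates according to whether the expected pairwise share $\alpha_j$ is above or below roughly $\frac{1}{m}$, apply a Chernoff bound in the low-$\alpha_j$ case, and let $n \to \infty$ to kill the exponential term. The genuine gap is in the step you treat as a per-voter computation: converting the constraint $\sum_i g\big(\tfrac{d(i,B)}{d(i,A_j)}\big)\approx \tfrac{n}{m}$ into the ratio bound $\frac{SC(A_j,d)}{SC(B,d)} \le \max(m\gm-1,\,m\go+1)$. The pointwise inequality you propose, $g(a/b)=g_{\textsc{mid}}\big(\tfrac{a}{a+b}\big)\le \gm\tfrac{a}{a+b}$, followed by $a+b\ge d(B,W)$ and $\sum_i w_i \le \sum_i b_i + n\,d(B,W)$, yields only $\mathbb{E}[\text{votes for }W]\le \gm n/(c-1)$ with $c=SC(W)/SC(B)$, i.e.\ $c\le m\gm+1$ --- strictly weaker than the claimed constant, and it never produces the $m\go+1$ branch because every voter is funneled through the ``mid'' normalization. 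The $-1$ in $m\gm-1$ is not a bookkeeping correction: it arises because the worst case puts voters on the boundary $w_i = d(B,W)-b_i$ (resp.\ $w_i = d(B,W)+b_i$), forcing $SC(W)=n\,d(B,W)-SC(B)$ (resp.\ $+SC(B)$), and one must prove that mixing voters between the two boundaries and the interior of the triangle-inequality region cannot do better.

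That missing piece is precisely the content of Lemma~\ref{lemma:bounding_social_cost_ratio} and Lemma~\ref{lemma:bouding_second_optimization_problem}: the fractional program $\mathcal{E}_\alpha$ with constraint $\max_i|w_i-b_i|\le \min_i(w_i+b_i)$, its linearization $\mathcal{E}_{\mu,\alpha}$, and a Lagrangian argument that reduces the minimization to the two boundary curves and then exploits the convexity/concavity structure of $g_{\textsc{mid}},g_{\textsc{out}}$ (Definition~\ref{def:pairwise_prob_function_class}, Lemma~\ref{lemma:existence_maxima}) and the maximizers $x^*_{\textsc{mid}},x^*_{\textsc{out}}$ to certify $\textsc{Opt}(\mathcal{E}_\alpha)\ge \min\big((\gm/\alpha-1)^{-1},(\go/\alpha+1)^{-1}\big)$. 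You gesture at this (``it would go through the linear-fractional program and its dual''), but that deferral is the entire technical content of the theorem; the per-voter shortcut as written does not reach the stated constants. A secondary omission: for candidates with very small $\alpha_j$ the ratio bound $\gm/\alpha_j-1$ is unbounded, so it must be multiplied against the Chernoff bound $m\alpha_j\exp(\cdot)$ so that $\alpha_j$ cancels; showing only that far candidates rarely win does not by itself bound their contribution to the distortion, which is exactly how the paper assembles the finite-$n$ expression in Equation~\eqref{eq:distortion_plurality_m}.
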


\newcommand{\claimasymptoticrscale}{
As $m \to \infty$, we have $\left(R(g,1/m)\right) =\Omega\left(g_{\textsc{mid}}^{-1}(1/m)\right)$ asymptotically.
In particular, if \(g_{\textsc{mid}}(x)\sim c x^\beta\) as \(x\downarrow0\) for some \(c>0\), then \(\left(R(g,1/m)\right)^{-1}=O(m^{1/\beta})\) as \(m\to\infty\).
}

The following claim shows that the distortion grows at most linearly in $m$. 

\begin{claim}{\label{claim:asymptotic_r_scale}}
\claimasymptoticrscale
\end{claim}

For the $\text{PL}$ model with parameter $\theta$, \(g_{\textsc{mid}}\) is regularly varying with index $\theta$ and thus, \(\left(R(g,1/m)\right)^{-1}=O(m^{1/\theta})\). The proof of Claim~\ref{claim:asymptotic_r_scale} appears in Appendix~\ref{sec:asymptotic_r_scale_proof}.

Table~\ref{tab:comp_distortion_terms} illustrates the two competing terms in \(\Delta_g(\alpha)\), together with the single-point upper envelope \(\overline{\Delta}_g(\alpha)\), for the Plackett--Luce model. For fixed \(m\), the term \(m\gm-1\) dominates when \(\theta\) is larger, while the two-point term \(R(g,\alpha)^{-1}\) can dominate for smaller \(\theta\). As guaranteed by Lemma~\ref{lemma:optimizer_mu_alpha_combined}, the single-point quantity \(\overline L(g,\alpha)^{-1}-1\) upper-bounds \(R(g,\alpha)^{-1}\).

\begin{table}[t]
	\centering
	\begin{tabular}{c|ccc|ccc}
		\hline
		& \multicolumn{3}{c|}{\(m=2,\ \alpha=1/2\)}
		& \multicolumn{3}{c}{\(m=4,\ \alpha=1/4\)} \\
		$\theta$ & $\overline L^{-1}-1$ & $R^{-1}$ & $\gm/\alpha-1$
		& $\overline L^{-1}-1$ & $R^{-1}$ & $\gm/\alpha-1$ \\
		\hline
		1.2 & 2.12 & 1.46 & 1.10 & 3.23 & 3.18  & 3.21 \\
		1.5 & 1.73 & 1.43 & 1.24 & 2.44 & -- & 3.47 \\
		2   & 1.48 & 1.41 & 1.41 & 1.91 & -- & 3.83 \\
		4   & 1.20 & -- & 1.84 & 1.37 & -- & 4.68 \\
		\hline
	\end{tabular}
	\caption{Limiting distortion terms for the Plackett--Luce model. Here \(R^{-1}\) denotes \(R(g,\alpha)^{-1}\), \(\overline L^{-1}-1\) denotes \(\overline L(g,\alpha)^{-1}-1\), and ``--'' means that the set of equations corresponding to the two-point term has no solution.}
	\label{tab:comp_distortion_terms}
\end{table}

%\sahasrajit{Maybe relate $\gm$ and $\go$ to PL voting for particular value of $\theta$.}

%\sss{One may give some examples of this expression for different functions $g$.}
%\sss{Use expected fraction terminology.}

To prove this theorem, we first establish an upper bound on $\frac{\SC(W,d)}{\SC(B,d)}$ under the constraint that the expected fraction of voters ranking candidate $W$ over $B$ is at least \(\alpha\). This ratio helps bound the contribution of the non-optimal candidate $W$ to the metric distortion of $\textsc{Plurality}$. To express this bound as a function of \( \alpha \), we formulate the following optimization problem \eqref{eqn:optim_formulation_first}. The optimization program is defined for all \( \alpha \in [0,1] \). Let \(\Opt\) denote the optimal value of an optimization program.

\begin{equation}{\label{eqn:optim_formulation_first}}
\Efin{\alpha}=
    \left\{
    \begin{aligned}
   \text{minimize}  \quad & \frac{\sum_{i \in \mathcal{N}} b_i}{\sum_{i \in \mathcal{N}} w_i}\\
    \text{subject to} \quad & \sum_{i \in \mathcal{N}} g\left(\frac{b_i}{w_i}\right) \geq n\alpha\\
      & \max_{i \in \mathcal{N}}|w_i-b_i| \leq \min_{i\in \mathcal{N}} (w_i+b_i) \\
      & b_i, w_i \geq 0 \quad \forall i \in \mathcal{N}
%      \quad & w_i \geq 0 ~~~  \quad \quad \forall i \in \mathcal{N}
\end{aligned}
\right.
\end{equation}

It is useful to view \eqref{eqn:optim_formulation_first} through voter types. After normalizing \(d(B,W)=1\), a voter type is a pair \((b,w)\) satisfying \(|b-w|\leq1\) and \(b+w\geq1\). Appendix~\ref{sec:two_active_voter_types_proof} gives the full fractional voter-type formulation and proves that its optimum is attained using at most two active voter types. We therefore use \(\Efrac{\alpha}\) for this fractional relaxation, written below in its equivalent two-location form.\footnote{For the objective ratio in \(\Efrac{\alpha}\), a positive numerator divided by zero is interpreted as \(+\infty\), and \(0/0\) is interpreted as \(1\).} The variable \(p\) in \(\Efrac{\alpha}\) is a real mixture weight, not necessarily an integer multiple of \(1/n\). Every feasible finite electorate for \(\Efin{\alpha}\) induces a feasible fractional voter-type distribution, so $\Opt(\Efrac{\alpha}) \leq \Opt(\Efin{\alpha})$
\begin{equation}{\label{eqn:optim_formulation_two_type}}
    \Efrac{\alpha}=
    \left\{
    \begin{aligned}
    \text{minimize} \quad
        & \frac{p b_1+(1-p)b_2}{p w_1+(1-p)w_2} \\
    \text{subject to} \quad
        & p\,g\left(\frac{b_1}{w_1}\right)
          +(1-p)g\left(\frac{b_2}{w_2}\right) \geq \alpha,\\
        & |b_k-w_k|\leq 1,\quad b_k+w_k\geq 1 \quad k=1,2,\\
        & p\in[0,1],\quad b_k,w_k\geq0 \quad k=1,2.
    \end{aligned}
    \right.
\end{equation}

\begin{observation}{\label{lemma:bounding_social_cost_ratio}}
For any two candidates \(W,B\in\mathcal A\) satisfying
\(\sum_{i\in\mathcal N}\mathbbm P[W\succ_{\sigma_i}B]\geq n\alpha\), we have
\[
    \frac{\SC(W,d)}{\SC(B,d)}
    \leq
    \frac{1}{\Opt(\Efin{\alpha})}
    \leq
    \frac{1}{\Opt(\Efrac{\alpha})}.
\]
\end{observation}

\begin{proof}
Variables \(b_i\) and \(w_i\) in \eqref{eqn:optim_formulation_first} represent the distances \(d(i,B)\) and \(d(i,W)\). The constraint \(\max_i|w_i-b_i|\leq \min_i(w_i+b_i)\) follows from the triangle inequalities
\[
    |d(i,B)-d(i,W)|
    \leq d(B,W)
    \leq d(i,B)+d(i,W)
    \quad \forall i\in\mathcal N.
\]
Also, \(g(b_i/w_i)=\mathbbm P[W\succ_{\sigma_i}B]\). Thus every such candidate pair gives a feasible solution of the finite-voter program \eqref{eqn:optim_formulation_first}, and therefore
$\Opt(\Efin{\alpha})
    \leq
    \frac{\SC(B,d)}{\SC(W,d)}$
which gives the first inequality after taking reciprocals. The second inequality follows from
\(\Opt(\Efrac{\alpha})\leq\Opt(\Efin{\alpha})\).
\end{proof}

To obtain the needed reciprocal bound from Observation~\ref{lemma:bounding_social_cost_ratio}, we linearize the objective in \eqref{eqn:optim_formulation_two_type} by introducing an extra non-negative parameter \(\mu\). The following program is defined for all \(\mu \in [0,1]\)\footnote{The restriction \(\mu\leq1\) is without loss for the values \(\alpha\leq1/2\) used below: setting \(b_k=w_k\) gives objective value \(1\) and satisfies the probability constraint since \(g(1)=1/2\).} and \(\alpha \in [0,1]\).
\begin{equation}{\label{eqn:optim_formulation_final}}
    \Efrac{\mu,\alpha}=
    \left\{
    \begin{aligned}
    \text{minimize} \quad
        & p(b_1-\mu w_1)+(1-p)(b_2-\mu w_2)\\
    \text{subject to} \quad
        & p\,g\left(\frac{b_1}{w_1}\right)
          +(1-p)g\left(\frac{b_2}{w_2}\right) \geq \alpha,\\
        & |b_k-w_k|\leq 1,\quad b_k+w_k\geq 1 \quad k=1,2,\\
        & p\in[0,1],\quad b_k,w_k\geq0 \quad k=1,2.
\end{aligned}
\right. 
\end{equation}

Define
\[
    \mu^\ast(\alpha)
    :=
    \sup\left\{
        \mu\in[0,1]:
        \Opt(\Efrac{\mu,\alpha})\geq 0
    \right\}.
\]

\newcommand{\lemmasecondopt}{
For \(\alpha\leq 1/2\), we have $Opt(\Efrac{\alpha})= \mu^\ast(\alpha)$.
In the explicit solution below, the critical value is obtained by solving the threshold condition
\(\Opt(\Efrac{\mu,\alpha})=0\), whenever the threshold is attained.
}

\begin{observation}{\label{lemma:bouding_second_optimization_problem}}
\lemmasecondopt
\end{observation}

\begin{proof}
Let
\[
    B:=p b_1+(1-p)b_2,
    \qquad
    W:=p w_1+(1-p)w_2 .
\]
The feasible sets of \(\Efrac{\alpha}\) and \(\Efrac{\mu,\alpha}\) are the same. If
\(\Opt(\Efrac{\mu,\alpha})\geq0\), then every feasible solution of \(\Efrac{\alpha}\) satisfies
\[
    B-\mu W\geq0,
\]
and hence \(B/W\geq\mu\) whenever \(W>0\). If \(W=0\), then the ratio objective is either \(+\infty\) or, in the \(0/0\) case, equals \(1\) by convention; since \(\mu\leq1\), the same conclusion holds. Thus
\(\Opt(\Efrac{\alpha})\geq\mu\), so
\[
    \Opt(\Efrac{\alpha})\geq \mu^\ast(\alpha).
\]
Conversely, consider any admissible \(\mu\in[0,1]\) with \(\mu>\Opt(\Efrac{\alpha})\). By the definition of infimum, there exists a feasible solution of \(\Efrac{\alpha}\) with objective value less than \(\mu\). This solution must have \(W>0\), since feasible solutions with \(W=0\) have ratio \(+\infty\) or \(1\) under the convention, and \(\mu\leq1\). Hence \(B/W<\mu\), and for this same feasible solution,
\[
    B-\mu W<0,
\]
which implies \(\Opt(\Efrac{\mu,\alpha})<0\). Therefore no such \(\mu\) belongs to the set defining \(\mu^\ast(\alpha)\). Taking the supremum over admissible \(\mu\in[0,1]\) gives
\[
    \mu^\ast(\alpha)\leq \Opt(\Efrac{\alpha}).
\]
Combining both inequalities proves the equality.
\end{proof}

We now state the following lemma which characterises $\mu^\ast(\alpha)$. A detailed proof is provided in Appendix~\ref{sec:optimizer_e_alpha_soln}.

\newcommand{\optimizerepsilonalpha}{
\[
\mu^\ast(\alpha)
=
\min\left\{
R(g,\alpha),
\left(\frac{\gm}{\alpha}-1\right)^{-1}
\right\}
\ge
\min\left\{
\frac{\overline{L}(g,\alpha)}{1-\overline{L}(g,\alpha)},
\left(\frac{\gm}{\alpha}-1\right)^{-1}
\right\}.
\]
Furthermore, \(\left(\frac{\gm}{\alpha}-1\right)^{-1}\) dominates this lower bound whenever $\alpha \le \gm\,\overline{L}(g,\alpha)$.
}

\begin{lemma}\label{lemma:optimizer_mu_alpha_combined}
		\optimizerepsilonalpha
\end{lemma}

The two alternatives in Lemma~\ref{lemma:optimizer_mu_alpha_combined} are the formal optimizer counterparts of the geometric picture above: \(\left(\frac{\gm}{\alpha}-1\right)^{-1}\) is the one-active-location value, while \(R(g,\alpha)\) is the value of the two-active-location system. The reciprocal form in \(\Delta_g(\alpha)\) is the corresponding upper bound on the social-cost ratio for a challenger \(A_j\) against \(B\).

Under the Plackett--Luce model with parameter \(\theta\) and \(m=4\) (\(\alpha=1/4\)), the condition \(\alpha \le \gm\,\overline{L}(g,\alpha)\) is satisfied for \(\theta \geq 1.21\). For \(m=2\), the corresponding threshold is \(\theta \geq 2.11\). Table~\ref{tab:comp_distortion_terms} reports the relevant quantities for representative values of \(\theta\).

We now provide an intuition for the proof of Theorem~\ref{theorem:thm_plurality_distortion_m}, with the full proof presented in Appendix~\ref{sec:thm_plurality_distortion_m_proof}.  
The key technique involves analyzing two cases for each candidate \( A_j \neq B \), based on whether the expected number of voters ranking \( A_j \) above \( B \) (denoted as \( \alpha_j n \)) exceeds 
$\left(\frac{n}{m} - \frac{n^{\epsilon + 1/2}}{m}\right).$ 

\textbf{Case 1:} If \( \alpha_j n \geq \frac{n}{m} - \frac{n^{\epsilon + 1/2}}{m}\), we apply Observation~\ref{lemma:bounding_social_cost_ratio} to bound the ratio of social costs between \( A_j \) and \( B \), which in turn provides a bound on \( A_j \)'s contribution to the distortion.  

\textbf{Case 2:} If \( \alpha_j n  < \frac{n}{m} - \frac{n^{\epsilon + 1/2}}{m}\), we use the Chernoff bound on the probability of \( A_j \) being the winner. We then multiply this probability by the social cost ratio of \( A_j \) and \( B \) to bound the distortion.  

% \end{itemize}
\subsection{Lower bound on the metric distortion of \textsc{Plurality}}

%We now present a lower bound on the metric distortion of \textsc{Plurality} for any $m$ in the limit $n$ tends to infinity. This lower bound matches the upper bound of Theorem \ref{theorem:thm_plurality_distortion_m} in the limit. A full proof is in Appendix \ref{sec:thm_plurality_distortion_m_lower_bound_proof}.

We now establish a matching lower bound on the metric distortion of \textsc{Plurality} for any \( m \) in the limit as \( n \to \infty \). A full proof is provided in Appendix~\ref{sec:thm_plurality_distortion_m_lower_bound_proof}.

\newcommand{\thmplulb}{
For every \(m\geq2\), $\liminf_{n \to \infty}
\textsc{dist}^{(g)}(\textsc{Plurality},n,m)
\geq
\Delta_g(1/m)$.
}

\begin{theorem}{\label{theorem:thm_plurality_distortion_m_lower_bound}}
\thmplulb
\end{theorem}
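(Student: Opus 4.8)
The plan is to construct, for each $m \geq 2$, a sequence of election instances (indexed by $n$) in a low-dimensional Euclidean space such that, with probability approaching $1$ as $n \to \infty$, \textsc{Plurality} selects a specific non-optimal candidate $W$ whose social cost ratio against the optimal candidate $B$ approaches $\max(m\gm - 1, m\go + 1)$. I would handle the two terms of the maximum by two separate constructions and take whichever is larger; by symmetry of the setup it suffices to describe the $m\gm - 1$ construction and indicate the analogous $m\go + 1$ one.

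For the $m\gm - 1$ bound: recall $\gm = g_{\textsc{mid}}(x^*_{\textsc{mid}})/x^*_{\textsc{mid}}$ where $x^*_{\textsc{mid}}$ is the maximizer guaranteed by Lemma~\ref{lemma:existence_maxima}. I would place, on the real line, the optimal candidate $B$ at $0$ and a ``bad'' candidate $W$ at $1$, and put the other $m-2$ candidates $A_2,\dots,A_{m-1}$ essentially coincident with $B$ at $0$ (or infinitesimally perturbed so ties break favorably). Now place $n$ voters at position $x^*_{\textsc{mid}}$ (strictly between $0$ and $1$). Each such voter ranks $W$ above $B$ (and above each $A_j$, since those sit at $0$ as well) independently with probability $g(x^*_{\textsc{mid}}/(1-x^*_{\textsc{mid}})) = g_{\textsc{mid}}(x^*_{\textsc{mid}}) = \gm \cdot x^*_{\textsc{mid}}$. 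Under IOC the pairwise marginals factor, so the probability that a given voter puts $W$ \emph{first} among all $m$ candidates is exactly the probability $W$ beats each of the $m-1$ others; with the $m-1$ ``good'' candidates all at distance $x^*_{\textsc{mid}}$ from the voter, one can choose the joint distribution in $\hat\Rho^{(g)}(d)$ so that $W$'s top-choice probability is the marginal $g_{\textsc{mid}}(x^*_{\textsc{mid}})$ — this is the standard observation (cited from \cite{azari2012random}) that the PL/BT marginals are realized by an actual ranking distribution, and the worst-case over $\hat\Rho^{(g)}(d)$ in Definition~\ref{def:distortion-probabilistic} lets us pick the adversarial one. Then the expected number of first-place votes for $W$ is $n\,g_{\textsc{mid}}(x^*_{\textsc{mid}})$, while the $n$ voters' first-place votes split among the $m-1$ good candidates, so each gets in expectation about $n(1 - g_{\textsc{mid}}(x^*_{\textsc{mid}}))/(m-1)$. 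For $W$ to win we need $g_{\textsc{mid}}(x^*_{\textsc{mid}}) > (1-g_{\textsc{mid}}(x^*_{\textsc{mid}}))/(m-1)$; after a slight adjustment of the construction — e.g., using $m-1$ good candidates but tuning so the relevant inequality is strict, or more robustly placing a sub-$\epsilon$ fraction of voters to break near-ties — a Chernoff/Hoeffding bound shows $W$ is the plurality winner with probability $\to 1$. Then $SC(W,d)/SC(B,d) = n\cdot x^*_{\textsc{mid}} \cdot$(something)$\,/\,\dots$; more precisely $SC(B,d) = n x^*_{\textsc{mid}}$ and $SC(W,d) = n(1 - x^*_{\textsc{mid}})$, giving ratio $(1-x^*_{\textsc{mid}})/x^*_{\textsc{mid}}$, which is not yet $m\gm - 1$ — so the construction must be refined: split the voters into a mass near $x^*_{\textsc{mid}}$ that controls \emph{who wins}, and additional voters placed so as to inflate $SC(W)/SC(B)$ while keeping $B$ optimal and $W$ the winner. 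The cleanest version: put a $1/m$-ish fraction of voters at $x^*_{\textsc{mid}}$ feeding $W$, and the remaining voters clustered at $B$ (contributing nothing to $SC(B)$ but distance $\approx 1$ to $W$), and verify via the Chernoff bound that $W$'s expected plurality lead survives; the social cost ratio then tends to $m \gm - 1$ after optimizing the split. The $m\go + 1$ case is identical but with the adversarial voters placed to the \emph{left} of $B$ at position $-x^*_{\textsc{out}}$ (distance $x^*_{\textsc{out}}$ from $B$, distance $1 + x^*_{\textsc{out}}$ from $W$), using $g_{\textsc{out}}$ in place of $g_{\textsc{mid}}$, which accounts for the $+1$ versus $-1$.

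The main obstacle I anticipate is the delicate balancing of \emph{three} competing requirements simultaneously in the limit: (i) $B$ must remain the social-cost minimizer (so not too much voter mass can be pulled toward $W$ or the $A_j$'s); (ii) $W$ must win \textsc{Plurality} with probability $\to 1$, which requires $W$'s expected first-place count to strictly exceed that of every $A_j$ by a margin that dominates the $O(\sqrt{n})$ fluctuations — forcing the split fractions to be bounded away from the knife-edge; and (iii) the achieved ratio must actually converge to $m\gm - 1$ (resp. $m\go + 1$), not merely be bounded by it. Reconciling (ii) and (iii) is the crux: the knife-edge split that maximizes the ratio is exactly the one where $W$ ties the $A_j$'s, so one must introduce a vanishing perturbation (an $o(1)$ fraction of extra voters, or take $x$ slightly off $x^*_{\textsc{mid}}$) that breaks the tie in $W$'s favor while costing only $o(1)$ in the ratio, then send that perturbation to zero after sending $n \to \infty$. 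Verifying that the double limit works out — and that all $n$ triangle inequalities hold for the chosen point configuration, which they do trivially on the line — is the routine-but-careful part; the conceptual content is entirely in the choice of $x^*_{\textsc{mid}}$, $x^*_{\textsc{out}}$ and the splitting fractions, which match the upper-bound analysis of Theorem~\ref{theorem:thm_plurality_distortion_m} term for term.
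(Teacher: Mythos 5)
Your route is the same as the paper's: an explicit instance with ``ambivalent'' voters at $x^{*}_{\textsc{mid}}$ (resp.\ $-x^{*}_{\textsc{out}}$), an adversarial joint distribution in $\hat{\Rho}^{(g)}(d)$ that top-ranks $W$ with exactly the marginal probability, a small surplus over the knife-edge so that concentration makes $W$ win, and a double limit. But as written there is a genuine gap: the one concrete construction you commit to does not make $W$ win, and the tuned split --- which is precisely where the value $m\gm-1$ comes from --- is never computed. With a $\frac{1}{m}$-fraction of voters at $x^{*}_{\textsc{mid}}$ and the remaining $(1-\frac{1}{m})$-fraction clustered at $B$ (the other good candidates essentially at $B$), the cluster voters rank $W$ above $B$ with probability $g(d(i,B)/d(i,W))\approx g(0)=0$, so their first-place votes go to good candidates; however the adversary splits them, some good candidate receives at least $(1-\frac{1}{m})\frac{n}{m-1}=\frac{n}{m}$ of them in expectation (plus a further $\frac{(1-g_{\textsc{mid}}(x^{*}_{\textsc{mid}}))}{m-1}\cdot\frac{n}{m}$ from the ambivalent voters), while $W$'s expected count is only $g_{\textsc{mid}}(x^{*}_{\textsc{mid}})\cdot\frac{n}{m}<\frac{n}{m}$; so $W$ loses with probability tending to one. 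The repair is to take the ambivalent fraction $\beta=\frac{1+\zeta}{m\,g_{\textsc{mid}}(x^{*}_{\textsc{mid}})}>\frac{1}{m}$ (this is the paper's choice $a_{\textsc{mid}}=\frac{1}{m-1}\bigl(1-\frac{1+\zeta}{m q_{\textsc{mid}}}\bigr)$ for the co-located blocks): then $W$'s expected share is $\frac{1+\zeta}{m}$ while every good candidate's stays below $\frac{1}{m}$, and the cost ratio is $\frac{1-\beta x^{*}_{\textsc{mid}}}{\beta x^{*}_{\textsc{mid}}}=\frac{1}{\beta x^{*}_{\textsc{mid}}}-1\to \frac{m\,g_{\textsc{mid}}(x^{*}_{\textsc{mid}})}{x^{*}_{\textsc{mid}}}-1=m\gm-1$ as $\zeta\to 0$ (and analogously $m\go+1$ with $g_{\textsc{out}}$). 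Asserting that ``optimizing the split'' yields $m\gm-1$ without exhibiting this choice leaves the theorem's value unsupported, since your own first computation of the ratio came out wrong.

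Two further points. First, exact marginal consistency: Equation \eqref{eq:pairwise_probability} requires the pairwise marginals to equal $g$ of the exact distance ratios, and in one dimension you cannot place $m-1\geq 3$ distinct good candidates exactly equidistant from the ambivalent voters, so with your ``infinitesimally perturbed'' coincident placement the proposed distribution (uniformly permuting the good candidates) does not match the required marginals, and coincident candidates produce ill-defined $0/0$ ratios for voters sitting on them. The paper avoids both problems by working in $\mathbb{R}^3$: the $m-1$ good candidates sit on a circle of radius $\epsilon$ in the $y$--$z$ plane, hence are exactly equidistant from every voter on the $x$-axis, so the explicit distribution ($W$ top with probability $q_{\textsc{mid}}$, otherwise $W$ last, the rest uniformly permuted) is exactly consistent; and each good candidate gets its own co-located block of $\lfloor a_{\textsc{mid}} n\rfloor$ voters rather than one block at $B$, which is what keeps every good candidate's expected first-place count strictly below $n/m$. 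Second, your claim that IOC makes the top-choice probability ``factor'' into the pairwise events is incorrect and also unnecessary: what you need, and correctly invoke a sentence later, is only that the supremum over $\hat{\Rho}^{(g)}(d)$ lets the adversary choose any joint ranking distribution consistent with the prescribed marginals.
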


\begin{proof}[Proof Sketch]
The proof is by a single optimizer-based construction in a 3-D Euclidean space. Candidate \(W\) is positioned at \((1,0,0)\), while the other \(m-1\) candidates \(B_1,\ldots,B_{m-1}\) are placed on a circle of radius \(\epsilon\) around the origin in the \(y\)-\(z\) plane.

Take an approximately optimal two-type solution \((p,b_1,w_1,b_2,w_2)\) of \(\Efrac{1/m}\). The appendix shows that such active types may be chosen on the line through \(W\) and the center of the \(B_j\)'s, so we place type-\(k\) voters with distance \(b_k\) from that center and distance \(w_k\) from \(W\). A vanishing fraction \(\beta_\zeta=\zeta/(m-1)\) of voters is placed at \(W\); the remaining voters are split according to \(p\) and \(1-p\) between the two active types. This raises the limiting expected plurality score of \(W\) to at least \((1+\zeta)/m\), so \(W\) wins with probability tending to one. Letting \(n\to\infty\), \(\epsilon\to0\), the approximation error go to zero, and then \(\zeta\to0\), the social-cost ratio tends to \(\Opt(\Efrac{1/m})^{-1}=\Delta_g(1/m)\). \qedhere
\end{proof}

This result shows that even under probabilistic voting, the metric distortion of \textsc{Plurality} grows linearly with $m$, making it a poor choice when $m$ is even moderately large.

Theorem \ref{theorem:thm_plurality_distortion_m_lower_bound} also has the following implication for \textit{any} deterministic rule since any well-behaved deterministic voting rule cannot achieve a lower distortion than by selecting the majority winner when \( m = 2 \). We define a well-behaved voting rule and prove Observation~\ref{obs:generic_lb} in Appendix \ref{sec:proof_obs_generic_lb}.

	 \newcommand{\obsgenericlb}{
	 For the case of two candidates, the metric distortion of any well-behaved deterministic voting rule (Definition~\ref{def:wellbehaved}), for large elections ($n \rightarrow \infty$), is at least \(\Delta_g(1/2)\). 
	 }

 \begin{observation}\label{obs:generic_lb}
     \obsgenericlb
 \end{observation}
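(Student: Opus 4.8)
The plan is to reduce the two-candidate case to the one-voter-position gadget that motivated the definitions of $\gm$ and $\go$, then invoke the large-election behaviour of the voting rule. Take $m=2$ with candidates $B$ (socially optimal) and $W$. Place them on the real line at $0$ and $1$ respectively, so $d(B,W)=1$. The construction splits into two symmetric families. In the first, we put a $\beta$-fraction of voters exactly at $B$'s position (so they rank $B$ first with probability $1$) and the remaining $(1-\beta)$-fraction of "ambivalent" voters at the point $x^{*}_{\textsc{mid}}\in(0,1)$; each ambivalent voter ranks $W$ first with probability $g_{\textsc{mid}}(x^{*}_{\textsc{mid}})=x^{*}_{\textsc{mid}}\gm$, independently. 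By the law of large numbers, as $n\to\infty$ the fraction of voters ranking $W$ first concentrates on $(1-\beta)\,x^{*}_{\textsc{mid}}\gm$, which we can make strictly exceed $\tfrac12$ by choosing $\beta$ small enough precisely when $x^{*}_{\textsc{mid}}\gm>\tfrac12$ — and this holds because $\gm\ge g_{\textsc{mid}}(\tfrac12)/(\tfrac12)=1$ by definition of the supremum, so $x^{*}_{\textsc{mid}}\gm=g_{\textsc{mid}}(x^{*}_{\textsc{mid}})$ is the value of a function exceeding $g_{\textsc{mid}}(\tfrac12)$; a short check confirms $x^{*}_{\textsc{mid}}\gm > 1/2$. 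Here is where the "well-behaved" hypothesis enters: a well-behaved rule must, in the limit of infinitely many voters, pick the strict majority winner when $m=2$, so $\mathbb{P}[W\text{ wins}]\to 1$.

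With $W$ winning almost surely, the distortion is at least $SC(W,d)/SC(B,d)$ in the limit. Computing social costs: the $\beta n$ voters at $B$ contribute $\beta n$ to $SC(W,d)$ and $0$ to $SC(B,d)$ (up to vanishing terms), while the $(1-\beta)n$ ambivalent voters at $x^{*}_{\textsc{mid}}$ contribute $(1-\beta)n(1-x^{*}_{\textsc{mid}})$ to $SC(W,d)$ and $(1-\beta)n\,x^{*}_{\textsc{mid}}$ to $SC(B,d)$. Hence
\[
\frac{SC(W,d)}{SC(B,d)} = \frac{\beta + (1-\beta)(1-x^{*}_{\textsc{mid}})}{(1-\beta)\,x^{*}_{\textsc{mid}}} \xrightarrow[\beta\to\beta_{\min}]{} \frac{1 - (1-\beta_{\min})x^{*}_{\textsc{mid}}}{(1-\beta_{\min})x^{*}_{\textsc{mid}}},
\]
and pushing $(1-\beta)x^{*}_{\textsc{mid}}\gm$ down toward $\tfrac12$ from above (i.e. $(1-\beta)x^{*}_{\textsc{mid}}\to \tfrac{1}{2\gm}$) makes this ratio approach $2\gm-1$. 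So the supremum over instances is at least $2\gm-1$.

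The second family is identical except the ambivalent voters sit at $-x^{*}_{\textsc{out}}<0$, at distance $x^{*}_{\textsc{out}}$ from $B$ and $1+x^{*}_{\textsc{out}}$ from $W$; each ranks $W$ first with probability $g_{\textsc{out}}(x^{*}_{\textsc{out}})=x^{*}_{\textsc{out}}\go$. The same concentration argument (with the feasibility check $x^{*}_{\textsc{out}}\go$ relative to the threshold, tuned via $\beta$) forces $W$ to win with probability tending to $1$, and the social-cost ratio becomes $\tfrac{\beta + (1-\beta)(1+x^{*}_{\textsc{out}})}{(1-\beta)\,x^{*}_{\textsc{out}}}$, whose limit as the winning margin is driven to $\tfrac12$ equals $2\go+1$. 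Taking the better of the two families gives the bound $\max(2\gm-1,2\go+1)$. The main obstacle is not any single computation but the bookkeeping around the limit: one must be careful that the well-behaved rule's majority guarantee is an asymptotic statement, so we need a strict majority $(1-\beta)x^{*}_{\textsc{mid}}\gm>\tfrac12$ (resp. $(1-\beta)x^{*}_{\textsc{out}}\go>\tfrac12$) for every finite $n$ large enough, and then take $n\to\infty$ before letting $\beta$ approach its threshold — the order of limits matters, and the precise content of Definition~\ref{def:wellbehaved} must be strong enough to license the "$\mathbb{P}[W\text{ wins}]\to1$" step while being weak enough that Plurality, Copeland, etc., all qualify.
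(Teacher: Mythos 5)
Your approach is genuinely different from the paper's, and unfortunately it has two gaps, one of which is fatal.

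\paragraph{A conceptual gap in the use of ``well-behaved''.}
Your argument hinges on the claim that ``a well-behaved rule must, in the limit of infinitely many voters, pick the strict majority winner when $m=2$.'' But Definition~\ref{def:wellbehaved} says no such thing: it only requires that the winner be a deterministic function $\mathbb{F}$ of $p_{W\succ B}$ with finitely many discontinuities. A rule such as $\mathbb{F}\equiv B$ is well-behaved under that definition and never lets $W$ win, so your adversary strategy of engineering a majority for $W$ does not constrain it. The paper sidesteps this entirely by constructing \emph{two} metric spaces in which the vote distribution is (asymptotically) identical but the socially optimal candidate is flipped: since $\mathbb{F}$ sees the same input in both spaces and therefore outputs the same candidate, that output is suboptimal in one of them with social-cost ratio exactly $2\gm-1$. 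This indistinguishability argument is what the well-behaved hypothesis is actually buying — it guarantees that the rule's asymptotic output is a function of $\mathbb{E}[p_{W\succ B}]$ alone, so identical means imply identical outputs.

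\paragraph{A fatal feasibility failure in the $\go$ branch.}
Even granting your stronger reading of ``well-behaved,'' the second family does not go through. With ambivalent voters at $-x^{*}_{\textsc{out}}$, each ranks $W$ above $B$ with probability $g_{\textsc{out}}(x^{*}_{\textsc{out}})$. But by item~3 of Definition~\ref{def:pairwise_prob_function_class}, $g(1)=\tfrac12$, and since $g$ is strictly increasing and $\tfrac{x}{1+x}<1$ for every finite $x$, we have $g_{\textsc{out}}(x)<\tfrac12$ for all finite $x$. Hence the expected fraction of voters ranking $W$ first is $(1-\beta)\,g_{\textsc{out}}(x^{*}_{\textsc{out}})<\tfrac12$ for every choice of $\beta\in[0,1]$: $W$ can never attain a strict majority in this configuration, no matter how $\beta$ is tuned. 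The ``feasibility check $\ldots$ tuned via $\beta$'' that you allude to in fact fails. (Note that your $\gm$ branch does pass this check, because $x^{*}_{\textsc{mid}}>\tfrac12$ as established in the proof of Lemma~\ref{lemma:existence_maxima}, hence $g_{\textsc{mid}}(x^{*}_{\textsc{mid}})>\tfrac12$.) The paper's two-space construction does not need a majority for $W$ at all: it only needs the expected fraction to equal $\tfrac12$ in both spaces simultaneously, which it arranges by balancing concentrated voters at a candidate against ambivalent voters elsewhere.

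In short, to make the argument sound you would need to switch from ``force $W$ to win'' to ``make the instance indistinguishable from its mirror image,'' and then observe that whichever of $B,W$ the rule returns, it is the worse candidate in one of the two mirrored instances.
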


% \begin{observation}\label{obs:generic_lb}
%    The metric distortion of any deterministic voting rule for large elections ($n \rightarrow \infty$) is lower bounded by $\max(2\gm -1, 2\go+1).$ \textcolor{red}{to make formal}.
% \end{observation}

Note, however, that the lower bound in Theorem \ref{theorem:thm_plurality_distortion_m_lower_bound} does not apply to the PL model. This is because the PL model has a specific distribution over rankings, whereas Theorem \ref{theorem:thm_plurality_distortion_m_lower_bound} is for an \textit{adversarial} distribution subject to the constraint that its marginals on pairwise orders are given by $g$. In the PL model, $\mathbb{P}[A_j \text{~is top-ranked in~} \sigma_i]  = \frac{d(i,A_j)^{-\theta}}{\sum_{A_k \in \A} d(i,A_k)^{-\theta} }$ \cite{azari2012random}. 
Nonetheless, even under the PL model, we can show that the metric distortion of \textsc{Plurality} increases linearly with \( m \).

% We have the following result for the PL model, and we prove it in Section \ref{sec:thm_plurality_distortion_m_PL_lower_bound_proof}.

\newcommand{\thmplulbpl}{
For all $m\geq 2, ~~ \liminf_{n \to \infty} \textsc{dist}^{\theta}_{\textsc{PL}}(\textsc{Plurality},n,m) = \Omega(m). $
}

\begin{theorem}\label{theorem:thm_plurality_distortion_m_PL_lower_bound}
\thmplulbpl
\end{theorem}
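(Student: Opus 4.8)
The plan is to construct an explicit instance in a Euclidean metric, mirroring the structure of the adversarial lower bound (Theorem~\ref{theorem:thm_plurality_distortion_m_lower_bound}) but adapted so that the \emph{genuine} PL top-choice probabilities — not an adversarially chosen joint distribution — deliver a winner with high cost. Place one ``bad'' candidate $W$ at a point far from a tight cluster of $m-1$ ``good'' candidates $A_1,\dots,A_{m-1}$, each of which is (nearly) socially optimal; the good candidates sit within distance $\epsilon$ of a common center $c$, and $d(c,W)$ is normalized to $1$. As before, a $1-\beta$ fraction of voters are ``loyalists'' split evenly so that $\lfloor a n\rfloor$ of them coincide with each good candidate (hence top-rank that candidate with probability $1$ under PL, since a coincident candidate has infinite strength), and the remaining $\beta n$ voters are ``ambivalent'' and placed at the single point that the adversary in Theorem~\ref{theorem:thm_plurality_distortion_m_lower_bound} would use — either at distance $x^{*}_{\textsc{mid}}$ between $c$ and $W$, or at $-x^{*}_{\textsc{out}}$ on the far side of $c$.

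Next I would compute, for an ambivalent voter $i$, the PL probability that $W$ is top-ranked. Using $\mathbb{P}[A_j\text{ top in }\sigma_i] = d(i,A_j)^{-\theta}/\sum_{A_k\in\A} d(i,A_k)^{-\theta}$ from \cite{azari2012random}, and the fact that all $m-1$ good candidates lie at essentially the same distance $\delta_i$ from an ambivalent voter (up to $O(\epsilon)$ perturbations), the probability that $W$ is top-ranked is $\dfrac{d(i,W)^{-\theta}}{d(i,W)^{-\theta} + (m-1)\,\delta_i^{-\theta} \,(1+O(\epsilon))}$. The key point is that this is a \emph{constant} $q$ (depending on $\theta$ and the chosen voter location but not on $m$ once we let the distances be fixed): for the ``mid'' placement $\delta_i = 1-x^{*}_{\textsc{mid}} + O(\epsilon)$ and $d(i,W)=x^{*}_{\textsc{mid}}$, giving $q \approx \dfrac{1}{1+(m-1)\left(\frac{x^{*}_{\textsc{mid}}}{1-x^{*}_{\textsc{mid}}}\right)^{\theta}}$, which is $\Theta(1/m)$. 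So I would instead make the ambivalent voters \emph{closer} to $W$ — take $d(i,W) = \eta$ small and $d(i, c) = 1-\eta$ — or, more cleanly, make the number of ambivalent voters $\beta n$ grow appropriately; the cleanest route is to choose the ambivalent-voter location so that $W$'s PL top-probability $q$ is a fixed constant bounded away from $0$ while $W$'s distance to the voter, and hence the blow-up in social cost, stays $\Theta(1)$, then pick the loyalist fraction $a$ per good candidate to be just below $q\beta/(m-1)$ so that by a Chernoff bound $W$'s plurality score exceeds every good candidate's with probability $\to 1$ as $n\to\infty$.

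With the instance fixed, the remaining steps are routine: (i) a Chernoff/union bound over the $m-1$ good candidates shows $\mathbb{P}[W\text{ wins}]\to 1$; (ii) the social cost of $W$ is at least $\Omega(n)$ (every loyalist is at distance $\approx 1$ from $W$, and there are $\Theta(n)$ of them) while the social cost of an optimal good candidate is $O(\epsilon n + \beta n \cdot \delta)$, which for suitable parameters is a $\Theta(1)$ fraction of $n$; hence the cost ratio is $\Omega(1)$. To get the full $\Omega(m)$ I would scale the geometry the way the $m\gm$ and $m\go$ bounds arise: place $W$ far (distance $R$) from the good cluster and the ambivalent voters between them, so that $SC(W)/SC(B) \gtrsim$ (fraction of ambivalent voters voting $W$) $\times R$, and balance $R$, the PL constant $q$, and the loyalist budget so the product is $\Theta(m)$; concretely, taking the ambivalent voters at the analogue of $x^{*}_{\textsc{mid}}$ and noting that to keep $q = \Theta(1)$ one needs $R^{\theta} = \Theta(m)$, i.e.\ $R = \Theta(m^{1/\theta})$, and then the cost ratio scales like $a\,(m-1)\cdot R + (\text{ambivalent contribution})$, which after optimizing the loyalist split gives $\Omega(m)$.

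\textbf{Main obstacle.} The delicate part is reconciling two competing effects that are decoupled in Theorem~\ref{theorem:thm_plurality_distortion_m_lower_bound} but coupled here: in the adversarial model the marginal $g(\cdot)$ can be realized with $W$ being top-ranked by an ambivalent voter with probability exactly $g\!\left(\frac{x^{*}_{\textsc{mid}}}{1-x^{*}_{\textsc{mid}}}\right)$ regardless of $m$, whereas under genuine PL the presence of $m-1$ good candidates \emph{divides} the top-choice mass, shrinking $W$'s top-probability by a factor $\sim 1/m$. Compensating for this — by moving voters closer to $W$ (which costs distance, shrinking the ratio) or by increasing $R$ (which shrinks $q$) — is exactly the trade-off that must be tuned to land at $\Omega(m)$ rather than a weaker bound; getting a clean closed-form choice of $R$, $\beta$, and $a$ as functions of $m$ and $\theta$, and verifying all triangle inequalities and Chernoff conditions simultaneously, is where the real work lies. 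I expect the final bound to come out as $\Omega(m)$ with the hidden constant depending on $\theta$, consistent with the $\Omega(m)$ entry in Table~\ref{tab:dist_PL_voting}.
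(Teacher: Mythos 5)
Your construction is essentially the paper's: a cluster of $m-1$ near-optimal candidates each holding roughly $n/m$ coincident loyal voters, plus $\Theta(n/m)$ ``ambivalent'' voters placed so that the ratio $d(i,W)/d(i,\text{cluster})$ is $\Theta(m^{-1/\theta})$ — the paper takes $\hat{x}=1-m^{-1/\theta}$ with $d(\text{cluster},W)=1$, which is exactly your $R=\Theta(m^{1/\theta})$ after rescaling — so that $W$'s genuine PL top-choice probability is a constant, a law-of-large-numbers/Chernoff argument makes $W$ win w.h.p., and the social-cost ratio is $\Theta(m)$. One small correction: by scale-freeness only the ratio of distances matters, so placing the ambivalent voters ``at the analogue of $x^{*}_{\textsc{mid}}$'' cannot change $q$; your other instinct (voters at relative distance $m^{-1/\theta}$ from $W$) is the right one and is precisely the paper's choice.
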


\section{Metric Distortion of \textsc{Copeland} and \textsc{WeightedUncovered} Under Probabilistic Voting}\label{sec:distortion_Copeland}

We now derive an upper bound on the metric distortion of \textsc{Copeland}. Recall that a Copeland winner belongs to the uncovered set in the tournament graph \cite[Theorem 15]{anshelevich2015approximating}. This implies that for a socially optimal candidate \(B\) and Copeland winner \(W\), one of the following two scenarios holds:
\begin{itemize}
    \item the oriented edge from \(W\) to \(B\) is present;
    \item \(W\) has an oriented edge to some candidate \(Y\), and \(Y\) in turn has an oriented edge to \(B\). 
\end{itemize}

In the first scenario, the distortion bound for \textsc{Plurality} with \(m=2\) applies to \textsc{Copeland} for any \(m\). The analysis for the second scenario is more nuanced because the two pairwise-majority constraints are imposed on the same electorate. We therefore use a three-candidate optimization problem rather than multiplying two independent one-edge bounds.

As in the proof of \textsc{Plurality}, it is useful to work with voter types. Here a voter type is a triple \((b,y,w)\), representing the distances of the voter to \(B,Y,W\), respectively. For \(\alpha_1,\alpha_2\in[0,1]\), define the three-type program \(\mathcal{T}_{\alpha_1,\alpha_2}\) as follows:
\begin{equation}{\label{eqn:copeland_three_candidate_fractional}}
\mathcal{T}_{\alpha_1,\alpha_2}
=
    \left\{
    \begin{aligned}
    \text{minimize}\quad
        & \frac{\sum_{\ell=1}^{3}p_\ell b_\ell}
        {\sum_{\ell=1}^{3}p_\ell w_\ell}\\
    \text{subject to}\quad
        & \sum_{\ell=1}^{3}p_\ell g\left(\frac{y_\ell}{w_\ell}\right)\geq \alpha_1,\\
        & \sum_{\ell=1}^{3}p_\ell g\left(\frac{b_\ell}{y_\ell}\right)\geq \alpha_2,\\
        & (b_\ell,y_\ell,w_\ell)_{\ell=1}^{3}
        \text{ are jointly feasible three-candidate voter types},\\
        & \sum_{\ell=1}^{3}p_\ell=1,\quad p_\ell\geq0 \quad \ell\in\{1,2,3\}.
    \end{aligned}
    \right.
\end{equation}
The first probability constraint corresponds to the expected fraction of voters ranking \(W\) above \(Y\), and the second corresponds to the expected fraction ranking \(Y\) above \(B\). Joint feasibility means that the three distance triples can be realized with common candidate-candidate distances \(d(B,Y)\), \(d(Y,W)\), and \(d(B,W)\) satisfying the triangle inequalities. The weights \(p_1,p_2,p_3\) are three disjoint voter blocks, one for each active voter type. Appendix~\ref{sec:copeland_three_active_voter_types_proof} gives the full finite-electorate formulation and proves, using the same basic-feasible-solution argument as for \textsc{Plurality}, that the fractional relaxation can be represented using at most three active voter types.

\newcommand{\obscopelandthreecandidateratio}{
If candidates \(W,Y,B\in\A\) satisfy $\sum_{i\in\N}\mathbbm P[W\succ_{\sigma_i}Y]\geq n\alpha_1$ and 
    $\sum_{i\in\N}\mathbbm P[Y\succ_{\sigma_i}B]\geq n\alpha_2$ then, $\frac{\SC(W,d)}{\SC(B,d)} \leq
    \frac{1}{\Opt(\mathcal{T}_{\alpha_1,\alpha_2})}$.
}

\begin{observation}{\label{obs:copeland_three_candidate_ratio}}
\obscopelandthreecandidateratio
\end{observation}

\newcommand{\claimcopelandprogramproductlowerbound}{
For all \(\alpha_1,\alpha_2\in[0,1]\),
$\Opt(\mathcal{T}_{\alpha_1,\alpha_2})
    \geq
    \Opt(\Efrac{\alpha_1})\Opt(\Efrac{\alpha_2})$. Consequently, $\Opt(\mathcal{T}_{\alpha_1,\alpha_2})^{-1}
    \leq
    \Opt(\Efrac{\alpha_1})^{-1}\Opt(\Efrac{\alpha_2})^{-1}$.
}

\begin{claim}{\label{claim:copeland_program_product_lower_bound}}
\claimcopelandprogramproductlowerbound
\end{claim}
The proof is in Appendix~\ref{sec:copeland_three_active_voter_types_proof}. The claim shows that treating the two pairwise constraints independently gives the product-type Copeland bound as a relaxation of the shared-electorate three-candidate program.

Let \(\eta_n:=(1-n^{-(1/2-\epsilon)})/2\). Formally, we have the following result.

\newcommand{\thmcop}{
		For every \( \epsilon \in (0,1/2) \), \( m \geq 2 \), and \( n \geq 16 \), let \(\eta_n=(1-n^{-(1/2-\epsilon)})/2\). Then  
	\begin{align*} %\label{eq:distortion_Copeland}
	    \textsc{dist}^{(g)}(\textsc{Copeland},n,m) 
		    &\leq  4m(m-1)
		    \exp\left(
	            \frac{- n^{(\frac{1}{2}+\epsilon)} +8}
	                 {2(2n^{(\frac{1}{2}-\epsilon)}-1)}
			    \right)
			    \max\left\{\Gamma_g(\eta_n),\Gamma_g(\eta_n)^2\right\}  \nonumber \\
		    &\quad +\max\left\{
		    \Opt(\Efrac{\eta_n})^{-1},
	        \Opt(\mathcal{T}_{\eta_n,\eta_n})^{-1}
		    \right\}.  
		\end{align*}

	For every \( m \geq 2 \), we have  
	\[
	\begin{aligned}
	\limsup_{n \to \infty} \textsc{dist}^{(g)}(\textsc{Copeland},n,m)
	&\leq
	\max\left\{
	    \Delta_g(1/2),
	    \Opt(\mathcal{T}_{1/2,1/2})^{-1}
	\right\} \\
	&\leq
	\max\left\{
	    \Delta_g(1/2),
	    \Delta_g(1/2)^2
	\right\},
	\end{aligned}
	\]
	where \(\Delta_g(\alpha)=\Opt(\Efrac{\alpha})^{-1}\), as computed in Lemma~\ref{lemma:optimizer_mu_alpha_combined}.
	}

\begin{theorem}{\label{theorem:Copeland_distrotion_m}}
\thmcop
\end{theorem}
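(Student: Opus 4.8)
## Proof Proposal for Theorem~\ref{theorem:Copeland_distrotion_m}

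\textbf{Proof proposal.}
The plan is to combine the structural fact that a \textsc{Copeland} winner lies in the uncovered set of the \emph{realized} tournament with the two–candidate machinery already built for \textsc{Plurality} (Lemmas~\ref{lemma:bounding_social_cost_ratio} and~\ref{lemma:bouding_second_optimization_problem}). Fix the metric $d$, let $B$ be a socially optimal candidate, and let $W$ denote the random \textsc{Copeland} winner. Write $\mathbb{E}[SC(W,d)] = \sum_{X\in\A}\mathbb{P}[W=X]\,SC(X,d)$; candidates $X$ with $SC(X,d)\le SC(B,d)$ contribute at most $SC(B,d)$ in total, so it suffices to bound $\mathbb{P}[W=X]\cdot\frac{SC(X,d)}{SC(B,d)}$ and sum over the remaining $X$. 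Since $W$ belongs to the uncovered set, the event $W=X$ forces either the realized edge $X\!\to\!B$ (i.e. $|XB|>\tfrac n2$) or, for some $Y$, the realized path $X\!\to\!Y\!\to\!B$; a union bound over the at most $m-1$ choices of $Y$, together with the factorization $\frac{SC(X,d)}{SC(B,d)}=\frac{SC(X,d)}{SC(Y,d)}\cdot\frac{SC(Y,d)}{SC(B,d)}$, reduces everything to controlling, for a single ordered pair $(P,Q)$, both $\mathbb{P}[\,|PQ|>\tfrac n2\,]$ and the ratio $\frac{SC(P,d)}{SC(Q,d)}$.

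The per–edge analysis is the heart of the argument. Let $\alpha_{PQ}:=\tfrac1n\sum_{i\in\N}\mathbb{P}[P\succ_{\sigma_i}Q]$ and set the threshold $\alpha^\star:=\tfrac12\bigl(1-n^{-(1/2-\epsilon)}\bigr)$. If $\alpha_{PQ}\ge\alpha^\star$, then Lemmas~\ref{lemma:bounding_social_cost_ratio}--\ref{lemma:bouding_second_optimization_problem} give the \emph{deterministic} bound $\frac{SC(P,d)}{SC(Q,d)}\le\frac{1}{\textsc{Opt}(\mathcal{E}_{\alpha^\star})}\le\max\bigl(\tfrac{2\gm}{1-n^{-(1/2-\epsilon)}}-1,\ \tfrac{2\go}{1-n^{-(1/2-\epsilon)}}+1\bigr)=:\beta_n$. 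If instead $\alpha_{PQ}<\alpha^\star$, then $|PQ|=\sum_{i\in\N}\mathbb{I}(P\succ_{\sigma_i}Q)$ is a sum of independent Bernoullis of total mean $<\alpha^\star n=\tfrac n2-\tfrac12 n^{1/2+\epsilon}$, so a Chernoff bound gives $\mathbb{P}[\,|PQ|>\tfrac n2\,]\le\exp\bigl(\tfrac{-n^{1/2+\epsilon}+8}{2(2n^{1/2-\epsilon}-1)}\bigr)$, while the ratio is still at most $\frac{1}{\textsc{Opt}(\mathcal{E}_{\alpha_{PQ}})}\le\max\bigl(\tfrac{\gm}{\alpha_{PQ}}-1,\tfrac{\go}{\alpha_{PQ}}+1\bigr)$. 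For fixed $n$, $\alpha\mapsto\frac{1}{\textsc{Opt}(\mathcal{E}_{\alpha})}\cdot\mathbb{P}_{\alpha}[\,|PQ|>\tfrac n2\,]$ is bounded on $(0,\alpha^\star]$ (it tends to $0$ at both ends, since $\alpha_{PQ}=0$ makes $|PQ|=0$ almost surely and the Chernoff factor decays like $e^{-\Theta(n^{2\epsilon})}$ near $\alpha^\star$), so a bad edge contributes at most $(\gm+\go)\exp(\cdots)$ up to absolute constants.

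Now reassemble. For the one–hop contribution $\mathbb{P}[X\!\to\!B]\cdot\frac{SC(X,d)}{SC(B,d)}$: if $\alpha_{XB}\ge\alpha^\star$ it is $\le\beta_n$ and, since $\sum_X\mathbb{P}[W=X]\le1$, the good one–hop terms sum to $\le\beta_n\le\beta_n^2$; if $\alpha_{XB}<\alpha^\star$ it is exponentially small, and summing over the $\le m-1$ such $X$ yields an $(m-1)\exp(\cdots)(\gm+\go)$ term. For the two–hop contribution, split on whether each of $\alpha_{XY},\alpha_{YB}$ meets $\alpha^\star$: if both do, the ratio is deterministically $\le\beta_n^2$ and these terms, summed over $X$ and the realized $Y$, total $\le\beta_n^2$; if at least one edge is bad, that edge supplies a concentration factor (using $\mathbb{P}[X\!\to\!Y\!\to\!B]\le\sqrt{\mathbb{P}[X\!\to\!Y]\,\mathbb{P}[Y\!\to\!B]}$ when both are bad, to distribute the decay across both $\tfrac{\gm}{\alpha}$–type ratio factors), so each triple $(X,Y)$ contributes at most $\exp(\cdots)(\gm+\go)^2$ and there are at most $m(m-1)$ of them. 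Collecting the pieces (the prefactor $4$ absorbing the good/bad sub–cases and the one–hop term) gives exactly the stated finite–$n$ bound. Letting $n\to\infty$ kills the exponential term, while $\beta_n\to\max(2\gm-1,2\go+1)$, hence $\beta_n^2\to\max((2\gm-1)^2,(2\go+1)^2)$, which is the claimed asymptotic bound.

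The main obstacle is the two–hop ``one good edge, one bad edge'' bookkeeping: unlike the one–hop case one cannot pull the social–cost ratio out of the sum over $Y$ (the intermediate $Y$ is dictated by the realization, not chosen), so one must carry the $\tfrac{\gm}{\alpha}$–type bound on the bad edge's ratio, verify it is uniformly dominated by the concentration factor over $\alpha\in(0,\alpha^\star]$, and then check that the resulting constants really collapse into the $4m(m-1)(\gm+\go)^2$ prefactor. A secondary point is that the uncovered–set characterization must be applied to the random realized tournament, so every bound above must be taken inside the expectation over $\sigma_\N$, with the union bounds phrased as events in that probability space.
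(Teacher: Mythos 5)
Your proposal is correct and follows the paper's route: the uncovered‐set characterization of the Copeland winner, the two–candidate ratio bound $\frac{SC(P,d)}{SC(Q,d)} \le \tfrac{1}{\textsc{Opt}(\mathcal{E}_{\alpha})}$ from Lemmas~\ref{lemma:bounding_social_cost_ratio}--\ref{lemma:bouding_second_optimization_problem}, and a Chernoff-plus-union-bound argument for edges whose expected fraction falls below the threshold $\alpha^\star$. The paper streamlines the ``one good edge, one bad edge'' bookkeeping that you flag as the main obstacle by working with a single scalar per candidate, $\gamma_j := \max\bigl(\max_{k\neq j}\min(\alpha_k,\beta_{j,k}),\,\alpha_j\bigr)$ (the strength of the best one- or two-hop path from $A_j$ to $B$): for $j$ with no good path one bounds both ratio factors by the $\gamma_j$-version of the lemma and each of the at most $m$ Chernoff terms by $(2\gamma_j)^2\exp(\cdots)$, so the $\gamma_j$-factors cancel exactly in the product and the mixed cases collapse into one calculation.
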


\begin{proof}[Proof Sketch]

We analyze two cases for each candidate \( A_j \neq B \) and define the event \( E_j \) as the existence of a directed path of length at most two from \( A_j \) to \( B \) in the tournament graph, where the expected number of votes on all edges exceeds  
$\frac{n}{2} - \frac{n^{(1/2+\epsilon)}}{2}.$

If \( E_j \) holds through a direct edge, we use Observation~\ref{lemma:bounding_social_cost_ratio}. If it holds through a two-hop path, we use Observation~\ref{obs:copeland_three_candidate_ratio} and then Claim~\ref{claim:copeland_program_product_lower_bound}. Otherwise, we apply the union bound and Chernoff's bound to upper bound the probability of \( A_j \) winning. Finally, multiplying this probability bound with the social-cost bound yields the stated distortion bound. A detailed proof is provided in Appendix~\ref{sec:Copeland_distrotion_m_proof}.
\end{proof}

\newcommand{\claimcopelandlowerbound}{
\[
    \liminf_{n\to\infty}
    \textsc{dist}^{(g)}(\textsc{Copeland},n,3)
    \geq
    \Opt(\mathcal{T}_{1/2,1/2})^{-1}.
\]
}

\begin{claim}[Copeland lower bound]\label{claim:copeland_lower_bound}
\claimcopelandlowerbound
\end{claim}

The construction is taken directly from an approximately optimal solution of \(\mathcal{T}_{1/2+\zeta,1/2+\zeta}\). We realize its three active voter types as three disjoint voter blocks around candidates \(B,Y,W\). The positive slack \(\zeta\) ensures that \(W\) beats \(Y\) and \(Y\) beats \(B\) with probability tending to one, while the social-cost ratio converges to the reciprocal of the program objective. Appendix~\ref{sec:copeland_lower_bound_proof} gives the rounding, right-continuity, and limiting details.

The value of the new three-candidate program is the remaining object to solve if one wants a closed-form Copeland bound. Recall that the metric distortion of \textsc{Copeland} in the deterministic model is 5 \cite{anshelevich2015approximating}; recovering the right deterministic limit likely requires a sharper geometric analysis of the three-candidate program, which we leave for future work.
 %This is because we bound the distortion by multiplying the ratio of the social costs in the proof of Theorem \ref{theorem:Copeland_distrotion_m}. However, a more nuanced analysis of \textsc{Copeland} in \cite{anshelevich2015approximating}, which considers more details of the problem's geometric structure, can achieve a distortion of 5. We currently lack a comparable technique that yields a similar tight upper bound in the probabilistic framework -- which is an avenue for future work.

We now show that this technique extends seamlessly to \textsc{WeightedUncovered}. Recall from Definition~\ref{def:wu} that for a winner \( W \), either \( |WB| \geq n(1-\varphi) \) or there exists a candidate \( Y \) such that \( |WY| \geq n(1-\varphi) \) and \( |YB| \geq \varphi n \). Here, \(\varphi=(\sqrt{5}-1)/2\) is the inverse golden ratio.
At a high level, the direct-edge case is controlled by \(\mathcal{E}_{1-\varphi}\), while the two-hop case is controlled by the shared-electorate program \(\mathcal{T}_{1-\varphi,\varphi}\). Claim~\ref{claim:copeland_program_product_lower_bound} then gives the simpler product relaxation
\[
    \Opt(\mathcal{T}_{1-\varphi,\varphi})^{-1}
    \leq
    \Opt(\Efrac{1-\varphi})^{-1}\Opt(\Efrac{\varphi})^{-1}.
\]
Formally, we establish the following result:

\begin{corollary}{\label{corollary:weighted_uncovered_set}}
	     For every $m\geq 2$, we have 
	     \begin{align*}
	         \limsup_{n \to \infty}
	         \textsc{dist}^{(g)}(\textsc{WeightedUncovered},n,m)
	         &\leq
	         \max\left\{
	             \Opt(\Efrac{1-\varphi})^{-1},
	             \Opt(\mathcal{T}_{1-\varphi,\varphi})^{-1}
	         \right\} \\
	         &\leq
	          \max\left\{
	              \Opt(\Efrac{1-\varphi})^{-1},
	              \Opt(\Efrac{1-\varphi})^{-1}\Opt(\Efrac{\varphi})^{-1}
	          \right\}.
		     \end{align*}
\end{corollary}

%Since our technique is not tight, the upper bound on \textsc{WeightedUncovered} always exceeds the bound on Copeland for every $\theta$.

%\sahasrajit{Lower bound for Copeland has to be also written down.}

% \section{Weighted uncovered set}
Similarly, we can derive a lower bound for the entire family of voting rules that select an arbitrary candidate from the \( \lambda \)-weighted uncovered set for any \( \lambda \in [0.5, 1] \) by replacing \( \varphi \) with \( \lambda \). Extending our analysis to more voting rules from the class of weighted or unweighted tournament graph-based rules is an interesting direction for future research.

\section{Metric Distortion of \textsc{Random Dictator} Under Probabilistic Voting}

% Now recall the definition of $x^{\ast}_{\textsc{mid}}$ from Equation \eqref{eq:x_1_x_2_star_defn}. 

We now study a randomized voting rule that picks a random voter and chooses their favorite candidate as the winner. This simple scheme has a metric distortion of 3 in the classical model. However, under probabilistic voting, its metric distortion can be much larger. We first give an upper bound, which is linear in \(m\).
%The proof is in Appendix \ref{sec:RD_distortion_upper_bound_proof}. 

\newcommand{\thmrdub}{
    $\textsc{dist}^{(g)}(\textsc{Random Dictator},n,m) \leq (m-1)\gm +1.$
}

\begin{theorem}\label{theorem:RD_distortion_upper_bound}
\thmrdub
\end{theorem}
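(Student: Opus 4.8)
The plan is to bound the expected social cost of the \textsc{RandomDictator} winner by summing, over every non-optimal candidate $A_j$, the probability that $A_j$ is selected times its social cost, and then to relate each such term to $\mathrm{SC}(B,d)$ using the pairwise-order probability function $g$. First I would note that under \textsc{RandomDictator}, $\mathbb{P}[A_j \text{ wins}] = \frac{1}{n}\sum_{i \in \N} \mathbb{P}[\sigma_{i,1} = A_j]$, and that the event $\{\sigma_{i,1} = A_j\}$ is contained in the event $\{A_j \succ_{\sigma_i} B\}$ (a voter whose top choice is $A_j$ certainly ranks $A_j$ above $B$), whose probability is exactly $g\!\left(\frac{d(i,B)}{d(i,A_j)}\right)$. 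Hence
\begin{equation*}
\mathbb{E}[\mathrm{SC}(\textsc{RandomDictator}(\sigma_{\N}),d)] = \sum_{j} \mathbb{P}[A_j \text{ wins}]\,\mathrm{SC}(A_j,d) + \mathbb{P}[B\text{ wins}]\,\mathrm{SC}(B,d) \leq \frac{1}{n}\sum_{j=1}^{m-1}\sum_{i \in \N} g\!\left(\tfrac{d(i,B)}{d(i,A_j)}\right) d(i,A_j) + \mathrm{SC}(B,d).
\end{equation*}

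The core step is then a \emph{per-voter, per-candidate} inequality: for each voter $i$ and each $A_j$, I claim $g\!\left(\frac{d(i,B)}{d(i,A_j)}\right) d(i,A_j) \leq \gm \cdot d(i,B)$. To see this, set $b = d(i,B)$, $w = d(i,A_j)$; if $b \geq w$ then $g(b/w) \le 1$ and the triangle inequality is not even needed in the crude form, but more carefully I would substitute $x = \frac{b}{b+w}\in(0,1)$ (or handle the two regimes $w \le b$ and $w > b$) so that $\frac{b}{w} = \frac{x}{1-x}$ and $w = b\cdot\frac{1-x}{x}$, giving $g(b/w)\,w = b\cdot\frac{1}{x}g\!\left(\frac{x}{1-x}\right)\cdot(1-x) \le b\cdot\frac{1}{x}g_{\textsc{mid}}(x)$; taking the supremum over $x\in(0,1)$ yields exactly $\gm\cdot b$ by the definition of $\gm$ in \eqref{eq:gm_go_defn}. (The case $w>b$, i.e. $A_j$ farther than $B$, only makes $g(b/w)$ smaller, and is subsumed; one should check the $x \to 0$ and $x\to 1$ boundary behavior, where $g(0)=0$ from Definition~\ref{def:pairwise_prob_function_class} controls the limit.) Summing this bound over all $i$ gives $\sum_{i} g\!\left(\frac{d(i,B)}{d(i,A_j)}\right)d(i,A_j) \le \gm\,\mathrm{SC}(B,d)$ for each $j$.

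Plugging this into the display above yields $\mathbb{E}[\mathrm{SC}(\textsc{RandomDictator}(\sigma_{\N}),d)] \le \frac{1}{n}\cdot(m-1)\cdot\gm\,\mathrm{SC}(B,d)\cdot n \;\text{---wait, the }\tfrac1n\text{ cancels against the sum over }i\text{, not over }j$; more precisely $\frac1n\sum_{j=1}^{m-1}\sum_{i} g(\cdot)d(i,A_j) \le \frac1n \sum_{j=1}^{m-1}\gm\,\mathrm{SC}(B,d)$, and since $\mathrm{SC}(B,d) = \sum_i d(i,B)$ this is $\frac1n(m-1)\gm\,\mathrm{SC}(B,d)$ — so I need to be careful: the factor $\frac1n$ does not cancel, which would give a vanishing bound, signalling an error in the setup. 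The fix: $\mathbb{P}[A_j\text{ wins}]\,\mathrm{SC}(A_j,d) = \big(\tfrac1n\sum_i \mathbb{P}[\sigma_{i,1}=A_j]\big)\big(\sum_{i'} d(i',A_j)\big)$, which is \emph{not} $\tfrac1n\sum_i \mathbb{P}[\sigma_{i,1}=A_j] d(i,A_j)$ in general. So instead I would bound $\mathrm{SC}(A_j,d) \le \mathrm{SC}(B,d) + n\,d(A_j,B)$ is too weak; the right move is to use that for \textsc{RandomDictator} one bounds $\mathbb{E}[\mathrm{SC}] = \frac1n \sum_i \mathbb{E}[\mathrm{SC}(\sigma_{i,1},d)] = \frac1n\sum_i \sum_j \mathbb{P}[\sigma_{i,1}=A_j]\,\mathrm{SC}(A_j,d)$, then use the triangle inequality $\mathrm{SC}(A_j,d) \le \mathrm{SC}(i\text{'s top}) \le \sum_{i'}(d(i',i)+d(i,A_j))$... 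The cleanest route, and the one I would actually pursue, is the standard one: $\mathbb{E}[\mathrm{SC}] = \frac1n\sum_i \sum_j \mathbb{P}[\sigma_{i,1}=A_j]\mathrm{SC}(A_j,d)$ and $\mathrm{SC}(A_j,d) \le \sum_{i'} d(i',B) + \sum_{i'} d(i',i) + n\,d(i,A_j)$ via two triangle inequalities through voter $i$; but this reintroduces unbounded terms, so the truly correct decomposition is to note $\sum_{i'}d(i',A_j)\le \sum_{i'} d(i',B) + n d(B,A_j)$ and separately $\mathbb{P}[\sigma_{i,1}=A_j]$ is nonzero only when...

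The main obstacle, as the above reveals, is getting the social-cost accounting right: one must not naively pull $\mathrm{SC}(A_j,d)$ out, but rather bound it relative to $\mathrm{SC}(B,d)$ while simultaneously exploiting that a voter selecting $A_j$ as top choice pays at most $\gm$ times their distance to $B$ \emph{in expectation at that voter}. Concretely I expect the correct argument to bound, for each voter $i$, $\mathbb{E}[d(i,\sigma_{i,1})] \le d(i,B)$ trivially (since $B$ is an available candidate, the top choice is weakly closer) — but we need distances to $B$ from \emph{other} voters too. The resolution: write $\mathrm{SC}(A_j,d)\le n\,d(i,A_j) + \sum_{i'}d(i,i')$ and then $d(i,i')\le d(i,B)+d(i',B)$, so $\mathrm{SC}(A_j,d) \le n\,d(i,A_j) + n\,d(i,B) + \mathrm{SC}(B,d)$; multiplying by $\mathbb{P}[\sigma_{i,1}=A_j]\le g(d(i,B)/d(i,A_j))$, summing over $j$ and $i$, dividing by $n$, and applying the per-term bound $g(d(i,B)/d(i,A_j))\,d(i,A_j)\le \gm d(i,B)$ to the first piece while bounding $\sum_j \mathbb{P}[\sigma_{i,1}=A_j]\le 1$ for the rest, should collapse everything to $\big((m-1)\gm + 1\big)\mathrm{SC}(B,d)$ after the arithmetic. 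I would present the per-voter triangle inequality and the $\gm$-supremum bound as the two lemmatic ingredients and then assemble them; the bookkeeping of which sum the $\frac1n$ cancels against is the only genuinely delicate point.
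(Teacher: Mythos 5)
Your per-voter bound $g\!\left(\frac{d(i,B)}{d(i,A_j)}\right)d(i,A_j)\le \gm\, d(i,B)$ is correct (with $s=d(i,B)/d(i,A_j)$ it says $g(s)/s\le\gm$, which holds since $\gm=\sup_{s>0}g(s)\bigl(1+\tfrac1s\bigr)$), but the assembly you finally settle on does not prove the stated bound. With the decomposition $SC(A_j,d)\le n\,d(i,A_j)+n\,d(i,B)+SC(B,d)$, bounding the $n\,d(i,A_j)$ piece by $n\gm\,d(i,B)$ for each of the $m-1$ candidates and bounding the remaining $n\,d(i,B)+SC(B,d)$ piece via $\sum_{j}\mathbb{P}[\sigma_{i,1}=A_j]\le 1$, then averaging over $i$, yields $(m-1)\gm\,SC(B,d)+SC(B,d)+SC(B,d)=\bigl((m-1)\gm+2\bigr)SC(B,d)$; the arithmetic does not ``collapse'' to $(m-1)\gm+1$ as you assert. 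The extra $+1$ comes precisely from routing $d(B,A_j)$ through voter $i$ and then decoupling the $d(i,B)$ part from the probability weight, so as written your proposal only establishes the weaker bound $(m-1)\gm+2$ (and much of the text consists of acknowledged false starts, including the initial conflation of the voter sums, which you do catch).

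The paper avoids this loss by never splitting that term: it writes the distortion as $\sum_j \mathbb{P}[A_j\text{ wins}]\bigl(\tfrac{SC(A_j,d)}{SC(B,d)}-1\bigr)+1$, uses $SC(A_j,d)-SC(B,d)\le n\,d(B,A_j)$, and bounds the resulting per-voter ratio $g\!\left(\tfrac{d(i,B)}{d(i,A_j)}\right)\big/\tfrac{d(i,B)}{d(B,A_j)}\le\gm$ via the triangle inequality $d(i,A_j)\ge |d(B,A_j)-d(i,B)|$. Your per-voter route can be repaired in the same spirit: bound $\mathbb{E}[SC(\sigma_{i,1},d)]\le SC(B,d)+n\sum_{j\ne B}\mathbb{P}[\sigma_{i,1}=A_j]\,d(B,A_j)$ and then use $\mathbb{P}[\sigma_{i,1}=A_j]\,d(B,A_j)\le g\!\left(\tfrac{d(i,B)}{d(i,A_j)}\right)\bigl(d(i,B)+d(i,A_j)\bigr)\le\gm\,d(i,B)$, which is exactly the identity $\gm=\sup_{s>0}g(s)\bigl(1+\tfrac1s\bigr)$; averaging over $i$ then gives $(m-1)\gm+1$. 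Without this change (or the paper's ratio-of-sums normalization by $d(B,A_j)$), the claimed constant is not reached.
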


The proof is in Appendix \ref{sec:RD_distortion_upper_bound_proof}. 
We now give a lower bound with an example.

\begin{theorem}{\label{theorem:RD_distortion_lower_bound}}
For $m\geq 3$ and $n \geq 2,$ we have: 
    $\textsc{dist}^{(g)}(\textsc{Random Dictator},n,m) \geq  2+\frac{1}{g^{-1}(\frac{1}{m-1})} - \frac{2}{n}. $
\end{theorem}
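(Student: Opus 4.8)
The plan is to construct an explicit instance in one-dimensional Euclidean space that forces \textsc{RandomDictator} to select a far-away candidate with constant probability. I would place the socially optimal candidate $B$ at the origin, a ``bad'' candidate $W$ at distance $1$ from $B$, and the remaining $m-2$ candidates essentially coincident with $B$ (say at the origin, or at a negligible distance $\delta \to 0$). All $n$ voters are placed at the same point $x \in (0,1)$ between $B$ and $W$, chosen so that each voter ranks $W$ first with exactly the probability needed. Under the pairwise-order constraint of Definition~\ref{def:pairwise_prob_function_class}, a voter at $x$ ranks $W$ above $B$ with probability $g\big(\frac{x}{1-x}\big)$; I want to choose $x$ so that $g\big(\tfrac{x}{1-x}\big) = \tfrac{1}{m-1}$, i.e. $\tfrac{x}{1-x} = g^{-1}\big(\tfrac{1}{m-1}\big)$, which gives $x = \frac{g^{-1}(1/(m-1))}{1+g^{-1}(1/(m-1))}$ and hence $\frac{1}{x} = 1 + \frac{1}{g^{-1}(1/(m-1))}$.

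The key step is to exhibit a distribution $\Rho \in \hat{\Rho}^{(g)}(d)$ under which each voter places $W$ on top with probability at least $\tfrac{1}{m-1}$. Since the $m-2$ near-$B$ candidates are all at (essentially) the same distance from the voter as $B$, by symmetry we can build $\Rho$ so that each voter's top candidate is $W$ with probability $g\big(\tfrac{x}{1-x}\big)=\tfrac{1}{m-1}$ and is each of the other $m-1$ candidates with probability $\tfrac{1}{m-1}\big(1-\tfrac{1}{m-1}\big)\cdot\frac{1}{?}$—more carefully, I would just take each voter to independently pick $W$ first with probability $\tfrac{1}{m-1}$ and otherwise pick uniformly at random among $\{B,\text{the other }m-2\}$, then randomize the rest of the ranking consistently so that all pairwise marginals agree with $g$ (the construction of such a ranking distribution from prescribed PL-type marginals is available, as noted after Definition~\ref{def:pairwise_prob_function_class}, and for pairwise marginals between the coincident candidates $g$ evaluated at ratio $1$ gives $\tfrac12$, which symmetric uniform permutation achieves). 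Then for each voter $i$, $\mathbb{P}[\sigma_{i,1}=W] = \tfrac{1}{m-1}$, so $\textsc{RandomDictator}$ outputs $W$ with probability exactly $\tfrac{1}{m-1}$ regardless of $n$; more relevantly, the expected social cost of its output is at least $\tfrac{1}{m-1}\,SC(W,d) + \big(1-\tfrac{1}{m-1}\big)\cdot 0$ in the limit $\delta\to 0$.

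It remains to compute the distortion. With $\delta\to 0$, $SC(B,d) = nx$, and $SC(W,d) = n(1-x)$. The winner is $W$ with probability $\tfrac{1}{m-1}$ and otherwise a cost-$nx$ candidate (either $B$ or one of the coincident ones), so $\mathbb{E}[SC(\text{winner})] \ge \tfrac{1}{m-1} n(1-x) + \big(1-\tfrac{1}{m-1}\big) n x$. Wait—I should instead make the $m-2$ other candidates coincide with $B$ exactly so they are also socially optimal, and then the ratio is $\frac{\mathbb{E}[SC(\text{winner})]}{SC(B,d)} \ge \frac{\tfrac{1}{m-1}(1-x) + (1-\tfrac{1}{m-1})x}{x} = \tfrac{1}{m-1}\cdot\frac{1-x}{x} + 1 - \tfrac{1}{m-1}$. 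Using $\tfrac{1-x}{x} = \tfrac1x - 1 = \tfrac{1}{g^{-1}(1/(m-1))}$, this equals $1 + \tfrac{1}{m-1}\big(\tfrac{1}{g^{-1}(1/(m-1))} - 1\big)$, which is weaker than the claimed bound. To get the stated $2 + \tfrac{1}{g^{-1}(1/(m-1))} - \tfrac2n$ I instead need \emph{every} voter to put $W$ on top with probability $\tfrac{1}{m-1}$ \emph{and} to have all non-$W$ candidates be at distance $\approx x$ while $B$ itself is also at $x$; then conditioning on the winner being $\neq W$, its cost is $nx$, but conditioning on winner $=W$ it is $n(1-x)$, and I actually want the non-optimal mass to be cost $n(1-x)$ far more often. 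The fix is the standard one used for such lower bounds: use $n=2$ voters (or general $n$) and make the ``bad'' event be that \emph{at least one} voter's top choice is $W$; but since \textsc{RandomDictator} picks one voter uniformly, the relevant quantity is exactly the per-voter top probability. So the genuinely right construction puts $B$ at $0$, $W$ at $1$, and the other $m-2$ candidates at distance $1$ on the \emph{opposite} side, i.e. essentially making $W$ the unique second-best while the voter is pulled toward $W$; then $SC(B)=nx$, and with probability $\tfrac{1}{m-1}$ the winner has cost $n(1-x)$, giving ratio $\ge \big(1-\tfrac{1}{m-1}\big)\cdot 1 + \tfrac{1}{m-1}\cdot\tfrac{1-x}{x}$, and the $-\tfrac2n$ term and the clean constant $2$ come from placing finitely many voters and accounting for the boundary cases $\delta>0$ carefully.

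\textbf{Main obstacle.} The delicate part is matching the \emph{marginal} pairwise-order probabilities to $g(\cdot)$ simultaneously for all candidate pairs while also controlling the top-choice marginal $\mathbb{P}[\sigma_{i,1}=W]$—these are not independent constraints, and one must verify that a ranking distribution with the required top-choice probability \emph{and} all pairwise marginals equal to the prescribed $g$-values actually exists (it does, via the PL/BT construction with strengths $s_{i,j}=d(i,j)^{-\theta}$ in the PL case, and via the cited construction in general). The second subtlety is the bookkeeping with $\delta$ and finite $n$ that produces exactly the $-\tfrac2n$ correction and the leading constant $2$ rather than a looser expression; I would handle this by taking $\delta\to 0$ first to get the clean limit and then re-introducing an $O(1/n)$ slack from the requirement $n\ge 2$ and from perturbing coincident candidates to distinct positions. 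Everything else is elementary algebra with $g^{-1}$.
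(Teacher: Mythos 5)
There is a genuine gap: neither of your constructions actually reaches the claimed bound, and the "fix" you sketch at the end still cannot. In your main construction (all $n$ voters at $x$, the $m-2$ extra candidates coincident with $B$), each voter's top choice is the far candidate $W$ with probability at most $g\!\left(\frac{x}{1-x}\right)=\frac{1}{m-1}$, so the distortion you obtain is $1+\frac{1}{m-1}\left(\frac{1}{g^{-1}(1/(m-1))}-1\right)$ — you compute this yourself and note it is weaker; for PL-type $g$ it even tends to a constant, whereas the target $2+\frac{1}{g^{-1}(1/(m-1))}-\frac{2}{n}$ grows with $m$. Your proposed repair (putting the other $m-2$ candidates at distance $1$ on the opposite side) does not change the bottleneck: the probability that the dictated voter's top choice is a \emph{costly} candidate is still $O\!\left(\frac{1}{m-1}\right)$ per far candidate, so the main term stays deflated by a factor of roughly $m-1$, and no amount of $\delta$- or $1/n$-bookkeeping recovers the constant $2$ and the full $\frac{1}{g^{-1}(1/(m-1))}$ term.

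The idea you are missing is how the paper's construction exploits the adversarial \emph{joint} distribution over rankings while respecting the pairwise $g$-marginals. Place \emph{all} $m-1$ suboptimal candidates at the same point at distance $1$ from $B$, put $n-1$ voters at $B$, and put a single voter $V$ at $\tilde{x}=\frac{g^{-1}(1/(m-1))}{1+g^{-1}(1/(m-1))}$. Let $V$'s ranking be: a uniformly random suboptimal candidate on top, $B$ second, the rest permuted uniformly. Then each suboptimal candidate beats $B$ in $V$'s ranking with marginal probability exactly $\frac{1}{m-1}=g\!\left(\frac{\tilde{x}}{1-\tilde{x}}\right)$, and the co-located candidates beat each other with probability $\frac12=g(1)$, so all pairwise marginals match $g$; yet the \emph{union} of the events ``some suboptimal candidate is $V$'s top choice'' has probability $1$ because these events are made disjoint. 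Hence with probability $\frac1n$ \textsc{RandomDictator} picks $V$ and outputs a candidate with cost ratio $\frac{n-\tilde{x}}{\tilde{x}}$, giving exactly $\frac{n-1}{n}+\frac1n\cdot\frac{n-\tilde{x}}{\tilde{x}} = 2+\frac{1}{g^{-1}(1/(m-1))}-\frac{2}{n}$. Concentrating the randomness in one voter (rather than all voters, as in your construction) is also precisely what produces the $-\frac{2}{n}$ correction and the leading constant $2$.
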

\begin{proof}
We consider an example in the 1-D Euclidean space. Let \( B \) be positioned at \( 0 \), and all other candidates in \( \mathcal{A} \setminus \{B\} \) be positioned at \( 1 \). There are \( n-1 \) voters at \( 0 \) and one voter \( V \) at \( \tilde{x} = \frac{g^{-1}(\frac{1}{m-1})}{1 + g^{-1}(\frac{1}{m-1})} \).
The ranking for \( V \) is generated as follows:
\begin{enumerate}
    \item Select a candidate from \( \mathcal{A} \setminus \{B\} \) as the top choice uniformly at random.
    \item Place \( B \) in the second position.
    \item Randomly permute the remaining candidates in the remaining ranks.
\end{enumerate}
The marginal pairwise order probabilities are consistent with the distances of \( V \) from \( B \) and from each candidate in \( \mathcal{A} \setminus \{B\} \), together with the function \(g(\cdot)\). In particular, \(g\left(\frac{\tilde{x}}{1- \tilde{x}}\right) = \frac{1}{m-1}\).
The distortion for this instance is given by:  
\begin{align}
& \mathbb{P}[B \text{~wins}] \cdot 1
  + \mathbb{P}[B \text{~loses}] \cdot \frac{n-\tilde{x}}{\tilde{x}} \nonumber\\
&\qquad =
\frac{n-1}{n} + \frac{1}{n} \frac{n-\tilde{x}}{\tilde{x}}
 = 1+ \frac{1}{\tilde{x}} - \frac{2}{n}
 = 2+\frac{1}{g^{-1}(\frac{1}{m-1})} - \frac{2}{n}.
\end{align}
\end{proof}
For example, for \(g(r)=1/(1+r^{-\theta})\), we have \(g^{-1}(t)=\left(\frac{t}{1-t}\right)^{1/\theta}\). Hence \(g^{-1}\left(\frac{1}{m-1}\right)=(m-2)^{-1/\theta}\), so
\[
    \textsc{dist}^{(g)}(\textsc{Random Dictator},n,m)
    \geq
    2 + (m-2)^{1/\theta} -\frac{2}{n},
\]
and therefore
\[
    \lim_{n\rightarrow \infty}
    \textsc{dist}^{(g)}(\textsc{Random Dictator},n,m)
    \geq
    2 + (m-2)^{1/\theta}.
\]

%In particular, for $\theta = 2,$ we have $\lim_{n \rightarrow \infty} \textsc{dist}^{(g)}(\textsc{Random Dictator},n,m) \geq 2 + \sqrt{m-2}.$

However, this result does not apply to the PL model as the PL model has a specific distribution over rankings, whereas the above result is derived using an adversarial distribution subject to constraints on the marginals of pairwise order relations. %In the PL model, $\mathbb{P}[A_j \text{~is top-ranked in~} \sigma_i]  = \frac{d(i,A_j)^{-\theta}}{\sum_{A_k \in \A} d(i,A_k)^{-\theta} }$ \cite{azari2012random}. 
Nonetheless, we can obtain the same order-wise lower bound for the PL model as well. %A proof via a similar construction as Theorem~\ref{theorem:RD_distortion_lower_bound} is in Appendix~\ref{sec:proof_rd_lb_pl}.

\begin{theorem}{\label{theorem:RD_distortion_lower_bound_PL}}
% Let \( \textsc{dist}^{\theta}_{PL}(\textsc{Random Dictator},n,m) \) denote the metric distortion when voters' rankings are generated according to the PL model with parameter \( \theta \).
\(\liminf_{n \rightarrow \infty} \textsc{dist}_{PL}^{\theta}(\textsc{Random Dictator},n,m) \geq  1+ \frac{(m-1)^{1/\theta}}{2}\).
\end{theorem}
\begin{proof}
   We have the same example in a 1-D Euclidean space as for the proof of Theorem~\ref{theorem:RD_distortion_lower_bound}.  Let $B$ be positioned at $0$ and all other candidates $\A \setminus \{B\}$ be positioned at 1. $n-1$ voters are at $0$, and one voter is at $t.$ We will set $t$ later by optimizing for the distortion. Recall that in the PL model, $\mathbb{P}[A_j \text{~is top-ranked in~} \sigma_i]  = \frac{d(i,A_j)^{-\theta}}{\sum_{A_k \in \A} d(i,A_k)^{-\theta} }$ \cite{azari2012random}. 
   The metric distortion for this instance is
   \[
   \begin{aligned}
   \mathbb{P}[B \text{~wins}] \cdot 1 + \mathbb{P}[B \text{~loses}] \cdot \frac{n-t}{t}
   &=  \frac{n-1}{n} +  \frac{1}{n} \frac{t^{-\theta}}{t^{-\theta} + (m-1) (1-t)^{-\theta}} \\
   &\quad +  \frac{1}{n}  \frac{(m-1) (1-t)^{-\theta}}{t^{-\theta} + (m-1) (1-t)^{-\theta}}  \frac{n-t}{t}.
   \end{aligned}
   \]

   To analyze the limit as \( n \to \infty \), we drop terms of order \( O(1/n) \) to obtain  
\( 1 + \frac{(m-1) (1-t)^{-\theta}}{t(t^{-\theta} + (m-1) (1-t)^{-\theta})} \).  
This simplifies to  
\( 1 + \frac{(m-1) t^{\theta-1}}{(1-t)^{\theta} + (m-1)t^{\theta}} \),  
which is further lower-bounded by  
\( 1 + \frac{(m-1) t^{\theta-1}}{1 + (m-1)t^{\theta}} \).  
Setting \( t = (m-1)^{-1/\theta} \), we obtain a metric distortion lower bound of  
\( 1 + \frac{(m-1)^{1/\theta}}{2} \).
\end{proof}

\iffalse
\begin{proof}
   We have a 1-D Euclidean construction.  Let $B$ be at $0$ and all other candidates $\A \setminus \{B\}$ be at 1. $m-1$ voters are at $0$, and one voter is at $t.$ We will set $t$ later by optimizing for the distortion.
   
   The distortion for this instance is $ \mathbb{P}[B \text{~wins}] \cdot 1 + \mathbb{P}[B \text{~loses}] \cdot \frac{n-t}{t} =  \frac{n-1}{n} +  \frac{1}{n} \frac{t^{-\theta}}{t^{-\theta} + (m-1) (1-t)^{-\theta}}    +  \frac{1}{n}  \frac{(m-1) (1-t)^{-\theta}}{t^{-\theta} + (m-1) (1-t)^{-\theta}}  \frac{n-t}{t}.$ We drop the terms which are $O(1/n)$ to obtain 
   $1 +  \frac{(m-1) (1-t)^{-\theta}}{t(t^{-\theta} + (m-1) (1-t)^{-\theta})}.$ This simplifies to $1 +  \frac{(m-1) t^{\theta-1}}{(1-t)^{\theta} + (m-1)t^{\theta}}$.  This is lower bounded by $1 +  \frac{(m-1) t^{\theta-1}}{1 + (m-1)t^{\theta}}$. Setting $t = (m-1)^{-1/\theta},$ we obtain a distortion lower bound of $1 + \frac{(m-1)^{1/\theta}}{2}.$
\end{proof}
\fi

\section{Metric Distortion of \textsc{Plurality Veto} Under the PL Model}{\label{sec:Plurality_Veto}}

We now study the \textsc{Plurality Veto} voting rule \cite{kizilkaya2022plurality}, which attains the optimal distortion of 3 among deterministic voting rules. Under probabilistic voting with the PL model, however, we show that its metric distortion scales with the number of candidates \(m\). We restrict to the PL model for simplicity, but the construction can be adapted to other functional forms of \(g(\cdot)\).

For clarity, we recall the rule from \cite[Section 3]{kizilkaya2022plurality}. In the first stage, \textsc{Plurality Veto} assigns an initial score of $\textsc{plu}(c)$ to each candidate $c$, where $\textsc{plu}(c)$ is the number of voters who rank $c$ first. In the second stage, these scores are gradually decreased. When the score of a candidate \(c\) reaches zero, \(c\) is eliminated. Voters are processed one by one in an arbitrary order, which may be fixed in advance or adaptive to voters' preferences. When a voter \(v\) is processed, the rule decrements the score of \(v\)'s bottom choice among the candidates not yet eliminated. A detailed description and pseudocode appear in \cite[Algorithm 1]{kizilkaya2022plurality}. 

We now present a lower bound on the distortion of \textsc{Plurality Veto} by constructing an appropriate metric space.

\begin{theorem}{\label{thm:plurality_veto_lb}}
    For every fixed \(\theta>1\), there exists a constant \(c_\theta>0\) such that, for all sufficiently large \(m\),
    \[
    \liminf_{n \to \infty} \textsc{dist}^{\theta}_{\textsc{PL}}(\textsc{Plurality Veto},n,m)
    \geq
    c_\theta m^{\frac{1}{\theta+1}}.
    \]
\end{theorem}

\begin{proof}
    Consider the following metric space on the real line. The social-cost-minimizing candidate \(B\) is placed at \(x=0\), and all remaining candidates \(\{W_1,\ldots,W_{m-1}\}\) are placed at \(x=1\). A fraction \(p\) of the voters is placed at \(x=1\), and the remaining voters are placed at \(x=m^{-a}\). We set \(p=m^{-a}\) and choose \(a\) below. Let \(\top(B)\) denote the set of voters who rank \(B\) first, and let \(\bot(B)\) denote the set of voters who rank \(B\) last. Under probabilistic voting with the PL model, both sets are random. Recall that
    \[
        \mathbb{P}[A_j \text{ is top-ranked in } \sigma_i]
        =
        \frac{d(i,A_j)^{-\theta}}{\sum_{A_k \in \A} d(i,A_k)^{-\theta}}
    \]
    \cite{plackett1975analysis}. We thus have

    \begin{align}{\label{eq:vote_for_against_G}}
        \frac{\mathbb{E}[|\top(B)|]}{n}
        =
        \frac{(1-m^{-a}) (m^{-a})^{-\theta}}
             {(m^{-a})^{-\theta} + (m-1) (1-m^{-a})^{-\theta}},
        \qquad
        \frac{\mathbb{E}[|\bot(B)|]}{n} \geq m^{-a}.
    \end{align}

    %The constant $c$ arises from the fact that voters at $x=m^{-a}$ also vote for 
    
    Since each voter votes independently, we can view $|\top(B)|$ and $|\bot(B)|$ as sums of independent random variables; hence their variances are bounded by \(n\).

    Applying Cantelli's inequality, it suffices to ensure that \(\frac{1}{n} \mathbb{E}[|\bot(B)|] \geq \frac{1}{n} \mathbb{E}[|\top(B)|] + c_m\) (for some constant \( c_m \) independent of \( n \)) in order to guarantee that \( |\bot(B)|>|\top(B)| \) with constant probability. We now observe why this implies that \(B\) is not the \textsc{Plurality Veto} winner. Algorithm~1 of \cite{kizilkaya2022plurality} processes all \(n\) voters and returns the candidate whose score is decremented in the final iteration. Candidate \(B\)'s initial score is \(|\top(B)|\). Moreover, every voter in \(\bot(B)\), when processed while \(B\) still has positive score, decrements \(B\)'s score; if such a voter does not decrement \(B\), then \(B\)'s score has already reached zero. Hence, if \(|\bot(B)|>|\top(B)|\), candidate \(B\) reaches score zero before the end of the process and cannot be the final returned candidate.

    Now choose \(a = \frac{1}{1+\theta}\), and write \(s=m^{-a}\). Then the top-vote fraction in expectation is
    \[
        \frac{\mathbb{E}[|\top(B)|]}{n}
        =
        \frac{(1-s)s^{-\theta}}
             {s^{-\theta}+(m-1)(1-s)^{-\theta}}
        =
        \frac{1-s}
             {1+(m-1)(s/(1-s))^{\theta}}.
    \]
    Since \(m=s^{-(1+\theta)}\), the final expression is \(s-\Omega(s^2)\) as \(s\downarrow0\). On the other hand, \(\mathbb{E}[|\bot(B)|]/n\geq s\). Thus, for all sufficiently large \(m\), there is a positive number \(c_m\), independent of \(n\), such that
    \[
        \frac{1}{n}\mathbb{E}[|\bot(B)|]
        -
        \frac{1}{n}\mathbb{E}[|\top(B)|]
        \geq c_m.
    \]
    The ratio of social costs may be bounded as

    \begin{equation}
        \frac{\textsc{SC} (W_i,d)}{\textsc{SC} (B,d)}
        =
        \frac{(1-m^{-a})(1-m^{-a})}{m^{-a}(1-m^{-a}) + m^{-a}}
        =
        \frac{(m^a-1)^2}{2m^a-1}.
        \end{equation}

    Since the optimal candidate \(B\) is not the winner with constant probability, we have the following for some constant \(c\) independent of \(m,n\):

    \begin{equation}
        \liminf_{n \to \infty}
        \textsc{dist}^{\theta}_{\textsc{PL}}(\textsc{Plurality Veto},n,m)
        \geq
        c\frac{(m^a-1)^2}{2m^a-1}.
    \end{equation}

    This proves the desired theorem.

    % Setting \( \zeta = \frac{1}{\log m} \), where \( a = \frac{1}{1+\theta} - \zeta \), we can prove the desired theorem since \( m^{\zeta} \to c \) for some constant \( c \) in the limit as \( m \to \infty \).

\end{proof}

\section{Metric Distortion of \textsc{Borda} Under the Plackett-Luce Model}
In this section, we derive the metric distortion of \textsc{Borda}. We focus on the PL choice model, where \(g(r)=1/(1+r^{-\theta})\) for \(\theta>1\). This specialization keeps the proof tractable while still exposing the main phenomenon. The proof ideas can also be adapted to other classes of choice functions.
% Let $\textsc{dist}^{\theta}_{PL}(\textsc{Borda},n,m)$ denote the metric distortion of \textsc{Borda} under the PL model with parameter $\theta$. %We now show that $\textsc{dist}^{\theta}_{PL}(\textsc{Borda},n,m)$ scales asymptotically as $\Theta(m^{\max(1-\frac{2}{\theta},0)})$. Under deterministic voting $\theta \to \infty$, the distortion grows linearly in $m$, matching the known classical metric distortion of \textsc{Borda} \cite{anshelevich2015approximating}. 
%Let $\distborda = m^{\max(1 - \frac{2}{\theta},0)}$. %For clarity, we state the standard asymptotic notations used in this paper in Appendix \ref{sec:asymptotic_notation}. 
Our main result is as follows:

% \begin{equation}{\label{eq:def_h_theta}}
%     \distborda =
% \begin{cases}
%     m^{1 - \frac{2}{\theta}} & \text{if } \theta > 2, \\
%     1 & \text{otherwise.}
% \end{cases}
% \end{equation}

%Since we use asymptotic notations in this part, we restate them for clarity. \sahasrajit{Need not be stated in main paper, to be moved somewhere else.}

For fixed \(\theta>1\), define the large-election distortion as a function of \(m\):
\[
D_m^\theta
:=
\limsup_{n \to \infty}
\textsc{dist}^{\theta}_{PL}(\textsc{Borda},n,m).
\]

\begin{theorem}{\label{theorem:borda_distortion}}
  Let $\distborda = m^{\max(1 - \frac{2}{\theta},0)}$. Then \(D_m^\theta=\Theta(\distborda)\) as \(m\to\infty\).
\end{theorem}

Observe that the distortion is bounded for $\theta\leq 2$ and grows with $\theta$ once the randomness in the votes becomes sufficiently small.
\textsc{Borda} is a natural and widely used voting rule because it accounts for the full ranking rather than only the top choice. However, in the classical model, its metric distortion matches that of \textsc{Plurality}, a rule that discards most of the ranking information. This surprising result suggests a limitation in how the classical model evaluates voting rules. Our findings resolve this issue by showing that, under probabilistic voting, \textsc{Borda}'s metric distortion better reflects the information contained in the ranking.
%
%At a high level, in the classical model, the worst-case scenario involves several identical or near-identical ``bad'' candidates. Due to the deterministic ordering in the votes, a single bad candidate accumulates a disproportionately high Borda score and wins.  
%In contrast, under probabilistic voting, near-identical bad candidates necessarily split their votes, making it significantly harder for any single bad candidate to win.
%For a bad candidate to win, the adversary must position more voters closer to it, thereby reducing the distortion.
%
We first prove the lower bound with the help of an example. %where the metric distortion $\Omega(\distborda)$ is attained.

\subsection{Lower bound on the metric distortion of \textsc{Borda}}

\begin{lemma}
    \(D_m^\theta=\Omega(\distborda)\) as \(m\to\infty\), where $\distborda = m^{\max(1 - \frac{2}{\theta},0)}$.
\end{lemma}
\begin{proof}
   We construct an instance of an election on the 2-D Euclidean metric space. The `optimal' candidate B is at origin `O', and candidate W is at coordinates $(1,0)$. There are $(1-\delta)n $ voters at O and $\delta n$ voters at location L at coordinates $(t,0).$ The other $m-2$ candidates are at location M which is equidistant from the origin O and location L, with a distance of $t$ from each. That is, O, L, and M form an equilateral triangle. We will set $\delta$ of order $o(1)$ and $t$ of order $\omega(1)$ later to achieve the required result. 
     Consider the difference between the expected Borda scores of B and W.
    \begin{align*}
& \mathbb{E}[\textsc{Borda score of } B]
-\mathbb{E}[\textsc{Borda score of } W]\\
&= \sum_{i \in \mathcal{N}} \Bigg(
      \sum_{j \in \mathcal{A} \setminus \{B\}}
      \!\!\mathbbm{P}(B \succ_{\sigma_i} j)
      \;-\!\!\!\!
      \sum_{j \in \mathcal{A} \setminus \{W\}}
      \!\!\mathbbm{P}(W \succ_{\sigma_i} j)
    \Bigg) \\
&\le (1-\delta)n \left[(m-2) \left(1 - \frac{1}{1+ \left(\frac{1}{t}\right)^{\theta}}\right) + 1 \right] + \delta n \left[(m-2) \left(\frac{1}{2} - \frac{1}{1+ \left(\frac{t-1}{t}\right)^{\theta}} \right) +1\right].
\end{align*}

\noindent\textit{The first term corresponds to voters at O, and the second term corresponds to voters at L.}

\begin{align*}
&\leq n (m-2) \Bigg[\left(1 - \frac{1}{1+ \left(\frac{1}{t}\right)^{\theta}}\right) + \delta \left(\frac{1}{2} - \frac{1}{1+ \left(\frac{t-1}{t}\right)^{\theta}} \right) \Bigg]
   + n .
\end{align*}

\noindent\textit{Since $t$ is of order $\omega(1),$ we obtain upon simplification}
\(\leq n(m-2) \left[\frac{1}{t^{\theta}} - \frac{ \delta \theta}{8t} \right] + n\).
\textit{We choose $t = m^{1/\theta}$ and $\delta = 24 m^{1/\theta -1}$. Upon substitution, we obtain}
\(\leq n(m-2) \left[\frac{1}{m} - \frac{3\theta}{m} \right] + n \leq 2(1-\theta) n\).

      Recall that \( \theta > 1 \). Therefore, the expected Borda score of \( W \) exceeds that of \( B \) by at least a constant multiple of \( n \). Since the score difference is a sum of \( n \) independent random variables, each bounded in absolute value by \(m\), its variance is at most \(nm^2\).  
By Cantelli's inequality, for every fixed \(m\), this implies that \(B\) loses to \(W\) with probability tending to one as \(n\to\infty\). We now determine the ratio of the social costs of \( B \) and the other candidates:
\begin{align}
\frac{\SC(W,d)}{\SC(B,d)}  =  \frac{(1-\delta)n + \delta n (t-1)}{\delta n t} 
         & = \frac{(1-2\delta) + \delta t}{\delta t} \\
         \geq \frac{1}{\delta t} =\frac{1}{24} m^{1-\frac{2}{\theta}}.
\end{align}
       Whereas for any candidate $j$ at M, we have \( \frac{\SC(j,d)}{\SC(B,d)}  =  \frac{nt}{\delta n t}  = \frac{1}{\delta} =\frac{1}{24} m^{1-\frac{1}{\theta}} \geq \frac{1}{24} m^{1-\frac{2}{\theta}} \).
The distortion is at least $\frac{1}{24} m^{1-\frac{2}{\theta}}$ times the probability that $B$ loses, which we showed to be a constant. For the case of $\theta \in (2,\infty),$ the above result shows that the distortion is $\Omega (m^{1-\frac{2}{\theta}}).$ For $\theta \in (1,2],$ the distortion is at least a constant since it is at least 1 by definition.
\end{proof}

\subsection{Upper bound on the metric distortion of \textsc{Borda}}
We now give an order-wise matching upper bound.
We first establish the following result: any candidate whose social cost scales as $m^{\max(1 - \frac{2}{\theta},0)}$ times the minimum social cost, gets a ``much'' lower Borda score than some other candidate. Since we are interested in the asymptotics of the metric distortion in this section, we consider a sequence of metric spaces $\{d^{(n,m)}\}_{(n,m)}$ where $d^{(n,m)}$ denotes a metric space over $n$ voters and $m$ candidates. 
The precise result is as follows:

% the metric spaces $d$, candidates $W$ and $B$ are functions of number of candidates $m$ and the number of voters $n$. 

% We present sketch of the proof of Lemma \ref{lemma:bounding_expected_score}. A full proof is in Appendix \ref{sec:bounding_expected_score_proof}.
%Furthermore, in this lemma, we assume that each voter independently ranks the candidates according to the Plackett-Luce (PL) model with parameter $\theta$, as defined in Equations \eqref{eq:plackett_luce} and \eqref{eq:pairwise_probability}.

% Further, in the following theorem we assume that every voter ranks the candidates as per the PL model with parameter $\theta$ as described in \ref{eq:plackett_luce} and \ref{eq:pairwise_probability}.

\newcommand{\lemmaboundingexp}{
%    Consider any $n,m$, any set of $m$ candidates $\mathcal{A}$ and  $n$ voters $\mathcal{N}$ and metric space $d$ on the set of voters and candidates. 
Let $\distborda = m^{\max(1 - \frac{2}{\theta},0)}$. In each metric space $d^{(n,m)}$, consider the candidates $W^{(n,m)}$ and $B^{(n,m)}$. Suppose that $\lim\limits_{m \to \infty}\frac{\SC(W^{(n,m)},d^{(n,m)})}{\SC(B^{(n,m)},d^{(n,m)})} \frac{1}{ \distborda} = \infty. $ Then there exist constants $m_0>1,c > 0$ such that for all $m > m_0,$ the expected Borda score of $W^{(n,m)}$ is at least $cn$ lower than the expected Borda score of some other candidate. 
% $A \in \mathcal{A}$.
}

\begin{lemma}{\label{lemma:bounding_expected_score}}
\lemmaboundingexp
\end{lemma}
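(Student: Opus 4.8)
\textbf{Proof proposal for Lemma~\ref{lemma:bounding_expected_score}.}

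The plan is to follow the road map sketched in the introduction: normalize the metric space so that $d(B,W)=1$, introduce a ``core'' ball $\C$ of radius $\frac12$ centered at $B$ and a ``voter'' ball $\V$ of small radius $\beta<\frac12$ centered at $B$, and then compare $W$'s expected Borda score not with $B$'s but with that of a carefully chosen candidate $B^{(b)}\in\C$ that maximizes the \emph{truncated} score $\sum_{i\in\V}\sum_{j\in\A}\mathbb P(B^{(b)}\succ_{\sigma_i}j)$. The reason to use $B^{(b)}$ rather than $B$ is that $B$ itself may have few voters near it, so its raw Borda score could be small; but the \emph{average} over candidates in $\C$ of this truncated score is at least $|\V|\cdot\frac{m}{2}$ up to lower order terms (each voter in $\V$ splits its pairwise votes among the candidates, and those in $\C$ are comparatively close), so some $B^{(b)}$ attains it. The first reduction is to note that if $SC(W,d)/SC(B,d)\gg \distborda$ then, since every point of $\C$ is within distance $\frac12+\text{(radius of }\V)$ of every voter in $\V$ while $W$ is at distance $\ge\frac12$ from $B$, a counting/triangle-inequality argument forces either many voters to be far from $B$ (hence far from $W$ too, by triangle inequality, which bounds $W$'s pairwise win probabilities against the many candidates in $\C$), or $W$ to be genuinely far, i.e. $d(i,W)$ large for the bulk of voters.

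The core of the argument splits on the quantity $R_m := \max_{j\notin\C} d(B,j)$, exactly as the excerpt indicates. \textbf{Case (a): $R_m$ stays bounded} as $m\to\infty$. Then every candidate is within a fixed distance of $B$, so by the triangle inequality $d(i,W)$ and $d(i,j)$ for $j\in\C$ are all comparable (within constant factors) for every voter, \emph{and} the hypothesis $SC(W,d)/SC(B,d)\to\infty$ combined with boundedness forces $SC(B,d)=o(SC(W,d))$, meaning a constant fraction of voters are within a small multiple of their $W$-distance of $B$; for such a voter $i$, $d(i,B)/d(i,W)$ is small, so $\mathbb P(B^{(b)}\succ_{\sigma_i} j)$ for most of the $m$ candidates $j$ (those in $\C$, which are close) exceeds $\frac12$ by a margin that does not vanish, while $\mathbb P(W\succ_{\sigma_i} j)$ for those same $j$ is strictly below $\frac12$ — summing over the $\Theta(m)$ candidates in $\C$ and the constant fraction of voters yields a gap of order $n$ (here $\distborda=\Theta(1)$ since $\theta\le2$ is the relevant regime, or the bound is trivially dominated). \textbf{Case (b): $R_m\to\infty$.} This is the delicate case and the main obstacle. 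Here the candidates far from $B$ contribute almost a full point each to $W$'s Borda score from voters near $B$ (since for a voter in $\V$ and a far candidate $j$, $d(i,j)$ is huge so $d(i,W)/d(i,j)\to0$ and $\mathbb P(W\succ j)\to1$) — but they contribute almost a full point to $B^{(b)}$'s score too. What remains is to control the contribution of the $\Theta(m)$ \emph{close} candidates (those in $\C$) and of the far candidates \emph{from voters outside} $\V$. One sets up the optimization problem: over all placements of voters and the far candidates consistent with the constraints ($SC$ ratio large, triangle inequalities, $R_m\to\infty$), maximize $\mathbb E[\text{Borda}(W)]-\mathbb E[\text{Borda}(B^{(b)})]$, and show the optimum is $\le -cn$.

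\textbf{The hard step} is that this optimization has, for fixed $m$, an objective with several inflection points (the function $r\mapsto \frac1{1+r^{-\theta}}$ is S-shaped, so sums of such terms are not concave), making a direct Lagrangian/KKT analysis for each $m$ intractable. The device, as the excerpt previews, is to pass to the asymptotic regime first: replace the objective by a clean \emph{lower bound} $L_m(\cdot)$ (e.g. using $\frac1{1+r^{-\theta}}\ge 1-r^{\theta}$ for small $r$ and $\ge \tfrac12 r^{\theta}$-type bounds where appropriate, and linearizing the S-curve by its chords/tangents on the relevant ranges), and then compute $\lambda^\star:=\lim_{m\to\infty}\sup(\text{derivative of }L_m)$ over the feasible scaling of the key variable (the typical voter-to-$B$ distance, scaled by $m^{1/\theta}$, which is the natural scale coming from the lower-bound example). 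Using this $\lambda^\star$ as a \emph{fixed} Lagrange multiplier decouples the per-voter contributions, so the Lagrangian becomes a sum of single-variable problems each of which can be maximized in closed form; plugging back shows the relaxed maximum is at most $-cn$ for all large $m$, which by the sandwich ($L_m\le$ true objective difference $\le$ the thing we want negative — careful with signs: we lower-bound $\mathbb E[\text{Borda}(B^{(b)})]-\mathbb E[\text{Borda}(W)]$ from below by $cn$) completes the proof. Throughout, the concentration step — Borda scores are sums of $n$ bounded (by $m$) independent terms, variance $\le nm$, so Cantelli/Chebyshev turns an $\Omega(n)$ expected gap into ``$W$ loses whp'' — is routine and is invoked only when Lemma~\ref{lemma:bounding_expected_score} is later used, not inside it.
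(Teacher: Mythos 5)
Your architecture does track the paper's proof: normalize $d(B,W)=1$, introduce the candidate ball $\C$ of radius $\tfrac12$ and a small voter ball $\V$ around $B$, pass to the truncated-Borda maximizer $B^{(b)}$, split on whether $\max_{j\notin\C}d(B,j)$ stays bounded, use the triangle inequality in the bounded case, and in the unbounded case run a Lagrangian relaxation with the multiplier chosen from the asymptotic maximum of the derivative of a lower-bounding objective. However, there is a genuine gap in where your $\Omega(n)$ margin comes from. You repeatedly attribute it to ``the $\Theta(m)$ candidates in $\C$,'' but nothing forces $\C$ to contain more than $B$ itself: the adversary can place all of $\A\setminus\{B,W\}$ at distance much larger than $1$ from $B$. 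In that configuration the close-candidate contribution (the paper's term (I)) is vacuous, and the far candidates contribute almost equally to $W$ and $B^{(b)}$ --- in fact possibly with a small net deficit for $B^{(b)}$, which is exactly what the paper's Claim~\ref{claim:borda_optim1} has to control and show is only $-n\,o_m(1)$ when $\theta>2$. The guaranteed $\Omega(n)$ gain comes from the head-to-head term $\sum_{i\in\N}\bigl(2\mathbb{P}(B^{(b)}\succ_{\sigma_i}W)-1\bigr)$ (term (III), handled by Claim~\ref{claim:borda_optim2}): since $B^{(b)}$ lies within $O\bigl(1/\sqrt{\mff(m)-1}\bigr)$ of $B$ and almost all voters lie in $\V$, each such voter prefers $B^{(b)}$ to $W$ with probability bounded away from $\tfrac12$, yielding $n\,\Omega_m(1)$. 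Your case (a) argument sums gains over candidates in $\C$, and your case (b) accounting lists only the close candidates and the far candidates seen from voters outside $\V$ as what remains; neither isolates the head-to-head term, so the claimed $-cn$ bound fails in the scenario above. (Your case (b) relaxation of the full Borda difference would rediscover it if executed, but the sketch locates the source of the gap in the wrong place.)

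Two smaller inaccuracies to fix: the truncated score must restrict the inner sum to $j\in\C$ (you sum over all $j\in\A$), and the averaging bound is $\tfrac{(|\C|-1)|\V|}{2}$, not $|\V|\cdot\tfrac m2$; also, the bounded/unbounded split on the far candidates has nothing to do with $\theta\le 2$ versus $\theta>2$ --- the parameter $\theta$ only enters through how negative the far-candidate term can be in the unbounded case.
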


\begin{proof}
   For brevity, we omit the superscripts in the metric space $d^{(n,m)}$ and the candidate names when clear from the context. Given that our probability distributions satisfy scale-freeness (SF), we set $d(B,W)=1$.   Suppose the ratio of social costs of candidate $B$ and candidate $W$ is lower bounded by $\mff(m) = \omega(h_{\theta}(m))$. Then we obtain:
    \begin{align}
        \frac{\SC(W,d)}{\SC(B,d)} \geq \mff(m)  \overset{(a)}{\implies}  & 1+ \frac{1}{\frac{1}{n} \sum\limits_{i \in \N}\frac{d(i,B)}{d(B,W)}} \geq \mff(m) \\ \implies & \frac{1}{n} \sum\limits_{i \in \N}\frac{d(i,B)}{d(B,W)} \leq \frac{1}{\mff(m)-1}. \nonumber \\
        \intertext{$(a)$ follows from the triangle inequality: $d(i,W) \leq d(i,B) + d(B,W)$ for all $i \in \N$. Since $\mff(m)$ goes to $\infty$ as $m \rightarrow \infty$, we choose $m$ large enough such that $\mff(m) \geq 2$. Given that $d(B,W) = 1,$ we obtain:}
        &  \frac{1}{n} \sum\limits_{i \in \N} d(i,B) \leq \frac{1}{\mff(m)-1}.\label{eq:distortion_constraint}
    \end{align}
    %

%We consider the following sets:

Let  \( \V \) denote the set of voters within a distance of \( \frac{1}{\sqrt{\mff(m)-1}} \)  from B. Similarly, let   \( \C \) denote the set of candidates within a distance of \( \frac{1}{2} \)  from B. The following observation follows from the constraint on the average distance of voters from candidate \( B \) in Equation \eqref{eq:distortion_constraint}.

%    We consider two balls, denoted by \( \V \) and \( \C \), with radii \( \frac{1}{\sqrt{\mff(m)-1}} \) and \( \frac{1}{2} \), respectively, centered at  \( B \). 
    %Given the constraint in \eqref{eq:distortion_constraint} on the average distance of a voter from B, we have that at most $\frac{1}{\sqrt{\mff(m)-1}}$ fraction of the voters can lie outside the ball $\V$.
    \begin{observation} \label{obs:size_V}
        At most $\frac{n}{\sqrt{\mff(m)-1}}$ voters can be in the set $\N \setminus \V$. That is, $|\V| \geq \left(1 - \frac{1}{\sqrt{\mff(m)-1}}\right) n.$
    \end{observation}

    We compute the expected ``truncated'' Borda scores for candidates in \( \C \), considering only the candidates in   \( \C \) and the voters in \( \V \) for ranking and score calculation (hence ``truncated'').  

Let \( B^{(b)} \) be the candidate that achieves the highest expected truncated Borda score. We have: %We now argue, via contradiction, that the distance between \( B^{(b)} \) and \( B \) is at most twice the radius of ball \( \V \).
\begin{observation}\label{eq:borda_winner_distance}
     The distance between $B$ and $B^{(b)}$ satisfies:    $ d(B^{(b)}, B) \leq \frac{2}{\sqrt{\mff(m)-1}}.$
\end{observation}
Assume the contrary. By the triangle inequality, we have $d(B^{(b)}, i) > |d(B^{(b)},B) - d (i,B)| > \frac{1}{\sqrt{\mff(m)-1}} > d(B, i)$ for every $i \in \V$. Thus, from \eqref{eq:pairwise_probability}, for every $j\in \C\setminus\{B,B^{(b)}\}$ and every $i\in\V$,
\[
\mathbb{P}(B^{(b)} \succ_{\sigma_i} j)
<
\mathbb{P}(B \succ_{\sigma_i} j).
\]
Moreover, for every \(i\in\V\), candidate \(B\) is closer to \(i\) than \(B^{(b)}\), so \(\mathbb{P}(B\succ_{\sigma_i}B^{(b)})>1/2\). Therefore the expected truncated Borda score of \(B\) is strictly larger than that of \(B^{(b)}\), contradicting the choice of \(B^{(b)}\).

    %We now consider all candidates in $\C$ and consider their expected Borda score by computing their ranks among voters in $\V$.

     We lower bound the difference between the expected Borda score of candidates $B^{(b)}$ and $W$. %Using linearity of expectation, we may write 
    \begin{align}
        & \mathbb{E}[\textsc{Borda score of $B^{(b)}$}] - \mathbb{E}[\textsc{Borda score of $W$}] \nonumber\\  
        = & \sum_{i \in \mathcal{N}} \Biggl(\sum_{j \in \mathcal{A} \setminus \{B^{(b)}\}} \mathbbm{P} (B^{(b)} \succ_{\sigma_i} j) - \sum_{j \in \mathcal{A} \setminus \{W\}}\mathbbm{P} (W \succ_{\sigma_i} j)\Biggr)\label{eq:temp_1}\\
        = & \underbrace{\sum_{j \in \C \setminus \{B^{(b)}\}} \sum_{i \in \mathcal{N}} \mathbbm{P} (B^{(b)} \succ_{\sigma_i} j) - \mathbbm{P} (W \succ_{\sigma_i} j)}_{\text{(I)}} \nonumber\\
        + & \underbrace{\sum_{j \in \A \setminus (\C \cup \{W\})} \sum_{i \in \mathcal{N}} \mathbbm{P} (B^{(b)} \succ_{\sigma_i} j) - \mathbbm{P} (W \succ_{\sigma_i} j)}_{\text{(II)}} + \underbrace{\sum_{i \in \mathcal{N}} 2\mathbbm{P} (B^{(b)} \succ_{\sigma_i} W) - 1.}_{\text{(III)}}&    \label{eq:diff_borda_score}
    \end{align}

   %Observe that in our probabilistic voting framework $\mathbb{P}(A \succ_{\sigma_i} A) = \frac{1}{2}$ since $g(1)=\frac{1}{2}$ for any $g\in \G$.Thus, we can add $\mathbb{P}(B^{(b)} \succ_{\sigma_i} B^{(b)})$ and $\mathbb{P}(W \succ_{\sigma_i} W)$ to first and second terms respectively in equation \eqref{eq:temp_1} leading to the equality $(a)$.

    %Note that the expectation is taken over the randomness in the rankings that a voter submits given some metric space 
    %$d$ on which voters and candidates lie on. Without loss of generality, we assume $d(B,W)=1$ 

   Term (I) corresponds to the comparisons of $B^{(b)}$ and $W$ with candidates in $\C.$ Term (II)  corresponds to the comparisons of $B^{(b)}$  and $W$ with candidates not in $\C.$ Whereas Term (III) corresponds to the comparison between $B^{(b)}$  and $W$.
    We lower bound each term separately.
    \begin{align}
(I) \geq\;
&\sum_{j \in  \C \setminus \{B^{(b)}\}}
 \Bigg(
   \sum_{i \in \V}
     \big(
       \mathbbm{P} (B^{(b)} \succ_{\sigma_i} j)
       - \mathbbm{P} (W \succ_{\sigma_i} j)
     \big)
   \;-\;
   \sum_{i \in \mathcal{N} \setminus \V} 1
 \Bigg) \nonumber
\\
\noalign{\vskip2pt}
\end{align}

\noindent\textit{The first term is the expected truncated Borda score of $B^{(b)}$. Since $B^{(b)}$ has the highest expected truncated Borda score among candidates in $\C$, we can lower bound it with the average expected truncated Borda score, which is $\frac{(|\C| -1)|\V|}{2}.$ The third term can be bounded via Observation~\ref{obs:size_V}.}

\begin{align}
\geq\;
&\frac{|\V|(|\C| -1)}{2}
\;-\;
\sum_{j \in  \C \setminus \{B^{(b)}\}}
\sum_{i \in \V}
\frac{1}{1+ \left(\frac{d(i,W)}{d(i,j)}\right)^{\theta}}
\;-\;
\frac{n (|\C| -1)}{\sqrt{\mff(m)-1}}
\\
\noalign{\vskip2pt}
\end{align}

\noindent\textit{By the triangle inequality, $d(i,W) \geq  1-\frac{1}{\sqrt{\mff(m)-1}}$ and $d(i,j) \leq \frac{1}{2} + \frac{1}{\sqrt{\mff(m)-1}}$ for all $i \in \V$ and $j \in \C$.}

\begin{align}
\geq\;
&\frac{|\V|(|\C| -1)}{2}
\;-\;
\frac{|\V|(|\C| -1)}{
  1+ \left(
      \frac{1-\frac{1}{\sqrt{\mff(m)-1}}}{
            \frac{1}{\sqrt{\mff(m)-1}} + \frac{1}{2}}
     \right)^{\theta}}
\;-\;
\frac{n (|\C| -1)}{\sqrt{\mff(m)-1}}
\\
\noalign{\vskip2pt}
\end{align}

\noindent\textit{Since $|\V| \geq \Big(1 - \frac{1}{\sqrt{\mff(m)-1}}\Big) n$ by Observation~\ref{obs:size_V}, and given that $\mff(m)$ grows unbounded as $m \rightarrow \infty$,}

\begin{align}
\geq\; & n (|\C| -1) \Omega_m(1).
\label{eq:bounding_I}
\end{align}

\iffalse
    $(a)$ follows from the fact that at most $\left(1-\frac{1}{\sqrt{\mff(m)-1}}\right)$ fraction of the voters lie in ball $\V$.
    %
    $(b)$ follows from the fact that candidate $B^{(b)}$ maximises the expected Borda score considering voters and candidates in $\V $ and $\C$ respectively and the fact that the maximum is larger than or equal to the mean. Thus, $$\sum\limits_{j \in ((\mathcal{A} \setminus \{W\})\cap \C) \setminus\{B^{(b)}\}} \mathbb{P}(B^{(b)} \succ_{\sigma_i} j) \geq \frac{1}{|(\mathcal{A} \setminus \{W\})\cap \C|} \sum\limits_{j, j' \in ((\mathcal{A} \setminus \{W\})\cap \C) \setminus\{B^{(b)}\}: j \neq j'} \mathbb{P}(j' \succ_{\sigma_i} j)$$
    
    $(c)$ follows from the following observations.

    \begin{itemize}
        \item Now given any set of candidates $\C$ and any ranking over these candidates submitted by voter $i$, we have $\sum_{j, j' \in \C: j \neq j'}\mathbbm{1}(j' \succ_{\sigma_i} j) = \frac{|\C| (|\C|-1)}{2} = \frac{|\C|(|\C|-1)}{2}$. Now apply expectation on both sides to argue the first term in $(c)$.
        \item The second term in $(c)$ follows from the triangle inequality i.e. $d(i,W) \geq |d(B,W)-d(i,B)|$ and $d(i,j) \leq d(j,B)+ d(i,B)$. 
    \end{itemize}

    % Now given any set of candidates $\C$ and any ranking over these candidates submitted by voter $i$, we have $\sum_{j, j' \in \C: j \neq j'}\mathbbm{1}(j' \succ_{\sigma_i} j) = \frac{|\C| (|\C|-1)}{2} = \frac{|\C|(|\C|-1)}{2}$. Now apply expectation on both sides to argue the first term in $(c)$. $(c)$ follows from the application of the triangle inequality since $d(i,W) \geq |d(B,W)-d(i,B)|$ and $d(i,j) \leq d(j,B)+ d(i,B)$. 

    The last term follows from the fact that at most $\left(1-\frac{1}{\sqrt{\mff(m)-1}}\right)$ fraction of the voters lie in ball $\V$ and the fact that $(x-1) \geq \frac{x}{2}$ whenever $x \geq 2$.
    
\fi

We now bound Term (II) in Equation~\eqref{eq:diff_borda_score}, subject to the distance constraint~\eqref{eq:distortion_constraint}. Since the objective function is additive over candidates \( j \in \A \setminus (\C \cup \{W\}) \), and neither the choice of \( B^{(b)} \) nor the distance constraint couples these candidates, it suffices to lower bound the contribution of an arbitrary fixed outside candidate \(j\) and then sum this bound over all such candidates.
For this fixed candidate \(j\), write
\[
z_j:=\frac{d(B^{(b)},j)}{d(B^{(b)},W)}.
\]
Let \(Z_\theta\) be a sufficiently large constant, depending only on \(\theta\), so that the uniform estimates in Claim~\ref{claim:borda_optim1} apply whenever \(z_j>Z_\theta\). We split according to whether \(z_j\) is larger than this threshold. If \(z_j\) is below this threshold, then \(d(B,j)=O_\theta(1)\), and the elementary bounded-distance argument applies. If \(z_j\) is above this threshold, we apply the optimization argument.

    %We denote $\frac{d(i,B^{(b)})}{d(B^{(b)},W)}=b_i$ and $\frac{d(i,W)}{d(B^{(b)},W)} = w_i$. Further, $\frac{d(B,j)}{d(B^{(b)},W)} = z, d(i,j) = t_i$ for every $j \in \mathcal{A} \setminus \{B^{(b)},W\}$ and every voter $i$. 

%In that case clearly, $d(B,j)$ must also be upper bounded by $C\left(1+\frac{2}{\sqrt{\mff(m)-1}}\right)$ since Borda winner candidate $B^{(b)}$ is at most $\frac{2}{\sqrt{\mff(m)-1}}$ away from candidate $B$ as argued above.

    % Thus, for sufficiently large $m$, $d(B,j)$ is upper bounded by $2C$. Now, consider (II) 

\textbf{Case 1:} \textit{\(z_j\leq Z_\theta\), where \(Z_\theta\) is the fixed threshold from Claim~\ref{claim:borda_optim1}.}

%Consider $\mathbbm{P} (B^{(b)} \succ_{\sigma_i} j) - \mathbbm{P} (W \succ_{\sigma_i} j)$ 
For any candidate \( j \in \A \setminus (\C \cup \{W\}) \) and voter $i \in \V,$ we have:
\begin{align}
    \mathbbm{P} (B^{(b)} \succ_{\sigma_i} j) - \mathbbm{P} (W \succ_{\sigma_i} j) & = \frac{1}{1+ \left(\frac{d(B^{(b)},i)}{d(i,j)}\right)^{\theta}} - \frac{1}{1+ \left(\frac{d(W,i)}{d(i,j)}\right)^{\theta}} \nonumber \\
     & \overset{(a)}{\geq} \frac{1}{1+ \left(\frac{\frac{3}{\sqrt{\mff(m)-1}}}{\frac{1}{2}-\frac{1}{\sqrt{\mff(m)-1}}}\right)^{\theta}} - \frac{1}{1+ \left(\frac{1-\frac{1}{\sqrt{\mff(m)-1}}}{C_\theta+\frac{1}{\sqrt{\mff(m)-1}}}\right)^{\theta}} \quad \\ = & \quad \Omega_m\left(1-\frac{1}{1+C_\theta^{-\theta}}\right).
     \label{eq:bounding_prob_diff_z_bound_individual_voter}
\end{align}

    $(a)$ follows from the triangle inequality. That is, $d(B^{(b)},i) \leq d(B^{(b)},B)+d(i,B) \leq \frac{3}{\sqrt{\mff(m)-1}}$. Also $d(i,j) \geq d(j,B)-d(i,B) \geq \frac{1}{2}-\frac{1}{\sqrt{\mff(m)-1}}$.  Similarly, since \(z_j\leq Z_\theta\), we have \(d(B,j)=O_\theta(1)\); writing this constant as \(C_\theta\), $d(i,j)\leq d(j,B)+d(i,B) \leq C_\theta+\frac{1}{\sqrt{\mff(m)-1}}$. Finally, $d(W,i)\geq d(W,B)-d(i,B) \geq 1 - \frac{1}{\sqrt{\mff(m)-1}}$.

    Since at least $\Big(1-\frac{1}{\sqrt{\mff(m)-1}}\Big)n$
    voters are in $\V$ (Observation~\ref{obs:size_V}), when  \(z_j\leq Z_\theta\), we have:
    \begin{align}\label{eq:bounding_prob_diff_z_bound}
        \sum_{i\in \N} \mathbbm{P} (B^{(b)} \succ_{\sigma_i} j) - \mathbbm{P} (W \succ_{\sigma_i} j) \geq n\Omega_m\left(1-\frac{1}{1+(2C_\theta)^{-\theta}}\right).
    \end{align}

    % Note that one has to minimise \eqref{eq:diff_borda_score} subject to the ratio of social cost of candidate $B$ and $W$ being lower bounded by $\mff(m)$. From the assumption in lemma \ref{lemma:bounding_expected_score}, we have $\mff(m) = \omega(\distborda)$

    % $(a)$ follows on the application of triangle inequality

    % Now we solve a constraint minimisation problem where we minimise equation \eqref{eq:diff_borda_score} subject to constraint in equation \eqref{eq:distortion_constraint}. Note that since the objective function involves summing over all candidates $j \in \mathcal{A} \setminus \{B,W\}$ while the constraint does not involve $j$, one may argue that every candidate in $\mathcal{A} \setminus \{B,W\}$ must be located at the same position.

    % Let us now setup some notation. We denote $d(i,B)=b_i$ and $d(i,W) = w_i$. Further, $d(B,j) = z, d(i,j) = t_i$ for every $j \in \mathcal{A} \setminus \{B,W\}$ and every voter $i$.

    % Note that every variable is a function of $m$ and we assume that $\lim_{m \to \infty} z = \infty$ \sahasrajit{To be fixed not proven}.

%    Now observe that $d(B^{(b)},W) \leq d(B,W)+ d(B,B^{(b)}) \leq 1+ \frac{1}{\sqrt{\mff(m)-1}}$ since $d(B^{(b)},B) \leq \frac{2}{\sqrt{\mff(m)-1}}$ from equation \eqref{eq:borda_winner_distance}. 

\textbf{Case 2:} \textit{\(z_j>Z_\theta\).}
    
    %Let us now setup some notation for the case where $d(B,j)$ is unbounded. 
 %   For ease of notation,  we refer to $\frac{d(i,B^{(b)})}{d(B^{(b)},j)}$ as $b_i,$ $\frac{d(i,W)}{d(B^{(b)},j)}$ as $w_i$, $\frac{d(B^{(b)},j)}{d(B^{(b)},W)}$ as $z$, and $\frac{d(i,j)}{d(B^{(b)},j)}$ as $t_i$. 

For ease of notation, we define  
\begin{align}
b_i := \frac{d(i,B^{(b)})}{d(B^{(b)},j)}, \quad
w_i := \frac{d(i,W)}{d(B^{(b)},j)}, \quad \\
z := \frac{d(B^{(b)},j)}{d(B^{(b)},W)}, \quad
t_i := \frac{d(i,j)}{d(B^{(b)},j)}.
\end{align}

In this case \(z_j\) is sufficiently large for Claim~\ref{claim:borda_optim1} to apply. We write \(z=z_j\) for notational simplicity.

From Equation~\eqref{eq:distortion_constraint},  Observation~\ref{eq:borda_winner_distance}, and an application of the triangle inequality, we obtain: 
    \begin{align}
\frac{1}{n} \sum_{i \in \N} d(i,B^{(b)})
&\leq \frac{1}{\mff(m)-1}
   + \frac{2}{\sqrt{\mff(m)-1}}
   \label{eq:optimization_II_contraint_first_step} \\
\implies\hspace{-1.0 em}\quad
\frac{1}{n} \sum_{i \in \N} \frac{d(i,B^{(b)})}{d(B^{(b)},W)}
&\leq
   \left(
     \frac{1}{\mff(m)-1}
     + \frac{2}{\sqrt{\mff(m)-1}}
   \right) \times \left(
     1-\frac{2}{\sqrt{\mff(m)-1}}
   \right)^{-1}
   \label{eq:optimization_II_contraint}
\end{align}

%$(a)$ follows from equation \eqref{eq:borda_winner_distance} and the fact that $d(B,W)$ is normalized to 1. 

For simplicity of notation, we define  
\(
\mfg(m) := \Big(\frac{1}{\mff(m)-1} + \frac{2}{\sqrt{\mff(m)-1}} \Big)^{-1} \Big(1-\frac{2}{\sqrt{\mff(m)-1}}\Big).
\)  
Since \( \mff(m) = \omega(h_{\theta}(m)) \), it follows that \( \mfg(m) = \omega(h_{\theta}(m)) \).  

Now, for a candidate \( j \in \A \setminus (\C \cup \{W\}) \), 
the inner summation in Term (II) is given by:
    \begin{align}{\label{eq:diff_borda_score_objective}}
        & \sum_{i \in \N} \bigg(\frac{1}{1+ \left(\frac{b_i}{t_i} \right)^{\theta}} - \frac{1}{1+ \left(\frac{w_i}{t_i} \right)^{\theta}} \bigg) \\ \quad
        \geq & \quad \sum_{i \in \N}  \bigg( \frac{1}{1+ \left(\frac{b_i}{t_i} \right)^{\theta}} - \frac{1}{1+ \left(\frac{|1/z-b_i|}{t_i} \right)^{\theta}} \bigg).
    \end{align}
    The inequality follows on application of the triangle inequality since $d(i,W) \geq |d(B^{(b)},W) - d(i,B^{(b)})|$ which naturally translates to $w_i \geq |\frac{1}{z}-b_i|.$

We now derive a lower bound for Term (III) in Equation~\eqref{eq:diff_borda_score}. We first define \( a_i \) as  
\(
a_i := \frac{d(B^{(b)},i)}{d(B^{(b)},W)}.
\)
Applying the triangle inequality,  
\(
d(W,i) \geq \left| d(B^{(b)},i) - d(B^{(b)},W) \right|,
\) 
we obtain:  
\begin{equation}\label{eq:diff_prob_III_term}
    (III) \geq \sum_{i \in \N}\Bigg(\frac{2}{1+ \left(\frac{a_i}{|1-a_i|}\right)^{\theta}}-1\Bigg).
\end{equation}

    We consider two optimization problems, denoted by \( \mathcal{E}^{(1)}_{\mfg,z} \) and \( \mathcal{E}^{(2)}_{\mfg} \), and solve them separately. From Equation~\eqref{eq:diff_borda_score_objective}, each outside candidate \(j\in\A\setminus(\C\cup\{W\})\) contributes at least \(\Opt(\mathcal{E}^{(1)}_{\mfg,z_j})\), where \(z_j=d(B^{(b)},j)/d(B^{(b)},W)\). Hence
\[
(II)\geq \sum_{j\in\A\setminus(\C\cup\{W\})}\Opt(\mathcal{E}^{(1)}_{\mfg,z_j}).
\]
Similarly, from Equation~\eqref{eq:diff_prob_III_term}, we have \( (III) \geq \Opt(\mathcal{E}^{(2)}_{\mfg}) \).
  %  We consider two optimization problems for Terms (II) and (III) in Equation \eqref{eq:diff_borda_score} (we call it $\mathcal{E}^{(1)}_{\mfg,z}$ \footnote{Note that we have $(II) \geq \left|\left(\mathcal{A}\setminus( C \cup \{W\}\right)\right| \textsc{opt}(\mathcal{E}^{(1)}_{\mfg,z})$. } and $\mathcal{E}^{(2)}_{\mfg}$ respectively)  and solve them separately. 
    The constraints in these optimization problems follow from Equation \eqref{eq:optimization_II_contraint} and the triangle inequality. 
    %The objective function in the optimization problems $\mathcal{E}^{(1)}_{\mfg,z}$ and $\mathcal{E}^{(2)}_{\mfg}$ follow from Equations \eqref{eq:diff_borda_score_objective} and \eqref{eq:diff_prob_III_term} respectively. 
    We keep the dependence of \( \mathcal{E}^{(1)}_{\mfg,z} \) on the parameter \(z\) explicit, since the proof applies the resulting bound to each value \(z_j=d(B^{(b)},j)/d(B^{(b)},W)\).

    %unlike $\frac{d(B^{(b)},i)}{d(B^{(b)},j)}$ that we use for (II)

\begin{figure}[h]  % Optional, helps positioning
\centering
\begin{minipage}{0.57\textwidth}
\begin{equation}\label{eqn:optim_formulation_borda_first}
\mathcal{E}^{(1)}_{\mfg,z} =
    \left\{
    \begin{aligned}
   \min_{ \vect{b}, \vect{t} \in \mathbb{R}_{\geq 0}^n} & \sum_{i \in \N} \frac{1}{1+ \left(\frac{b_i}{t_i} \right)^{\theta}} - \frac{1}{1+ \Big(\frac{\left|\frac{1}{z}-b_i\right|}{t_i} \Big)^{\theta}} \\
    \textrm{s.t.} \quad & \sum_{i \in \N} b_i \leq \frac{n}{z\mfg(m)} \\
      \quad & \max_{i \in \N}|b_i - t_i| \leq 1 \leq \min_{i \in \N} (t_i + b_i) \\
    \end{aligned}
    \right.
\end{equation}
\end{minipage}
\hfill
\begin{minipage}{0.41\textwidth}
\begin{equation}\label{eqn:optim_formulation_borda_second}
\mathcal{E}^{(2)}_{\mfg} =
    \left\{
    \begin{aligned}  \min_{ \vect{a} \in \mathbb{R}_{\geq 0}^n} & \sum_{i \in \N} \left(\frac{2}{1+\left(\frac{a_i}{|1-a_i|}\right)^{\theta}} -1 \right)\\
    \textrm{s.t.} \quad & \sum_{i \in \N} a_i \leq \frac{n}{\mfg(m)} \\
    \end{aligned}
    \right.
\end{equation}
\end{minipage}
\end{figure}

We now state claims that lower bound \( \Opt(\mathcal{E}^{(1)}_{\mfg,z}) \) and \( \Opt(\mathcal{E}^{(2)}_{\mfg}) \).
The proofs require tools from Lagrangian theory and asymptotic analysis of the derivative, and are in Appendix~\ref{sec:proof_claims_in_lemma_boudning_score}.

\gdef\claimbordaoptimone{
 There exists a constant \(Z_\theta<\infty\), depending only on \(\theta\), such that if \(z\geq Z_\theta\) and \(\mfg(m)=\omega(h_\theta(m))\), then the following bounds hold uniformly over all such \(z\): \(\Opt(\mathcal{E}^{(1)}_{\mfg,z}) \geq n\Omega_z(z^{-2})\) if \(\theta<2\), and \(\Opt(\mathcal{E}^{(1)}_{\mfg,z})\geq -n\,o_m(m^{-1})\) otherwise.
}
\global\let\claimone\claimbordaoptimone

\begin{claim}{\label{claim:borda_optim1}}
\claimbordaoptimone
\end{claim}

\gdef\claimbordaoptimtwo{
   If $\mfg(m) = \omega(h_{\theta}(m))$, then $\Opt(\mathcal{E}^{(2)}_{\mfg}) \geq n\Omega_m(1)$.
}
\global\let\claimtwo\claimbordaoptimtwo

\begin{claim}{\label{claim:borda_optim2}}
\claimbordaoptimtwo
\end{claim}

For each fixed outside candidate \(j\), if \(z_j\leq Z_\theta\), Equation~\eqref{eq:bounding_prob_diff_z_bound} gives a nonnegative \(n\Omega_m(1)\)-type lower bound for that candidate's contribution. If \(z_j>Z_\theta\), Claim~\ref{claim:borda_optim1} lower bounds its contribution to Term (II); the constants in the claim are independent of the particular \(z_j\), so summing over all outside candidates gives a total contribution at least \(-n o_m(1)\) in the \(\theta\geq2\) case and a nonnegative contribution in the \(\theta<2\) case. Combining these per-candidate bounds with Claim~\ref{claim:borda_optim2} and Equation~\eqref{eq:bounding_I}, we get that (I) + (II) + (III) in Equation~\eqref{eq:diff_borda_score} is lower bounded by \(n\Omega_m(1)\). This completes the proof.
 \end{proof}

We can now state the upper bound on the metric distortion of \textsc{Borda}.

\newcommand{\lemmabordaub}{
  \(D_m^\theta=O(\distborda)\) as \(m\to\infty\).
}

\begin{lemma}{\label{lemma:borda_upper_bound}}
  \lemmabordaub
\end{lemma}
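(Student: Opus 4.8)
The plan is to convert the expected \textsc{Borda}-score gap of Lemma~\ref{lemma:bounding_expected_score} into a bound on winning probabilities, and then to control the expected social cost of the \textsc{Borda} winner directly. Since \textsc{Borda} is deterministic given the realized profile, $\mathbb{E}[SC(\textsc{Borda}(\sigma_{\N}),d)] = \sum_{W \in \A}\mathbb{P}[W\text{ wins}]\cdot SC(W,d)$. Fix a constant $\kappa=\kappa(\theta)$ and call a candidate $W$ \emph{good} if $SC(W,d) \le \kappa\,\distborda\,SC(B,d)$ and \emph{bad} otherwise. The good candidates contribute at most $\kappa\,\distborda\,SC(B,d)$ to the sum since the winning probabilities add to at most $1$, so it suffices to show that the bad candidates contribute $O(\distborda)\cdot SC(B,d)$ as $n\to\infty$; this gives $\limsup_{n\to\infty}\textsc{dist}^{\theta}_{PL}(\textsc{Borda},m,n) = O(\distborda)$. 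The finitely many values $m\le m_0$ below are handled identically with $\distborda$ replaced by a large absolute constant, since $\distborda=\Theta(1)$ there.

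First I would strengthen Lemma~\ref{lemma:bounding_expected_score} to a non-asymptotic form: there exist constants $\kappa,c>0$ and $m_0$, depending only on $\theta$, such that for every $m>m_0$, every instance, and every candidate $W$ with $SC(W,d) > \kappa\,\distborda\,SC(B,d)$, the expected \textsc{Borda} score of $W$ is at least $cn$ below that of some other candidate. In the proof of Lemma~\ref{lemma:bounding_expected_score}, the ratio $\mff(m)=SC(W,d)/SC(B,d)$ enters only through $\tfrac{1}{\mff(m)-1}$ and $\tfrac{1}{\sqrt{\mff(m)-1}}$, which need only be below fixed thresholds, so requiring $\mff(m)\ge\kappa\,\distborda$ with $\kappa$ large enough preserves Equations~\eqref{eq:bounding_I} and~\eqref{eq:bounding_prob_diff_z_bound} and Claims~\ref{claim:borda_optim1}--\ref{claim:borda_optim2} (at the cost of a possibly smaller $c$, coming essentially from term (III) alone); alternatively, the whole argument can be run by contradiction along a sequence $(m_k,n_k)$ with $m_k,n_k\to\infty$, which is the regime in which Lemma~\ref{lemma:bounding_expected_score} is stated. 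The \textsc{Borda} score of any candidate is a sum of $n$ independent variables valued in $\{0,1,\dots,m-1\}$, so the difference of the \textsc{Borda} scores of $W$ and its competitor is a sum of $n$ independent $[-(m-1),m-1]$-valued variables with mean at least $cn$, and Hoeffding's inequality gives $\mathbb{P}[W\text{ wins}] \le \mathbb{P}[\textsc{Borda}(W)\ge\textsc{Borda}(\text{competitor})] \le \exp(-\Omega(n/m^2))$. A union bound over the at most $m$ bad candidates shows that the winner is bad with probability $o(1)$ as $n\to\infty$.

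The remaining — and main — difficulty is that $SC(W,d)$ for a bad $W$ is not bounded in terms of $SC(B,d)$, so the previous step alone does not control $\sum_{W\text{ bad}}\mathbb{P}[W\text{ wins}]\,SC(W,d)$. The key observation is that a large ratio $\rho_W:=SC(W,d)/SC(B,d)$ forces the voters to be relatively close to $B$: by the triangle inequality $\tfrac1n\sum_i d(i,B)\le\tfrac{1}{\rho_W-1}\,d(B,W)$, so (normalizing $d(B,W)=1$) all but an $O(\rho_W^{-1/2})$ fraction of voters lie within $O(\rho_W^{-1/2})$ of $B$, hence at distance $1-O(\rho_W^{-1/2})$ from $W$. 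As $\rho_W$ grows, $W$ is therefore ranked near the bottom of almost every vote, and its winning probability decays in $\rho_W$: in the simplest instances one gets $\mathbb{P}[W\text{ wins}]=O(\rho_W^{-\theta})$ (for two candidates this is exactly $g(1/\rho_W)\le\rho_W^{-\theta}$), so that $\mathbb{P}[W\text{ wins}]\cdot\rho_W=O(\rho_W^{1-\theta})$ is summable over dyadic scales because $\theta>1$. Making this quantitative for general $m$ is where the two regimes of Lemma~\ref{lemma:bounding_expected_score} re-enter: if the other candidates are not much farther from $B$ than $W$ (the Case~1 situation of bounded $\max_j d(B,j)$), the decay comes from $\mathbb{P}[W\succ_{\sigma_i}B^{(b)}]\le g(O(\rho_W^{-1/2}))=O(\rho_W^{-\theta/2})$ together with the $\exp(-\Omega(n/m^2))$ concentration bound; if instead some candidates are comparably far, those candidates are themselves bad and the optimization $\mathcal{E}^{(1)}_{\mfg,z}$ (with its parameter $z$) again forces $W$ to lose. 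I would then split the bad candidates by whether $\rho_W$ is below a large constant $R=R(\epsilon,\theta,m)$ (handled by $\sum_W\mathbb{P}[W\text{ wins}]\le1$ and the fact that the winner is bad with probability $o(1)$, contributing $o(SC(B,d))$) or above $R$ (handled by the $\rho_W$-decay, contributing a convergent geometric series bounded by $\epsilon\,\distborda\,SC(B,d)$, uniformly in $n$); letting $\epsilon\to0$ finishes the bound. I expect this last part — quantifying how fast a far candidate's winning probability decays and checking that it beats the unbounded growth of $SC(W,d)$ uniformly in $n$ — to be the most delicate step of the write-up, and it re-uses essentially all of the case analysis already developed for Lemma~\ref{lemma:bounding_expected_score}.
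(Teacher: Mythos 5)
Your skeleton (split candidates by whether $SC(A_j,d)/SC(B,d)$ exceeds a threshold of order $\distborda$, then use Lemma~\ref{lemma:bounding_expected_score} plus Hoeffding to make such candidates lose with high probability) matches the paper's, which runs the whole argument by contradiction --- assuming $\lim_{n\to\infty}\textsc{dist}^{\theta}_{PL}(\textsc{Borda},m,n)=\omega(\distborda)$ --- precisely so that the asymptotic hypothesis of Lemma~\ref{lemma:bounding_expected_score} applies verbatim and no non-asymptotic strengthening is needed. The genuine gap is in the step you yourself flag as the main difficulty: controlling $\sum_{W \text{ bad}} \mathbb{P}[W \text{ wins}]\, SC(W,d)$ when the ratio $\rho_W := SC(W,d)/SC(B,d)$ is unbounded. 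The decay $\mathbb{P}[W \text{ wins}] = O(\rho_W^{-\theta})$ you posit is not established and does not follow from your triangle-inequality observation: that observation only places most voters within $O(\rho_W^{-1/2})$ of $B$, so a typical voter prefers $W$ to $B$ with probability $g\bigl(O(\rho_W^{-1/2})\bigr) = O(\rho_W^{-\theta/2})$, and Markov then bounds $\mathbb{P}[W\text{ wins}]\cdot\rho_W$ only by $O(m)$, not by anything decaying in $\rho_W$; even the optimistic rate $\rho_W^{1-\theta/2}$ diverges for $1<\theta\le 2$, so the dyadic sum ``convergent because $\theta>1$'' is unsupported. You also never invoke the structural fact that makes any such bound possible, namely that a Borda winner must be ranked above $B$ by at least $n/m$ voters (if only $k$ voters rank $W$ above $B$, then $\textsc{Borda}(B)-\textsc{Borda}(W) \ge (n-k) - k(m-1)$, so $W$ winning forces $k \ge n/m$).

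The paper's resolution is simpler and does not require uniformity in $n$ (which your dyadic scheme insists on, but the statement is a limit in $n$ at fixed $m$, so terms vanishing as $n\to\infty$ for each fixed $m$ are harmless). For each $A_j$ let $\alpha_j$ be the expected fraction of voters ranking $A_j$ above $B$. If $\alpha_j \le \frac{1}{m}\bigl(1-n^{-1/2+\epsilon}\bigr)$, then by the $n/m$-voters fact and the Chernoff bound \eqref{eq:chernoff_bound_winning_prob}, $\mathbb{P}[A_j \text{ wins}] \le m\alpha_j \exp(\cdot)$, while Lemma~\ref{lemma:bounding_social_cost_ratio} gives $\rho_{A_j} \le \max\bigl(\tfrac{\gm}{\alpha_j}-1,\tfrac{\go}{\alpha_j}+1\bigr)$; the prefactor $m\alpha_j$ cancels the $1/\alpha_j$, so the product is at most $m(m-1)(\gm+\go)\exp(\cdot)\to 0$ as $n\to\infty$, uniformly over how large $\rho_{A_j}$ is --- this is exactly the mechanism that tames your unbounded-ratio candidates without any dyadic decomposition. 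Otherwise $\alpha_j \gtrsim 1/m$, and Lemma~\ref{lemma:bounding_social_cost_ratio} already caps $\rho_{A_j}=O(m)$ for fixed $m$, so Lemma~\ref{lemma:bounding_expected_score} together with Hoeffding's $\exp(-2c^2 n/m^2)$ suffices; candidates with ratio below the contradiction threshold contribute at most that threshold in total. Replacing your $\rho_W$-dyadic argument with this $\alpha_j$-based case split (and adopting the contradiction framing you mention in passing, rather than strengthening Lemma~\ref{lemma:bounding_expected_score}) closes the gap.
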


The proof of Lemma \ref{lemma:borda_upper_bound} analyzes two cases for each non-optimal candidate \( A_j \):  

When the expected number of voters ranking \( A_j \) above \( B \) is less than the threshold \( \frac{n}{m} - \frac{n^{\epsilon + 1/2}}{m} \), we bound the ratio of social costs between \( A_j \) and \( B \) using Observation \ref{lemma:bounding_social_cost_ratio}. This directly provides a bound on \( A_j \)'s contribution to the overall distortion, which approaches zero as \( n \) tends to infinity.

When the expected number of voters ranking \( A_j \) above \( B \) exceeds the threshold, and the ratio of the social costs of \( A_j \) to \( B \) is lower bounded by \( \omega(h_{\theta}(m)) \), we apply Hoeffding's inequality and Lemma~\ref{lemma:bounding_expected_score} to bound the probability that \( A_j \) wins. A full proof is in Appendix \ref{sec:borda_upper_bound_proof}. %We then multiply this probability by the social cost ratio of \( A_j \) and \( B \) to obtain an upper bound on the metric distortion.  

The analysis of metric distortion under \textsc{Borda} is technically intricate, but the core intuition is clear. In the classical setting, \textsc{Borda}'s vulnerability to irrelevant alternatives makes the distortion grow linearly with \(m\). Under probabilistic voting, these irrelevant alternatives are much less effective at inflating the score of a single bad candidate. If the randomness is sufficiently high (\(\theta < 2\)), the distortion no longer scales with \(m\) in the limit. This result supports the classical practitioner intuition that \textsc{Borda} is a reasonable voting rule.

\section{Numerical Evaluations}
\begin{figure}[h]
    \centering
    \begin{subfigure}[b]{0.45\textwidth}
        \centering
        \includegraphics[width=\textwidth]{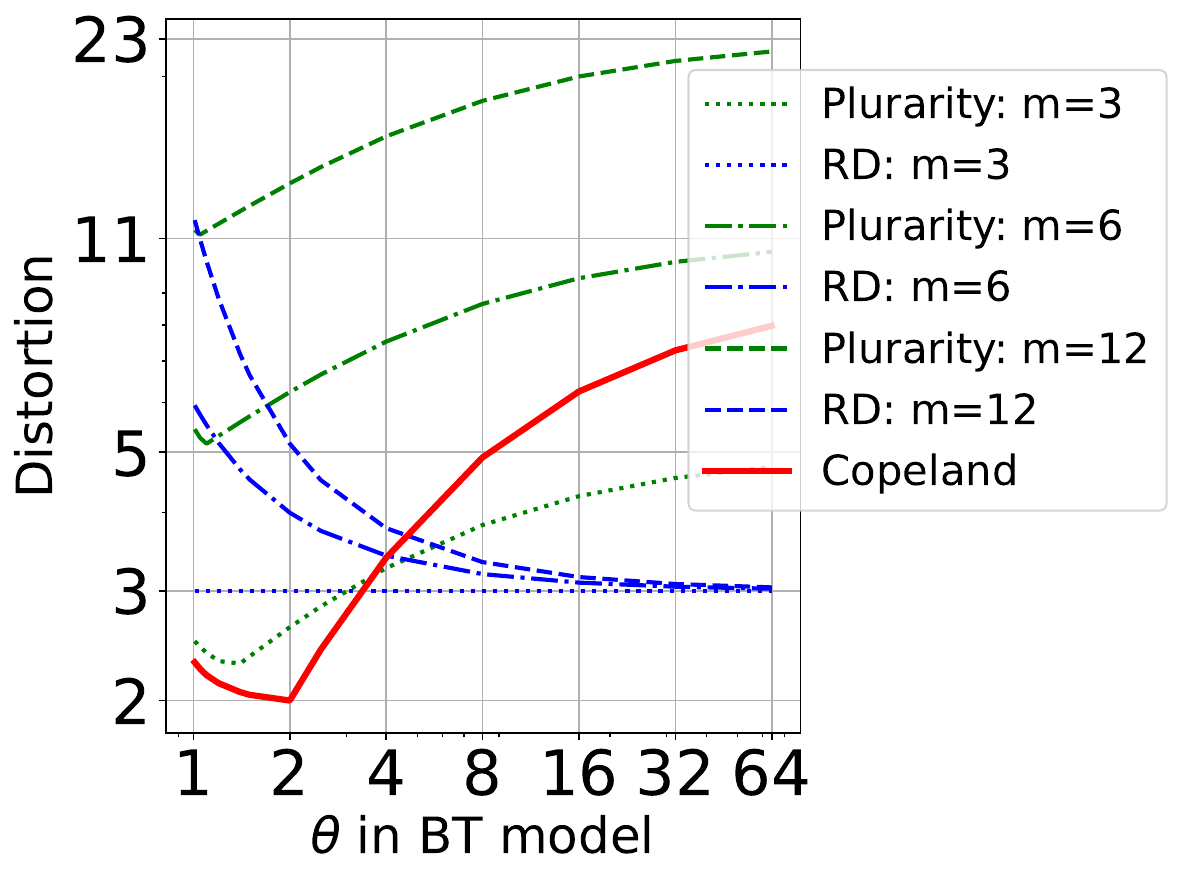}
    \end{subfigure}
    \caption{The distortion bounds on voting rules, varying with the randomness parameter of the PL model, in the limit $n\rightarrow \infty$. Both axes are on the log scale. We plot the upper bound for \textsc{Copeland} (Theorem~\ref{theorem:Copeland_distrotion_m}), the lower bound for \textsc{Random Dictator} (Theorem~\ref{theorem:RD_distortion_lower_bound}), and the upper bound for \textsc{Plurality} (Theorem~\ref{theorem:thm_plurality_distortion_m}).
    %Notably, the distortion upper bound on \textsc{Copeland} is independent of $m$.
    }
    \label{fig:dist}
\end{figure}

Recall that higher value of $\theta$ corresponds to lower randomness. From Figure~\ref{fig:dist}, we observe that under sufficient randomness, the more intricate voting rule \textsc{Copeland} outshines the simpler rule \textsc{Random Dictator}, which only looks at a voter's top choice. Moreover, its distortion is independent of $m$ in the limit $n \rightarrow \infty$. This is in contrast to \textsc{Random Dictator}, where the distortion is $\Omega(m^{1/\theta})$ in the PL model, a sharp rate of increase in $m$ for low values of $\theta.$ The metric distortion of \textsc{Plurality} increases linearly in $m$. % the same as in deterministic voting.
An important observation is regarding the asymptotics when $\theta$ increases. The metric distortion of \textsc{Random Dictator} converges to its value under deterministic voting, i.e., 3. The metric distortion of \textsc{Plurality} also converges to $2m-1$, the same as in deterministic voting. %Since our bound on \textsc{Copeland} is not tight, it converges to $9$ rather than $5$, the distortion in deterministic voting.

%So far, in the study of metric distortion, the social choice community has looked only at these asymptotic; here, we present insights available from looking at the `complete' picture. Interestingly, the distortion of \textsc{Random Dictator} increases with randomness, whereas that of Copeland decreases up to a certain point and then increases again. The reason for the increases in the high randomness regime is that the votes become too noisy to reveal the best candidate any more.

%Since these plots have no abrupt transitions, this figure hints that \textit{smoothened analysis} \cite{xia2020smoothed} (typically done with small amounts of noise) is unlikely to give any new insights regarding metric distortion.

%\section{Discussion and Insights}

%The deterministic framework is too pessimistic.

%Copeland is good -- understandable coz it looks deeper into the rankings.

\section{Discussion and Future Work}

Although the metric distortion framework has garnered significant research interest, its results often lack interpretability. This paper aims to address some of these limitations by demonstrating that a small and natural modification to the framework yields results that align more closely with intuition.  
We extend the metric distortion framework by incorporating the inherent randomness in voter behavior. Our findings reveal that the original framework is overly pessimistic about two important voting rules—\textsc{Copeland} and \textsc{Borda}. Conversely, the intuitively unsatisfying rule \textsc{Random Dictator}, which achieves a distortion of 3 (matching the best possible performance of any deterministic rule), performs poorly when considered in a more realistic setting—its distortion increases with the number of candidates in our model.
Our framework provides a more nuanced perspective on metric distortion, opening opportunities to revisit the problem with a model that better reflects real-world voter behavior.
 %It may hopefully lead to the development of new voting rules that consider the randomness of voters' behaviour. For example, \citet{liu2023robust} take a learning theory approach to design voting rules under the assumption of random voting per the Mallows model. %However, technical analysis in our framework may be challenging because of the interplay of the \textit{geometric} structure of voters' positions and the \textit{probabilistic} nature of their votes.

\paragraph{Future Work} An interesting direction for future research is extending our analysis to other tournament graph-based voting rules, both weighted and unweighted. Our techniques are well-suited for this class of rules, as they rely on the expected edge weights of the tournament graph.   
Another promising avenue is the design of randomness-aware voting rules, which could be particularly relevant to the emerging paradigm of social choice in reinforcement learning with human feedback (RLHF) \cite{conitzer2024social}. An example of this approach is the work in \citep{liu2023robust}, which applies a learning-theoretic perspective to designing voting rules under the assumption of random voting according to the Mallows model.
Finally, identifying or designing the voting rule that minimizes metric distortion under the PL model would be a significant contribution to this area.

%Another open problem is the characterization of the set of distributions on rankings that induce the pairwise probabilities per PQV. 
%Closing the gap for the metric distortion of \textsc{Copeland} would be useful for getting deeper insights and giving tight bounds on \textsc{Borda} for finite $m$ might be another interesting open problem.

%\sahasrajit{State open problems formally}

% \section{Acknowledgements}

% The authors would like to thank Kangning Wang (Stanford University) and Ashish Goel (Stanford University) for initial insightful discussions on this problem.
% \bibliographystyle{unsrtnat}
% \bibliography{refs}

% %%%%%%%%%%%%%%%%%%%%%%%%%%%%%%%%%%%%%%%%%%%%%%%%%%%%%%%%%%%%
% \newpage
% \appendix
% %\pagestyle{appendixstyle}

% \newgeometry{left=1in, right=1in, top=1in, bottom=1in}

% %\section{Appendix / supplemental material}

% \input{appendix-proof}

% \input{checklist}

% \begin{table}
%   \caption{Sample table title}
%   \label{sample-table}
%   \centering
%   \begin{tabular}{lll}
%     \toprule
%     \multicolumn{2}{c}{Part}                   \\
%     \cmidrule(r){1-2}
%     Name     & Description     & Size ($\mu$m) \\
%     \midrule
%     Dendrite & Input terminal  & $\sim$100     \\
%     Axon     & Output terminal & $\sim$10      \\
%     Soma     & Cell body       & up to $10^6$  \\
%     \bottomrule
%   \end{tabular}
% \end{table}

\bibliographystyle{alpha}
\bibliography{refs}
\newpage
\appendix

\newgeometry{left=1in, right=1in, top=1in, bottom=1in}

%\section{Appendix / supplemental material}
\addtocontents{toc}{\protect\setcounter{tocdepth}{0}} % Exclude appendix sections from TOC

\section{Asymptotic Notation}{\label{sec:asymptotic_notation}}

We use standard asymptotic notation. Unless explicitly stated otherwise, the variable tending to infinity is the one displayed in the surrounding statement; all other parameters are fixed. For example, if \(D_m\) is defined using a \(\limsup_{n\to\infty}\), then a statement such as \(D_m=\Theta(h(m))\) is interpreted as \(m\to\infty\).

\begin{itemize}
    \item \(f(x)=O(g(x))\) as \(x\to\infty\) means that there are constants \(C>0\) and \(x_0\) such that \(|f(x)|\leq C|g(x)|\) for all \(x\geq x_0\).
    \item \(f(x)=\Omega(g(x))\) as \(x\to\infty\) means that there are constants \(c>0\) and \(x_0\) such that \(f(x)\geq c g(x)\) for all \(x\geq x_0\), in the nonnegative regimes considered in this paper.
    \item \(f(x)=\Theta(g(x))\) as \(x\to\infty\) means that both \(f(x)=O(g(x))\) and \(f(x)=\Omega(g(x))\) hold.
    \item \(f(x)=o(g(x))\) as \(x\to\infty\) means that \(\lim_{x\to\infty} f(x)/g(x)=0\).
    \item \(f(x)=\omega(g(x))\) as \(x\to\infty\) means that \(\lim_{x\to\infty} f(x)/g(x)=\infty\).
\end{itemize}

\section{Proof of Lemma \ref{lemma:existence_maxima}}\label{sec:corollary_proof}
\noindent\textbf{Lemma~\ref{lemma:existence_maxima} (Restated):} \lemmauniquemaxima

\begin{proof}
Let
\[
    f(x):=g_{\textsc{mid}}(x)
    \quad\text{and}\quad
    h(x):=\frac{f(x)}{x}.
\]
From the identity \(g(r)+g(1/r)=1\), we have
\[
    f(x)+f(1-x)=1,\qquad x\in(0,1).
\]
In particular, \(f(0)=0\), \(f(1/2)=1/2\), and \(\lim_{x\to1}f(x)=1\). Differentiating the symmetry identity also gives
\[
    f'(x)=f'(1-x).
\]

We first rule out local maxima in \((0,1/2)\). Since \(f\) is strictly convex on \((0,1/2)\) and \(f(0)=0\), the slope \(f(x)/x\) is strictly increasing on \((0,1/2)\). Hence \(h\) has no local maximum in \((0,1/2)\). Moreover, strict convexity and \(f(1/2)=1/2\) imply
\[
    f'(1/2)> \frac{f(1/2)-f(0)}{1/2-0}=1.
\]
Therefore
\[
    h'(1/2+)
    =
    \frac{(1/2)f'(1/2)-f(1/2)}{(1/2)^2}
    >0.
\]

It remains to analyze \(h\) on \((1/2,1)\). Define
\[
    s(x):=x f'(x)-f(x).
\]
Then \(h'(x)=s(x)/x^2\). Since \(f\) is strictly concave on \((1/2,1)\), its derivative is strictly decreasing there, and hence \(s\) is strictly decreasing on \((1/2,1)\). Indeed, at all points where the second derivative is read pointwise, \(s'(x)=x f''(x)\leq0\), and strict concavity rules out a flat interval for \(s\).

We have just shown that
\[
    s(1/2)=\frac12 f'(1/2)-\frac12>0.
\]
On the other hand, regular variation of \(f\) at zero with index \(\beta>1\) implies \(f(x)/x\to0\) as \(x\downarrow0\). Since \(f\) is convex near zero, this also implies \(f'(x)\to0\) as \(x\downarrow0\). By the derivative symmetry \(f'(x)=f'(1-x)\), we get \(f'(x)\to0\) as \(x\uparrow1\). Thus
\[
    \lim_{x\uparrow1}s(x)
    =
    \lim_{x\uparrow1}\bigl(xf'(x)-f(x)\bigr)
    =
    -1.
\]
Because \(s\) is continuous and strictly decreasing on \((1/2,1)\), it has a unique zero \(x^\ast_{\textsc{mid}}\in(1/2,1)\). Therefore \(h'(x)>0\) on \((1/2,x^\ast_{\textsc{mid}})\) and \(h'(x)<0\) on \((x^\ast_{\textsc{mid}},1)\). This proves that \(x^\ast_{\textsc{mid}}\) is the unique local, and hence global, maximum of \(g_{\textsc{mid}}(x)/x\) in \((0,1)\).

\end{proof}

\section{A Chernoff Bound Used in the Upper-Bound Proofs}\label{sec:appendix_threshold_chernoff_bound}

In this section, we state a version of the Chernoff threshold bound that is used in the proofs of the finite $n$ upper bounds for Plurality (Section~\ref{sec:thm_plurality_distortion_m_proof}) and Copeland (Section~\ref{sec:Copeland_distrotion_m_proof}).

\begin{lemma}[Threshold Chernoff Bound]\label{lemma:threshold_chernoff_bound}
Let \(X_1,\ldots,X_n\) be independent Bernoulli random variables, and let
\[
    \rho:=\frac{1}{n}\sum_{i=1}^{n}\mathbb P[X_i=1].
\]
Fix \(q\geq 2\), \(r\geq 1\), and \(\epsilon\in(0,1/2)\). If
\[
    \rho\leq \frac{1}{q}-\frac{n^{-1/2+\epsilon}}{q},
    \quad
    \frac{n}{q}>r,
    \quad\text{and}\quad
    \frac{n^{1/2+\epsilon}}{q}\geq r,
\]
then
\[
    \mathbb P\left[\sum_{i=1}^{n}X_i\geq \frac{n}{q}\right]
    \leq
    (q\rho)^r
    \exp\left(
        \frac{-n^{1/2+\epsilon}+2rq}
        {q(2n^{1/2-\epsilon}-1)}
    \right).
\]
\end{lemma}

\begin{proof}
If \(\rho=0\), the claim is immediate. Otherwise, let \(\mu=n\rho\). The Chernoff bound gives
\[
    \mathbb P\left[\sum_{i=1}^{n}X_i\geq \frac{n}{q}\right]
    \leq
    \left(
        \frac{e^{\frac{1}{q\rho}-1}}
        {(\frac{1}{q\rho})^{1/(q\rho)}}
    \right)^{n\rho}
    =
    (q\rho)^{n/q} e^{n/q-n\rho}.
\]
Since \(n/q>r\), we rewrite this as
\[
    (q\rho)^r
    \left(q\rho\exp\left(-\frac{n\rho}{n/q-r}\right)\right)^{n/q-r}
    e^{n/q}.
\]
The function \(x e^{-x}\) is increasing on \((0,1)\). The condition \(n^{1/2+\epsilon}/q\geq r\) ensures that
\[
    \frac{n\rho}{n/q-r}\leq 1
\]
throughout the range considered. Under the assumed upper bound on \(\rho\), the expression inside the parentheses is maximized at
\(\rho=(1-n^{-1/2+\epsilon})/q\). Hence
\[
    \mathbb P\left[\sum_{i=1}^{n}X_i\geq \frac{n}{q}\right]
    \leq
    (q\rho)^r
    \left(1-n^{-1/2+\epsilon}\right)^{n/q-r}
    \exp\left(\frac{n^{1/2+\epsilon}}{q}\right).
\]
Finally, using
\[
    \log(1+x)\leq \frac{2x}{2+x}
    \quad\text{for }-1<x\leq0
\]
with \(x=-n^{-1/2+\epsilon}\), we obtain
\[
    \left(1-n^{-1/2+\epsilon}\right)^{n/q-r}
    \leq
    \exp\left(
        \frac{-2n^{-1/2+\epsilon}(n/q-r)}
        {2-n^{-1/2+\epsilon}}
    \right).
\]
Combining the two exponential terms gives the claimed bound.
\end{proof}

\section{Proof of the Plurality Upper Bound (Theorem~\ref{theorem:thm_plurality_distortion_m})}

This appendix section proves the upper bound for \textsc{Plurality}. We first show that the fractional voter-type relaxation may be reduced to two active voter types (Subsection~\ref{sec:two_active_voter_types_proof}). We then analyze the resulting two-location program \(\Efrac{\alpha}\) (Subsection~\ref{sec:optimizer_e_alpha_soln}) and use it to prove Theorem~\ref{theorem:thm_plurality_distortion_m}.
\subsection{Two Active Voter Types}\label{sec:two_active_voter_types_proof}

	Fix a finite set \(T\) of feasible voter types, where each type \(t\in T\) is a pair \((b_t,w_t)\) satisfying
	\[
		|b_t-w_t|\leq1,\qquad b_t+w_t\geq1,\qquad b_t,w_t\geq0.
	\]
The fractional voter-type relaxation over this fixed set of types is
\[
\mathcal E_\alpha(T)=
\left\{
\begin{aligned}
	\text{minimize}\quad
	& \frac{\sum_{t\in T}p_t b_t}{\sum_{t\in T}p_t w_t}\\
	\text{subject to}\quad
	& \sum_{t\in T}p_t g\left(\frac{b_t}{w_t}\right)\geq\alpha,\\
	& \sum_{t\in T}p_t=1,\quad p_t\geq0 \quad \forall t\in T.
\end{aligned}
\right.
\]
Here \(g(b_t/w_t)\) is interpreted as \(g(\infty)\) when \(w_t=0\), and the objective-ratio convention is the one stated in the main text.
The relaxation used in the main text optimizes this program also over the feasible type locations.

\begin{lemma}[Two active voter types]\label{lemma:two_active_voter_types}
	For every fixed finite set of feasible voter types \(T\), the optimization over the fractions \(p_t\) in \(\mathcal E_\alpha(T)\) has an optimal basic feasible solution supported on at most two voter types, whenever the optimum is attained. Consequently, the full fractional voter-type relaxation has the same infimum value as the two-location formulation in \eqref{eqn:optim_formulation_two_type}, which is the formulation used in the main text.
\end{lemma}
		
\begin{proof}
	Write \(g_t := g(b_t/w_t)\). We optimize only over the fractions \(p_t\). Feasible solutions with zero denominator do not improve a finite minimum, so it is enough to consider a feasible solution with \(W(p):=\sum_{t\in T}p_t w_t>0\). Define \(q_t := p_t/W(p)\). Then
		\[
			\sum_{t\in T}q_t w_t=1,
			\qquad
			\frac{\sum_t p_t b_t}{\sum_t p_t w_t}
			=
			\sum_t q_t b_t.
		\]
	Since \(p_t=q_t/\sum_s q_s\), the probability constraint
	\(\sum_t p_t g_t\geq \alpha\)
	is equivalent to \(\sum_t q_t(g_t-\alpha)\geq 0\).
	Thus, after fixing the voter types, the fractional optimization over the voter fractions is equivalent to the finite linear program
	\[
	\begin{aligned}
		\text{minimize}\quad & \sum_{t\in T}q_t b_t\\
		\text{subject to}\quad
		& \sum_{t\in T}q_t w_t=1,\\
		& \sum_{t\in T}q_t(g_t-\alpha)\geq0,\\
		& q_t\geq0 \quad \forall t\in T.
	\end{aligned}
	\]
	Every finite linear program with an attained optimum has an optimal basic feasible solution. In any basic feasible solution of this program, the number of positive variables is at most the number of active linearly independent non-negativity-excluding constraints. Here there is always the equality constraint \(\sum_t q_t w_t=1\), and the probability constraint is either inactive or active. Therefore an optimal basic feasible solution has support size at most two.
	
	The transformation between \(p\) and \(q\) preserves support, so there is an optimal choice of voter fractions \(p_t\) supported on at most two of the fixed voter types. If the voter locations are also being optimized, apply the same argument to the finite set of voter types appearing in any feasible solution; replacing only the weights by the basic feasible solution preserves feasibility of the voter locations and weakly improves the objective. Hence every finite-support feasible solution can be weakly improved, without changing its type locations, to one supported on at most two active voter types. Applying this to an approximating sequence handles the case where the full fractional relaxation does not attain its infimum, and shows that the two-location formulation has the same infimum value.
\end{proof}

\subsection{Bounds on the Optimization Problem \(\Efrac{\alpha}\)}
\label{sec:optimizer_e_alpha_soln}

We first state the lemma \ref{lemma:optimizer_mu_alpha_combined} and then restate the optimization problem from equation \eqref{eqn:optim_formulation_final} and recall that $\mu^\ast(\alpha) := \sup\left\{
\mu\in[0,1]: \Opt(\Efrac{\mu,\alpha})\geq 0 \right\}$.

\begin{equation*}
	\Efrac{\mu,\alpha}=
	\left\{
	\begin{aligned}
		\text{minimize} \quad
		& p(b_1-\mu w_1)+(1-p)(b_2-\mu w_2)\\
		\text{subject to} \quad
		& p\,g\left(\frac{b_1}{w_1}\right)
		+(1-p)g\left(\frac{b_2}{w_2}\right) \geq \alpha,\\
		& |b_k-w_k|\leq 1,\quad b_k+w_k\geq 1 \quad k=1,2,\\
		& p\in[0,1],\quad b_k,w_k\geq0 \quad k=1,2.
	\end{aligned}
	\right. 
\end{equation*}

\begin{lemma}[Restatement of Lemma~\ref{lemma:optimizer_mu_alpha_combined}]
	\optimizerepsilonalpha
\end{lemma}

We prove this by proving three separate results: lemma \ref{lemma:optimizer_mu_alpha} in \ref{sec:exact_solution}, claim~\ref{claim:two_point_global_lower_bound} in Section~\ref{sec:simple_lower_bound}  and claim \ref{claim:single_point_dominates}  in Section~\ref{sec:suffienicient_dominant_condition}.

\subsubsection{Exact Solution to \(\mu^\ast(\alpha)\)}
\label{sec:exact_solution}

\begin{lemma}\label{lemma:optimizer_mu_alpha}
	\[
	\mu^\ast(\alpha)
	=
	\min\left\{
	R(g,\alpha),
	\left(\frac{\gm}{\alpha}-1\right)^{-1}
	\right\}.
	\]
\end{lemma}

\begin{proof}
	We next rewrite the linearized program \(\Efrac{\mu,\alpha}\) in terms of the ratio
	\[
	r_k:=\frac{b_k}{w_k}
	\]
	for each active voter type \(k\in\{1,2\}\). This reduction is useful because, once \(r_k\) is fixed, the objective is linear in \(w_k\).
	
	For \(w_k>0\), the triangle constraints
	\[
	|b_k-w_k|\leq1,
	\qquad
	b_k+w_k\geq1
	\]
	are equivalent to
	\[
	\frac{1}{1+r_k}
	\leq
	w_k
	\leq
	\frac{1}{|1-r_k|},
	\]
	where the upper bound is vacuous when \(r_k=1\). The lower bound follows from \(b_k+w_k\geq1\), and the upper bound follows from \(|b_k-w_k|\leq1\). The endpoint cases \(r_k=0\) and \(r_k=\infty\) are interpreted by continuity.
	
	For fixed \(r_k>0\), the contribution of type \(k\) to the objective of \(\Efrac{\mu,\alpha}\) is
	\[
	b_k-\mu w_k
	=
	w_k\left(r_k-{\mu}\right).
	\]
	
	Thus the minimizing choice of \(w_k\) is determined by the sign of \(r_k-\mu\):
	\[
	w_k=
	\begin{cases}
		\dfrac{1}{1+r_k}, & r_k\geq\mu,\\[1.25ex]
		\dfrac{1}{1-r_k}, & r_k<\mu.
	\end{cases}
	\]
	Here the second case uses \(\mu\leq1\), so \(r_k<\mu\) implies \(r_k<1\).
	The corresponding values of \(b_k=r_kw_k\) are \(r_k/(1+r_k)\) and \(r_k/(1-r_k)\), respectively.
	
	Consequently, after optimizing over \(w_k\) for each fixed \(r_k\), the linearized objective may be written using the one-dimensional function
	\[
	\phi_\mu(r)
	:=
	\begin{cases}
		\dfrac{r-\mu}{1+r}, & r\geq\mu,\\[1.25ex]
		\dfrac{r-\mu}{1-r}, & 0\leq r<\mu.
	\end{cases}
	\]
	The two-type linearized program therefore reduces to
	\begin{equation}\label{eq:temp_reduced_r}
		\begin{aligned}
			\text{minimize}\quad
			& p\,\phi_\mu(r_1)+(1-p)\phi_\mu(r_2)\\
			\text{subject to}\quad
			& p\,g(r_1)+(1-p)g(r_2)\geq\alpha,\\
			& p\in[0,1],\quad r_1,r_2\in[0,\infty].
		\end{aligned}
	\end{equation}
	
	We use this reduced problem to identify the threshold at which the optimum changes sign. Since
	\[
	\mu^\ast(\alpha)=\sup\{\mu\in[0,1]:\Opt(\Efrac{\mu,\alpha})\ge0\},
	\]
	it is enough to characterize the feasible solutions with negative objective. Such a solution proves that the chosen value of \(\mu\) is too large.
	
	Finally, to keep the variables bounded, use the same change of variables for both active types:
	\[
	x_k:=\frac{r_k}{1+r_k}\in[0,1],
	\qquad
	r_k=\frac{x_k}{1-x_k}.
	\]
	Let \(\tau_\mu:=\mu/(1+\mu)\). The straddling condition becomes \(x_1>\tau_\mu>x_2\), and
	\[
	g(r_k)
	=
	g\left(\frac{x_k}{1-x_k}\right)
	=
	g_{\textsc{mid}}(x_k).
	\]
	In this bounded variable,
	\[
	\psi_\mu(x):= \phi_\mu\left(\frac{x}{1-x}\right)
	=
	\begin{cases}
		(1+\mu)x-\mu, & x\geq \tau_\mu,\\[1.25ex]
		\dfrac{(1+\mu)x-\mu}{1-2x}, & 0\leq x<\tau_\mu.
	\end{cases}
	\]
	Thus \(\psi_\mu(x)\) is negative exactly for \(x<\tau_\mu\), zero at \(x=\tau_\mu\), and positive for \(x>\tau_\mu\). Therefore, any feasible solution with negative objective must put positive mass below \(\tau_\mu\). If all active mass is below \(\tau_\mu\), then feasibility requires some \(x<\tau_\mu\) with \(g_{\textsc{mid}}(x)\ge\alpha\). By the definition of \(\gm=\sup_{x\in(0,1)}g_{\textsc{mid}}(x)/x\), this implies
	\[
	\tau_\mu>\frac{\alpha}{\gm},
	\qquad\text{or equivalently}\qquad
	\mu>\frac{\alpha}{\gm-\alpha}
	=
	\left(\frac{\gm}{\alpha}-1\right)^{-1},
	\]
	so this case cannot create a negative objective before the one-location threshold. The boundary construction with \(x_2=0\) realizes this threshold; its stationarity equations are derived below in \eqref{eq:temp_local_single_point_1}--\eqref{eq:temp_local_single_point_2}.
	
	It remains to analyze the genuine two-location case. We may relabel the two active locations so that \(x_1>\tau_\mu>x_2\). If \(g_{\textsc{mid}}(x_2)\ge\alpha\), then the lower location alone falls under the preceding case and cannot be the first obstruction before the one-location threshold. Hence, in the genuine two-location case, \(g_{\textsc{mid}}(x_2)<\alpha\); feasibility then forces \(g_{\textsc{mid}}(x_1)>\alpha\). If the probability constraint were slack, we could decrease the mass on the positive-objective location \(x_1\) and increase the mass on the negative-objective location \(x_2\) until the constraint is tight, strictly decreasing the objective unless we have returned to the one-location case. Thus the probability constraint is tight in the genuine two-location obstruction, and the mixing weight satisfies \(0<p<1\). Boundary cases with \(x_2=\tau_\mu\) contribute zero on the lower side and cannot yield a negative two-location obstruction. Boundary cases with \(p=0\) or \(p=1\) are precisely the one-location cases already handled.
	
	If the infimum over genuine two-location obstructions is not attained, take an approximating sequence. Any limit in which \(p\to0\) or \(p\to1\), or in which the lower point ceases to be strictly below \(\tau_\mu\), reduces to the one-location case or cannot keep a negative objective. If the upper point approaches \(\tau_\mu\) while feasibility is preserved, then \(g_{\textsc{mid}}(\tau_\mu)\ge\alpha\), which again implies \(\tau_\mu\ge\alpha/\gm\) and is covered by the one-location threshold. The boundary \(x_1=1\) is ruled out by the normal-cone argument below, while the boundary \(x_2=0\) is the one-location threshold already handled. Thus, after removing these boundary alternatives, any attained genuine two-location threshold is an interior local optimizer of the reduced smooth problem with an active smooth constraint. The KKT conditions below are therefore necessary for the only remaining type of global obstruction.
	
	The Lagrangian is defined as \(\mathcal L(\lambda, p, x_1, x_2) = p(\psi_\mu(x_1)- \lambda g_{\textsc{mid}}(x_1)) + (1-p)(\psi_\mu(x_2)- \lambda g_{\textsc{mid}}(x_2)) + \lambda \alpha\) for the dual variable \(\lambda \geq 0\). We now write the KKT stationarity conditions below.
	
	Since \(p\in(0,1)\), stationarity with respect to \(p\) gives
	\begin{equation}\label{eq:p_stationarity}
		\psi_{\mu}(x_1)-\lambda g_{\textsc{mid}}(x_1)
		=
		\psi_{\mu}(x_2)-\lambda g_{\textsc{mid}}(x_2).
	\end{equation}
	
	For the support locations, the stationarity conditions are
	\[
	0\in
	p\left(
	\partial_x\psi_{\mu}(x_1)
	-
	\lambda\,\partial_x g_{\textsc{mid}}(x_1)
	\right)
	+
	N_{(\tau_\mu,1]}(x_1),
	\]
	and
	\[
	0\in
	(1-p)\left(
	\partial_x\psi_{\mu}(x_2)
	-
	\lambda\,\partial_x g_{\textsc{mid}}(x_2)
	\right)
	+
	N_{[0,\tau_\mu)}(x_2).
	\]
	
	For a one-dimensional interval \(I\), we use \(N_I(x)\) to denote the
	normal cone of \(I\) at \(x\). In the two intervals appearing here, this is
	simply \[
	\begin{aligned}
		N_{(\tau_\mu,1]}(x)
		&=
		\begin{cases}
			\{0\}, & x\in(\tau_\mu,1),\\
			[0,\infty), & x=1,
		\end{cases}
		&
		N_{[0,\tau_\mu)}(x)
		&=
		\begin{cases}
			(-\infty,0], & x=0,\\
			\{0\}, & x\in(0,\tau_\mu).
		\end{cases}
	\end{aligned}
	\]

	The normal cone captures the first-order contribution of the interval
	constraint at boundary points. The upper boundary \(x_1=1\) cannot occur in a genuine two-location optimum: stationarity would require
	\((1+\mu)+c-\lambda g'_{\textsc{mid}}(1)=0\) for some \(c\ge0\), which is impossible because \(g'_{\textsc{mid}}(1)=0\). Hence \(x_1\in(\tau_\mu,1)\), and the stationarity condition at \(x_1\) gives
	\[
	1+\mu-\lambda g'_{\textsc{mid}}(x_1)=0.
	\]

	%%For every $\lambda>0$, observe that 
	
	For \(0<x_2<\tau_\mu\), stationarity gives \(\psi'_\mu(x_2)=\lambda g'_{\textsc{mid}}(x_2)\). When \(x_2=0\), the same calculation is replaced by the boundary normal-cone condition. In either case, the \(x_1\)-stationarity equation gives \(\lambda>0\). Hence complementary slackness forces the probability constraint to hold with equality. Consequently, at a threshold solution with objective value zero, the Lagrangian equals the optimization objective.
	
	Since we are solving for $\mu$ such that the optimization objective equals zero, we must have 
	
	\begin{equation}\label{eq:setting_objective_zero}
		0 =
		\psi_\mu(x_1)-\lambda g_{\textsc{mid}}(x_1)+\lambda \alpha
		=
		\psi_\mu(x_2)-\lambda g_{\textsc{mid}}(x_2)+\lambda \alpha .
	\end{equation}

	This follows from setting the Lagrangian equal to zero and applying \eqref{eq:p_stationarity}.

	Eliminating $\lambda$ gives two sets of equations: i) \eqref{eq:temp_local_single_point_1}, \eqref{eq:temp_local_single_point_2}; and ii) \eqref{eq:lambda_equality}, \eqref{eq:temp_local_1}, and \eqref{eq:temp_local_2}. The first set is the boundary case \(x_2=0\), and the second is the interior case \(0<x_2<\tau_\mu\).
	
	Observe that $\psi'_\mu(x) = 1+ \mu$ if $x>\tau_\mu$ and $\frac{1}{(1-2x)^2}$ otherwise.

	\begin{align}
		x_1 &=
		\frac{g_{\textsc{mid}}(x_1)}
		{g'_{\textsc{mid}}(x_1)}
		\label{eq:temp_local_single_point_1}\\
		\frac{\mu}{1+\mu} &=
		\frac{\alpha}{g'_{\textsc{mid}}(x_1)}
		\label{eq:temp_local_single_point_2}.
	\end{align}
	In the boundary case \(x_2=0\), equations \eqref{eq:temp_local_single_point_1}--\eqref{eq:temp_local_single_point_2} give the one-location threshold. Indeed, \eqref{eq:temp_local_single_point_1} says that \(x_1\) maximizes \(g_{\textsc{mid}}(x)/x\), so \(g'_{\textsc{mid}}(x_1)=\gm\). Substituting this into \eqref{eq:temp_local_single_point_2} yields
	\[
	\frac{\mu}{1+\mu}=\frac{\alpha}{\gm},
	\qquad\text{and hence}\qquad
	\mu
	=
	\frac{\alpha}{\gm-\alpha}
	=
	\left(\frac{\gm}{\alpha}-1\right)^{-1}.
	\]
	
	\begin{align}
		\frac{1+\mu}{g'_{\textsc{mid}}(x_1)} &=
		\frac{1-\mu}{(1-2x_2)^2 g'_{\textsc{mid}}(x_2)}
		\label{eq:lambda_equality}\\
		\mu-\frac{(1+\mu)\alpha}{g'_{\textsc{mid}}(x_1)} &=
		(1+\mu)\left(
		x_1-\frac{g_{\textsc{mid}}(x_1)}
		{g'_{\textsc{mid}}(x_1)}
		\right)
		\label{eq:temp_local_1}\\
		&=
		\frac{x_2(1-\mu)}{1-2x_2}
		-
		\frac{(1-\mu)g_{\textsc{mid}}(x_2)}
		{g'_{\textsc{mid}}(x_2)(1-2x_2)^2}.
		\label{eq:temp_local_2}
	\end{align}
	
	The first constraint \eqref{eq:lambda_equality} comes from the $\lambda$ constraint obtained from the stationary condition of $x_1$ and $x_2$ respectively. The second constraint follows from \eqref{eq:setting_objective_zero} and substituting the value of $\lambda$.
	
	Equations \eqref{eq:lambda_equality}, \eqref{eq:temp_local_1}, and \eqref{eq:temp_local_2} can be simplified to obtain
	
	\begin{align}
		\frac{\mu}{1-\mu} &=
		\frac{x_2}{1-2x_2}
		+
		\frac{\alpha-g_{\textsc{mid}}(x_2)}
		{g'_{\textsc{mid}}(x_2)(1-2x_2)^2},
		\label{eq:temp_local_3}\\
		g'_{\textsc{mid}}(x_1) &=
		g'_{\textsc{mid}}(x_2)(1-2x_2)
		+
		2\left(\alpha-g_{\textsc{mid}}(x_2)\right),
		\label{eq:temp_local_4}\\
		g_{\textsc{mid}}(x_1)-\alpha &=
		g'_{\textsc{mid}}(x_1)\left(x_1-\C^{(g,\alpha)}(x_2)\right),
		\label{eq:temp_local_5}\\
		x_2
		&<
		\C^{(g,\alpha)}(x_2),
		\qquad
		x_1>\C^{(g,\alpha)}(x_2),
		\nonumber\\
		0
		&<
		\C^{(g,\alpha)}(x_2)
		<
		1/2.
		\nonumber
	\end{align}
	
	where $\C^{(g,\alpha)}(x_2) = \frac{\alpha - g_{\textsc{mid}}(x_2) + x_2(1-2x_2) g'_{\textsc{mid}}(x_2)}{2(\alpha-g_{\textsc{mid}}(x_2)) +  (1-2x_2) g'_{\textsc{mid}}(x_2)}$.
	
	The first equation \eqref{eq:temp_local_3} follows via elimination of \(x_1\) from \eqref{eq:temp_local_2} via substitution of \(g'_{\textsc{mid}}(x_1)\) from \eqref{eq:temp_local_1}. The second equation \eqref{eq:temp_local_4} follows from \eqref{eq:lambda_equality} on substitution of \(\frac{\mu}{1-\mu}\) computed from \eqref{eq:temp_local_3}. The third equation \eqref{eq:temp_local_5} follows from the first equality in \eqref{eq:temp_local_2} followed by the substitution of \(\frac{\mu}{1+\mu} = \C^{(g,\alpha)}(x_2)\) computed from \eqref{eq:temp_local_3}. 
	
	The inequalities follow from the fact that
	\[
	\tau_{\mu}=\frac{\mu}{1+\mu}=\C^{(g,\alpha)}(x_2)
	\]
	and the nontrivial two-type threshold case has \(0<\mu<1\). Thus \(0<\C^{(g,\alpha)}(x_2)<1/2\), and \(x_1>\tau_\mu>x_2\) gives \(x_2<\C^{(g,\alpha)}(x_2)\) and \(x_1>\C^{(g,\alpha)}(x_2)\).

	The reductions above show that every way for \(\Opt(\Efrac{\mu,\alpha})\) to become negative is accounted for by either the one-location obstruction or by a genuine two-location threshold satisfying the stationarity system. Conversely, any valid solution of \eqref{eq:temp_local_3}, \eqref{eq:temp_local_4}, and \eqref{eq:temp_local_5}, or of the boundary equations \eqref{eq:temp_local_single_point_1} and \eqref{eq:temp_local_single_point_2}, gives a feasible solution of \eqref{eq:temp_reduced_r} with objective value zero at the corresponding \(\mu\); increasing \(\mu\) makes the same feasible solution have negative objective. Therefore, the first value of \(\mu\) at which negative feasible solutions can appear is the minimum over the one-location threshold and the genuine two-location thresholds. This is precisely
	\[
	\mu^\ast(\alpha)
	=
	\min\left\{
	R(g,\alpha),
	\left(\frac{\gm}{\alpha}-1\right)^{-1}
	\right\}.
	\]
	If the two-location threshold is not attained, the same conclusion follows by an approximating sequence, which is why \(R(g,\alpha)\) is defined using an infimum.
	
\end{proof}

\subsubsection{Simpler Lower Bound on \(\mu^\ast(\alpha)\)}
\label{sec:simple_lower_bound}

\begin{claim}[A global lower bound ]
	\label{claim:two_point_global_lower_bound}
	We have
	\[
	R(g,\alpha)
	\ge
	\frac{\overline{L}(g,\alpha)}{1-\overline{L}(g,\alpha)}.
	\]
	Thus,
	\[
	\mu^\ast(\alpha)
	\ge
	\min\left\{
	\frac{\overline{L}(g,\alpha)}{1-\overline{L}(g,\alpha)},
	\left(\frac{\gm}{\alpha}-1\right)^{-1}
	\right\}.
	\]
\end{claim}

\begin{proof}
	Observe, 
	\[
	R(g,\alpha)
	=
	\inf_{(x_1,x_2)\in \mathcal{P}^{(g,\alpha)}}
	\frac{\C^{(g,\alpha)}(x_2)}{1-\C^{(g,\alpha)}(x_2)}.
	\]
	If \(\mathcal X(g,\alpha)=\emptyset\), then the containment argument below shows that \(\mathcal P^{(g,\alpha)}=\emptyset\). Hence \(R(g,\alpha)=+\infty\), and the claimed lower bound is immediate under the convention \(\overline L(g,\alpha)=1/2\).
	Hence assume \(\mathcal X(g,\alpha)\neq\emptyset\). For every \(x\in\mathcal X(g,\alpha)\), write \(A=\alpha-g_{\textsc{mid}}(x)>0\) and \(B=(1-2x)g'_{\textsc{mid}}(x)>0\). Then
	\[
	\C^{(g,\alpha)}(x)-x
	=
	\frac{(1-2x)A}{2A+B}
	>
	0,
	\qquad
	\C^{(g,\alpha)}(x)-\frac12
	=
	-\frac{(1-2x)^2g'_{\textsc{mid}}(x)}
	{2(2A+B)}
	<
	0.
	\]
	Thus \(0<\C^{(g,\alpha)}(x)<1/2\) for every \(x\in\mathcal X(g,\alpha)\), so \(0\leq\overline{L}(g,\alpha)<1/2\).
	If \(\mathcal P^{(g,\alpha)}=\emptyset\), then \(R(g,\alpha)=+\infty\) and the claimed lower bound is immediate. Hence assume \(\mathcal P^{(g,\alpha)}\neq\emptyset\).
	
	We first justify the containment. Let \((x_1,x_2)\in\mathcal P^{(g,\alpha)}\). The constraints give \(x_2\in(0,1/2)\) and \(x_2<\C^{(g,\alpha)}(x_2)<1/2\). To see that \(g_{\textsc{mid}}(x_2)<\alpha\), write \(A=\alpha-g_{\textsc{mid}}(x_2)\) and \(B=(1-2x_2)g'_{\textsc{mid}}(x_2)\). Since \(g_{\textsc{mid}}\) is increasing and \(x_2<1/2\), we have \(B>0\). Now
	\[
	\C^{(g,\alpha)}(x_2)-\frac12
	=
	-\frac{(1-2x_2)^2g'_{\textsc{mid}}(x_2)}
	{2\left(2A+B\right)}.
	\]
	The constraint \(\C^{(g,\alpha)}(x_2)<1/2\) therefore implies \(2A+B>0\). Since also
	\[
	\C^{(g,\alpha)}(x_2)-x_2
	=
	\frac{(1-2x_2)A}{2A+B},
	\]
	the constraint \(x_2<\C^{(g,\alpha)}(x_2)\) gives \(A>0\). Hence \(g_{\textsc{mid}}(x_2)<\alpha\), so \(x_2\in\mathcal X(g,\alpha)\).
	
	Thus, by the definition of \(\overline{L}(g,\alpha)\),
	\[
	\C^{(g,\alpha)}(x_2)\geq \overline{L}(g,\alpha)
	\]
	for every feasible pair. Since the function
	\[
	r\mapsto \frac{r}{1-r}
	\]
	is increasing on \([0,1)\), we obtain
	\[
	R(g,\alpha)
	\ge
	\frac{\overline{L}(g,\alpha)}{1-\overline{L}(g,\alpha)}.
	\]
	Using
	\[
	\mu^\ast(\alpha)
	=
	\min\left\{
	R(g,\alpha),
	\left(\frac{\gm}{\alpha}-1\right)^{-1}
	\right\},
	\]
	we obtain
	\[
	\mu^\ast(\alpha)
	\ge
	\min\left\{
	\frac{\overline{L}(g,\alpha)}{1-\overline{L}(g,\alpha)},
	\left(\frac{\gm}{\alpha}-1\right)^{-1}
	\right\}.
	\]
\end{proof}

\subsubsection{A sufficient condition for $m \gm - 1$ to dominate in $\mu^\ast(\alpha)$}\label{sec:suffienicient_dominant_condition}

\begin{claim}[A sufficient domination condition]
	\label{claim:single_point_dominates}
	Assume \(\mathcal{P}^{(g,\alpha)}\neq \emptyset\) and \(0<\alpha<1\). If
	\[
	\alpha
	\le
	\gm\,\overline{L}(g,\alpha),
	\]
	then
	\[
	\left(\frac{\gm}{\alpha}-1\right)^{-1}
	\le
	R(g,\alpha).
	\]
	Consequently,
	\[
	\mu^\ast(\alpha)
	=
	\left(\frac{\gm}{\alpha}-1\right)^{-1}.
	\]
\end{claim}

\begin{proof}
	By the containment argument in Claim~\ref{claim:two_point_global_lower_bound}, \(\mathcal X(g,\alpha)\neq\emptyset\) and \(0\leq\overline{L}(g,\alpha)<1/2\).
	Since \(x^{\ast}_{\textsc{mid}}\) maximizes \(g_{\textsc{mid}}(x)/x\) over \(x\in(0,1)\), and since \(g_{\textsc{mid}}(1/2)=g(1)=1/2\), we have
	\[
	\gm
	=
	\frac{g_{\textsc{mid}}(x^{\ast}_{\textsc{mid}})}{x^{\ast}_{\textsc{mid}}}
	\geq
	\frac{g_{\textsc{mid}}(1/2)}{1/2}
	=
	1.
	\]
	Thus \(0<\alpha<1\leq\gm\). Hence
	\[
	\left(\frac{\gm}{\alpha}-1\right)^{-1}
	=
	\frac{\alpha}{\gm-\alpha}.
	\]
	The assumed condition is
	\[
	\alpha
	\le
	\gm\,\overline{L}(g,\alpha).
	\]
	Equivalently,
	\[
	\alpha\left(1-\overline{L}(g,\alpha)\right)
	\le
	\overline{L}(g,\alpha)(\gm-\alpha).
	\]
	Since \(\gm-\alpha>0\) and \(1-\overline{L}(g,\alpha)>0\), this is equivalent to
	\[
	\frac{\alpha}{\gm-\alpha}
	\le
	\frac{\overline{L}(g,\alpha)}{1-\overline{L}(g,\alpha)}.
	\]
	Thus
	\[
	\left(\frac{\gm}{\alpha}-1\right)^{-1}
	\le
	\frac{\overline{L}(g,\alpha)}{1-\overline{L}(g,\alpha)}.
	\]
	By Claim~\ref{claim:two_point_global_lower_bound},
	\[
	R(g,\alpha)
	\ge
	\frac{\overline{L}(g,\alpha)}{1-\overline{L}(g,\alpha)}.
	\]
	Therefore
	\[
	\left(\frac{\gm}{\alpha}-1\right)^{-1}
	\le
	R(g,\alpha).
	\]
	Finally, since
	\[
	\mu^\ast(\alpha)
	=
	\min\left\{
	R(g,\alpha),
	\left(\frac{\gm}{\alpha}-1\right)^{-1}
	\right\},
	\]
	we conclude
	\[
	\mu^\ast(\alpha)
	=
	\left(\frac{\gm}{\alpha}-1\right)^{-1}.
	\]
\end{proof}

\subsubsection{Asymptotic Growth of \(\left(R^{(g,1/m)}\right)^{-1}\)}
\label{sec:asymptotic_r_scale_proof}

\begin{claim}[Restatement of Claim~\ref{claim:asymptotic_r_scale}]
	\claimasymptoticrscale
\end{claim}

\begin{proof}
	Let \(\rho(\alpha):=g_{\textsc{mid}}^{-1}(\alpha)\), which is well-defined for all sufficiently small \(\alpha>0\) by the local strict monotonicity and continuity of \(g_{\textsc{mid}}\) near \(0\). We first show that, for all sufficiently small \(\alpha\),
	\[
	\overline{L}^{(g,\alpha)}
	\geq
	c_0\,\rho(\alpha)
	\]
	for a constant \(c_0>0\) independent of \(\alpha\).
	
	Fix such an \(\alpha\), and write \(\rho=\rho(\alpha)\). Since \(\rho(\alpha)\downarrow0\), we may assume \(\rho<1/2\). For any \(x\in(0,\rho)\), we have \(A:=\alpha-g_{\textsc{mid}}(x)>0\) and \(1-2x>0\). The identities
	\[
	\C^{(g,\alpha)}(x)-x
	=
	\frac{(1-2x)A}
	{2A+(1-2x)g'_{\textsc{mid}}(x)}
	\]
	and
	\[
	\C^{(g,\alpha)}(x)-\frac12
	=
	-\frac{(1-2x)^2g'_{\textsc{mid}}(x)}
	{2\left(2A+(1-2x)g'_{\textsc{mid}}(x)\right)},
	\]
	then give \(x<\C^{(g,\alpha)}(x)<1/2\), because the denominator is positive. Hence \(x\in\mathcal X^{(g,\alpha)}\), so \(\mathcal X^{(g,\alpha)}\neq\emptyset\) for all sufficiently small \(\alpha\).
	
	Now take any \(x\in\mathcal X^{(g,\alpha)}\). Since \(g_{\textsc{mid}}(x)\leq\alpha\), monotonicity gives \(0<x\leq\rho\). The equality case cannot occur: if \(x=\rho\), then \(\alpha-g_{\textsc{mid}}(x)=0\), so the identity for \(\C^{(g,\alpha)}(x)-x\) gives \(\C^{(g,\alpha)}(x)=x\), contradicting \(x<\C^{(g,\alpha)}(x)\). Thus \(0<x<\rho\).
	
	Regular variation and convexity imply a local derivative bound: there exists \(K>0\) such that
	\[
	g'_{\textsc{mid}}(x)
	\leq
	K\frac{g_{\textsc{mid}}(x)}{x}
	\]
	for all sufficiently small \(x>0\). Indeed, by convexity,
	\[
	g'_{\textsc{mid}}(x)
	\leq
	\frac{g_{\textsc{mid}}(2x)-g_{\textsc{mid}}(x)}{x},
	\]
	and regular variation bounds \(g_{\textsc{mid}}(2x)/g_{\textsc{mid}}(x)\) for small \(x\). Since convexity and \(g_{\textsc{mid}}(0)=0\) imply that \(g_{\textsc{mid}}(x)/x\) is nondecreasing near \(0\), \(0<x<\rho\) gives
	\[
	\frac{g_{\textsc{mid}}(x)}{x}
	\leq
	\frac{g_{\textsc{mid}}(\rho)}{\rho}
	=
	\frac{\alpha}{\rho}.
	\]
	Thus
	\[
	g'_{\textsc{mid}}(x)
	\leq
	K\frac{\alpha}{\rho}.
	\]
	
	Let \(A=\alpha-g_{\textsc{mid}}(x)\). Since \(x\in\mathcal X^{(g,\alpha)}\), the preceding argument gives \(A>0\). The denominator in \(\C^{(g,\alpha)}(x)\) is bounded above by
	\[
	2A+(1-2x)g'_{\textsc{mid}}(x)
	\leq
	2\alpha+K\frac{\alpha}{\rho}
	\leq
	(K+2)\frac{\alpha}{\rho},
	\]
	where the last inequality uses \(\rho<1\), which holds for all sufficiently small \(\alpha\). For the numerator, convexity and \(g_{\textsc{mid}}(0)=0\) imply \(xg'_{\textsc{mid}}(x)\geq g_{\textsc{mid}}(x)\). Therefore
	\[
	\begin{aligned}
		\alpha-g_{\textsc{mid}}(x)
		+x(1-2x)g'_{\textsc{mid}}(x)
		&\geq \alpha-g_{\textsc{mid}}(x)+(1-2x)g_{\textsc{mid}}(x)\\
		&= \alpha-2xg_{\textsc{mid}}(x)\\
		&\geq \alpha-2\rho\alpha
		\geq \frac{\alpha}{2},
	\end{aligned}
	\]
	where the last inequality is just \(\rho\leq1/4\), which again holds for all sufficiently small \(\alpha\) because \(\rho(\alpha)\downarrow0\). Combining the two estimates gives
	\[
	\C^{(g,\alpha)}(x)
	\geq
	\frac{\rho}{2(K+2)}.
	\]
	Taking the infimum over \(x\in\mathcal X^{(g,\alpha)}\) proves
	\[
	\overline{L}^{(g,\alpha)}
	\geq
	\frac{1}{2(K+2)}g_{\textsc{mid}}^{-1}(\alpha).
	\]
	
	By Claim~\ref{claim:two_point_global_lower_bound},
	\[
	R^{(g,\alpha)}
	\geq
	\frac{\overline{L}^{(g,\alpha)}}
	{1-\overline{L}^{(g,\alpha)}}.
	\]
	Hence
	\[
	\left(R^{(g,\alpha)}\right)
	\geq
	\frac{\overline{L}^{(g,\alpha)}}{1-\overline{L}^{(g,\alpha)}}
	\geq
	{\overline{L}^{(g,\alpha)}}
	=
	\Omega\left(g_{\textsc{mid}}^{-1}(\alpha)\right).
	\]
	Substituting \(\alpha=1/m\) gives the first claim. If \(g_{\textsc{mid}}(x)\sim c x^\beta\), then \(g_{\textsc{mid}}^{-1}(\alpha)\sim(\alpha/c)^{1/\beta}\), and thus \(\left(R^{(g,1/m)}\right)^{-1}=O(m^{1/\beta})\).
\end{proof}

\subsection{Proof of Theorem \ref{theorem:thm_plurality_distortion_m}}{\label{sec:thm_plurality_distortion_m_proof}}

\noindent\textbf{Theorem~\ref{theorem:thm_plurality_distortion_m} (Restated):} \thmplu

\begin{proof}
	Recall that candidate $B \in \mathcal{A}$ minimizes the social cost.  The other candidates are denoted by $\{A_j\}_{j \in [m-1]}$.

	\begin{align}{\label{eqn:plu_distortion_m}}
		\textsc{dist}^{(g)}(\textsc{Plurality},n,m)
		&=
		\sup_{d \in \mathcal{M}(\mathcal{N} \cup \mathcal{A})}
		\Biggl(
			\sum_{j=1}^{m-1}\mathbb{P}[A_j \text{ wins}]
			\frac{\SC(A_j,d)}{\SC(B,d)}
			+ \mathbb{P}[B \text{ wins}]
		\Biggr).
	\end{align}

	For every \( j \in [m-1] \), we now bound the probability of \( A_j \) being the winner. This event requires that at least \( \frac{n}{m} \) voters select \( A_j \) as their top preference, which in turn implies that these voters rank \( A_j \) over \( B \).  
	To formalize this, we define a set of Bernoulli random variables \( \{Y_{i,j}\}_{i=1}^{n} \), where each \( Y_{i,j} \) represents the event that voter \( i \) ranks candidate \( A_j \) above \( B \).  
	Recalling Equation~\ref{eq:pairwise_probability}, we have:  
	\[
	Y_{i,j} \sim \text{Ber}\left(g\left(\frac{d(i,B)}{d(i,A_j)}\right)\right).
	\]  
	
	Therefore, we obtain the following bound on the probability of \( A_j \) being the winner:  
	\begin{equation}{\label{eq:boudning_prob_winner}}
		\mathbb{P}[A_j \text{ wins}] \leq \mathbb{P}\left(\sum_{i \in \N} Y_{i,j} \geq \frac{n}{m}\right).
	\end{equation}
	
	Define \( \alpha_j n \) as the expected number of voters ranking \( A_j \) above \( B \), where  
	\begin{equation}{\label{eqn:alpha_defn}}
		\alpha_j :=
		\frac{1}{n} \sum_{i \in \N} \mathbb{E}[Y_{i,j}]
		=
		\frac{1}{n}\sum_{i \in \N}
		g\left(\frac{d(i,B)}{d(i,A_j)}\right),
		\quad \forall j \in [m-1].
	\end{equation}

		Now, we apply Lemma~\ref{lemma:threshold_chernoff_bound} to the sum of Bernoulli random variables for every \(j\in[m-1]\) satisfying \(\alpha_j \leq \frac{1}{m}-\frac{n^{-1/2+\epsilon}}{m}\). In the notation of the lemma, we take \(q=m\), \(r=1\), and \(\rho=\alpha_j\). Since \(n\ge m^2\) and \(m\ge2\), the lemma's remaining conditions \(n/m>1\) and \(n^{1/2+\epsilon}/m\ge1\) both hold.
	\begin{align}{\label{eq:chernoff_bound_winning_prob}}
		\text{If }\alpha_j \leq \frac{1}{m} - \frac{n^{(-1/2+\epsilon)}}{m},\text{ then}\nonumber\\
		\mathbbm{P}[A_j \text{ wins}]
		&\leq \mathbb{P}\left(\sum_{i \in \N} Y_{i,j} \geq \frac{n}{m}\right) \nonumber\\
		&\leq {m\alpha_j}\exp\left(\frac{- n^{(\frac{1}{2}+\epsilon)} +2m}{(2n^{(\frac{1}{2}-\epsilon)}-1)m}\right). 
	\end{align}

	%We now upper bound $\textsc{dist}^{(g)}(\textsc{Plurality},n,m)$ by dividing it into two exhaustive cases on $\alpha_j$. %Recall that $\alpha_j$ denotes the expected number of voters ranking $A_j$ over $B$.
	% \begin{itemize} 
		%     \item Case 1: $\alpha_j \leq \frac{n}{m} - \frac{n^{(1/2+\epsilon)}}{m} \text{ for all } j \in [m-1]$.
		
		% From Observation \ref{lemma:bounding_social_cost_ratio} and Equation \eqref{eq:chernoff_bound_winning_prob} we have
		% %
		% \begin{align}{\label{eqn:bounding_plu_distortion_m_case_1}}
			%     \textsc{dist}^{(g)}(\textsc{Plurality},n,m) & = \sup_{d \in \mathcal{M}(\mathcal{N} \cup \mathcal{A})} \left(\sum_{j=1}^{m-1}\mathbb{P}[A_j \text{ wins}] \frac{\SC(A_j,d)}{\SC(B,d)} + \mathbb{P}[B \text{ wins}]\right)\nonumber\\
			%     \leq & \sum_{j=1}^{m-1} \left(\max\left(\frac{1}{\alpha_j}\gm - 1, \frac{1}{\alpha_j} \go  + 1\right)\frac{m\alpha_j}{n} \exp\left(\frac{- n^{(\frac{1}{2}+\epsilon)} +2m}{(2n^{(\frac{1}{2}-\epsilon)}-1)m}\right)\right)  + 1\nonumber\\
			%     \overset{(e)}{\leq} & m(m-1) \left(\gm+ \go\right)\exp\left(\frac{- n^{(\frac{1}{2}+\epsilon)} +2m}{(2n^{(\frac{1}{2}-\epsilon)}-1)m}\right) +1.
			% \end{align}

		% $(e)$ follows on applying the bound on $\alpha_j$.
		
		% Case 2: $\exists j \in [m-1] \text{ s.t. }\alpha_j \geq \frac{n}{m} - \frac{n^{(1/2+\epsilon)}}{m}$.
		
		Let $S := \{j \in [m-1]: \alpha_j < \frac{1}{m} - \frac{n^{(-1/2+\epsilon)}}{m}\}$ i.e. $S$ denotes the indices of candidates for which $\alpha_j$ is less than $\frac{1}{m} - \frac{n^{(-1/2+\epsilon)}}{m}$.
		Now using Observation \ref{lemma:bounding_social_cost_ratio}, lemma \ref{lemma:optimizer_mu_alpha_combined} and the fact that $\alpha_j \geq \frac{1}{m} - \frac{n^{(-1/2+\epsilon)}}{m} \text{ for every $j \in [m-1]\setminus S$}$, we have the following bound:
	\begin{align}{\label{eq:bounding_Ratio_alpha_notS}}
		\frac{\SC(A_j,d)}{\SC(B,d)}
		\leq \max\left(
			\frac{m\gm}{1-n^{-(1/2-\epsilon)}} - 1,
			R\left(g,\frac{1- n^{-(\frac{1}{2}-\epsilon)}}{m}\right)^{-1}
		\right).
	\end{align}
		We now have
		\begin{align}{\label{eqn:bounding_plu_distortion_m_case_1}}
			\textsc{dist}^{(g)}(\textsc{Plurality},n,m)  
			& = \sup_{d \in \mathcal{M}(\mathcal{N} \cup \mathcal{A})} \Biggl( \sum_{j \in [m-1]\setminus S}\left(\mathbb{P}[A_j \text{ wins}] \frac{\SC(A_j,d)}{\SC(B,d)}\right)
			+ \mathbb{P}[B \text{ wins}] \nonumber\\
			& \qquad\qquad + \sum_{j \in S}\left(\mathbb{P}[A_j \text{ wins}] \frac{\SC(A_j,d)}{\SC(B,d)} \right) \Biggr)  \nonumber\\
			&  \overset{(a)}{\leq} \max\left(\max_{j \in [m-1]\setminus S} \frac{\SC(A_j,d)}{\SC(B,d)},1\right) \nonumber\\
			& \qquad + \sum_{j \in S} \Biggl(
			m\alpha_j\,\Delta_g(\alpha_j)
			\exp\left(\frac{- n^{(\frac{1}{2}+\epsilon)} +2m}{(2n^{(\frac{1}{2}-\epsilon)}-1)m}\right)\Biggr) \nonumber  \\
			%  & \hspace{-4 em} \overset{(b)}{\leq}\max\left(\frac{n}{\alpha_{j_{min}}x_{\textsc{mid}}^{\ast}}g\left(\frac{x^{\ast}_{\textsc{mid}}}{1-x^{\ast}_{\textsc{mid}}}\right) - 1, \frac{n}{\alpha_{j_{min}}x_{\textsc{out}}^{\ast}}g\left(\frac{x^{\ast}_{\textsc{out}}}{1+x^{\ast}_{\textsc{out}}}\right) + 1\right)\\
			% \hspace{-5 em} & + m(m-2) \left(\frac{g\left(\frac{x^{\ast}_{\textsc{mid}}}{1+x^{\ast}_{\textsc{mid}}} \right)}{x^{\ast}_{\textsc{mid}}}+ \frac{g\left(\frac{x^{\ast}_{\textsc{out}}}{1+x^{\ast}_{\textsc{out}}} \right)}{x^{\ast}_{\textsc{out}}}\right)\exp\left(\frac{- n^{(\frac{1}{2}+\epsilon)} +2m}{(2n^{(\frac{1}{2}-\epsilon)}-1)m}\right)\\
			&  \overset{(b)}{\leq}  m(m-1)\Gamma_g\left(\frac{1-n^{-(1/2-\epsilon)}}{m}\right)\exp\left(\frac{- n^{(\frac{1}{2}+\epsilon)} +2m}{(2n^{(\frac{1}{2}-\epsilon)}-1)m}\right) \nonumber\\
			& \qquad +  \max\left(
			\frac{m\gm}{(1-n^{-(1/2-\epsilon)})} - 1,
			R\left(g,\frac{1- n^{-(\frac{1}{2}-\epsilon)}}{m}\right)^{-1}
			\right). \nonumber
		\end{align}
		
		$(a)$ follows from the following observations. 
		
		\begin{itemize} %[leftmargin= 0.3cm]
			\item Apply Observation \ref{lemma:bounding_social_cost_ratio} to bound $\frac{\SC(A_j,d)}{\SC(B,d)}$. Since $\alpha_j \leq \frac{1}{m} - \frac{n^{(-1/2+\epsilon)}}{m}$  $ \forall j \in S$, apply Equation \eqref{eq:chernoff_bound_winning_prob} to bound $\mathbbm{P}[A_j \text{ wins}]$.
			\item $\sum\limits_{j \in [m-1]\setminus S}\left(\mathbb{P}[A_j \text{ wins}] \frac{\SC(A_j,d)}{\SC(B,d)}\right) + \mathbb{P}[B \text{ wins}] \leq \max\limits\left(\max\limits_{j \in [m-1]\setminus S} \frac{\SC(A_j,d)}{\SC(B,d)},1\right).$
		\end{itemize}
		% and the fact that $\alpha_j \leq \frac{n}{m} - \frac{n^{(1/2+\epsilon)}}{m}$ for every $j \in [m-1]\setminus S$ and Equation \eqref{eq:chernoff_bound_winning_prob}. Also, we use the fact that 
		
		$(b)$ follows from the fact that $|S| \leq m-1$, the definition of \(\Gamma_g\), and applying Equation \eqref{eq:bounding_Ratio_alpha_notS}. %$(c)$ follows from the fact that $\alpha_{j_{min}} \geq \frac{n}{m} - \frac{n^{(1/2+\epsilon)}}{m}$.
		%\end{itemize}
		%Now combining the two cases we get the desired result.
	\end{proof}

\section{Plurality Lower Bounds (Theorems~\ref{theorem:thm_plurality_distortion_m_lower_bound} and~\ref{theorem:thm_plurality_distortion_m_PL_lower_bound})}
\label{sec:thm_plurality_distortion_comb_m_lower_bound_proof}

This appendix section proves the lower bounds for \textsc{Plurality}. We give two constructions. The first applies to the adversarial probabilistic-voting model with pairwise marginals induced by \(g\), and proves Theorem~\ref{theorem:thm_plurality_distortion_m_lower_bound}. The second specializes to the PL model and proves Theorem~\ref{theorem:thm_plurality_distortion_m_PL_lower_bound}. We also prove Observation~\ref{obs:generic_lb}, which gives a two-candidate lower bound for any well-behaved deterministic voting rule under probabilistic voting.

\subsection{Proof of Theorem~\ref{theorem:thm_plurality_distortion_m_lower_bound}}
\label{sec:thm_plurality_distortion_m_lower_bound_proof}
\noindent\textbf{Theorem~\ref{theorem:thm_plurality_distortion_m_lower_bound} (Restated):} \thmplulb

\begin{proof}
We give a construction in a Euclidean metric space in \(\mathbb R^3\). Candidate \(W\) is placed at \((1,0,0)\). The other \(m-1\) candidates \(B_1,B_2,\ldots,B_{m-1}\) are placed equidistantly on the circle of radius \(\epsilon\) in the \(y\)-\(z\) plane centered at the origin. For every voter on the \(x\)-axis, the distance to every \(B_j\) is the same. Hence, after conditioning on whether \(W\) is ranked above the \(B_j\)'s, we may rank the \(B_j\)'s uniformly at random among themselves; this satisfies the pairwise probability rule among them because their distances from such a voter are equal.

Fix \(\delta>0\), and choose a feasible two-type solution \((p,b_1,w_1,b_2,w_2)\) for \(\Efrac{1/m}\) whose objective value is at most \(\Opt(\Efrac{1/m})+\delta=\mu^\ast(1/m)+\delta\).
Thus
\[
p\,g\left(\frac{b_1}{w_1}\right)
+(1-p)g\left(\frac{b_2}{w_2}\right)
\ge
\frac1m
\]
and
\[
\frac{p b_1+(1-p)b_2}{p w_1+(1-p)w_2}
\le
\mu^\ast(1/m)+\delta.
\]
By the reduction in Appendix~\ref{sec:optimizer_e_alpha_soln}, the active types may be chosen on the line through \(W\) and the center of the \(B_j\)'s. Concretely, for each \(k\in\{1,2\}\), either \(b_k+w_k=1\), in which case we place the type-\(k\) voters at \((b_k,0,0)\), or \(w_k-b_k=1\), in which case we place them at \((-b_k,0,0)\). In both cases the distance to \(W\) is \(w_k\), and the distance to the origin is \(b_k\).

Let \(\beta_\zeta:=\zeta/(m-1)\), where \(0<\zeta<m-1\). A fraction \(\beta_\zeta\) of voters is placed at \(W\). The remaining voters are split between the two active types with fractions \((1-\beta_\zeta)p\) and \((1-\beta_\zeta)(1-p)\), up to the usual integer rounding.

A voter at \(W\) ranks \(W\) first with probability one. A voter of active type \(k\) ranks \(W\) first with probability
\[
q_{k,\epsilon}
:=
g\left(
\frac{\sqrt{b_k^2+\epsilon^2}}{w_k}
\right),
\]
with the convention that \(q_{k,\epsilon}=1\) when \(w_k=0\). The voter ranks \(W\) last with probability \(1-q_{k,\epsilon}\); conditional on either event, the candidates \(B_1,\ldots,B_{m-1}\) are uniformly randomly permuted in the remaining positions.

For each active type, the distance to every \(B_j\) is \(\sqrt{b_k^2+\epsilon^2}\), and the distance to \(W\) is \(w_k\). Hence this ranking distribution satisfies the pairwise probability criterion. By continuity of \(g\),
\[
\lim_{\epsilon\to0}
\left(
p q_{1,\epsilon}
+(1-p)q_{2,\epsilon}
\right)
\ge
\frac1m.
\]
Therefore, for every fixed \(\zeta>0\), choosing \(\epsilon>0\) sufficiently small gives a limiting expected plurality-score fraction for \(W\) strictly larger than \(1/m\). In fact, as \(\epsilon\to0\) this limiting fraction is at least \(\beta_\zeta+(1-\beta_\zeta)/m=(1+\zeta)/m\).
The law of large numbers then implies that \(W\) wins with probability tending to one as \(n\to\infty\).

For every \(j\in[m-1]\), after first taking \(n\to\infty\), the social-cost ratio in this construction is
\[
\frac{\SC(W,d)}{\SC(B_j,d)}
=
\frac{
(1-\beta_\zeta)\left(pw_1+(1-p)w_2\right)
}{
(1-\beta_\zeta)
\left(
p\sqrt{b_1^2+\epsilon^2}
+(1-p)\sqrt{b_2^2+\epsilon^2}
\right)
+\beta_\zeta\sqrt{1+\epsilon^2}
}.
\]
Letting \(\epsilon\to0\) and then \(\zeta\to0\), this ratio tends to
\[
\frac{p w_1+(1-p)w_2}{p b_1+(1-p)b_2}
\ge
\frac{1}{\mu^\ast(1/m)+\delta}.
\]
Finally, letting \(\delta\to0\) gives
\[
\lim_{n\to\infty}
\textsc{dist}^{(g)}(\textsc{Plurality},n,m)
\ge
\left(\mu^\ast(1/m)\right)^{-1}.
\]
By Lemma~\ref{lemma:optimizer_mu_alpha_combined},
\[
\left(\mu^\ast(1/m)\right)^{-1}
=
\max\left\{
m\gm-1,
\left(R(g,1/m)\right)^{-1}
\right\}.
\]
This proves the theorem.
\end{proof}

\subsection{Proof of Observation \ref{obs:generic_lb}}{\label{sec:proof_obs_generic_lb}}

We first define well-behaved deterministic voting rules for the case of two candidates.
\begin{definition} [Well-Behaved Deterministic Voting Rule for Two Candidates] \label{def:wellbehaved}
    Let \(p_{W \succ B}\) represent the fraction of voters who rank candidate \(W\) over \(B\). A well-behaved deterministic voting rule satisfies the following criteria:

\begin{itemize}\label{list:voting_rule_properties}
    \item \label{item:deterministic} The voting rule is a deterministic function \(\mathbb F\) of \(p_{W \succ B}\), the fraction of voters ranking \(W\) over \(B\).
    \item \label{item:discontinuities} The function \(\mathbb F: [0,1] \rightarrow \A\) has at most finitely many discontinuities.
\end{itemize}
\end{definition}

Virtually all well-studied deterministic voting rules are well-behaved for two candidates.

\noindent\textbf{Observation~\ref{obs:generic_lb} (Restated):} \obsgenericlb

\begin{proof}
Let \(D\subset[0,1]\) be the finite set of discontinuities of \(\mathbb F\). Fix \(\eta>0\), and choose a continuity point \(p_\eta\in(1/2,1/2+\eta)\setminus D\). Since \(\mathbb F\) takes values in the two-point set \(\{B,W\}\), continuity at \(p_\eta\) implies that \(\mathbb F\) is locally constant around \(p_\eta\). Hence, if \(p_{W\succ B}^{(n)}\to p_\eta\) in probability, then \(\mathbb F(p_{W\succ B}^{(n)})\to\mathbb F(p_\eta)\) in probability. The winner indicator is bounded, so the corresponding expectations also converge.

We now build two perturbed metric instances with the same limiting value \(p_\eta\). Let \(\beta_\eta:=2p_\eta-1\), so \(\beta_\eta\downarrow0\) as \(\eta\downarrow0\). Fix \(\delta>0\), and choose an approximately optimal two-type solution \((p,b_1,w_1,b_2,w_2)\) for \(\Efrac{1/2}\), with the probability constraint tight, whose objective value is at most \(\mu^\ast(1/2)+\delta\). Thus
\[
p\,g\left(\frac{b_1}{w_1}\right)
+(1-p)g\left(\frac{b_2}{w_2}\right)
=
\frac12,
\qquad
\frac{p b_1+(1-p)b_2}{p w_1+(1-p)w_2}
\le
\mu^\ast(1/2)+\delta.
\]
Such near-optimal tight solutions are obtained from the threshold characterization in Appendix~\ref{sec:optimizer_e_alpha_soln}; if the optimum is not attained, take a sequence approaching it.

For the first instance, place \(B\) at \(0\) and \(W\) at \(1\). Realize each active type on the line so that type \(k\) voters have distance \(b_k\) from \(B\) and \(w_k\) from \(W\). A fraction \(\beta_\eta\) of voters is placed at \(W\), and the remaining voters are split between the two active types in proportions \(p\) and \(1-p\), up to integer-rounding error. The limiting expected fraction ranking \(W\) above \(B\) is \(\beta_\eta+(1-\beta_\eta)/2=p_\eta\), so \(p_{W\succ B}^{(n)}\to p_\eta\) in probability. If \(\mathbb F(p_\eta)=W\), then this instance makes the rule choose \(W\) with probability tending to one, while
\[
\frac{\SC(W,d)}{\SC(B,d)}
\to
\frac{p w_1+(1-p)w_2}
{p b_1+(1-p)b_2}
\ge
\frac{1}{\mu^\ast(1/2)+\delta}
\]
after sending \(\eta\downarrow0\).

For the second instance, interchange the roles of the candidates in the active-type placement: type \(k\) voters have distance \(b_k\) from \(W\) and \(w_k\) from \(B\). Again place a fraction \(\beta_\eta\) of voters at \(W\), and split the remaining voters between the two active types. The limiting fraction ranking \(W\) above \(B\) is still \(p_\eta\). If \(\mathbb F(p_\eta)=B\), then the rule chooses \(B\) with probability tending to one, and the distortion tends, as \(\eta\downarrow0\), to at least the same reciprocal \(1/(\mu^\ast(1/2)+\delta)\).

Thus, for every \(\eta\) and \(\delta\), one of the two instances gives limiting distortion at least \(1/(\mu^\ast(1/2)+\delta)\), up to an error that vanishes with \(\eta\). Letting \(\delta\downarrow0\) and using Lemma~\ref{lemma:optimizer_mu_alpha_combined},
\[
\left(\mu^\ast(1/2)\right)^{-1}
=
\max\left\{
2\gm-1,
\left(R^{(g,1/2)}\right)^{-1}
\right\},
\]
which proves the observation.
\end{proof}

\subsection{Proof of Theorem \ref{theorem:thm_plurality_distortion_m_PL_lower_bound}}{\label{sec:thm_plurality_distortion_m_PL_lower_bound_proof}}

\noindent\textbf{Theorem~\ref{theorem:thm_plurality_distortion_m_PL_lower_bound} (Restated):} \thmplulbpl

\begin{proof}
    The proof is by an example in a Euclidean metric space in $\mathbb{R}^3$. One candidate \(C\) is at $(1,0,0)$. The other $m-1$ candidates are ``good'' and are equidistantly placed on a circle of radius $\epsilon$ on the $y$-$z$ plane centered at $(0,0,0)$. We call them  $\mathcal{G}:=\{G_1,G_2,\ldots,G_{m-1}\}$. %\footnote{Recall that the set of all candidates is denoted by $\mathcal{A}$.}) 
%Each candidate in $\mathcal{G}$ has a fraction of voters which overlap with it. 

We present a construction below for every $\epsilon,\zeta>0$ for a chosen value of $\hat{x} = 1- m^{-\frac{1}{\theta}}$.

Let
\[
    q:=
    \frac{(1-\hat{x})^{-\theta}}
    {(m-1)\left(\sqrt{\hat{x}^2+\epsilon^2}\right)^{-\theta}
    +(1-\hat{x})^{-\theta}}
    \quad\text{and}\quad
    a:= \frac{1}{m-1} \left(1 - \frac{1+\zeta}{m q} \right).
\]
Each of the $m-1$ candidates in $\mathcal{G}$ has $\floor{an}$ voters overlapping with it. 
%where $a_{\textsc{mid}}$ is defined below. 
The remaining voters (we call them ``ambivalent'') are placed at $(\hat{x},0,0).$ 
%
%$$a_{\textsc{mid}} := \frac{1}{m-1} \Big(1 - \frac{1+\zeta}{m q_{\textsc{mid}}} \Big) \text{ and } q_{\textsc{mid}} := g\Big(\frac{\sqrt{(x^{\ast}_{\textsc{mid}})^2 + \epsilon^2}}{1-x^{\ast}_{\textsc{mid}}}\Big).$$
%
% and define $q_{\textsc{mid}}$ as follows.
%
% $$q_{\textsc{mid}} := g\left(\frac{\sqrt{(x^{\ast}_{\textsc{mid}})^2 + \epsilon^2}}{1-x^{\ast}_{\textsc{mid}}}\right)$$.
%
%
Clearly, each voter overlapping with a candidate votes for it as the most preferred candidate with probability one. Each ambivalent voter votes according to the PL model. Recall that $\mathbb{P}[A_j \text{~is top-ranked in~} \sigma_i]  = \frac{d(i,A_j)^{-\theta}}{\sum_{A_k \in \A} d(i,A_k)^{-\theta} }$ \cite{azari2012random}.

\begin{itemize}
    \item With probability $q$, vote for candidate $C$ as the top choice.
    \item With probability $1-q$, vote for any of the candidates in $\mathcal{G}$ as the top choice uniformly at random.
\end{itemize}

Since $\lim_{n \to \infty} \floor{an}/n =a$ and that the distance of a candidate in $\mathcal{G}$ from any non-ambivalent voter is at most $2\epsilon$, we have that for every $j \in [m-1]$,
\begin{align}\label{eq:bound_ratio_social_cost_PL}
    \lim_{n \to \infty} \frac{\SC(C,d)}{\SC(G_j,d)} \geq & \frac{(1-\hat{x})(1-(m-1)a)+(m-1)a \sqrt{1+ \epsilon^2}}{(1-(m-1)
    a)\sqrt{(\hat{x})^2+\epsilon^2} +  2(m-2)a\epsilon}\\
    & = \frac{(mq-(1+\zeta))\sqrt{1+ \epsilon^2} + (1+\zeta)(1-\hat{x}) }{(1+\zeta)\sqrt{(\hat{x})^2+\epsilon^2} + 2(m-2)a\epsilon}.
\end{align}
%$$$$

Clearly every candidate in $\mathcal{G}$ minimizes the social cost and now we show that $\lim\limits_{n \to \infty} \mathbb{P}[C \text{ wins}] =1$.

	Let Bernoulli random variables $\{Y_i\}_{i=1}^{n}$ denote the events that voter $i \in \N$ ranks candidate $C$ at the top. Here, $\sum_{i \in \N} \mathbb{P}[Y_i=1] = q(n - (m-1)\floor{a n})$ and thus
	\[
	\lim_{n \to \infty} \frac{\sum_{i=1}^{n} \mathbb{P}[Y_i=1]}{n} = \frac{1+\zeta}{m}.
	\]
	For every good candidate \(G_j\), its limiting expected top-score fraction is
	\[
	    a+\frac{1-q}{m-1}\left(1-(m-1)a\right)
	    =
	    \frac{1}{m}-\frac{\zeta}{m(m-1)}.
	\]
	Thus \(C\)'s limiting expected top-score fraction is separated from that of every good candidate by \(\zeta/(m-1)\). Since the top-score of each candidate is a sum of independent bounded random variables, the law of large numbers and a union bound over the \(m-1\) good candidates imply that \(\lim_{n \to \infty}\mathbb{P}[C \text{ wins}]=1\). Moreover, since the bound in Equation \eqref{eq:bound_ratio_social_cost_PL} holds true for every $\zeta$ and $\epsilon$, we can make them arbitrarily small.
 \begin{align}
    \lim_{n \to \infty} \textsc{dist}^{\theta}_{\textsc{PL}}(\textsc{Plurality},n,m) & \geq \frac{m(1-\hat{x})^{-\theta}}{\hat{x}((m-1)\hat{x}^{-\theta} + (1-\hat{x})^{-\theta})} -1\\
    &= \frac{m m}{(1-m^{-\frac{1}{\theta}})((m-1)(1-m^{-\frac{1}{\theta}})^{-\theta} + m) } = \Omega(m) .
\end{align}

The last equality follows by substituting \(\hat{x}=1-m^{-1/\theta}\).

\end{proof}

\section{Proof of the Copeland Upper Bound (Theorem~\ref{theorem:Copeland_distrotion_m})}

This appendix section proves the upper bound for \textsc{Copeland} and records the related lower-bound construction. We first justify the three-candidate fractional program \(\mathcal T_{\alpha_1,\alpha_2}\) and show that finite-support feasible solutions can be compressed to at most three active voter types (Subsection~\ref{sec:copeland_three_active_voter_types_proof}). We then use this program to prove Theorem~\ref{theorem:Copeland_distrotion_m}. Finally, Subsection~\ref{sec:copeland_lower_bound_proof} gives the corresponding lower-bound construction in terms of \(\mathcal T_{1/2,1/2}\).

\subsection{Three Active Voter Types for the Copeland Program}
\label{sec:copeland_three_active_voter_types_proof}

We first justify the fractional formulation \(\mathcal{T}_{\alpha_1,\alpha_2}\) used in the main text. For a finite electorate, define \(\mathcal{T}^{(n)}_{\alpha_1,\alpha_2}\) as follows. The variables \(b_i,y_i,w_i\) represent the distances from voter \(i\) to \(B,Y,W\), respectively, and \(D_{BY},D_{YW},D_{BW}\) represent the three candidate-candidate distances.
\[
\mathcal{T}^{(n)}_{\alpha_1,\alpha_2}
=
    \left\{
    \begin{aligned}
    \text{minimize}\quad
        & \frac{\sum_{i\in\N} b_i}{\sum_{i\in\N} w_i}\\
    \text{subject to}\quad
        & \sum_{i\in\N} g\left(\frac{y_i}{w_i}\right)\geq n\alpha_1,\qquad
          \sum_{i\in\N} g\left(\frac{b_i}{y_i}\right)\geq n\alpha_2,\\
        & |b_i-y_i|\leq D_{BY}\leq b_i+y_i \quad \forall i\in\N,\\
        & |y_i-w_i|\leq D_{YW}\leq y_i+w_i \quad \forall i\in\N,\\
        & |b_i-w_i|\leq D_{BW}\leq b_i+w_i \quad \forall i\in\N,\\
        & D_{BY},D_{YW},D_{BW} \text{ satisfy the triangle inequalities},\\
        & b_i,y_i,w_i\geq0 \quad \forall i\in\N.
    \end{aligned}
    \right.
\]

Now fix a finite set \(S\) of feasible three-candidate voter types. A type \(t\in S\) is a triple \((b_t,y_t,w_t)\) that can arise as the distances from a voter to candidates \(B,Y,W\), together with common candidate-candidate distances \(D_{BY},D_{YW},D_{BW}\). The fractional program over this fixed set is
\[
    \mathcal T_{\alpha_1,\alpha_2}(S)=
    \left\{
    \begin{aligned}
    \text{minimize}\quad
        & \frac{\sum_{t\in S}p_t b_t}{\sum_{t\in S}p_t w_t}\\
    \text{subject to}\quad
        & \sum_{t\in S}p_t g\left(\frac{y_t}{w_t}\right)\geq\alpha_1,\\
        & \sum_{t\in S}p_t g\left(\frac{b_t}{y_t}\right)\geq\alpha_2,\\
        & \sum_{t\in S}p_t=1,\quad p_t\geq0 \quad \forall t\in S.
    \end{aligned}
    \right.
\]

This reduction is the direct analogue of the two-active-type reduction for \textsc{Plurality} in Lemma~\ref{lemma:two_active_voter_types}. The only difference is that the Copeland two-hop case has two pairwise probability constraints instead of one, so the basic feasible support bound increases from two to three.

\begin{lemma}[Three active voter types]\label{lemma:copeland_three_active_voter_types}
For every fixed finite set of feasible three-candidate voter types \(S\), the optimization over the fractions \(p_t\) in \(\mathcal T_{\alpha_1,\alpha_2}(S)\) has an optimal basic feasible solution supported on at most three voter types, whenever the optimum is attained. Consequently, the full fractional relaxation has the same optimal value as the three-location formulation \eqref{eqn:copeland_three_candidate_fractional}.
\end{lemma}

\begin{proof}
Let \(q_t^{(1)}:=g(y_t/w_t)\) and \(q_t^{(2)}:=g(b_t/y_t)\). As in the plurality reduction, it is enough to consider feasible solutions with \(W(p):=\sum_{t\in S}p_t w_t>0\). Set \(r_t:=p_t/W(p)\). Then \(\sum_t r_t w_t=1\), the objective becomes \(\sum_t r_t b_t\), and the two probability constraints become
\[
    \sum_{t\in S}r_t(q_t^{(1)}-\alpha_1)\geq0,
    \qquad
    \sum_{t\in S}r_t(q_t^{(2)}-\alpha_2)\geq0.
\]
Thus, after fixing the voter types, the fractional optimization over voter fractions is equivalent to the linear program
\[
    \begin{aligned}
    \text{minimize}\quad & \sum_{t\in S}r_t b_t\\
    \text{subject to}\quad
        & \sum_{t\in S}r_t w_t=1,\\
        & \sum_{t\in S}r_t(q_t^{(1)}-\alpha_1)\geq0,\qquad
          \sum_{t\in S}r_t(q_t^{(2)}-\alpha_2)\geq0,\\
        & r_t\geq0 \quad \forall t\in S.
    \end{aligned}
\]
Every attained optimum has an optimal basic feasible solution. A basic feasible solution has support size at most the number of active non-negativity-excluding constraints: the denominator-normalizing equality and the two probability inequalities. Hence the support size is at most three. The transformation between \(p\) and \(r\) preserves support, and the candidate-candidate distance witnesses \(D_{BY},D_{YW},D_{BW}\) are unchanged. Optimizing also over type locations therefore preserves the same three-active-type bound.
\end{proof}

\begin{observation}[Restatement of Observation~\ref{obs:copeland_three_candidate_ratio}]
\obscopelandthreecandidateratio
\end{observation}

\begin{proof}
For a given triple \(W,Y,B\), set \(b_i=d(i,B)\), \(y_i=d(i,Y)\), and \(w_i=d(i,W)\). The metric triangle inequalities among \(B,Y,W\) and voter \(i\) give the feasibility constraints in \(\mathcal{T}^{(n)}_{\alpha_1,\alpha_2}\). The two probability constraints are exactly
\[
    \sum_{i\in\N}g\left(\frac{y_i}{w_i}\right)
    =
    \sum_{i\in\N}\mathbbm P[W\succ_{\sigma_i}Y]\geq n\alpha_1,
    \qquad
    \sum_{i\in\N}g\left(\frac{b_i}{y_i}\right)
    =
    \sum_{i\in\N}\mathbbm P[Y\succ_{\sigma_i}B]\geq n\alpha_2.
\]
Thus this candidate triple induces a feasible finite solution with objective \(\SC(B,d)/\SC(W,d)\). Passing to the fractional relaxation and then using Lemma~\ref{lemma:copeland_three_active_voter_types} gives
\[
    \Opt(\mathcal{T}_{\alpha_1,\alpha_2})
    \leq
    \frac{\SC(B,d)}{\SC(W,d)}.
\]
Taking reciprocals proves the observation.
\end{proof}

\begin{claim}[Restatement of Claim~\ref{claim:copeland_program_product_lower_bound}]
\claimcopelandprogramproductlowerbound
\end{claim}

\begin{proof}
Consider any feasible solution of \(\mathcal{T}_{\alpha_1,\alpha_2}\), and write \(B:=\sum_{\ell=1}^{3}p_\ell b_\ell\), \(Y:=\sum_{\ell=1}^{3}p_\ell y_\ell\), and \(W:=\sum_{\ell=1}^{3}p_\ell w_\ell\). The projection of this solution onto the pair \((Y,W)\) is feasible for \(\Efrac{\alpha_1}\) \footnote{While Equation \eqref{eqn:optim_formulation_two_type} is formulated for two active voter types, we show in Appendix \ref{sec:two_active_voter_types_proof} that increasing the number of types does not reduce the objective value.}, since \(\sum_{\ell=1}^{3}p_\ell g(y_\ell/w_\ell)\geq\alpha_1\). Therefore \(Y/W\geq\Opt(\Efrac{\alpha_1})\), with the same ratio convention used in the plurality program. Similarly, the projection onto the pair \((B,Y)\) is feasible for \(\Efrac{\alpha_2}\), and hence \(B/Y\geq\Opt(\Efrac{\alpha_2})\). Multiplying the two inequalities gives
\[
    \frac{B}{W}
    =
    \frac{B}{Y}\cdot\frac{Y}{W}
    \geq
    \Opt(\Efrac{\alpha_1})\Opt(\Efrac{\alpha_2}).
\]
Since the argument applies to every feasible solution of \(\mathcal{T}_{\alpha_1,\alpha_2}\), taking the infimum over feasible solutions proves the product lower bound. The reciprocal upper bound follows immediately in the nondegenerate cases used in the distortion bound.
\end{proof}

We next record two small technical facts used in the Copeland lower-bound construction. The construction first works with constraints \(1/2+\zeta\), so that the two relevant pairwise victories hold with positive slack and hence with probability tending to one after rounding to a finite electorate. To recover the advertised \(1/2\)-threshold bound, we must then let \(\zeta\downarrow0\). The next lemma supplies a voter type that can boost both pairwise constraints while preserving any fixed nondegenerate candidate geometry; the following lemma uses it to prove right-continuity of the three-candidate program at the majority threshold.

\begin{lemma}[Strict booster type for a fixed candidate geometry]\label{lemma:copeland_strict_booster_type}
Fix candidate-candidate distances \(D_{BY},D_{YW},D_{BW}>0\) satisfying the triangle inequalities. Then there is a feasible voter type \((b,y,w)\) for this same candidate geometry such that
\[
    g\left(\frac{y}{w}\right)>1/2,
    \qquad
    g\left(\frac{b}{y}\right)>1/2.
\]
\end{lemma}

\begin{proof}
Choose \(s>0\) so small that
\[
    s\leq D_{BY},\qquad s\leq D_{YW},\qquad 2s\leq D_{BW}.
\]
Then choose \(R\) large enough that
\[
    D_{BY}\leq 2R+3s,\qquad
    D_{YW}\leq 2R+s,\qquad
    D_{BW}\leq 2R+2s.
\]
Set
\[
    w=R,\qquad y=R+s,\qquad b=R+2s.
\]
The inequalities above imply
\[
    |b-y|\leq D_{BY}\leq b+y,\qquad
    |y-w|\leq D_{YW}\leq y+w,\qquad
    |b-w|\leq D_{BW}\leq b+w,
\]
so \((b,y,w)\) is jointly feasible with the fixed candidate geometry. Moreover \(y>w\) and \(b>y\), so \(y/w>1\) and \(b/y>1\). By monotonicity and \(g(1)=1/2\), both displayed pairwise probabilities are strictly larger than \(1/2\).
\end{proof}

We now apply the booster type to show that imposing a strict majority \(1/2+\zeta\) changes the program value by a vanishing amount as \(\zeta\downarrow0\). This is the step used at the end of Subsection~\ref{sec:copeland_lower_bound_proof} to pass from the slack lower-bound construction back to \(\mathcal{T}_{1/2,1/2}\).

\begin{lemma}[Right-continuity at the majority threshold]\label{lemma:copeland_T_right_continuity}
Let \(V(\alpha):=\Opt(\mathcal{T}_{\alpha,\alpha})\). Then
\[
    \lim_{\zeta\downarrow0}V(1/2+\zeta)=V(1/2).
\]
\end{lemma}

\begin{proof}
Since increasing \(\alpha\) only tightens the feasible set, \(V(\alpha)\) is nondecreasing, and hence
\[
    \liminf_{\zeta\downarrow0}V(1/2+\zeta)\geq V(1/2).
\]
It remains to prove the reverse inequality for the one-sided limit. Work first with the full fractional relaxation, where the support size is unrestricted, before applying Lemma~\ref{lemma:copeland_three_active_voter_types}. Fix \(\varepsilon>0\) and take an \(\varepsilon\)-optimal fractional solution for \(V(1/2)\). By an arbitrarily small perturbation of the common candidate-candidate distances and the associated voter distances, if necessary, we may assume that the witnessing candidate geometry is nondegenerate; the constraints and objective change continuously under this perturbation.

By Lemma~\ref{lemma:copeland_strict_booster_type}, for this same candidate geometry there is a feasible booster type \((b^\circ,y^\circ,w^\circ)\) satisfying
\[
    q_1^\circ:=g\left(\frac{y^\circ}{w^\circ}\right)>1/2,
    \qquad
    q_2^\circ:=g\left(\frac{b^\circ}{y^\circ}\right)>1/2.
\]
Mix this type into the fractional solution with mass \(\lambda\), and scale the old masses by \(1-\lambda\). If the original two constraint values are at least \(1/2\), then after mixing they are at least
\[
    (1-\lambda)\frac12+\lambda q_r^\circ
    =
    \frac12+\lambda\left(q_r^\circ-\frac12\right),
    \qquad r=1,2.
\]
Thus, for every sufficiently small \(\zeta>0\), choosing
\[
    \lambda
    =
    \max_{r=1,2}
    \frac{\zeta}{q_r^\circ-1/2}
    =
    O(\zeta)
\]
makes both constraints at least \(1/2+\zeta\). To see that the objective changes continuously, write the original fractional solution as having average distances \((B_0,Y_0,W_0)\). After mixing in the booster type, the objective is
\[
    \frac{(1-\lambda)B_0+\lambda b^\circ}
         {(1-\lambda)W_0+\lambda w^\circ}.
\]
The denominator is positive for all sufficiently small \(\lambda\), and this ratio converges to \(B_0/W_0\) as \(\lambda\downarrow0\). Thus the perturbation and mixture costs vanish as their sizes go to zero. Hence
\[
    \limsup_{\zeta\downarrow0}V(1/2+\zeta)\leq V(1/2)+\varepsilon.
\]
Finally, Lemma~\ref{lemma:copeland_three_active_voter_types} compresses the resulting fractional solution back to at most three active voter types without increasing the objective. Sending \(\varepsilon\downarrow0\) proves the lemma.
\end{proof}

\subsection{Proof of Theorem~\ref{theorem:Copeland_distrotion_m}}
\label{sec:Copeland_distrotion_m_proof}

\noindent\textbf{Theorem~\ref{theorem:Copeland_distrotion_m} (Restated):} \thmcop

\begin{proof}[Proof of Theorem \ref{theorem:Copeland_distrotion_m}]
    Recall that $B \in \mathcal{A}$ minimizes the social cost, and $\{A_j\}_{j \in [m-1]}$ denotes the set $\A \setminus \{B\}$.
    \begin{align}{\label{eqn:cop_distortion_m}}
    \textsc{dist}^{(g)}(\textsc{Copeland},n,m)
    &=
    \sup_{d \in \mathcal{M}(\mathcal{N} \cup \mathcal{A})}
    \Biggl(
        \sum_{j=1}^{m-1}\mathbb{P}[A_j \text{ wins}]
        \frac{\SC(A_j,d)}{\SC(B,d)}
        + \mathbb{P}[B \text{ wins}]
    \Biggr).
    \end{align}
    Consider a Copeland winner $W$. As noted by prior work \cite{anshelevich2015approximating}, $W$ must be in the uncovered set of the tournament graph, and one of the following two cases must be true.
    \begin{itemize} 
        \item The oriented edge from $W$ to $B$ is present.
        \item There exists a candidate $Y \in \mathcal{A}$ such that the oriented edges from $W$ to $Y$ and from $Y$ to $B$ are present.
    \end{itemize}
%
    %Observe that this property can be proven by contradiction. Suppose not, then the Copeland score of $W$ is strictly smaller than that of $B$ contradicting the fact that $W$ is the winner. This argument is similar to that in .

    For every $j \in [m-1]$, we now bound the probability of $A_j$ being the winner. %This event implies that either $A_j$ is ranked over $B$ or there exists some other candidate $A_k$ such that $A_j$ is ranked over $A_k$ and $A_k$ is ranked over $B$. 
    For every $j \in [m-1]$, we define Bernoulli random variables $\{Y_{i,j}\}_{i=1}^{n}$ denoting the event that voter $i$ ranks candidate $A_j$ over candidate $B$.
    From Equation \ref{eq:pairwise_probability}, we have that $Y_{i,j} \sim \text{Bern}\left(g\left(\frac{d(i,B)}{d(i,A_j)}\right)\right)$. For every distinct $j,k \in [m-1]$, we define Bernoulli random variables $\{Z_{i,j,k}\}_{i=1}^{n}$ denoting the event that voter $i$ ranks candidate $A_j$ over $A_k$. Then $Z_{i,j,k} \sim \text{Bern}\left(g\left(\frac{d(i,A_k)}{d(i,A_j)}\right)\right)$.% follows from Equation \ref{eq:pairwise_probability}

Observe that    
    \begin{align}{\label{eq:bounding_prob_winner_cop}}
        \mathbb{P}[A_j \text{ wins}]
        &\leq \mathbb{P}\Biggl(
            \left\{\sum_{i \in \N} Y_{i,j} \geq \frac{n}{2}\right\}
            \bigcup _{k \in [m-1]\setminus \{j\}}
            \left(
                \left\{\sum_{i \in \N} Z_{i,j,k} \geq \frac{n}{2}\right\}
                \cap
                \left\{\sum_{i \in \N} Y_{i,k} \geq \frac{n}{2}\right\}
            \right)
        \Biggr).
    \end{align}
    % $\mathbb{P}[A_j \text{ wins}] \leq \mathbb{P}(\sum_{i \in \N} Y_{i,j} \geq \frac{n}{m})$.

     Let $n\alpha_j$ denote the expected value of the random variable $\sum_{i \in \N} Y_{i,j}$, i.e., the expected number of voters who rank candidate $A_j$ over $B$. 
\begin{equation}{\label{eqn:alpha_defn_cop}}
    \alpha_j :=
    \frac{1}{n}\sum_{i \in \N} \mathbbm{E}[Y_{i,j}]
    =
    \frac{1}{n}\sum_{i \in \N}
    g\left(\frac{d(i,B)}{d(i,A_j)}\right),
    \quad \forall j \in [m-1].
\end{equation}

Let $n\beta_{j,k}$ denote the expected value of the random variable $\sum_{i \in \N} Z_{i,j,k}$, i.e., the expected number of voters who rank candidate $A_j$ over $A_k$. 
\begin{equation}{\label{eqn:beta_defn_cop}}
    \beta_{j,k} :=
    \frac{1}{n}\sum_{i \in \N} \mathbbm{E}[Z_{i,j,k}]
    =
    \frac{1}{n}\sum_{i \in \N}
    g\left(\frac{d(i,A_k)}{d(i,A_j)}\right),
    \quad \forall j,k \in [m-1],\ j\neq k.
\end{equation}

We next apply Lemma~\ref{lemma:threshold_chernoff_bound} with \(q=2\) and \(r=2\). For the \(Y_{i,j}\) variables we take \(\rho=\alpha_j\), and for the \(Z_{i,j,k}\) variables we take \(\rho=\beta_{j,k}\). Since \(n\ge16\), the lemma's remaining conditions \(n/2>2\) and \(n^{1/2+\epsilon}/2\ge2\) both hold, so the lemma gives the following two bounds.

\begin{align}{\label{eq:chernoff_bounding_temp1}}
% \text{If $\alpha_j \leq \frac{n}{2} - \frac{n^{(1/2+\epsilon)}}{2}$, we have}\nonumber\\
    \mathbb{P}\left(\sum_{i \in \N} Y_{i,j} \geq \frac{n}{2}\right)
    \leq
    \left(2\alpha_j\right)^2
    \exp\left(\frac{- n^{(\frac{1}{2}+\epsilon)} +8}{2(2n^{(\frac{1}{2}-\epsilon)}-1)}\right)
    \quad
    \text{if }\alpha_j \leq \frac{1}{2} - \frac{n^{(-1/2+\epsilon)}}{2}.
\end{align}
\begin{align}{\label{eq:chernoff_bounding_temp2}}
% \text{If $\alpha_j \leq \frac{n}{2} - \frac{n^{(1/2+\epsilon)}}{2}$, we have}\nonumber\\
\mathbb{P}\left(\sum_{i \in \N} Z_{i,j,k} \geq \frac{n}{2}\right)
\leq
\left(2\beta_{j,k}\right)^2
\exp\left(\frac{- n^{(\frac{1}{2}+\epsilon)} +8}{2(2n^{(\frac{1}{2}-\epsilon)}-1)}\right)
\quad
\text{if }\beta_{j,k} \leq \frac{1}{2} - \frac{n^{(-1/2+\epsilon)}}{2}.
\end{align}

Consider two exhaustive cases on candidate $A_j$ and define an event $E_j$ for every $j \in [m-1]$. We compute the expected fraction of votes on pairwise comparisons. The event $E_j$ denotes the existence of an at-most two hop directed path from a candidate $A_j$ to candidate $B$ for \textsc{Copeland} such that the expected fraction of votes on all edges along that path exceed $\frac{1}{2}-\frac{n^{(-1/2+\epsilon)}}{2}$ . Recall that we only considered one hop path for the case of \textsc{Plurality} in the proof of Theorem \ref{theorem:thm_plurality_distortion_m}.

%We now consider two exhaustive cases on candidate $A_j$. and now define an event $E_j$ for every $j \in [k]$. 

\begin{equation}{\label{eq:E_j_defn}}
    E_j:= \left(\alpha_j \geq \frac{1}{2}-\frac{n^{(-1/2+\epsilon)}}{2}\right) \bigcup_{k \in [m-1]\setminus \{j\}} \left( \left(\beta_{j,k} \geq \frac{1}{2}-\frac{n^{(-1/2+\epsilon)}}{2}\right) \bigcap \left(\alpha_k \geq \frac{1}{2}-\frac{n^{(-1/2+\epsilon)}}{2}\right)\right). 
\end{equation}

If $E_j$ holds true, we can directly upper bound the ratio of the social cost of candidate $A_j$ to the social cost of candidate $B$ using Observation \ref{lemma:bounding_social_cost_ratio}, which in turn provides a bound on the distortion. If $E_j$ does not hold, we apply the union bound and Chernoff's bound to upper bound the probability of $A_j$ being the winner. By multiplying this probability bound with the ratio of social costs obtained from Observation \ref{lemma:bounding_social_cost_ratio}, we derive a bound on the distortion.

Define $S := \{j \in [m-1]: E_j \text{ is not true}\}$. Furthermore, for each \(j\in[m-1]\), define
\[
    \mathcal{K}_1(j):=
    \{k \in [m-1]\setminus\{j\}: \alpha_k \geq \beta_{j,k}\},
    \quad
    \mathcal{K}_2(j):=
    \{k \in [m-1]\setminus\{j\}: \alpha_k < \beta_{j,k}\}.
\]
Thus, \(\mathcal{K}_2(j)\) is the complement of \(\mathcal{K}_1(j)\) in \([m-1]\setminus\{j\}\). %Observe that 

%Clearly, $\mathcal{K}_1(j) \cup \mathcal{K}_2(j) = [m-1]\setminus \{j\}$. 

% Case 1: $E_j$ is not satisfied for every $j \in [m-1]$.

From Equations~\eqref{eq:chernoff_bounding_temp1} and \eqref{eq:chernoff_bounding_temp2}, both of the following conditions \ref{label:itemone} and \ref{label:itemtwo} are 
satisfied for every $j \in S$.
\begin{enumerate}  
    \item $\mathbb{P}\left(\sum_{i \in \N} Y_{i,j} \geq \frac{n}{2}\right)  \leq \left(2\alpha_j\right)^2 \exp\left(\frac{- n^{(\frac{1}{2}+\epsilon)} +8}{2(2n^{(\frac{1}{2}-\epsilon)}-1)}\right)$ \label{label:itemone}

    \item For every $k \in [m-1] \setminus \{j\}$, \label{label:itemtwo}
    %atleast one of the following equations should be true. 

    $\mathbb{P}\left(\sum_{i \in \N} Z_{i,j,k} \geq \frac{n}{2}\right) \leq \left(2\beta_{j,k}\right)^2 \exp\left(\frac{- n^{(\frac{1}{2}+\epsilon)} +8}{2(2n^{(\frac{1}{2}-\epsilon)}-1)}\right) \text{ if $k \in \mathcal{K}_1(j)$}$

and, 
    $\mathbb{P}\left(\sum_{i \in \N} Y_{i,k} \geq \frac{n}{2}\right) \leq \left(2\alpha_k\right)^2 \exp\left(\frac{- n^{(\frac{1}{2}+\epsilon)} +8}{2(2n^{(\frac{1}{2}-\epsilon)}-1)}\right) \text{ if $k \in \mathcal{K}_2(j)$}$.
%
    % One may argue that the first condition is satisfied for every $k \in \mathcal{K}_1(j)$ and second condition is satisfied for every $k \in \mathcal{K}_2(j)$. There may exist some values of $k$ when both are simultaneously satisfied.
    %
    %
\end{enumerate}
Every Copeland winner $W$ must either have an oriented edge to $B$ or there exists a $Y\in \mathcal{A}$ such that the oriented edges from $W$ to $Y$ and from $Y$ to $B$ are present. Each such oriented edge implies that the corresponding weak pairwise-majority event has occurred.
Using union bound for every $j \in S$, we have
\begin{align}{\label{eq:probability_bound}}
     \mathbb{P}[A_j \text{ wins}] 
     \leq & ~ \mathbbm{P}\left[\sum_{i \in \N} Y_{i,j} \geq \frac{n}{2}\right] \nonumber\\
     & + \sum_{k \in [m-1]\setminus \{j\}} \mathbbm{P}\left[\left(\sum_{i \in \N} Y_{i,k} \geq \frac{n}{2}\right) \cap \left(\sum_{i \in \N} Z_{i,j,k} \geq \frac{n}{2}\right)\right] \text{ if $j \in S$} \nonumber\\
     \leq & \left(2\alpha_j\right)^2 \exp\left(\frac{- n^{(\frac{1}{2}+\epsilon)} +8}{2(2n^{(\frac{1}{2}-\epsilon)}-1)}\right) \nonumber\\
     & + \sum_{k \in \mathcal{K}_2(j)} \left({2\alpha_k}\right)^2 \exp\left(\frac{- n^{(\frac{1}{2}+\epsilon)} +8}{2(2n^{(\frac{1}{2}-\epsilon)}-1)}\right) \nonumber \\
     & + \sum_{k \in \mathcal{K}_1(j)} \left(2\beta_{j,k}\right)^2 \exp\left(\frac{- n^{(\frac{1}{2}+\epsilon)} +8}{2(2n^{(\frac{1}{2}-\epsilon)}-1)}\right) \text{ if $j \in S$}. 
\end{align}

We next bound the probability-weighted contribution of \(A_j\). The one-hop path from \(A_j\) to \(B\) gives the direct social-cost bound \(\Opt(\Efrac{\alpha_j})^{-1}\). For a two-hop path through \(A_k\), the two edge constraints are imposed on the same electorate, so we use Observation~\ref{obs:copeland_three_candidate_ratio}: the social-cost ratio is at most \(\Opt(\mathcal T_{\beta_{j,k},\alpha_k})^{-1}\), not a product of two separately chosen one-hop ratios. Claim~\ref{claim:copeland_program_product_lower_bound} then gives a product-type relaxation of this shared-program bound.

If one of the thresholds below is zero, the corresponding Chernoff prefactor is zero and the term may be omitted. Write
\[
    H_n:=
    \exp\left(
    \frac{- n^{(\frac{1}{2}+\epsilon)} +8}
    {2(2n^{(\frac{1}{2}-\epsilon)}-1)}
    \right).
\]
Applying the union bound in probability-weighted form and using Claim~\ref{claim:copeland_program_product_lower_bound} gives

	\begin{align}{\label{eq:bounding_expr_part1}}
	    \mathbb{P}[A_j \text{ wins}] \frac{\SC(A_j,d)}{\SC(B,d)}
	    \leq & ~4H_n
	    \Bigg(
	        \alpha_j^2\Opt(\Efrac{\alpha_j})^{-1} \nonumber\\
	        &\quad + \sum_{k\in\mathcal K_1(j)}
	        \beta_{j,k}^2
	        \Opt(\Efrac{\beta_{j,k}})^{-1}
	        \Opt(\Efrac{\alpha_k})^{-1} \nonumber\\
	        &\quad + \sum_{k\in\mathcal K_2(j)}
	        \alpha_k^2
	        \Opt(\Efrac{\beta_{j,k}})^{-1}
	        \Opt(\Efrac{\alpha_k})^{-1}
	    \Bigg) \nonumber\\
	    \leq & ~4mH_n
	    \max\left\{\Gamma_g(\eta_n),\Gamma_g(\eta_n)^2\right\}
	    \text{ if $j \in S$}.
	\end{align}
	
	The last inequality uses monotonicity of \(\Opt(\Efrac{t})\) in \(t\): increasing \(t\) only tightens the pairwise-support constraint in \(\Efrac{t}\), so the minimum social-cost ratio cannot decrease. If \(k\in\mathcal K_1(j)\), then \(\beta_{j,k}\le\alpha_k\), so the corresponding two-hop summand is at most \((\beta_{j,k}\Opt(\Efrac{\beta_{j,k}})^{-1})^2\le\Gamma_g(\eta_n)^2\). If \(k\in\mathcal K_2(j)\), then \(\alpha_k<\beta_{j,k}\), and monotonicity gives \(\Opt(\Efrac{\beta_{j,k}})^{-1}\leq \Opt(\Efrac{\alpha_k})^{-1}\); hence that summand is at most \((\alpha_k\Opt(\Efrac{\alpha_k})^{-1})^2\leq\Gamma_g(\eta_n)^2\). The direct summand is at most \(\Gamma_g(\eta_n)\), and there are at most \(m\) summands.

Recall that for every $j \in [m-1] \setminus S$, $E_j$ is satisfied. Let us further denote 

\[
    \hat{E}_j :=
    \left\{\alpha_j \geq \frac{1}{2}-\frac{n^{(-1/2+\epsilon)}}{2}\right\},
    \quad
    \hat{D}_{j,k} :=
    \left\{\beta_{j,k} \geq \frac{1}{2}-\frac{n^{(-1/2+\epsilon)}}{2}\right\}.
\]

% and 

% $$$$

% \begin{enumerate}
%     \item $$\mathbb{P}\left(\sum_{i \in \N} Y_{i,j} \geq \frac{n}{2}\right)  \leq \frac{2\alpha_j}{n} \exp\left(\frac{- n^{(\frac{1}{2}+\epsilon)} +4}{2(2n^{(\frac{1}{2}-\epsilon)}-1)}\right)$$ \label{label:itemthree}

%     \item There exists $k \in [m-1]\setminus \{j\} \setminus \{j\}$, both the following equations should be true. \label{label:itemfour}

%     $$\mathbb{P}\left(\sum_{i \in \N} Y_{i,k} \geq \frac{n}{2}\right) \leq \frac{2\alpha_k}{n} \exp\left(\frac{- n^{(\frac{1}{2}+\epsilon)} +4}{2(2n^{(\frac{1}{2}-\epsilon)}-1)}\right)$$

%     $$\mathbb{P}\left(\sum_{i \in \N} Z_{i,j,k} \geq \frac{n}{2}\right) \leq \frac{2\beta_{j,k}}{n} \exp\left(\frac{- n^{(\frac{1}{2}+\epsilon)} +4}{2(2n^{(\frac{1}{2}-\epsilon)}-1)}\right)$$

% \end{enumerate}

Observe that $E_j$ being satisfied implies either a) $\hat{E}_j$ is satisfied or b) $\exists k \in [m-1]\setminus \{j\}$ s.t $\hat{E}_k$ and $\hat{D}_{j,k}$ are satisfied. We consider both cases separately. 

Let \(\eta_n:=\frac{1-n^{-(1/2-\epsilon)}}{2}\). Suppose \(\hat{E}_j\) is satisfied for some \(j \in [m-1]\setminus S\). Then Observation~\ref{lemma:bounding_social_cost_ratio} gives
\begin{align}{\label{eq:bounding_expr_part2}}
    & \frac{\SC(A_j,d)}{\SC(B,d)}  
    \leq \Opt(\Efrac{\eta_n})^{-1}.
\end{align}
%
% The last inequality follows from the fact that $\alpha_j \geq $
%
Now we consider case (b) where \(\hat{E}_k\) and \(\hat{D}_{j,k}\) are both satisfied for some \(k \in [m-1]\setminus \{j\}\). Observation~\ref{obs:copeland_three_candidate_ratio} gives
%
% Thus, applying equations \eqref{lemma:bounding_social_cost_ratio} and the  
\begin{align}{\label{eq:bounding_expr_part3}}
    & \frac{\SC(A_j,d)}{\SC(B,d)}  
    \leq \Opt(\mathcal{T}_{\eta_n,\eta_n})^{-1}.
\end{align}
Now combining Equations \eqref{eq:bounding_expr_part1}, \eqref{eq:bounding_expr_part2}, and \eqref{eq:bounding_expr_part3}, we have for any metric space $d \in \mathcal{M}(\mathcal{N}\cup \mathcal{A})$,
%\begin{enumerate}
    % \item Case 1: $|S|=m-1$ i.e. $E_j$ is not satisfied for every $j\in [m-1]$. Recall the definition of $E_j$ from Equation \eqref{eq:E_j_defn}. \label{item:case1}
%
    % We thus have for any metric space $d \in \mathcal{M}(\mathcal{N}\cup \mathcal{A})$,
%
    % \begin{align}
    %     \textsc{dist}^{(g)}(\textsc{Copeland},n,m) & \leq \left(\sum_{j \in S}\mathbb{P}[A_j \text{ wins}] \frac{\SC(A_j,d)}{\SC(B,d)} + \mathbb{P}[B \text{ wins}]\right)\\
    %     & \overset{(a)}{\leq} 4(m-1) m  \exp\left(\frac{- n^{(\frac{1}{2}+\epsilon)} +8}{2(2n^{(\frac{1}{2}-\epsilon)}-1)}\right)\left(\gm+ \go\right)^2 +1
    % \end{align}
%
    % $(a)$ follows from Equation \eqref{eq:probability_bound}.
%
    % \item Case 2:  $|S|<m-1$ i.e. $E_j$ is satisfied for some $j\in [m-1]$. \label{item:case2}
%
    \begin{align}{\label{eqn:bounding_copeland_distortion_case_2}}
    %& \textsc{dist}^{(g)}(\textsc{Copeland},n,m) \\
    &\textsc{dist}^{(g)}(\textsc{Copeland},n,m) \nonumber\\
    &\leq \Biggl( \sum_{j \in S}\left(\mathbb{P}[A_j \text{ wins}] \frac{\SC(A_j,d)}{\SC(B,d)}\right) + \mathbb{P}[B \text{ wins}] \nonumber\\
    &\qquad + \sum_{j \in [m-1] \setminus S}\left(\mathbb{P}[A_j \text{ wins}] \frac{\SC(A_j,d)}{\SC(B,d)} \right) \Biggr)\nonumber\\
    &\overset{(a)}{\leq}  4(m-1) m  \exp\left(\frac{- n^{(\frac{1}{2}+\epsilon)} +8}{2(2n^{(\frac{1}{2}-\epsilon)}-1)}\right)\max\left\{\Gamma_g(\eta_n),\Gamma_g(\eta_n)^2\right\} \nonumber\\
    &\qquad + \max\left(\max_{j \in [m-1]\setminus S} \frac{\SC(A_j,d)}{\SC(B,d)},1\right) \nonumber \\
     &\overset{(b)}{\leq}  4(m-1) m  \exp\Big(\frac{- n^{(\frac{1}{2}+\epsilon)} +8}{2(2n^{(\frac{1}{2}-\epsilon)}-1)}\Big) \max\left\{\Gamma_g(\eta_n),\Gamma_g(\eta_n)^2\right\} \nonumber\\
     &\qquad + \max \left\{\Opt(\Efrac{\eta_n})^{-1},\Opt(\mathcal{T}_{\eta_n,\eta_n})^{-1}\right\}.
    \end{align}
%\end{enumerate}

$(a)$ follows from Equation \eqref{eq:probability_bound} and the fact that $\sum_{j \in [m-1]\setminus S}\left(\mathbb{P}[A_j \text{ wins}] \frac{\SC(A_j,d)}{\SC(B,d)}\right) + \mathbb{P}[B \text{ wins}] \leq \max\left(\max\limits_{j \in [m-1]\setminus S} \frac{\SC(A_j,d)}{\SC(B,d)},1\right)$. 

$(b)$ follows from combining Equations  \eqref{eq:bounding_expr_part1}, \eqref{eq:bounding_expr_part2}, and \eqref{eq:bounding_expr_part3}.% since $E_j$ is satisfied and thus the if condition in one of the two equations in \eqref{eq:bounding_expr_part1} and \eqref{eq:bounding_expr_part2} should be satisfied.
\end{proof}

\subsection{Copeland Lower Bound}
\label{sec:copeland_lower_bound_proof}

\begin{claim}[Restatement of Claim~\ref{claim:copeland_lower_bound}]
	\claimcopelandlowerbound
\end{claim}

\begin{proof}
	Fix \(\zeta>0\) and \(\delta>0\). Choose a feasible solution \(\{(p_\ell,b_\ell,y_\ell,w_\ell)\}_{\ell=1}^{3}\) of \(\mathcal{T}_{1/2+\zeta,1/2+\zeta}\) whose objective value is at most \(\Opt(\mathcal{T}_{1/2+\zeta,1/2+\zeta})+\delta\). Realize the three jointly feasible voter types in a metric space with candidates \(B,Y,W\). For an \(n\)-voter electorate, place \(\lfloor p_\ell n\rfloor\) voters of type \(\ell\), and distribute the remaining \(O(1)\) voters arbitrarily.
	
		The expected fractions of voters ranking \(W\) above \(Y\) and \(Y\) above \(B\) are at least \(1/2+\zeta-o_n(1)\). Hence, for all sufficiently large \(n\), both expectations are at least \(1/2+\zeta/2\). Chernoff's bound implies that \(W\) beats \(Y\) and \(Y\) beats \(B\) with probability tending to one. We label the three candidates so that the fixed arbitrary tie-breaking rule for \textsc{Copeland} selects \(W\) whenever the resulting three-candidate tournament ties all three candidates. Thus \(W\) is a Copeland winner with probability tending to one.
	
	The social-cost ratio converges to
	\[
	\frac{\sum_{\ell=1}^{3}p_\ell w_\ell}
	{\sum_{\ell=1}^{3}p_\ell b_\ell}
	\geq
	\left(\Opt(\mathcal{T}_{1/2+\zeta,1/2+\zeta})+\delta\right)^{-1}.
	\]
	Letting \(n\to\infty\), then \(\delta\downarrow0\), and finally using Lemma~\ref{lemma:copeland_T_right_continuity} as \(\zeta\downarrow0\), gives the claimed lower bound.
\end{proof}

%\section{Proof of Lemma \ref{lemma:bounding_expected_score}}{\label{sec:bounding_expected_score_proof}}

%\noindent\textbf{Lemma~\ref{lemma:bounding_expected_score} (Restated):}
%\lemmaboundingexp

\section{Proof of Claims \ref{claim:borda_optim1} and \ref{claim:borda_optim2} from the proof of Lemma \ref{lemma:bounding_expected_score}}{\label{sec:proof_claims_in_lemma_boudning_score}}

We prove the two optimization claims used to lower bound Terms (II) and (III) in Equation~\eqref{eq:diff_borda_score}. Claim~\ref{claim:borda_optim1} controls Term (II): it lower bounds the optimization problem \(\mathcal{E}^{(1)}_{\mfg,z}\), which captures the contribution of candidates outside \(\C \cup \{W\}\). Claim~\ref{claim:borda_optim2} controls Term (III): it lower bounds \(\mathcal{E}^{(2)}_{\mfg}\), which captures the pairwise comparison between \(B^{(b)}\) and \(W\). Both proofs use the same broad strategy: replace the original objective by a one-dimensional surrogate, choose an appropriate Lagrange multiplier, and then show that the resulting Lagrangian is minimized only at controlled boundary points.

\noindent\textbf{Claim~\ref{claim:borda_optim1} (Restated):} \claimbordaoptimone

% begin{claim}{\label{claim:borda_optim2}}
%    $\Opt(\mathcal{E}^{(2)}_{\mfg}) \geq n\Omega_m(1)$.
% \end{claim}

\begin{proof}
We proceed in four steps. First, we reduce the optimization problem to a scalar objective \(g^{(1)}\) and restrict attention to the active domain \(\mathcal B\). Second, we replace \(g^{(1)}\) by a signed lower-bound approximation. Third, we choose a Lagrange multiplier that rules out interior stationary points. Finally, we compare the remaining boundary and kink values.
Throughout the proof, \(z\) is treated as a parameter. All constants may depend on \(\theta\), but not on \(m,n\), or \(z\), once \(z\geq Z_\theta\).

\smallskip
\noindent\textbf{Reduction to a scalar objective.}
To solve the optimization problem $\mathcal{E}^{(1)}_{\mfg,z}$, we first define the function $g^{(1)}(b)$.

\begin{equation}{\label{eq:g_1_b}}
  g^{(1)}(b) = \min\limits_{t: |b-t|\leq 1\leq b+t}\left(\frac{1}{1+ \left(\frac{b}{t}\right)^{\theta}} - \frac{1}{1+ \left(\frac{|\frac{1}{z}-b|}{t}\right)^{\theta}}\right)  
\end{equation}
%%$$$$ 

Observe that solving $\mathcal{E}^{(1)}_{\mfg,z}$ is equivalent to minimising $\sum_{i \in \N} g^{(1)}(b_i)$ subject to $\sum_{i \in \N} b_i \leq \frac{n}{z\mfg(m)}.$

%     Now define 
% $$F^{(1)}({b},{t},\lambda) = \sum_{i \in \N} \left(\frac{1}{1+ \left(\frac{b}{t} \right)^{\theta}} - \frac{1}{1+ \left(\frac{|1-b|}{t} \right)^{\theta}} + \lambda b - \frac{\lambda}{\mfg(m)}\right)$$ and we define the constraint set $\C:= \{\vect{b},\vect{t} \in \mathbb{R}^n_{+}: \max_{i}|b_i - t_i| \leq z \leq \min_i (t_i + b_i) \forall i \}$

% Since the claim is applied only for \(z\geq Z_\theta\), we may choose \(Z_\theta>5\). From theory of Lagrangian, we have 

% \begin{equation}{\label{eq:lagrangian_eq2}}
%     \Opt (\mathcal{E}^{(1)}_{\mfg,z}) \geq \min_{\vect{b},\vect{t} \in \C} \max_{\lambda \geq 0} \sum_{i \in \N}F^{(1)} ({b}_i,{t}_i,\lambda) \geq \max_{\lambda \geq 0} \min_{\vect{b},\vect{t} \in \C} \sum_{i \in \N}F^{(1)} ({b}_i,{t}_i,\lambda)
% \end{equation}

%Now given $\vect{b}$ and $\lambda$, we now minimize $F^{(1)}(\vect{b},\vect{t},\lambda)$ subject to the set $\C$. 

Now we minimize the above function over $\rC = \{t: |b-t| \leq 1 \leq b+t\}$. By differentiating the function in Equation \eqref{eq:g_1_b} with respect to $t$, we observe that it has a unique minimum at $t = \sqrt{b|b-\frac{1}{z}|}$ if $b>\frac{1}{2z}$ and a unique maximum at $t = \sqrt{b|b-\frac{1}{z}|}$ otherwise. Thus, we can divide it into three cases using the constraint $\rC$, which requires $t$ to be bounded in $[|1-b|,1+b]$.

\begin{equation}{\label{eq:cases_g_1}}
g^{(1)}(b) =
\begin{cases}
% \min
% \begin{cases}
\left(\frac{1}{1+ \left(\frac{b}{b+1} \right)^{\theta}} - \frac{1}{1+ \left(\frac{|\frac{1}{z}-b|}{b+1} \right)^{\theta}} \right) & \text{if } b \leq \frac{1}{2z} \\
% \left(\frac{1}{1+ \left(\frac{b}{|b-z|} \right)^{\theta}} - \frac{1}{1+ \left(\frac{|1-b|}{|b-z|} \right)^{\theta}}\right) & \text{ }
% \end{cases}
%& \text{if } b \leq \frac{1}{2}  \\
\left(\frac{1}{1+ \left(\frac{b}{|b-1|} \right)^{\theta}} - \frac{1}{1+ \left(\frac{|\frac{1}{z}-b|}{|b-1|} \right)^{\theta}} \right) & \text{ else if } b + \sqrt{b\left|\frac{1}{z}-b \right|} \leq 1 \\
\left(\frac{1}{1+ \left(\frac{b}{|b-\frac{1}{z}|} \right)^{\theta/2}} - \frac{1}{1+ \left(\frac{|\frac{1}{z}-b|}{|b|} \right)^{\theta/2}}\right) & \text{otherwise}
\end{cases}
\end{equation}

% \sahasrajit{Have to further show that the first two functions differ by at most $o(\frac{1}{z}^{\theta+1}$)}
% \sahasrajit{This would ensure the function changes only from concave to convex. We need to do analysis for $\theta<2$ and $>2$ separately.}

%\sahasrajit{Have to do a careful reading.}

\smallskip
\noindent\textbf{The three forms of \(g^{(1)}\).}
Under both the first and second cases, the terms $1-b$ and $1+b$ exceed $\sqrt{b\left|\frac{1}{z}-b \right|}$ (the local extremum). In the first case, since the local extremum is a maximum, the function in Equation \eqref{eq:g_1_b} monotonically decreases with $t$ over the interval $[1-b,1+b]$.  Conversely, in the second case, since the local extremum is a minimum, the function monotonically increases with $t$ over the interval $[1-b,1+b]$. In the third case, the minimum point $t=\sqrt{b|b-\frac{1}{z}|}$ lies within the interval $[|1-b|,1+b]$. These behaviors explain the corresponding terms in Equation \eqref{eq:cases_g_1}.

\smallskip
\noindent\textbf{Restriction to the active domain.}
Due to the negative and monotonically increasing nature of $g^{(1)}(b)$ when $b + \sqrt{b \left|\frac{1}{z}-b \right|} > 1$, 
we can simplify our analysis by focusing only on a restricted domain. Indeed, if a feasible solution has such a value \(b_i\), replacing it by the boundary value \(\bar b\) satisfying \(\bar b+\sqrt{\bar b|\frac{1}{z}-\bar b|}=1\) weakly decreases the objective and also weakly decreases \(\sum_i b_i\), so the budget constraint remains feasible. Thus, up to taking the closure and then a limiting argument, in the optimization problem $\mathcal{E}^{(1)}_{\mfg,z}$ we can restrict $b_i$ to the set $\mathcal{B}$, which is defined as follows:
$$\mathcal{B} = \left\{b: b + \sqrt{b\left|\frac{1}{z}-b \right|} < 1,\; b > 0 \right\}.$$ Furthermore, when \(z\geq5\), if \(b\in\mathcal B\), then \(b\leq 2/3\). We will choose \(Z_\theta\geq5\), so this bound holds throughout the rest of the proof.

\smallskip
\noindent\textbf{Signed approximation.}
We next replace \(g^{(1)}\) by an order-equivalent approximation on \(\mathcal{B}\). The three regions below all lead to the same approximating function \(g^{(1)}_{\text{apr}}\), up to a multiplicative error \(1+\varepsilon(z)\), where \(\varepsilon(z)\to0\) as \(z\to\infty\), uniformly on the relevant region. Because \(g^{(1)}_{\text{apr}}\) changes sign at \(b=1/z\), we use this approximation only through the signed one-sided lower bound stated after the three cases.
% We define a new function, \( g^{(1)}_{\text{apr}}(b) \), which provides an arbitrary approximation of \( g^{(1)}(b) \) for every $b \in \mathcal{B}$. Specifically, the correction factors converge uniformly to 1 as \(z\to\infty\). 

%For every $b\leq \frac{1}{2z}$ observe that 

% Since, we choose $z>5$, the equality is attained when $b+\sqrt{b(b-\frac{1}{z})}=1$ i.e. $b = \frac{1}{2-\frac{1}{z}} \leq \frac{1}{2-1/5}$. 

\begin{itemize}
    \item When $b \geq \frac{1}{z}$ and $b \in \mathcal{B}$, we have

\begin{align}
   \left(\frac{1}{1+ \left(\frac{b}{|b-1|} \right)^{\theta}} - \frac{1}{1+ \left(\frac{|\frac{1}{z}-b|}{|b-1|} \right)^{\theta}} \right) = & \frac{(|b-\frac{1}{z}|^{\theta}-b^{\theta})|b-1|^{\theta}}{ \left(|b-1|^{\theta}+ {b}^{\theta}\right) \left({|b-1|}^{\theta}+ \left({|\frac{1}{z}-b|}^{\theta} \right)\right)}\nonumber\\
    = & \underbrace{\frac{(|b-\frac{1}{z}|^{\theta}-b^{\theta})(1-b)^{\theta}}{ \left((1-b)^{\theta}+ {b}^{\theta}\right)^2}}_{g^{(1)}_{\text{apr}}(b)}
    \times \eta_+(b)
\end{align}

where
\[
\eta_+(b)=
\frac{(1-b)^\theta+b^\theta}
{(1-b)^\theta+\left|b-\frac{1}{z}\right|^\theta}.
\]
Observe that
\[
    \sup_{b\in\mathcal B: b\geq 1/z}|\eta_+(b)-1|\to0
    \qquad\text{as }z\to\infty.
\]
Indeed, \(b\leq 2/3\) on \(\mathcal B\), so the denominator of \(\eta_+(b)\) is bounded below by a positive constant, while
\[
    \left|b^\theta-\left|b-\frac{1}{z}\right|^\theta\right|
    =O(z^{-1})
\]
uniformly on this region. At the endpoint \(b=1/z\), this difference is only \(z^{-\theta}\). Thus this case also differs from \(g^{(1)}_{\text{apr}}(b)\) by a multiplicative \(1+o_z(1)\) factor.
\item When $\frac{1}{2z} <  b < \frac{1}{z}$ and $b \in \mathcal{B}$, we have

\begin{align}
    \left(\frac{1}{1+ \left(\frac{b}{b-1} \right)^{\theta}} - \frac{1}{1+ \left(\frac{|\frac{1}{z}-b|}{|b-1|} \right)^{\theta}} \right) = & \frac{(|b-\frac{1}{z}|^{\theta}-b^{\theta})|b-1|^{\theta}}{ \left(|b-1|^{\theta}+ {b}^{\theta}\right) \left({|b-1|}^{\theta}+ \left({|\frac{1}{z}-b|}^{\theta} \right)\right)}\nonumber\\
    = & \underbrace{\left(\left|\frac{b}{1-b}-\frac{1}{z (1-b)}\right|^{\theta}-\left(\frac{b}{1-b}\right)^{\theta}\right)}_{g^{(1)}_{\text{apr}}(b)} \times \eta(b)
\end{align}

Observe that
\[
    \sup_{0\leq b\leq 1/z}|\eta(b)-1|\to0
    \qquad\text{as }z\to\infty.
\]

\item When $b \leq \frac{1}{2z}$ and $b \in \mathcal{B}$, we have

\begin{align}
    \left(\frac{1}{1+ \left(\frac{b}{b+1} \right)^{\theta}} - \frac{1}{1+ \left(\frac{|\frac{1}{z}-b|}{b+1} \right)^{\theta}} \right) = & \frac{(|b-\frac{1}{z}|^{\theta}-b^{\theta})(b+1)^{\theta}}{ \left((b+1)^{\theta}+ {b}^{\theta}\right) \left({(b+1)}^{\theta}+ \left({|\frac{1}{z}-b|}^{\theta} \right)\right)}\nonumber\\
    = & \underbrace{\left(\left|\frac{b}{1+b}-\frac{1}{z (1+b)}\right|^{\theta}-\left(\frac{b}{1+b}\right)^{\theta}\right)} \times \zeta(b) \nonumber\\
    = & \underbrace{\left(\left|\frac{b}{1-b}-\frac{1}{z (1-b)}\right|^{\theta}-\left(\frac{b}{1-b}\right)^{\theta}\right)}_{g^{(1)}_{\text{apr}}(b)} \times \zeta(b) \\ & \hspace{ 6 em}\times \left(\frac{1-b}{1+b}\right)^{\theta}
\end{align}

Observe that
\[
    \sup_{b\in\mathcal B: b\leq 1/(2z)}
    \left|
    \zeta(b)\left(\frac{1-b}{1+b}\right)^\theta-1
    \right|
    \to0
    \qquad\text{as }z\to\infty.
\]

\end{itemize}

Combining the three cases, there is a function \(\varepsilon(z)\to0\) as \(z\to\infty\), depending only on \(\theta\), such that, uniformly over \(b\in\mathcal B\),
\[
    g^{(1)}(b)
    \geq
    (1-\varepsilon(z))\bigl(g^{(1)}_{\text{apr}}(b)\bigr)_+
    -(1+\varepsilon(z))\bigl(g^{(1)}_{\text{apr}}(b)\bigr)_-,
\]
where \(x_+=\max\{x,0\}\) and \(x_-=\max\{-x,0\}\). This is the signed version of the \(1+o_z(1)\) approximation: the positive part is multiplied by the smaller factor, while the negative part is multiplied by the larger factor, which is the direction required for a lower bound.

\smallskip
\noindent\textbf{Lagrangian surrogate.}
Let
\[
    \widetilde g^{(1)}(b)
    :=
    (1-\varepsilon(z))\bigl(g^{(1)}_{\text{apr}}(b)\bigr)_+
    -(1+\varepsilon(z))\bigl(g^{(1)}_{\text{apr}}(b)\bigr)_- .
\]
The preceding display gives \(g^{(1)}(b)\geq \widetilde g^{(1)}(b)\) uniformly on \(\mathcal B\). Hence it is sufficient to minimize \(\sum_{i=1}^{n}\widetilde g^{(1)}(b_i)\) subject to \(\sum_{i=1}^{n}b_i \leq \frac{n}{z\mfg(m)}\). The function \(\widetilde g^{(1)}\) differs from \(g^{(1)}_{\text{apr}}\) only by the constants \(1-\varepsilon(z)\) and \(1+\varepsilon(z)\) on the two sides of \(b=1/z\), so all derivative and boundary estimates below are unchanged up to constants depending only on \(\theta\), after increasing \(Z_\theta\) so that \(\varepsilon(z)\) is bounded away from \(1\) for all \(z\geq Z_\theta\). We therefore carry out the calculations with \(g^{(1)}_{\text{apr}}\), absorbing these factors into the constants \(C_\theta\) and \(E_\theta\).
%$z$ tends to $\infty$ as $m$ tends to $\infty$.

%Recall that we argued that it is sufficient to minimize $\sum\limits_{i=1}^{n}g^{(1)}(b_i)$ subject to $\sum\limits_{i=1}^{n}b_i \leq \frac{n}{z (\mfg(m))}$ which is equivalent to minimising $\sum\limits_{i=1}^{n}g^{(1)}_{\text{apr}}(b_i)$ subject to the same constraints. %Further, we argued that it is sufficient to restrict $b_i$ to the set $\{b: b+ \sqrt{b\left|\frac{1}{z}-b \right|} <1\}$. 

Define
\[
    \widetilde F^{(1)}(\vect b,\lambda)
    :=
    \sum_{i\in\N}
    \left(\widetilde g^{(1)}(b_i)+\lambda b_i
    -\lambda\frac{1}{z\mfg(m)}\right).
\]
From Lagrangian duality applied to this signed surrogate, we have the exact lower bound

\begin{align}
    \Opt(\mathcal{E}^{(1)}_{\mfg,z})
    &\geq \min_{\vect{b} \in \mathcal{B}^n } \max_{\lambda \geq 0} \widetilde F^{(1)}(\vect{b},\lambda) \nonumber\\
    &\geq \max_{\lambda \geq 0} \min_{\vect{b} \in \mathcal{B}^n } \widetilde F^{(1)}(\vect{b},\lambda).
\end{align} 
For readability, the remaining derivative and boundary calculations are written for
\[
    F^{(1)}_{\text{apr}}(\vect{b},\lambda)
    =
    \sum_{i\in\N}
    \left(g^{(1)}_{\text{apr}}(b_i)+\lambda b_i
    -\lambda\frac{1}{z\mfg(m)}\right).
\]
The same calculations apply to \(\widetilde F^{(1)}\) after changing only the constants \(C_\theta\) and \(E_\theta\), because \(\widetilde g^{(1)}\) rescales the positive and negative pieces of \(g^{(1)}_{\text{apr}}\) by factors bounded above and below by positive constants for all \(z\geq Z_\theta\). All constants in the estimates below depend only on \(\theta\), and the estimates hold uniformly for all \(z\geq Z_\theta\).

\smallskip
\noindent\textbf{Choice of multiplier.}
We choose the multiplier \(\lambda=\lambda_1^\ast\) to dominate the largest possible negative slope of \(g^{(1)}_{\text{apr}}\) on \(\mathcal{B}\). Equivalently, \(\lambda_1^\ast\) has the order of the slope obtained from the tangent-line calculation for the limiting curve \(y=g^{(1)}_{\text{apr}}(x)\). With this choice, \(g^{(1)}_{\text{apr}}(b)+\lambda b\) has no interior stationary point for every \(z\geq Z_\theta\), after increasing \(Z_\theta\) if necessary, so its minimum is attained on the boundary of \(\mathcal{B}\).

\begin{equation}
    \lambda^{\ast}_1 =
\begin{cases}
\frac{C_{\theta}}{z}, & \text{if } \theta \geq 2, \\
\frac{C_{\theta}}{z^{\theta - 1}}, & \text{if } \theta < 2.
\end{cases} \text{ for $C_{\theta}$ depending only on \(\theta\)}  
\end{equation}

% We now define $\lambda^{\ast}_1=\frac{C_{\theta}}{z}$ if $\theta\geq 2$ and $\frac{C_{\theta}}{z^{\theta-1}}$ otherwise. \footnote{In hindsight, we obtain this orderwise value of $\lambda$ by considering the slope of the tangent to curve $y=g^{(1)}(x)$ that passes through $(0,g^{(1)}(0))$. However, we refrain from doing this calculation here. \sahasrajit{Check the text}}for some constant $C_{\theta}$ independent of $m$ and then minimize $F^{(1)}(\vect{b},\lambda)$ for this chosen value of $\lambda$.

Observe that since each $b_i$ is independent of each other, it is sufficient to minimize $g^{(1)}_{\text{apr}}(b_i)+ \lambda b_i$ for each $b_i$ separately. Since $g^{(1)}_{\text{apr}}(b)$ is a piecewise differentiable function, it is sufficient to check the stationary points of $g^{(1)}_{\text{apr}}(b)+ \lambda b$ and thus, we consider two cases namely, $b \leq \frac{1}{z}$ and $b > \frac{1}{z}$.% but $b \in \mathcal{B}$.

\smallskip
\noindent\textbf{Stationary-point check.}

\begin{itemize} [leftmargin = 5pt]
    \item Case 1: $0 \leq b \leq \frac{1}{z}$.

    \begin{align}
        & \frac{d}{db} g^{(1)}_{\text{apr}}(b) \\ = & \frac{d}{db} \Biggl(\left(\frac{1}{z(1-b)}-\frac{b}{1-b}\right)^{\theta} - \left(\frac{b}{1-b} \right)^{\theta}\Biggr)\\
        = & \Biggl(\theta \left(\frac{1}{z}-1\right) \left(\frac{1+c}{z}-c\right)^{\theta-1} - \theta c^{\theta-1}\Biggr){(1-b)^{-2}} \\ & \hspace{17 em}\text{ where $c = \frac{b}{(1-b)}$}
    \end{align}

    As \(z\to\infty\), we have
    \[
    \lim_{z \to \infty}
    \max_{0<b<1/z}
    \left(-\frac{d}{db} g^{(1)}_{\text{apr}}(b)\right) z^{\theta-1}
    =
    \begin{cases}
    \theta 2^{2-\theta}, & \theta<2,\\
    \theta, & \theta\geq 2.
    \end{cases}
    \]
    This follows by replacing \((1+c)\) and \((1/z-1)\) by their limiting values \(1\) and \(-1\), respectively, and maximizing
    \(\left(\frac{1}{z}-c\right)^{\theta-1}+c^{\theta-1}\)
    over \(c\in[0,1/z]\). The maximum is attained in the interior when \(\theta<2\), and at the endpoints when \(\theta\geq2\). Thus, choosing \(C_\theta\) sufficiently large ensures that there is no solution to
    \(\frac{d}{db}g^{(1)}_{\text{apr}}(b)=-\lambda_1^\ast\)
    in this region for every \(z\geq Z_\theta\), after increasing \(Z_\theta\) if necessary.

%    \sahasrajit{Note for self: Uniform convergence won't be issue in this analysis as $\max_{c\in [0,\frac{1}{z}]}{1+c} = \Theta_m(1)$ and $\min_{c\in [0,\frac{1}{z}]}{1+c} = \Theta_m(1)$}

    % \sahasrajit{Discuss with Mohak: Whether notation $z$ requires a subscript or some function of $m$ to remind the reader about its need}
    % Now we equate this derivative to $-\frac{C}{z}$ and check whether some solution exists for $b < \frac{1}{z}$. Thus by choosing $C_{\theta}$ and \(Z_\theta\) sufficiently large, one can rule out a solution to $g^{(1)}(b) = -\lambda_1^{\ast}$. 

    % Now observe that in the limit $z \to \infty$, the derivative tends to 1 while restricting $b < \frac{1}{z}$ but $\lambda^{\ast}_1$ tends to 0. Thus, there cannot exist a root to $\frac{d}{db} g^{(1)}_{\text{apr}}(b) = -\frac{C}{z}$ for sufficiently large $m$. To derive this result, we use the fact that zeroes of a polynomial is a continuous function of its coefficients \cite{golub2013matrix,nathanson2023continuityrootspolynomial}.

    \item Case 2: $b > \frac{1}{z}$ and $b \in \mathcal{B}$.

    Now observe that $g^{(1)}_{\text{apr}}(b) = \left(\frac{\left(\frac{b}{1-b}-\frac{1}{z(1-b)}\right)^{\theta} - \left(\frac{b}{1-b} \right)^{\theta}}{\left(1+ \left(\frac{b}{1-b}\right)^{\theta}\right)^2}\right) = - \frac{\theta (\frac{b}{1-b})^{\theta-1} \frac{1}{z(1-b)}}{\left(1+ \left(\frac{b}{1-b}\right)^{\theta}\right)^2} + \epsilon_z(b)$, where $\sup\limits_{b \in \mathcal{B}} |\epsilon_z(b)|= o_z(\frac{1}{z})$. This follows from Taylor expansion and bounding $\frac{b}{1-b}$ by 2 since $b \leq 2/3$.

    We now record the derivative scale of the remainder. This is the only place where differentiating the approximation error is needed. Let \(c=b/(1-b)\), \(\Delta=(1+c)/z\), \(q=c-\Delta\), and
    \[
        R(c)=q^\theta-c^\theta+\theta c^{\theta-1}\Delta .
    \]
    Then \(\epsilon_z(b)=R(c)/(1+c^\theta)^2\), and since \(dc/db=(1+c)^2\),
    \[
        \epsilon_z'(b)
        =(1+c)^2\left(
        \frac{R'(c)}{(1+c^\theta)^2}
        -\frac{2\theta c^{\theta-1}R(c)}{(1+c^\theta)^3}
        \right),
    \]
    where
    \[
        R'(c)=\theta\left(1-\frac{1}{z}\right)
        \left(q^{\theta-1}-c^{\theta-1}\right)
        +\frac{\theta(\theta-1)c^{\theta-2}(1+c)}{z}.
    \]
    The two regimes in the bound below arise from the factor \(c^{\theta-2}\). When \(\theta\geq 2\), the map \(x\mapsto x^{\theta-1}\) has bounded derivative on \([0,2]\), so the mean value theorem gives \(|q^{\theta-1}-c^{\theta-1}|=O(z^{-1})\), and the second term in \(R'(c)\) is also \(O(z^{-1})\). When \(1<\theta<2\), the map \(x\mapsto x^{\theta-1}\) is \((\theta-1)\)-Hölder on \([0,2]\), so
    \[
        |q^{\theta-1}-c^{\theta-1}|
        \leq |\Delta|^{\theta-1}
        =O(z^{1-\theta}).
    \]
    For \(1<\theta<2\), the second term in \(R'(c)\) has the same
    worst-case order: since \(c^{\theta-2}\) is decreasing in \(c\), it is
    maximized near the lower endpoint \(c\asymp z^{-1}\), giving
    \(c^{\theta-2}/z=O(z^{1-\theta})\).
    Since \(c\in[1/(z-1),2]\) in this case, and \(\theta\) is fixed throughout the Borda analysis, we obtain
    \[
        \sup_{b\in\mathcal B: b>1/z}|\epsilon_z'(b)|
        =
        \begin{cases}
        O(z^{1-\theta}), & 1<\theta<2,\\
        O(z^{-1}), & \theta\geq 2.
        \end{cases}
    \]

    Now we have, \begin{align}
        \frac{d}{db} g^{(1)}_{\text{apr}}(b) & = \Biggl(-\frac{\theta((\theta-1) c^{\theta-2}+ \theta c^{\theta-1})}{z(1+c^{\theta})^2} + \frac{2\theta^2 c^{2\theta-2}(1+c)}{z(1+c^{\theta})^3} + \epsilon_z'(b)\Biggr) \frac{1}{(1-b)^2} \nonumber\\
        & \hspace{17 em} \text{ where $c= \frac{b}{1-b}$} 
    \end{align}

    \begin{itemize}
        \item When $\theta\geq 2$, we bound the most negative derivative over \(c\in [0,2]\), since $b \leq 2/3$. The quantity
        \[
        -\frac{\theta((\theta-1)c^{\theta-2}+ \theta c^{\theta-1})}{(1+c^{\theta})^2}
        + \frac{2\theta^2 c^{2\theta-2}(1+c)}{(1+c^{\theta})^3}
        \]
        is bounded above on this interval by a constant depending only on \(\theta\). Hence
        \(\max_{1/z<b<2/3} -\frac{d}{db}g^{(1)}_{\text{apr}}(b)=O(1/z)\), and choosing \(C_\theta\) sufficiently large ensures that there is no solution to
        \(\frac{d}{db} g^{(1)}_{\text{apr}}(b)=-C_\theta/z\).

        \item When $\theta<2$, the term \(c^{\theta-2}\) dominates near the lower boundary \(c\asymp 1/z\), while the bound on \(\epsilon_z'(b)\) above is of the same order and can be absorbed into the constant. Hence
        \[
        \max_{1/z<b<2/3}
        \left(-\frac{d}{db}g^{(1)}_{\text{apr}}(b)\right)z^{\theta-1}
        =
        O(1).
        \]
        Thus, choosing \(C_\theta\) sufficiently large ensures that there is no solution to
        \(\frac{d}{db} g^{(1)}_{\text{apr}}(b)=-C_\theta/z^{\theta-1}\)
        for every \(z\geq Z_\theta\), after increasing \(Z_\theta\) if necessary.

    \end{itemize}

    % Solving for $\frac{d}{db} g^{(1)}_{\text{apr}}(b) = -\frac{C_{\theta}}{z}$ as \(z\to\infty\), the lower-order terms are absorbed into the choice of \(Z_\theta\). Further we can write it as a polynomial in $c$ as $\theta\geq 2$ and observe that choosing $C$ sufficiently large ensures there is no solution to this equation  

    % Observe that for every $\theta \geq 2$, one can choose C sufficiently large such that there is no solution to the above equation. And for every $\theta<2$, there exists a unique solution to the above equation  , we call it $b_1^{\ast}(C)$.

    % Further, observe that there cannot be a root at $b=0$ for any $C>0$ either a root exists or no such root exists.

\end{itemize} 

\smallskip
\noindent\textbf{Boundary and kink check.}
For every \(z\geq Z_\theta\), after increasing \(Z_\theta\) if necessary, there are no stationary points in either differentiable region, so the minimum of $g^{(1)}_{\text{apr}}(b) + \lambda b$ must occur at the boundaries or at the kink \(b=1/z\). The kink cannot be the minimizer: since \(g^{(1)}_{\text{apr}}(1/z)=0\), its value is \(\lambda/z\), which is \(\Theta(z^{-2})\) when \(\theta\geq2\) and \(\Theta(z^{-\theta})\) when \(1<\theta<2\); choosing \(C_\theta\) sufficiently large makes this value no smaller than the value at \(b=0\). The upper boundary \( b + \sqrt{b\left|\frac{1}{z}-b \right|} = 1\) has \(b=\Theta(1)\). There, \(g^{(1)}_{\text{apr}}(b)=-\Theta(z^{-1})\), whereas \(\lambda b=\Theta(z^{-1})\) when \(\theta\geq2\) and \(\lambda b=\Theta(z^{1-\theta})\) when \(1<\theta<2\). By taking \(C_\theta\) sufficiently large, this boundary value is also no smaller than the value at \(b=0\). We again consider the two cases on $\theta$.

\smallskip
\noindent\textbf{Final lower bound.}
\begin{itemize} [leftmargin = 5pt]
    \item For \( \theta \geq 2 \) and \(z\geq Z_\theta\), thus
\[
\min\limits_{b \in \mathcal{B}} g^{(1)}_{\text{apr}}(b) + \lambda b = g^{(1)}_{\text{apr}}(0)  = \Theta(z^{-\theta})
\]

Thus, the surrogate Lagrangian bound gives constants \(c_1,c_2>0\), depending only on \(\theta\), such that
\begin{align}
\Opt(\mathcal{E}^{(1)}_{\mfg, z})
\geq
n \left(c_1 z^{-\theta}-\frac{c_2}{\mfg(m)}z^{-2}\right).
\label{eq:borda_opt1_explicit_constants}
\end{align}
When \(\theta=2\), the right-hand side is
\(n z^{-2}(c_1-c_2/\mfg(m))\), which is nonnegative for sufficiently large \(m\). When \(\theta>2\), the minimum over \(z>0\) of the right-hand side in \eqref{eq:borda_opt1_explicit_constants} is \(-O_m(\mfg(m)^{-\theta/(\theta-2)})\). Since
\[
\mfg(m)=\omega_m(h_{\theta}(m))
=\omega_m(m^{1-2/\theta}),
\]
we have \(\mfg(m)^{-\theta/(\theta-2)}=o_m(m^{-1})\). Therefore, uniformly over \(z\),
\[
\Opt(\mathcal{E}^{(1)}_{\mfg, z}) \geq -n o_m(m^{-1}).
\]

    \item For \( \theta < 2 \) and \(z\geq Z_\theta\), thus
\[
\min\limits_{b \in \mathcal{B}} g^{(1)}_{\text{apr}}(b) + \lambda b = g^{(1)}_{\text{apr}}(0) = \Theta(z^{-\theta}),
\]
and therefore, by the same surrogate Lagrangian bound, there are constants \(c_3,c_4>0\), depending only on \(\theta\), such that
\begin{align*}
   \Opt(\mathcal{E}^{(1)}_{\mfg, z})
   & \geq n \left(c_3-\frac{c_4}{\mfg(m)}\right)z^{-\theta}
   = n\Omega_m(z^{-\theta})
   \geq n\Omega_m(z^{-2}),
 \end{align*}
where the last inequality uses \(z\geq Z_\theta\geq1\) and \(\theta<2\).
This completes the proof.
\end{itemize}

\end{proof}

\noindent\textbf{Claim~\ref{claim:borda_optim2} (Restated):} \claimbordaoptimtwo

\begin{proof}
We again use weak duality. The proof first reduces the domain to the closure of \((0,1)\), then chooses the tangent multiplier \(\lambda^\ast\), verifies that the resulting scalar objective is minimized at \(a=0\) and \(a=a^\ast\), and finally uses \(\mfg(m)=\omega(h_\theta(m))\).

%We now solve each of the optimization problems using the theory of Lagrangian. We first lower bound the second optimization problem. 

\smallskip
\noindent\textbf{Weak-duality setup.}
We first define
\[
    F^{(2)}(\vect{a},\lambda)
    :=
    \sum_{i=1}^{n}
    \left(
        \frac{2}{1+\left(\frac{a_i}{|1-a_i|}\right)^{\theta}}
        -1+\lambda a_i-\lambda\frac{1}{\mfg(m)}
    \right).
\]
By weak duality,
\begin{equation}{\label{eq:lagrangian_eq1}}
    \Opt (\mathcal{E}^{(2)}_{\mfg})
    \geq
    \min_{\vect{a} \in \mathbb{R}^n_{+}} \max_{\lambda \geq 0} F^{(2)} (\vect{a},\lambda)
    \geq
    \max_{\lambda \geq 0} \min_{\vect{a} \in \mathbb{R}^n_{+}} F^{(2)} (\vect{a},\lambda).
\end{equation}

Observe that it is sufficient to restrict $a_i$ to the closure of the interval \((0,1)\). Indeed, for every ${a}>1,$ there exists \(\hat{a}\in(1/2,1)\) such that \(\frac{a}{|1-a|}=\frac{\hat a}{|\hat a-1|}\), and the endpoint \(a=0\) is included by continuity.

\smallskip
\noindent\textbf{Tangent multiplier.}
We lower bound this quantity by choosing a value of $\lambda$. Before setting $\lambda$, define $a^{\ast}$ by
\begin{equation}
    \frac{d}{da}\left(\frac{2}{1+\left(\frac{a}{1-a}\right)^{\theta}}-1\right)\bigg|_{a=a^{\ast}} = -\frac{1 - \left(\frac{2}{1+\left(\frac{a^{\ast}}{1-a^{\ast}}\right)^{\theta}} -1\right)}{a^{\ast}}
\end{equation}

We construct $a=a^{\ast}$ such that the tangent to the function $g^{(2)}(a) = \frac{2}{1+\left(\frac{a}{1-a}\right)^{\theta}}-1$ at $a=a^{\ast}$ passes through $(0,g^{(2)}(0))$. Further, we construct $\lambda^{\ast}$ as the modulus of the slope of the tangent.

\smallskip
\noindent\textbf{Shape and endpoint check.}
We now justify that this tangent construction gives the required minimum. Let \(x=a/(1-a)\), so \(a=x/(1+x)\), and define
\[
    S(x):=-\frac{d}{da}g^{(2)}(a)
    =
    \frac{2\theta x^{\theta-1}(1+x)^2}{(1+x^\theta)^2}.
\]
The tangent equation above is equivalent to
\[
    x_\ast^\theta=\theta(1+x_\ast)-1,
    \qquad x_\ast=\frac{a^\ast}{1-a^\ast}.
\]
The equation has a unique solution \(x_\ast>1\), since \(x^\theta-\theta x-(\theta-1)\) is strictly increasing for \(x>1\), is negative at \(x=1\), and tends to infinity. Moreover,
\[
    \frac{S'(x)}{S(x)}
    =
    \frac{(\theta-1)+(\theta+1)x-(\theta+1)x^\theta-(\theta-1)x^{\theta+1}}
    {x(1+x)(1+x^\theta)}.
\]
The numerator is positive for \(0<x<1\) and negative for \(x>1\); hence \(S\) increases on \((0,1)\) and decreases on \((1,\infty)\). Since \(S(0^+)=S(\infty)=0\), \(S(1)=2\theta\), and \(\lambda^\ast=S(x_\ast)<S(1)\), the equation \(S(x)=\lambda^\ast\) has exactly two solutions, one on each side of \(1\). Therefore \(g^{(2)}(a)+\lambda^\ast a\) has one interior local maximum and one interior local minimum on \((0,1)\), with the latter attained at \(a=a^\ast\). Moreover, by construction,
\[
    g^{(2)}(0)=1
    \quad\text{and}\quad
    g^{(2)}(a^\ast)+\lambda^\ast a^\ast=1.
\]
The remaining endpoint satisfies \(\liminf_{a\to1^-}(g^{(2)}(a)+\lambda^\ast a)=\lambda^\ast-1\). Using the tangent equation, \(\lambda^\ast=2x_\ast^{\theta-1}/\theta\); since \(x^\theta-\theta x-(\theta-1)<0\) at \(x=\theta^{1/(\theta-1)}\), we have \(x_\ast>\theta^{1/(\theta-1)}\), and therefore \(\lambda^\ast\geq2\). Hence the minimum of \(g^{(2)}(a)+\lambda^\ast a\) over the closure of \((0,1)\) is \(1\), attained at \(a=0\) and \(a=a^\ast\).

\smallskip
\noindent\textbf{Conclusion.}
Therefore, Equation \eqref{eq:lagrangian_eq1} gives
\[
    \Opt(\mathcal{E}^{(2)}_{\mfg})
    \geq n\left(1-\frac{\lambda^\ast}{\mfg(m)}\right).
\]
Since \(\lambda^\ast\) depends only on \(\theta\), while \(\mfg(m)=\omega(h_\theta(m))\) and \(h_\theta(m)\geq1\), we have \(\mfg(m)\to\infty\). Hence \(\Opt(\mathcal{E}^{(2)}_{\mfg})\geq n\Omega_m(1)\).
\end{proof}

\section{Proof of Lemma \ref{lemma:borda_upper_bound}}{\label{sec:borda_upper_bound_proof}}

\noindent\textbf{Lemma~\ref{lemma:borda_upper_bound} (Restated):} \lemmabordaub

%\sahasrajit{currently correct, to be revisited for typos and stuff.}

% \sahasrajit{Main issue with proof: Suppose \SC(W,d)/\SC(B,d) tends to $\infty$ even then P(W is the winner) may not decay at this high rate. It decays exponential in $n$ but if the social cost fraction increases faster than it, we do not have a bound for \SC(W,d)/\SC(B,d).}

\begin{proof}[Proof of Lemma \ref{lemma:borda_upper_bound}]

%We now 

%Now consider two candidates $W$ and $B$ satisfying the conditions of lemma \ref{lemma:bounding_expected_score} and choose cosntants $m_0$ and $c$ as per the lemma. 

%Note that each random variable $\sum_{A \in \mathcal{A} \setminus B} \mathbbm{1} (B \succ_{\sigma_i} A)$ is bounded between $[0,m]$. Since each voter votes independently of all other voters, on application of Hoeffding's bounds we must have $\mathbb{P}(\textsc{$W$ is the winner}) \leq \exp\left(\frac{-2c^2n^2}{nm^2}\right)= \exp\left(\frac{-2c^2n}{m^2}\right)$ whenver $m>m_0$.

%     Now observe $$\textsc{dist}^{\theta}_{PL}(\textsc{Borda},n,m) :=  \sup_{\substack{\N: |\N| =n \\ \A: |\A| =m}} \sup_{d \in \mathcal{M}(\N \cup \mathcal{A})} \frac{\mathbb{E}_{\sigma_{\N} \sim \textsc{PL}_{\theta}}[\SC(f(\sigma_{\N}),d)]}{\min\limits_{A \in \A}\SC(A,d)}.$$

% % Now given any metric space $d(.)$ over a set of $n$ voters and $m$ candidates

    \begin{align} \label{eq:distortion_split_borda}
    \textsc{dist}^{\theta}_{PL}\nonumber(\textsc{Borda},n,m) = & \sup_{\substack{\N: |\N| =n \\ \A: |\A| =m}} \sup_{d \in \mathcal{M}(\N \cup \mathcal{A})} \frac{\mathbb{E}_{\sigma_{\N} \sim \textsc{PL}_{\theta}}[\SC(f(\sigma_{\N}),d)]}{\min\limits_{A \in \A}\SC(A,d)}\\ 
    = & \sup_{\substack{\N: |\N| =n \\ \A: |\A| =m}} \sup_{d \in \mathcal{M}(\N \cup \mathcal{A})} \sum\limits_{j=1}^{m}\mathbb{P}(\textsc{ $A_j$ is the Borda winner} ) \frac{\SC(A_j,d)}{\min\limits_{A \in \mathcal{A}}\SC(A,d)} \nonumber\\
    % < & \sup_{\substack{\N: |\N| =n \\ \A: |\A| =m}} \sup_{d \in \mathcal{M}(\N \cup \mathcal{A})} m\exp(-2c^2n/m) \frac{\max\limits_{A \in \mathcal{A}}\SC(A_j,d)}{\min\limits_{A \in \mathcal{A}}\SC(A,d)} 
     \end{align}

To prove Lemma \ref{lemma:borda_upper_bound}, assume the contrary, that
\(D_m^\theta = \omega(\distborda)\),
and derive a contradiction. Given any metric space \( d(\cdot) \) over the set of candidates and voters, let \( B \) be a candidate that minimizes the social cost. We bound \( \mathbb{P}(\textsc{$A_j$ is the winner}) \frac{\SC(A_j, d)}{\SC(B, d)} \) by considering the two cases below. Let \(  \alpha_j n\) denote the expected number of voters who rank candidate \( A_j \) above candidate \( B \), following Equation \eqref{eqn:alpha_defn} in the proof of Theorem \ref{theorem:thm_plurality_distortion_m} (Appendix \ref{sec:thm_plurality_distortion_m_proof}). Additionally, let \( \epsilon \in (0,1/2) \).

\begin{itemize}
    \item \textbf{Case 1:} \( \alpha_j \leq \frac{1}{m} - \frac{n^{(-1/2 + \epsilon)}}{m} \).
    
    The probability that \( A_j \) is the Borda winner is at most the probability that \( A_j \) is ranked above \( B \) by at least \( \frac{n}{m} \) voters. Indeed, for any voter who ranks \(B\) above \(A_j\), candidate \(B\)'s per-voter Borda score exceeds \(A_j\)'s by at least \(1\); for any voter who ranks \(A_j\) above \(B\), candidate \(A_j\)'s per-voter Borda score exceeds \(B\)'s by at most \(m-1\). Thus, if fewer than \(n/m\) voters rank \(A_j\) above \(B\), the total Borda score of \(A_j\) is strictly below that of \(B\). Using Equation \eqref{eq:chernoff_bound_winning_prob} (Appendix \ref{sec:thm_plurality_distortion_m_proof}), this probability is at most
    \[
    m \alpha_j \exp\left( \frac{- n^{(1/2 + \epsilon)} + 2m}{(2 n^{(1/2 - \epsilon)} - 1)m} \right).
    \]
    Applying Observation \ref{lemma:bounding_social_cost_ratio}, Lemma~\ref{lemma:optimizer_mu_alpha_combined} we can further bound \( \frac{\SC(A_j, d)}{\SC(B, d)} \) by $\Delta_g(\alpha_j)$.%\footnote{Recall that under PL model $g(r) = \frac{r^{\theta}}{1+r^{\theta}}$ and $\gm$ and $\go$ can be computed accordingly.}
    
    Thus, we have:
    \[
        \mathbb{P}(\textsc{$A_j$ is the Borda winner}) \frac{\SC(A_j, d)}{\SC(B, d)} \leq m\alpha_j\Delta_g(\alpha_j) \exp\left( \frac{- n^{(1/2 + \epsilon)} + 2m}{(2 n^{(1/2 - \epsilon)} - 1)m} \right).
    \]

    As \( n \to \infty \), this upper bound approaches zero as $\alpha_j\Delta_g(\alpha_j)$ is bounded since $\Gamma_g(\eta)$ is finite (Equation \eqref{eq:Gamma_g_eta_defn}).

    \item \textbf{Case 2:} \( \alpha_j \geq \frac{1}{m} - \frac{n^{(-1/2 + \epsilon)}}{m} \) but \( \frac{\SC(A_j, d)}{\SC(B, d)} \geq D_m^\theta - 1 \).

    Using Observation \ref{lemma:bounding_social_cost_ratio} and Lemma~\ref{lemma:optimizer_mu_alpha_combined} coupled with the fact that \( \alpha_j \geq \frac{1}{m} - \frac{n^{(-1/2 + \epsilon)}}{m} \) for every \( j \in [m-1] \setminus S \), we obtain:
    \[
        \frac{\SC(A_j, d)}{\SC(B, d)} \leq \max\left( \frac{m \gm}{1 - n^{-(1/2 - \epsilon)}} - 1, R\left(g, \frac{1-n^{-(1/2-\epsilon)}}{m}\right)^{-1} \right)
    \]

    Note that candidates \( A_j \) and \( B \) satisfy the conditions of Lemma \ref{lemma:bounding_expected_score}. We select constants \( m_0 \) and \( c \) as specified in the lemma, assuming \(D_m^\theta = \omega(\distborda)\).

    Thus the expected Borda score of $A_j$ is at least $cn$ lower than the expected Borda score of some other candidate, say \(C_j\). Recall that the Borda score of a candidate $C$ is given by $\sum_{i \in \N} \sum_{A \in \mathcal{A} \setminus C} \mathbbm{1} (C \succ_{\sigma_i} A)$, where $\sigma_i$ denotes the ranking by voter $i$. For each voter, the contribution to the score difference between \(C_j\) and \(A_j\) lies in an interval of length at most \(2m\). Since voters rank independently, Hoeffding's inequality gives, after adjusting the absolute constant in \(c\),
    \[
    \mathbb{P}(\textsc{$A_j$ is the winner}) \leq \exp\left( \frac{-c^2n}{2m^2} \right)
    \]
    for any \( m > m_0 \). Therefore, for \( m \geq m_0 \), we conclude:
    \[
        \frac{\SC(A_j, d)}{\SC(B, d)} \mathbb{P}(\textsc{$A_j$ is the winner}) \leq \exp\left( \frac{-c^2n}{2m^2} \right) \left( \frac{m \gm}{1 - n^{-(1/2 - \epsilon)}} - 1 + R\left(g, \frac{1-n^{-(1/2-\epsilon)}}{m}\right)^{-1} \right).
    \]

\end{itemize}

    Let
    \[
        H_{m,n}:=
        \exp\left( \frac{- n^{(1/2 + \epsilon)} + 2m}{(2 n^{(1/2 - \epsilon)} - 1)m} \right).
    \]
    Combining the two cases, for any \(m>m_0\), we have: \footnote{For each candidate \( A_j \), either
\[
\frac{\SC(A_j, d)}{\SC(B, d)} \leq D_m^\theta - 1,
\]
or this inequality is not satisfied. The cumulative contribution, meaning probability times social-cost ratio, of all candidates in the former case is at most \(D_m^\theta - 1\).}

    \begin{align}
        & \textsc{dist}^{\theta}_{PL}(\textsc{Borda},n,m) \nonumber\\
        \leq\;& D_m^\theta -1 + \exp\left(\frac{-c^2n}{2m^2}\right) \left( \frac{m \gm}{1 - n^{-(1/2 - \epsilon)}} - 1 + R\left(g, \frac{1-n^{-(1/2-\epsilon)}}{m}\right)^{-1} \right) \nonumber\\
        & \hspace{7 em} +m(m-1)\Gamma_g\left(\frac{1-n^{-(1/2-\epsilon)}}{m}\right)H_{m,n}.
    \end{align}

    Taking the limsup as \(n \to \infty\), the last two terms vanish. \footnote{The last term vanishes since \(\Gamma_g(\eta)\) is finite for the relevant \(\eta\) values (Equation \eqref{eq:Gamma_g_eta_defn}) and \(m\) is fixed while \(H_{m,n}\to0\).} Thus,
    \[
        D_m^\theta
        \leq
        D_m^\theta -1,
    \]
    a contradiction.  %\qedhere

\end{proof}

\section{Proof of Theorem \ref{theorem:RD_distortion_upper_bound}}{\label{sec:RD_distortion_upper_bound_proof}}

\noindent\textbf{Theorem~\ref{theorem:RD_distortion_upper_bound} (Restated):} \thmrdub

\begin{proof}
%    Recall that we study this rule with distributional assumption on reporting of the preferences. W.L.O.G we assume that that candidate $B \in \mathcal{A}$ minimizes the social cost. 
    The probability that voter \(i\) votes for candidate \(W\) as her top candidate is upper bounded by \(g\left(\frac{d(i,B)}{d(i,W)}\right)\), which is the probability that \(W\) is ranked above \(B\). Therefore, under \textsc{Random Dictator}, the probability of \(W\) winning satisfies
    \begin{equation}{\label{eq:bounding_prob_winner}}
        \mathbb{P}[W \text{ wins}] \leq \frac{1}{n} \left(\sum_{i \in \N} g\left(\frac{d(i,B)}{d(i,W)}\right)\right).
    \end{equation}

   % $$\SC(W,d) = \sum_{i \in \N}d(i,W) \text{ and } \SC(B,d) = \sum_{i \in \N}d(i,B).$$
	Recall that we define the set of candidates in \(\mathcal{A} \setminus \{B\}\) as \(\{A_1,A_2,\ldots,A_{m-1}\}\). In the rest of the analysis, denote \(d(i,A_j)\) by \(y_{i,j}\) for all \(j \in [m-1]\), \(d(i,B)\) by \(b_i\) for every \(i \in \N\), and \(d(B,A_j)\) by \(z_j\). For every metric \(d\), we bound the distortion as follows.
    \begin{align}
        \textsc{dist}^{(g)}(\textsc{Random Dictator},n,m) \leq & \sum_{j=1}^{m-1} \left(\mathbb{P}[A_j \text{ wins}] \frac{\sum_{i \in \N} y_{i,j}}{\sum_{i \in \N} b_i}\right) + (1- \sum_{j=1}^{m-1}\mathbb{P}[A_j \text{ wins}]) \\
        = & \sum_{j=1}^{m-1} \mathbb{P}[A_j \text{ wins}] \left(\frac{\sum_{i \in \N} y_{i,j}}{\sum_{i \in \N} b_i} -1\right) + 1\\
        & \overset{(a)}{\leq} \sum_{j=1}^{m-1} \frac{1}{n}\left(\sum_{i \in \N} g\left(\frac{b_i}{y_{i,j}}\right)\right) \frac{\sum_{i \in \N} (y_{i,j}-b_i)}{\sum_{i \in \N} b_i} +1\\
        & {\leq} \sum_{j=1}^{m-1} \frac{1}{n}\left(\sum_{i \in \N} g\left(\frac{b_i/z_j}{y_{i,j}/z_j}\right)\right) \frac{\sum_{i \in \N} (y_{i,j}/z_j-b_i/z_j)}{\sum_{i \in \N} b_i/z_j} +1\\
        & \overset{(d)}{\leq} \sum_{j=1}^{m-1}\frac{ \left(\sum_{i \in \N} g\left(\frac{b_i/z_j}{y_{i,j}/z_j}\right)\right)}{\sum_{i \in \N} b_i/z_j} +1\\
        & \overset{(e)}{\leq} (m-1) \frac{g\left(\frac{x^{\ast}_{\textsc{mid}}}{1-x^{\ast}_{\textsc{mid}}}\right)}{x^{\ast}_{\textsc{mid}}} +1 = (m-1)\gm + 1.
    \end{align}

$(a)$ follows from Equation \eqref{eq:bounding_prob_winner}.

$(d)$ follows from \(y_{i,j}-b_i \leq z_j\), which follows from the triangle inequality.

$(e)$ follows by considering the two cases \(\frac{b_i}{z_j} \leq 1\) and \(\frac{b_i}{z_j} \geq 1\). 

When \(\frac{b_i}{z_j} \leq 1\), the triangle inequality gives \(\frac{y_{i,j}}{z_j} \geq 1 - \frac{b_i}{z_j}\). Similarly, when \(\frac{b_i}{z_j} \geq 1\), it gives \(\frac{y_{i,j}}{z_j} \geq \frac{b_i}{z_j}-1\). Thus,
%
% Hence, we have for every $i \in \N$. 
\begin{align}
\frac{ g\left(\frac{b_i/z_j}{y_{i,j}/z_j}\right)}{b_i/z_j}
&\leq
\max\left\{
\sup_{x \in (0,1)}\frac{g(\frac{x}{1-x})}{x},
\sup_{x \in (1,\infty)}\frac{g(\frac{x}{x-1})}{x}
\right\}
\quad \forall i \in \N, \nonumber\\
\frac{ \sum_{i =1}^{n}g\left(\frac{b_i/z_j}{y_{i,j}/z_j}\right)}{\sum_{i =1}^{n} b_i/z_j}
&\leq
\max\left\{
\frac{g\left(\frac{x^{\ast}_{\textsc{mid}}}{1-x^{\ast}_{\textsc{mid}}}\right)}{x^{\ast}_{\textsc{mid}}},
1
\right\}.
\end{align}

The last inequality follows from the fact that  $\frac{g(\frac{x}{x-1})}{x} \leq 1$ when $x \geq 1$. Further, we have $\gm \geq 1$ for all valid $g$.
\end{proof}

\addtocontents{toc}{\protect\setcounter{tocdepth}{2}} % Reset tocdepth back to include sections after appendix

\end{document}